\let\mathbb\varmathbb
\crefname{lemma}{Lemma}{Lemmas}
\crefname{fact}{Fact}{Facts}
\crefname{theorem}{Theorem}{Theorems}
\crefname{corollary}{Corollary}{Corollaries}
\crefname{claim}{Claim}{Claims}
\crefname{example}{Example}{Examples}
\crefname{algorithm}{Algorithm}{Algorithms}
\crefname{problem}{Problem}{Problems}
\crefname{definition}{Definition}{Definitions}
\crefname{exercise}{Exercise}{Exercises}
\newtheorem{theorem}{Theorem}[section]
\newtheorem*{theorem*}{Theorem}
\newtheorem{lemma}[theorem]{Lemma}
\newtheorem*{lemma*}{Lemma}
\newtheorem{fact}[theorem]{Fact}
\newtheorem*{fact*}{Fact}
\newtheorem*{proposition*}{Proposition}
\newtheorem{corollary}[theorem]{Corollary}
\newtheorem*{corollary*}{Corollary}
\newtheorem*{hypothesis*}{Hypothesis}
\newtheorem*{conjecture*}{Conjecture}
\theoremstyle{definition}
\newtheorem{definition}[theorem]{Definition}
\newtheorem*{definition*}{Definition}
\newtheorem*{construction*}{Construction}
\newtheorem*{example*}{Example}
\newtheorem*{question*}{Question}
\newtheorem{algorithm}[theorem]{Algorithm}
\newtheorem*{algorithm*}{Algorithm}
\newtheorem*{assumption*}{Assumption}
\newtheorem*{problem*}{Problem}
\newtheorem*{openquestion*}{Open Question}
\theoremstyle{remark}
\newtheorem*{claim*}{Claim}
\newtheorem{remark}[theorem]{Remark}
\newtheorem*{remark*}{Remark}
\newtheorem*{observation*}{Observation}
\let\originalleft\left
\let\originalright\right
\renewcommand{\left}{\mathopen{}\mathclose\bgroup\originalleft}
\renewcommand{\right}{\aftergroup\egroup\originalright}
\let\latexparagraph\paragraph
\RenewDocumentCommand{\paragraph}{som}{%
  \IfBooleanTF{#1}
    {\latexparagraph*{#3}}
    {\IfNoValueTF{#2}
       {\latexparagraph{\maybe@addperiod{#3}}}
       {\latexparagraph[#2]{\maybe@addperiod{#3}}}%
  }%
}
\newcommand{\maybe@addperiod}[1]{%
  #1\@addpunct{.}%
}
\newcommand{\Authornote}[2]{}
\newcommand{\Authornotecolored}[3]{}
\newcommand{\Authorcomment}[2]{}
\newcommand{\Authorfnote}[2]{}
\newcommand{\paren}[1]{(#1)}
\newcommand{\Paren}[1]{\left(#1\right)}
\newcommand{\bigparen}[1]{\big(#1\big)}
\newcommand{\Bigparen}[1]{\Big(#1\Big)}
\newcommand{\brac}[1]{[#1]}
\newcommand{\Brac}[1]{\left[#1\right]}
\newcommand{\abs}[1]{\lvert#1\rvert}
\newcommand{\Abs}[1]{\left\lvert#1\right\rvert}
\newcommand{\card}[1]{\lvert#1\rvert}
\newcommand{\set}[1]{\{#1\}}
\newcommand{\Set}[1]{\left\{#1\right\}}
\newcommand{\bigset}[1]{\big\{#1\big\}}
\newcommand{\norm}[1]{\lVert#1\rVert}
\newcommand{\Norm}[1]{\left\lVert#1\right\rVert}
\newcommand{\Normt}[1]{\Norm{#1}_2}
\newcommand{\Snormt}[1]{\Norm{#1}^2_2}
\newcommand{\normo}[1]{\norm{#1}_1}
\newcommand{\Normo}[1]{\Norm{#1}_1}
\newcommand{\normi}[1]{\norm{#1}_\infty}
\newcommand{\iprod}[1]{\langle#1\rangle}
\newcommand{\Iprod}[1]{\left\langle#1\right\rangle}
\newcommand{\Esymb}{\mathbb{E}}
\newcommand{\Psymb}{\mathbb{P}}
\newcommand{\Vsymb}{\mathbb{V}}
\DeclareMathOperator*{\E}{\Esymb}
\DeclareMathOperator*{\Var}{\Vsymb}
\DeclareMathOperator*{\ProbOp}{\Psymb}
\renewcommand{\Pr}{\ProbOp}
\renewcommand{\ij}{{ij}}
\newcommand{\bits}{\{0,1\}}
\newcommand{\seteq}{\mathrel{\mathop:}=}
\newcommand{\from}{\colon}
\newcommand{\Mid}{\nonscript\;\middle\vert\nonscript\;}
\newcommand{\mper}{\,.}
\newcommand\bdot\bullet
\DeclareMathOperator{\Ind}{\mathbf 1}
\DeclareMathOperator{\Tr}{Tr}
\DeclareMathOperator{\poly}{poly}
\DeclareMathOperator{\polylog}{polylog}
\DeclareMathOperator{\supp}{supp}
\DeclareMathOperator{\rank}{rank}
\newcommand{\Hoelder}{H\"{o}lder\xspace}
\newcommand{\Holder}{\Hoelder}
\newcommand{\Z}{\mathbb Z}
\newcommand{\N}{\mathbb N}
\newcommand{\R}{\mathbb R}
\newcommand{\cA}{\mathcal A}
\newcommand{\cB}{\mathcal B}
\newcommand{\cG}{\mathcal G}
\newcommand{\cP}{\mathcal P}
\newcommand{\cQ}{\mathcal Q}
\newcommand{\cX}{\mathcal X}
\newcommand{\cY}{\mathcal Y}
\newcommand{\cZ}{\mathcal Z}
\newcommand{\bbS}{\mathbb S}
\renewcommand{\leq}{\leqslant}
\renewcommand{\le}{\leqslant}
\renewcommand{\geq}{\geqslant}
\renewcommand{\ge}{\geqslant}
\let\epsilon=\varepsilon
\numberwithin{equation}{section}
\newcommand\MYcurrentlabel{xxx}
\newcommand{\MYstore}[2]{%
  \global\expandafter \def \csname MYMEMORY #1 \endcsname{#2}%
}
\newcommand{\MYload}[1]{%
  \csname MYMEMORY #1 \endcsname%
}
\newcommand{\MYnewlabel}[1]{%
  \renewcommand\MYcurrentlabel{#1}%
  \MYoldlabel{#1}%
}
\newcommand{\MYdummylabel}[1]{}
\newcommand{\torestate}[1]{%
  \let\MYoldlabel\label%
  \let\label\MYnewlabel%
  #1%
  \MYstore{\MYcurrentlabel}{#1}%
  \let\label\MYoldlabel%
}
\newcommand{\restatetheorem}[1]{%
  \let\MYoldlabel\label
  \let\label\MYdummylabel
  \begin{theorem*}[Restatement of \cref{#1}]
    \MYload{#1}
  \end{theorem*}
  \let\label\MYoldlabel
}
\newcommand{\restatelemma}[1]{%
  \let\MYoldlabel\label
  \let\label\MYdummylabel
  \begin{lemma*}[Restatement of \cref{#1}]
    \MYload{#1}
  \end{lemma*}
  \let\label\MYoldlabel
}
\newcommand{\restateprop}[1]{%
  \let\MYoldlabel\label
  \let\label\MYdummylabel
  \begin{proposition*}[Restatement of \cref{#1}]
    \MYload{#1}
  \end{proposition*}
  \let\label\MYoldlabel
}
\newcommand{\restatefact}[1]{%
  \let\MYoldlabel\label
  \let\label\MYdummylabel
  \begin{fact*}[Restatement of \cref{#1}]
    \MYload{#1}
  \end{fact*}
  \let\label\MYoldlabel
}
\newcommand{\restate}[1]{%
  \let\MYoldlabel\label
  \let\label\MYdummylabel
  \MYload{#1}
  \let\label\MYoldlabel
}
\newcommand{\e}{\epsilon}
\newcommand{\eps}{\epsilon}
\newcommand*{\Id}{\mathrm{Id}}
\newcommand*{\normf}[1]{\norm{#1}_{\mathrm{F}}}
\newcommand*{\Normf}[1]{\Norm{#1}_{\mathrm{F}}}
\newenvironment{algorithmbox}{\begin{mdframed}[nobreak=true]\begin{algorithm}}{\end{algorithm}\end{mdframed}}
\newcommand{\normm}[1]{\norm{#1}_\textnormal{max}}
\renewcommand{\normo}[1]{\norm{#1}_{\textnormal{sum}}}
\renewcommand{\Normo}[1]{\Norm{#1}_{\textnormal{sum}}}
\providecommand{\normsum}[1]{\norm{#1}_{\textnormal{sum}}}
\providecommand{\Normsum}[1]{\Norm{#1}_{\textnormal{sum}}}
\renewcommand{\normi}[1]{\norm{#1}_{\max}}
\providecommand{\normmax}[1]{\norm{#1}_{\max}}
\providecommand{\Yin}{Y_{\mathrm{in}}}
\providecommand{\bbG}{\mathbb G}
\newcommand{\Score}{\text{Score} }
\newcommand{\dsdistance}{\delta_{\text{ds}}}
\providecommand{\normstar}[1]{\norm{#1}_{*}}
\providecommand{\Znull}{{Z_0}}
\providecommand{\Qnull}{Q_0}
\providecommand{\Bhat}{\hat B}
\providecommand{\Yhat}{\hat Y}
\providecommand{\deltatwosquared}{\delta_2^2}
\providecommand{\lbdeltatwo}{\underline{\delta_2}}
\providecommand{\Bnull}{{B_0}}
\providecommand{\Znulltrans}{Z_0^{\top}}
\providecommand{\Vnull}{V_0}
\providecommand{\Vnulltrans }{{V_0^{\top}}}
\providecommand{\SBM}{\text{SBM}}
\newcommand{\tE}{\mathop{\tilde{\mathbb{E}}}}
\newcommand{\dnode}{d_{\mathrm{node}}}
\newcommand{\fsos}{f_{\mathrm{sos}}}
\newcommand{\hatfsos}{\hat{f}_{\mathrm{sos}}}
\providecommand{\Lap}{\text{Lap}}
\newcommand*{\transpose}[1]{{#1}{}^{\mkern-1.5mu\mathsf{T}}}
\title{
  Private graphon estimation via sum-of-squares
}
\author{
    Hongjie Chen\thanks{ETH Z\"urich.}
  \and
    Jingqiu Ding\thanks{ETH Z\"urich.}
    \and
    Tommaso d'Orsi\thanks{BIDSA, Bocconi University.
  }
  \and
  Yiding Hua\thanks{ETH Z\"urich.
  }
    \and
    Chih-Hung Liu\thanks{National Taiwan University. Part of the work is done while the author was visiting ETH Z\"urich.
  }
    \and
    David Steurer\thanks{ETH Z\"urich.
  }
}
\begin{document}

\pagestyle{empty}

\maketitle
\thispagestyle{empty} %

\begin{abstract}

  We develop the first pure node-differentially-private algorithms for learning stochastic block models and for graphon estimation with polynomial running time for any constant number of blocks.
The statistical utility guarantees match those of the previous best information-theoretic (exponential-time) node-private mechanisms for these problems.
The algorithm is based on an exponential mechanism for a score function defined in terms of a sum-of-squares relaxation whose level depends on the number of blocks.

The key ingredients of our results are
(1) a characterization of the distance between the block graphons in terms of a quadratic optimization over the polytope of doubly stochastic matrices,
(2) a general sum-of-squares convergence result for polynomial optimization over arbitrary polytopes, and
(3) a general approach to perform Lipschitz extensions of score functions as part of the sum-of-squares algorithmic paradigm.

\end{abstract}

\clearpage

\microtypesetup{protrusion=false}
\tableofcontents{}
\microtypesetup{protrusion=true}

\clearpage

\pagestyle{plain}
\setcounter{page}{1}

\section{Introduction}

Differential privacy is an appealing and intensely studied theoretical framework for privacy preserving data analysis~\cite{dwork2006calibrating, dwork2014algorithmic}.
In the context of graph data, researchers investigate two natural versions of differential privacy:
For \textit{edge-differential privacy}, nodes in the input graph represent public information and only individual relationships, encoded by edges, represent sensitive information that the released data must not reveal.
For \textit{node-differential privacy}, individual nodes themselves (along with all their incident edges) represent sensitive information that we require the algorithm to protect.
(See \cref{definition:node-dp} for a precise definition of node-differential privacy.)

Under edge-differential privacy, it is feasible to release accurate approximations of both global properties of the graph (e.g. the degree distribution)~\cite{nissim2007smooth, hay2009accurate, blocki2012johnson, gupta2012iterative, karwa2014private, xiao2014differentially} and local properties of the nodes (e.g. whether two particular nodes belong to the same community or not){}\cite{mohamed2022differentially, chen2023private}.

On the other hand, node-differential privacy is significantly more stringent.
Here, we can only hope to release global statistics of the graph.
But even for the simplest such properties, private algorithms turn out to be challenging to design~\cite{blocki2013differentially,nodeDP,borgs2015private,MR3631012,borgs2018revealing,ullman2019efficiently,kalemaj2023node}.
The reason is that global statistics often are highly sensitive to modifications of even a single node.
For example, in a $n$-vertex graph, the number of connected components can change by an additive factor of \(\Omega(n)\) when we modify a single vertex.
In contrast, it can change by at most \(1\) when we add or remove a single edge.

\paragraph{Parameter estimation and privacy}
Given the above picture, results \cite{blocki2013differentially, borgs2015private, borgs2018revealing, ullman2019efficiently} in node differential privacy  adheres to the following blueprint:
\textit{(i)} they preserve privacy on all graphs,
\textit{(ii)} they consider a specific graph model,
\textit{(iii)} they argue how, from typical instances of that model, they  can accurately estimate its parameters.
The quality of the result then depends on the accuracy of the estimation, the significance of the models considered and the running time of the underlying algorithm.

In terms of (statistical) utility guarantees and privacy trade-offs, one of the best known node-private mechanisms is the mechanism for graphon estimation (closely related to parameter-learning for stochastic block model, see \cref{sec:graphon-preliminaries} for a definition) of \cite{borgs2015private, borgs2018revealing}.
This mechanism summarizes the global structure of the input graph as a small \(k\)-by-\(k\) matrix, whose entries represent connectivity parameters between different regions of the input graph.
A major drawback of this mechanism is its high computational cost: for every value of \(k\), it takes time exponential in the size of the input graph.
So far researchers have been able to turn this mechanism into a computationally efficient one only for edge-density estimation, roughly corresponding to the case \(k=1\) \cite{ullman2019efficiently}.
In this work, we show that for\textit{ every} value of \(k\), it is possible to achieve the utility and privacy guarantees of \cite{borgs2015private, borgs2018revealing} in time polynomial in the size of the input graph (but exponential in the output parameter \(k\)).

To describe the utility guarantees of our polynomial-time node-private algorithms, we first consider a slightly simplified version of the general stochastic block model with perfectly balanced block sizes.
Since our algorithms turn out to be robust against various kinds of errors (e.g. due to node corruptions or model misspecification), our utility guarantees for this simplified model extend to a wide range of models including those the mechanism of \cite{borgs2015private, borgs2018revealing} has previously been analyzed for.

\begin{definition}[Balanced stochastic block model]
	\label{definition:sbm-balanced}
	Let \(n,d,k \ge 2\) with \(k\) an integer, \(n\) a multiple of \(k\), and \(d \le n\).
	For a symmetric \(k\)-by-\(k\) matrix \(\Bnull\) whose entries are nonnegative and average to \(1\), we consider the following distribution over \(n\)-vertex graphs, called the \emph{\((\Bnull, d, n)\)-block model}\footnote{Since the parameter $k$ will always be clear from the context, we omit it in our notation.}:
	\begin{enumerate}
		\item Partition the vertex set \([n]\) uniformly at random into \(k\) parts of equal size.
		\item Then for every pair of parts \(a,b\in [k]\) and every pair of vertices \(i,j\) in part~\(a\) and part~\(b\), connect vertices \(i\) and \(j\) by an edge in \(\bm G\) independently at random with probability \(\tfrac d n \cdot \Bnull(a,b)\).
	\end{enumerate}
\end{definition}

Up to the balancedness of the partition, \cref{definition:sbm-balanced} is the canonical definition of stochastic block models \cite[Section 2.1]{sbmbook}.
Since the entries of \(\Bnull\) average to \(1\), every vertex has expected degree \(d\) in \(\bm G\).
Our main objective is to privately recover $\Bnull$ from the observed graph $\bm G\,.$

\paragraph{Distance measure for block-connectivity matrices}
For our objective, the first challenge comes from the observation that $\Bnull$ is \textit{not} identifiable.
Indeed, a simple permutation of $\Bnull$ may yield a very different matrix but lead to the same graph distribution. Following closely the existing literature \cite{wolfe2013nonparametric, borgs2015private, borgs2018revealing, MR3611494, mcmillan2018non, MR4319235}, we define a distance measure for \(k\)-by-\(k\) matrices \(B_1,B_2\in \R^{k\times k}\) that is invariant under permutations acting on the rows and columns of the matrices,
\begin{align}\label{eq:sbm-distance}
	\delta_2 (B_1,B_2) \seteq \min_{\substack{\phi_1,\phi_2 \from [0,1]\to [k]\\ \text{measure preserving}}} \Norm{B_1^{\phi_1} - B_2^{\phi_2}}_2\,.
\end{align}
Here \(B^\phi\from [0,1]\times [0,1]\to \R\) denotes the function \((x,y)\mapsto B(\phi(x),\phi(y))\), where \(B\in \R^{k\times k}\) and \(\phi\from [0,1]\to [k]\).
We say \(\phi\from [0,1]\to [k]\) is measure preserving if all preimages \(\phi^{-1}(1),\ldots,\phi^{-1}(k)\subseteq [0,1]\) have measure \(1/k\).

This definition now allows us to raise a well-formed question: Given a graph $\bm G$ sampled from the $(B_0,d,n)$-block model, can we find some $\hat{B}(\bm G)$, under node differential privacy, such that $\delta_2(\Bnull, \hat{B}(\bm G))$ is small? If so, can we do it efficiently?

\paragraph{State of the art and computational complexity}
Borgs et al. \cite{borgs2015private, borgs2018revealing} provided the first algorithms recovering the connectivity probability matrix $\Bnull$ with error rates matching, in many regimes, the optimal non-private procedures. 
Unfortunately, their results crucially rely on evaluating an NP-hard function and require time exponential in $n$ even for $k=2$ (see \cref{sec:techniques}). 
Thus it remained a fascinating open question that if these results could be captured by polynomial time algorithms.

In this work, we positively answer this question by introducing the \textit{first pure node differentially private polynomial time algorithm for stochastic block models estimation}. The error convergence of our algorithm is tight and match those of existing inefficient algorithms, up to an information-computation gap \cite{luo2023computational}.

\paragraph{Robustness and graphons}
As already mentioned,  Borgs et al. \cite{borgs2015private, borgs2018revealing} tackled more general models than \cref{definition:sbm-balanced}.
This is because their algorithms are robust to agnostic perturbations.
The introduction of an agnostic error allows one to consider the  non-parametric model of graphons, capturing more complex distributions such as random geometric graphs. 
Indeed many previous (non-private) results extend to these settings \cite{wolfe2013nonparametric, borgs2015private, borgs2018revealing, MR3611494, mcmillan2018non, MR4319235}.
Among (non-private) polynomial time algorithms,  \cite{pmlr-v80-xu18a} provided an algorithm for graphon estimation, albeit with worse error convergence.

Our efficient node differentially private algorithm also naturally applies to the more general settings of graphon estimation, thus capturing the models considered by Borgs et al. \cite{borgs2015private, borgs2018revealing}. Furthermore, our algorithm improves the state of the art \textit{even} in the non-private settings providing  the (conjecturally) optimal error convergence among efficient algorithms.

\subsection{Results}

\paragraph{Stochastic block model}

We present the first node differentially private algorithm for learning stochastic block models with polynomial running time for any constant number of blocks.

\begin{theorem}[Learning SBMs with node differential privacy]
	\torestate{\label{thm:mainSBM}
  Consider a graph $\bm G$ sampled from the $(\Bnull, d, n)$-block model defined in \cref{definition:sbm-balanced}. 
  Suppose $\normi{\Bnull}\leq R\leq \sqrt{\frac{n\epsilon}{\polylog(n)}}$ for some $R\in \R_+$, and $\e^4 n^2\geq \polylog(n)$\footnote{We add these minor assumptions such that the error induced by estimating $d$ can be neglected. Particularly we avoid the error term $\frac{1}{d^2\e^2}$ which appears in \cite{borgs2018revealing}}.
  Then there exists an $n^{\poly(k)}$-time $\epsilon$-differentially node private algorithm which, given $\bm G$ and $R$, outputs $\hat{B}(\bm G)\in [0,R]^{k\times k}$ such that with probability at least $1-\frac{1}{d^{100}}-O\Paren{\frac{R}{n}}$, 
  \[
		\deltatwosquared\bigparen{\Bhat(\bm G), \Bnull}\leq  O\Paren{\frac{R k}{d}+\frac{R^2 k^2\log(n)}{n\epsilon}}\,.
  \]
}
\end{theorem}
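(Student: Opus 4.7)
The plan is to follow the exponential-mechanism-on-a-score-function blueprint used by \cite{borgs2015private, borgs2018revealing}, but replace their NP-hard score by an SoS relaxation that is provably close to the true score, and release $\Bhat$ by sampling from the exponential distribution
\[
  \Pr[\Bhat = B \mid \bm G] \;\propto\; \exp\!\Paren{\tfrac{\epsilon}{2\Delta}\, \hatfsos(B,\bm G)}\,,
\]
where $B$ ranges over a sufficiently fine net of $[0,R]^{k\times k}$, $\hatfsos$ is the SoS-based score, and $\Delta$ is its node-sensitivity after a Lipschitz extension. This gives $\epsilon$-node-DP by the standard exponential-mechanism argument and running time $n^{\poly(k)}$ from the SoS degree we will need.

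First I would define an idealized score $f(B,G)$ that measures the $\delta_2^2$-distance between $B$ and a natural empirical block matrix read off from $G$. Using the announced ingredient (1), I would rewrite
\[
  \delta_2^2(B_1,B_2) \;=\; \min_{X \in \mathrm{DS}_k} Q_{B_1,B_2}(X)
\]
for some explicit quadratic $Q_{B_1,B_2}$ over the Birkhoff polytope $\mathrm{DS}_k$ of $k\times k$ doubly stochastic matrices (this is where the discrete minimization over measure-preserving $\phi$ collapses into a continuous convex-domain quadratic program). Plugging in the $\mathrm{DS}_k$-constraints as SoS axioms, I would then invoke ingredient (2)—the general SoS convergence theorem for polynomial optimization over polytopes—to conclude that at SoS level $\poly(k)$, the pseudo-expectation value $\hat f_{\mathrm{sos}}$ approximates $f$ to within an additive error that is negligible compared to the two target terms $Rk/d$ and $R^2k^2\log n/(n\epsilon)$.

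The crucial privacy step is bounding the node-sensitivity of $\hatfsos$. On a graph $\bm G$ drawn from the model, every vertex has degree $\approx d$, so changing one vertex perturbs the natural empirical block matrix by $O(d/n)$ in each of $O(n/k)$ entries, yielding sensitivity roughly $R^2 k/ n$ for the score; but on arbitrary (adversarial) graphs the sensitivity can be much larger because a single high-degree vertex can distort everything. This is precisely where ingredient (3) is needed: I would perform a Lipschitz extension \emph{within} the SoS paradigm, replacing $\hatfsos(B,G)$ by $\widehat{f}_{\mathrm{sos}}^{\text{ext}}(B,G) = \sup_{G'}\{\hatfsos(B,G') - L\cdot \dnode(G,G')\}$, choosing $L$ to match the sensitivity on typical graphs, and arguing that this extension is itself computable as an SoS program so the overall mechanism remains polynomial time. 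Because the extension agrees with $\hatfsos$ on the set of graphs that look like samples from a block model with $\norm{\Bnull}_\infty \le R$, the utility analysis is unaffected, while the sensitivity is now $\Delta = O(R^2 k/n)$ uniformly.

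For utility I would argue in three steps. (a) Concentration: with probability $1-1/d^{100}-O(R/n)$, the graph $\bm G$ is such that the true $\Bnull$ achieves score $f(\Bnull,\bm G)\le O(Rk/d)$, which carries over to $\hatfsos$ up to the SoS approximation error. (b) Exponential-mechanism guarantee: the sample $\Bhat$ satisfies $\hatfsos(\Bhat,\bm G) \le \hatfsos(\Bnull,\bm G) + O(\Delta \log(N)/\epsilon)$, where $\log N = O(k^2\log(R n))$ is the log-covering-number of the net over $[0,R]^{k\times k}$, contributing the term $R^2 k^2 \log(n)/(n\epsilon)$. (c) Converting score back to distance using SoS-soundness and the characterization in (1) to produce an actual permutation witness that realizes the claimed $\delta_2$ bound. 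I expect the main obstacle to be step (c) combined with the SoS rounding: showing that a low pseudo-expectation value of $Q_{\Bhat,\Bnull}$ on $\mathrm{DS}_k$ actually certifies the existence of a near-optimal \emph{integral} (measure-preserving) $\phi$, so that the $\delta_2^2$ bound in the theorem follows with only polynomially many SoS levels; the quantitative SoS-for-polytopes result of ingredient (2) is what makes this rounding tight enough to give exactly the stated rates.
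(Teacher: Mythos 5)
Your high-level blueprint (exponential mechanism on an SoS score, Lipschitz-extended for node-DP, with the doubly-stochastic characterization used to get $\delta_2$) matches the paper, but there are several concrete gaps in how the pieces are assembled.

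The most significant issue is \emph{where the SoS relaxation lives}. You place the SoS relaxation over the $k\times k$ Birkhoff polytope $\mathrm{DS}_k$, with ingredient~(2) giving a multiplicative approximation for a $k^2$-variable quadratic. But this cannot define the score function: the quadratic $Q_{B,\Bnull}$ involves the \emph{unknown} $\Bnull$, and there is no "natural empirical block matrix read off from $G$" without first solving for the community partition. The paper's score is instead
\[
  s(B;Y) \;=\; \max_{Z\in\cZ(n,k)} \Iprod{ZB\transpose{Z},Y} - \tfrac12\normf{ZB\transpose{Z}}^2,
\]
and the SoS relaxation is over $Z$ (an $nk$-variable system), not over $\mathrm{DS}_k$. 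The doubly-stochastic characterization and the polytope-approximation theorem enter only \emph{after the fact}, in the rounding step: one shows that the degree-$4$ polynomial $\tfrac1{n^2}\normf{ZB\transpose Z-\Znull\Bnull\Znulltrans}^2$ factors through the linear map $S(Z)=\tfrac{k}{n}\transpose{Z}\Znull$ into the $k^2$-variable quadratic $p(S;B,\Bnull)$ over $\mathrm{DS}_k$, and only then invokes the level-$k^{O(1)}$ multiplicative approximation. This variable-reduction step from $nk$ to $k^2$ variables is the bridge you are missing; without it you cannot connect the SoS pseudo-expectation over $Z$ to anything optimizable over $\mathrm{DS}_k$.

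Second, you do not address the identifiability step: why does a small optimization gap $s(\Bnull;Y)-s(B;Y)$ imply that $\normf{ZB\transpose Z-\Znull\Bnull\Znulltrans}^2$ is small, \emph{within the SoS proof system}? The real-world proof uses $\iprod{U,\hat Y}\le\sqrt{\rank U}\cdot\normf{U}\cdot\norm{\hat Y}$ in the eigenbasis of $U$, an object unavailable to SoS. The paper devotes a section to proving an SoS version of this spectral \Hoelder inequality (via $(\Id-V\transpose V)U(\Id-\Vnull\Vnulltrans)=0$), and explicitly notes it is not a priori clear such a proof should exist. Your step~(a) takes this inequality for granted, but it is the technical heart of the utility analysis and supplies the $k/d$ error term.

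Third, the Lipschitz extension you propose, $\sup_{G'}\{\hatfsos(B,G')-L\cdot\dnode(G,G')\}$, is the McShane sup-convolution, which is exactly the kind of extension the paper avoids because it is not efficiently computable (the sup ranges over exponentially many graphs, and the objective is itself a maximum). The paper instead enlarges the SoS system by an auxiliary $n\times n$ matrix variable $Y$ with the constraints $0\le Y\le\Yin$, $\tfrac1n Y\Ind\le 20R\cdot\Ind$, $Y=\transpose Y$; sensitivity then follows by zeroing one row and column at the pseudo-distribution level. These are different constructions, and the handwave "arguing that this extension is itself computable as an SoS program" is precisely the step that fails for the sup-convolution.

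Finally, two smaller problems: (i) your sensitivity figure $\Delta=O(R^2k/n)$ combined with $\log N=O(k^2\log n)$ gives $O(R^2k^3\log n/(n\epsilon))$, a factor $k$ larger than the stated bound (the paper's sensitivity after normalizing by $n^2$ is $O(R^2/n)$, with no $k$); and (ii) the theorem is stated with $d$ \emph{not} public, so a separate $\epsilon/2$-private edge-density estimator feeding $\Yin=A/\hat{\bm\rho}$ into the matrix algorithm is needed, which you do not mention but is responsible for the side conditions $R\le\sqrt{n\epsilon/\polylog(n)}$ and $\epsilon^4n^2\ge\polylog(n)$.
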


Notice that the algorithm behind \cref{thm:mainSBM} is parametrized by $R$ and its utility guarantees hold whenever $\Bnull$ has entries bounded by $R$.
In contrast, $d$ is not assumed to be public and it is (approximately) learned by the algorithm from the input data.

In comparison, previous node differentially private algorithms \cite{borgs2015private, borgs2018revealing} run in $\exp(n)$ time even for $k=2$. 
Moreover, the $\exp(n)$ time algorithms by Borgs et al. \cite{borgs2015private, borgs2018revealing} require average degree $d$ at least logarithmic in $n$.
On the other hand, to the best of our knowledge, no previous polynomial time algorithm can match our guarantees even without the privacy requirement.
The non-private algorithm by Xu~\cite{pmlr-v80-xu18a} based on singular value decomposition can achieve error rate $O\Paren{\frac{k}{d}}\,,$ only when $d\geq \polylog(n)$.

Next, we interpret each error term in \cref{thm:mainSBM}(assuming $R=O(1)$). 
Recent results \cite{luo2023computational} provide strong evidence that obtaining better guarantees than $O(\frac{k}{d})$ is inherently hard for polynomial time algorithms. 
Indeed, this error term comes from the (conjectured) information computation gap for stochastic block models.
On the other hand, the error term $\frac{k^2\log n}{n\epsilon}$ due to the privacy requirements matches the guarantees of the previous exponential time algorithms \cite{borgs2015private, borgs2018revealing}. 
We provide a lower bound in \cref{thm:lower_bound} that shows the term $\frac{k^2\log n}{n\epsilon}$ is inherent up to a $\log n$ factor.

It is also interesting to compare our running time and utility guarantees to those one could expect to obtain from the popular subsample-and-aggregate framework~\cite{nissim2007smooth} for private statistical data analysis.
Following this framework, we would randomly split the vertex set into \(T\) parts and estimate \(B_0\) independently in each part using a non-private algorithm (e.g., the one we present \cref{sec:robustness}).
In this way, we obtain \(T\) estimates \(\hat B_1,\ldots,\hat B_T\) for \(B_0\) and the goal is to privately aggregate them to a single output \(\hat B\).
Even testing whether two of the initial estimates are meaningfully close boils down to an optimization problem similar to approximate graph isomorphism (on weighted \(k\)-vertex graphs).
Hence, one would expect algorithms for aggregation to take time exponential in \(k\) (similar to our algorithms).
At the same time, we can argue that the utility guarantees of such a mechanism necessarily is significantly worse compared to our utility guarantees.
Private aggregation must ensure that swapping any one of the \(T\) estimates changes the output distribution multiplicative by at most a factor of \(e^{\e}\), where \(\e\) is the privacy parameter.
Standard packing arguments for the space of possible choices of \(B_0\) shows that we must have \(T \gg k^2 /\e\).
However, since only a \(1/T\) fraction of the edge remains after the splitting operation, the best error bound we can hope for information-theoretically from the remaining set of edge is \(\Omega(T / d)\ge \Omega(\tfrac{k^2}{\e d})\).
Even for \(\e=0.1\), this error is worse than our bound by a factor of \(k\).
More, importantly this error bound would scale directly with the privacy parameter \(\e\), whereas for our algorithm privacy has \emph{no statistical cost} for a wide range of privacy parameters (namely, all privacy parameters \(\e\) satisfying \(\e \ge \tfrac {R k d \log n}{n}\)).

\paragraph{Graphon}

A graphon is a bounded and measurable function $W:[0,1]^2\to \R_{+}$ such that $W(x,y)=W(y,x)$, which is said to be normalized if $\int W=1$.
Given a normalized graphon~$W$ and an edge density parameter~$\rho$, a $(W,\rho,n)$-random graph on $n$ vertices is generated by first picking uniformly at random a value $\bm x_i$ in $[0,1]$ for each vertex $i\in[n]$, and then connecting every pair of vertices $i,j\in[n]$ independently with probability $\rho\cdot W(\bm x_i,\bm x_j)$.
The $\delta_2$ distance between two graphons $W,W'$ is defined to be
\[
  \delta_2(W, W')\seteq  \inf_{\substack{\phi \from [0,1]\to [0,1]\\ \text{measure preserving}}}\Normt{W^{\phi}-W'}\,,
\]
where $W^{\phi}(x,y) \seteq W(\phi(x),\phi(y))$.

Our result on graphon estimation under node differential privacy is the following.

\begin{theorem}[Graphon estimation with node differential privacy, formal statement in \cref{thm:formalmaintheorem}]
	\torestate{
		\label{thm:main_graphon}
		Let $\Lambda>0, \rho\in[0,1/\Lambda]$, and let ${W:[0,1]^2\to [0,\Lambda]}$ be a normalized graphon.  
    Suppose $\normi{\Bnull}\leq R\leq \sqrt{\frac{n\epsilon}{\polylog(n)}}$ for some $R\in \R_+$, and $\e^4 n^2\geq \polylog(n)$. 
    Then there exists an $n^{\poly(k)}$-time $\epsilon$-differentially node private algorithm that, given a \mbox{$(W,\rho,n)$-random} graph $\bm G$ and $R$, outputs a graphon $\hat{W}(\bm{G})$ such that,
		\begin{equation*}
			\E\delta_2^2(\hat{W}(\bm{G}),W) \leq O_R\Paren{
      \frac{k}{\rho n}+\frac{k^2\log(n)}{n\epsilon}+\sqrt{\frac{k}{n}}+\Paren{\epsilon_k^{(O)}(W)}^2} \,,
		\end{equation*}
		where $\epsilon_k(W)$ denotes the minimum $\delta_2$ distance between $W$ and $k$-block graphons.
	}
\end{theorem}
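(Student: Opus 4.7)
The plan is to reduce graphon estimation to the block-model estimation of \cref{thm:mainSBM} by approximating $W$ with its best $k$-block graphon. Concretely, I would pick a symmetric $k$-by-$k$ matrix $\Bnull$ with $\normi{\Bnull}\le R$ whose associated block graphon $W_{\Bnull}$ is a near-minimizer of $\delta_2(W,\cdot)$ over $k$-block graphons, so that $\delta_2(W, W_{\Bnull}) = O(\epsilon_k^{(O)}(W))$; view the input $(W,\rho,n)$-random graph $\bm G$ as an agnostically perturbed sample from the $(\Bnull, \rho n, n)$-block model; apply a robust version of the algorithm behind \cref{thm:mainSBM} to obtain $\Bhat(\bm G)$; and output $\hat W(\bm G) \seteq W_{\Bhat(\bm G)}$. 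The triangle inequality together with $(a+b)^2 \le 2a^2+2b^2$ gives
\[
\deltatwosquared\bigparen{\hat W(\bm G), W} \le 2\,\deltatwosquared\bigparen{W_{\Bhat(\bm G)}, W_{\Bnull}} + 2\,\deltatwosquared\bigparen{W_{\Bnull}, W}\,,
\]
and the second summand is at most $O\bigparen{(\epsilon_k^{(O)}(W))^2}$ by construction, so the task reduces to bounding the SBM-side error after accounting for the agnostic corruption imposed by the graphon.

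To quantify that corruption, I would couple $\bm G$ to a pure $(\Bnull, \rho n, n)$-block model graph on the same vertex set. Drawing $\bm x_i \sim \mathrm{Unif}[0,1]$ and setting $\bm \sigma(i) \seteq \lceil k\bm x_i \rceil \in [k]$ produces a partition that is only approximately balanced: by multinomial concentration each block has size $n/k$ up to $O(\sqrt{(n/k)\log n})$ with high probability, so at most $O(\sqrt{nk\log n})$ vertices must be relabeled to meet the balanced requirement of \cref{definition:sbm-balanced}. In addition, within each pair of blocks the true Bernoulli parameters are $\rho W(\bm x_i, \bm x_j)$ rather than the constant $\rho \Bnull(\bm\sigma(i), \bm \sigma(j))$, and this fluctuation carries $\delta_2$-mass $O(\epsilon_k^{(O)}(W))$ by the choice of $\Bnull$, plus negligible sampling noise. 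Thus $\bm G$ differs from a sample of the exactly balanced block model by an $O(\sqrt{nk\log n})$-vertex corruption and a small edge-probability shift.

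The main obstacle is proving robustness of the algorithm of \cref{thm:mainSBM} against both types of perturbation, since the theorem as stated treats only pure block-model inputs. I would revisit the SoS-based score function and its Lipschitz extension (the paper's third key ingredient) and show that corrupting $t$ vertices changes $\deltatwosquared(\Bhat, \Bnull)$ by at most $O_R(t/n)$, giving the $O_R(\sqrt{k/n})$ contribution when $t = O(\sqrt{nk\log n})$; the small per-edge probability shift can be absorbed into the same moment-matching analysis that already accounts for binomial sampling noise. The high-probability guarantee of \cref{thm:mainSBM} is then converted into the expectation bound of \cref{thm:main_graphon} using the deterministic cap $\deltatwosquared(\hat W, W) \le O(\Lambda^2)$ on the failure event, whose contribution is negligible since the failure probability is $\poly(1/n)$. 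Combining everything, the $\frac{k}{\rho n}$ and $\frac{k^2\log n}{n\e}$ terms are inherited directly from \cref{thm:mainSBM} (identifying $d = \rho n$), while the $\sqrt{k/n}$ and $(\epsilon_k^{(O)}(W))^2$ terms come from the graphon-to-SBM reduction.
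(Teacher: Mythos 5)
Your high-level plan---approximate $W$ by its best $k$-block graphon, view $\bm G$ as an agnostically corrupted SBM sample, and feed it to the SBM machinery---matches the paper's intent, but the paper does not reduce \cref{thm:main_graphon} to \cref{thm:mainSBM} as a black box. Both theorems are proved directly from the abstract \cref{thm:abstract-main-theorem}, whose error bound is already stated in terms of a flexible decomposition $\Yin \approx Y_2 \approx Y_1 \approx Y_0 = Z_0 B_0 Z_0^\top$ with spectral-norm, Frobenius-norm, and $\ell_1$-norm slack. The graphon-induced model misspecification is absorbed directly into the Frobenius term $\normf{Y_1 - Y_0}$ by choosing $Z_0, B_0$ to be the best balanced block approximation of the \emph{realized} edge-probability matrix $\bm Q_0$ (this is $\hat\epsilon_k^{(O)}(\bm Q_0)$ in \cref{lem:agnosticSBM}), and this in turn is related to $\epsilon_k^{(O)}(W)$ plus $R^2\sqrt{k/n}$ in expectation via \cref{lem:block_approximation_error} and the Borgs et al. bound \cref{lem:samplingupperbound}. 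There is no vertex-corruption accounting in the paper's argument.

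This matters for two concrete reasons. First, your partition-balance argument gives $t = O(\sqrt{nk\log n})$ corrupted vertices and hence a $O_R(t/n) = O_R(\sqrt{k\log n / n})$ contribution, which is off by a $\sqrt{\log n}$ factor from the $\sqrt{k/n}$ term in the theorem; in the paper that term has a different origin (it is the $\ell_2$ discrepancy between $\bm Q_0$ and $\rho\cdot W$ under random sampling of $\bm x_i$, and comes with no $\log$). Second, you fix $\Bnull$ in advance as the best block graphon of $W$ and then invoke a triangle inequality $\deltatwosquared(\hat W, W) \le 2\deltatwosquared(W_{\hat B}, W_{\Bnull}) + 2\deltatwosquared(W_{\Bnull}, W)$, but the algorithm's score function does not ``know'' your chosen $\Bnull$: the effective reference point toward which $\hat B$ converges is whichever balanced block matrix best fits $\bm Q_0$ in Frobenius norm, which is a random quantity that depends on the realized labels $\bm x$. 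Establishing that this random reference point is itself $\delta_2$-close to your fixed $\Bnull$ is exactly the content of the $\hat\epsilon_k^{(O)}(\bm Q_0)$-to-$\epsilon_k^{(O)}(W)$ argument that you would still need, so the reduction via vertex corruption does not actually save you this step. The ``moment-matching analysis'' you invoke for the per-edge probability shift does not correspond to anything in the paper; there the shift is handled purely through the Frobenius-norm term of the abstract theorem, not through any distributional coupling. In short, the proposal has the right intuition but would need to re-derive a robustness-to-vertex-corruption lemma for the private exponential mechanism (which the paper does not state, and which is distinct from the non-private robustness result in \cref{sec:robustness}), would pick up an extraneous $\sqrt{\log n}$, and does not yet handle the identification of the algorithm's implicit reference block matrix with your chosen $\Bnull$.
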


We prove \cref{thm:main_graphon} by observing that our SBM algorithm in \cref{thm:mainSBM} is robust to agnostic error.
Our graphon algorithm outputs a $k$-block graphon achieving the same guarantee as previous inefficient algorithms~\cite{borgs2015private,borgs2018revealing}, except the term $\frac{Rk}{\rho n}$ that is suggested to be inherent for efficient algorithms~\cite{luo2023computational}.
Notably, our algorithm runs in polynomial time for every fixed $k$, while all previous private algorithms~\cite{borgs2015private,borgs2018revealing} run in $\exp(n)$ time even for $k=2$.

To properly understand the guarantees of \cref{thm:main_graphon}, it is instructive to closely look into the error terms. 
There is an immediate correspondence between the first two and those in \cref{thm:mainSBM}. 
The error term $O(\sqrt{k/n})$ is induced from sampling error, which is to say, the distance between the ground truth graphon $W$ and the graphon associated with edge connection probability matrix $\Qnull$.
The approximation error $\eps_k^{(O)}(W)^2$ is induced from the agnostic error in approximating the graphon $W$ by a $k$-block graphon.
These terms are unavoidable and match those of the inefficient algorithms in ~\cite{borgs2018revealing}.
Finally, the conditions $R\leq \sqrt{\frac{n\epsilon}{\polylog(n)}}$ and $\e^4 n^2\geq \polylog(n)$ come from privately estimating the average degree $d$ of the graph\footnote{In \cite{ullman2019efficiently}, when $\e^4 n^2\geq \polylog(n)$, they show that privacy can be achieved for free in estimating the edge density of Erdos-Renyi graphs. We essentially borrow their algorithms and analysis. Thus we require the similar conditions}.

\paragraph{Lower bound on the sample complexity for private mechanisms}

The error bounds of our algorithm results contain a term of the form \(\tfrac {k^2 \log n}{\e n}\).
Thus, our bounds improve over the trivial estimator that outputs the zero matrix for every input only if the number of vertices is sufficiently large, \(n \gg {k^2}/{\e}\), compared to the number of blocks and the privacy parameter.

We prove the following information-theoretic lower bound that shows that \emph{no} private mechanism can significantly improve over the trivial estimator unless \(n \gg k^2 / \e\).

\begin{theorem}[Sample complexity lower bound for private estimation of stochastic block model]\label{thm:lower_bound}
  Suppose there is an $\eps$-differentially private algorithm such that for any symmetric matrix $\Bnull\in[0,4]^{k\times k}$ with entries averaging to 1, on input $\bm G$ sampled from the $(\Bnull,d,n)$-block model, outputs $\hat{B}(\mathbf G)\in \R^{k\times k}$ satisfying
  \[
    \Pr\Paren{\delta_2\Paren{\hat{B}(\mathbf G), \Bnull}\leq \frac{1}{20}} \geq \frac{2}{3} \,.
  \]
  Then, we must have \(n \geq \Omega\Paren{\frac{k^2}{\eps}}\).
\end{theorem}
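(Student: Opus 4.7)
The plan is a standard packing argument combined with group privacy for pure differential privacy.

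First I would construct a packing $\{B_1, \ldots, B_M\} \subset [0,4]^{k \times k}$ of $M = 2^{\Omega(k^2)}$ symmetric matrices with mean-one entries, pairwise $\delta_2$-separated by more than $1/10$. A natural candidate is $B_i = J + \alpha H_i$ for a suitable constant $\alpha > 0$, where $H_i$ is drawn uniformly among symmetric $\{-1, +1\}^{k \times k}$ matrices (with a rank-one correction to enforce the mean-one constraint exactly). Using $\delta_2^2(B_i, B_j) = \tfrac{1}{k^2}\min_\pi \|B_i - P_\pi B_j P_\pi^\top\|_F^2$ together with $P_\pi J P_\pi^\top = J$, this reduces to showing $\min_{\pi \in S_k} \|H_i - P_\pi H_j P_\pi^\top\|_F^2 \geq \Omega(k^2)$ for all $i \neq j$. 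Since, for each fixed $\pi$, $\|H_i - P_\pi H_j P_\pi^\top\|_F^2 = 2 k^2 - 2\langle H_i, P_\pi H_j P_\pi^\top\rangle$ and the inner product is a centered sub-Gaussian with variance $O(k^2)$, a union bound over $k!$ permutations and $\binom{M}{2}$ pairs accommodates $\log M = \Theta(k^2)$. A suitable choice of $\alpha$ then yields $\delta_2(B_i, B_j) > 1/10$.

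Next, set $S_i \defeq \{B \in \R^{k \times k} : \delta_2(B, B_i) \leq 1/20\}$. By the triangle inequality and the packing property, $S_i \cap S_j = \emptyset$ for $i \neq j$. By the theorem's hypothesis, $\Pr[\hat B(\bm G_i) \in S_i] \geq 2/3$ for $\bm G_i \sim (B_i, d, n)$-block model. The crucial observation is that \emph{any} two $n$-vertex graphs lie within node Hamming distance at most $O(n)$, since one can transform one into the other by modifying the adjacencies of each vertex in turn. Group privacy for $\epsilon$-node-DP thus gives
\[
\Pr[\hat B(G) \in S] \;\leq\; e^{O(\epsilon n)}\, \Pr[\hat B(G') \in S]
\]
for any two $n$-vertex graphs $G, G'$ and any event $S$. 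Fixing any reference $n$-vertex graph $G^*$ and taking expectation over the randomness of $\bm G_i$ gives $\Pr[\hat B(\bm G_i) \in S_i] \leq e^{O(\epsilon n)} \Pr[\hat B(G^*) \in S_i]$.

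Summing over $i$ and using disjointness of the $S_i$,
\[
\tfrac{2M}{3} \;\leq\; \sum_{i=1}^M \Pr[\hat B(\bm G_i) \in S_i] \;\leq\; e^{O(\epsilon n)}\sum_{i=1}^M \Pr[\hat B(G^*) \in S_i] \;\leq\; e^{O(\epsilon n)}.
\]
Therefore $\epsilon n \geq \Omega(\log M) = \Omega(k^2)$, i.e., $n \geq \Omega(k^2/\epsilon)$. The main obstacle will be the packing construction: we need $2^{\Omega(k^2)}$ matrices that are mutually $\delta_2$-far \emph{after} the minimization over all $k!$ row/column permutations, while also respecting the nonnegativity and mean-one constraints. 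The random construction above should work with standard concentration arguments, although some care is needed with the symmetry structure and the zero-mean correction; a deterministic code-based construction is a clean alternative. Everything else is a straightforward application of group privacy for pure DP.
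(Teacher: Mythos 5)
Your overall plan---a packing of $2^{\Omega(k^2)}$ well-separated parameters combined with group privacy for pure $\eps$-DP---is exactly the paper's strategy, and your group-privacy and pigeonhole steps are correct.

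However, the packing step has a genuine gap in the direction of the inequality you use. You write
\[
\delta_2^2(B_i,B_j)=\tfrac{1}{k^2}\min_{\pi\in S_k}\|B_i-P_\pi B_j P_\pi^\top\|_F^2,
\]
and then plan to lower-bound the right-hand side. But this is not an equality: for block graphons, $\delta_2$ is an infimum over \emph{measure-preserving maps of $[0,1]$}, which, on the block structure, corresponds to minimizing over all \emph{doubly stochastic} $k\times k$ matrices (see \cref{lem:dsdistance}), a strictly larger feasible set than the permutation matrices. So the permutation-only quantity is only an \emph{upper bound} on $\delta_2^2$. Lower-bounding an upper bound gives you nothing; it is entirely possible that $\min_\pi\|B_i-P_\pi B_j P_\pi^\top\|_F$ is large while the doubly-stochastic minimum is much smaller (e.g.\ for $k=2$, $B_1=I$ and $B_2=J-I$ give $\delta_2^2=1/2$ but permutation-only value $1$). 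The rank-one correction you mention for the mean-one constraint does not address this, and it is the substantive issue, not a cosmetic one.

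The paper resolves this by introducing a genuine \emph{lower} bound on $\delta_2$, the quantity $\lbdeltatwo(A,B)\seteq\min_{P_1,P_2\in\bbS_k}\tfrac1k\|P_1AP_2-B\|_F$ (over \emph{pairs} of independent permutations on the two sides), and invoking $\lbdeltatwo\le\delta_2$ from McMillan--Smith (\cref{lem:lb-lem1_MS18}). One then builds the packing so that it is separated in $\lbdeltatwo$; since this is still a minimum over only $(k!)^2$ discrete choices, your sub-Gaussian concentration plus union bound goes through with $\log M=\Theta(k^2)$ comfortably dominating $O(k\log k)$. With that substitution your argument is sound; otherwise the packing separation in $\delta_2$ is simply not established.
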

We prove \cref{thm:lower_bound} in \cref{sec:lower_bound}.

\paragraph{Improvement in non-private setting} Our results improve the guarantees of existing polynomial time algorithms for the stochastic block model, surpassing them even in non-private settings. Previous algorithms only achieved an error rate of \(O\left(\frac{kR}{d}\right)\) when \(d > \text{polylog}(n)\). Our algorithm, however, attains an \(O\left(\frac{Rk}{d}\right)\) error rate without this assumption. Furthermore, we present a poly(\(n, k\)) time robust algorithm for estimating parameters in the balanced stochastic block model in \cref{sec:robustness}, under the assumption that the average degree \(d\) is known a priori.

\begin{theorem}
   Consider balanced stochastic block model (\cref{definition:sbm-balanced}).
    With high probability over $\bm G_0\sim \SBM(n,d,B_0)$, given average degree $d$ and any graph $\bm G$ obtained from $\bm G_0$ by arbitrarily corrupting $\eta\cdot n$ vertices, there is a $\poly(n,k)$-time algorithm which outputs a matrix $\hat{\bm B} \in [0,1]^{k\times k}$, and a community membership matrix $Z \in \{0,1\}^{n\times k}$ such that
  \begin{equation*}
      \|Z\hat{\bm B}Z^\top-\bm \Znull \Bnull \bm\Znull^{\top}\|_F^2\leq O_{R}\Paren{\frac{n^2}{d}\cdot k+\eta\cdot n^2} \,,
  \end{equation*}
  where $O_{R}$ hides $R$(which is the upper bound of entries in $\Bnull$).
\end{theorem}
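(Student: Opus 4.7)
The plan is to decompose the observed adjacency matrix $\bm A$ into three parts and estimate each in turn. Let $\bm A_0$ denote the adjacency matrix of the clean graph $\bm G_0$ and write
\[
\bm A \;=\; \tfrac{d}{n}\,\bm{Z}_0 B_0 \bm{Z}_0^{\top} \;+\; (\bm A_0 - \E \bm A_0) \;+\; (\bm A - \bm A_0),
\]
where the first term is rank-$k$ and block-constant with entries of order $Rd/n$, the second is zero-mean SBM noise, and the third is supported on at most $2\eta n$ rows and columns (those incident to corrupted vertices). The target is to produce a rank-$k$ approximation of $\tfrac{d}{n}\bm{Z}_0 B_0 \bm{Z}_0^{\top}$ in Frobenius norm and round it to a matrix of the form $Z\hat{\bm B} Z^{\top}$.

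First I would solve a Robust-PCA-style convex program of the form
\[
\min_{L,S}\;\; \normstar{L} + \lambda \sum_{i=1}^n \Normt{S_i} \qquad \text{subject to} \qquad \normop{\tilde{\bm A} - L - S} \leq \tau,
\]
where $\tilde{\bm A}$ is $\bm A$ after a standard degree-regularization step (truncating vertices whose degree is abnormally high), $\tau = O(\sqrt{d})$ matches the Feige--Ofek-type bound on $\normop{\bm A_0 - \E \bm A_0}$ in the sparse regime, and the row-$\ell_2$ penalty promotes row sparsity of $S$. With $\lambda$ tuned appropriately, the optimum $\hat L$ should satisfy
\[
\Normf{\hat L - \tfrac{d}{n}\,\bm{Z}_0 B_0 \bm{Z}_0^{\top}}^2 \;\leq\; O_R\Paren{d k + d\,\eta n}.
\]
Rescaling by $(n/d)^2$ yields the desired bound $O_R\paren{n^2 k/d + \eta n^2}$ for $\hat M := (n/d)\hat L$.

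Next, I would cluster the rows of $\hat M$ into $k$ balanced groups via an approximate $k$-means routine---for constant $k$ any constant-factor approximation suffices---to obtain the community-membership indicator $Z$, and then set $\hat{\bm B}$ to be the $(n/d)$-scaled empirical block averages of $\bm A$ across the recovered blocks. A standard rounding argument in the spirit of Lei--Rinaldo bounds $\Normf{Z \hat{\bm B} Z^{\top} - \bm{Z}_0 B_0 \bm{Z}_0^{\top}}^2$ by a constant multiple of $\Normf{\hat M - \bm{Z}_0 B_0 \bm{Z}_0^{\top}}^2$, delivering the stated error.

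The main obstacle is the noisy Robust-PCA analysis: the usual dual-certificate arguments for low-rank-plus-row-sparse decomposition assume an \emph{exact} decomposition, whereas here one must push the stochastic slack of operator norm $O(\sqrt d)$ through the argument, and the incoherence assumption has to be replaced by the (automatic) balancedness structure of block-constant matrices. A cleaner alternative that avoids Robust PCA entirely is to first run an iterative filtering/reweighting step---using a small semidefinite program in the style of robust mean estimation---to identify a set of $(1-O(\eta))n$ ``clean'' vertices on whose induced subgraph the adjacency matrix is a genuine SBM adjacency up to zero-mean noise. Standard spectral community detection on this residual then contributes the $n^2 k/d$ term, while the removed vertices contribute at most $O_R(\eta n^2)$ by trivially bounding the corresponding rows of $Z \hat{\bm B} Z^{\top}$.
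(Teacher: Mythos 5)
Your proposal takes a genuinely different route from the paper, but it contains an unresolved gap that you yourself flag, and I think it is in fact more serious than you suggest. The paper does \emph{not} run Robust PCA. Instead it sets up a single integer program (relaxed via sum-of-squares) that jointly optimizes over the community matrix $Z$, the block matrix $B$, and an auxiliary matrix $C$ entrywise sandwiched between $0$ and $A$ with row sums bounded by $20Rd$; the variable $C$ plays the role of a ``selected clean subgraph,'' and the feasibility of $(\bm Z_0, B_0, (A_0)_{S_1\cap S_2})$ (where $S_1$ are the uncorrupted vertices and $S_2$ the degree-bounded ones) drives the whole analysis via the sos spectral H\"older inequality (\cref{cor:sos-spectral-hoelder}) and an $\ell_\infty$--$\ell_1$ H\"older bound. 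Because $Z$ and $B$ are already variables in that program, no separate ``low-rank factor'' has to be extracted; the $k$-means rounding step is then applied to $\tE_\mu[ZBZ^\top]$ exactly as you propose for $\hat M$.

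The gap in your version is the Robust-PCA step itself. You write down a nuclear-norm plus row-$\ell_2$ convex program with operator-norm slack $\tau = O(\sqrt d)$ and assert an error bound, but the dual-certificate analysis for low-rank plus \emph{row-and-column-sparse} decomposition under an additive stochastic perturbation of operator norm $\Theta(\sqrt d)$ is not an off-the-shelf result: the existing Robust-PCA guarantees (Cand\`es--Li--Ma--Wright, Xu--Caramanis--Sanghavi and descendants) assume either exact decomposition or random rather than adversarial support, and replacing the incoherence hypothesis with ``the singular vectors are $\{0,1\}$-indicators'' is not automatic when the corruptions can adversarially concentrate mass on a few blocks. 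You acknowledge this as ``the main obstacle'' but do not supply the certificate, so the core inequality of the argument is unproven. There is also a quantitative issue in your rescaling: you claim $\normf{\hat L - (d/n)\bm Z_0 B_0 \bm Z_0^\top}^2 \le O_R(dk + d\eta n)$ and say rescaling by $(n/d)^2$ gives $O_R(n^2 k/d + \eta n^2)$, but $(n/d)^2 \cdot d\eta n = \eta n^3/d$, which exceeds $\eta n^2$ by a factor $n/d$ in the sparse regime $d\ll n$; to get the stated bound you would need the unrescaled corruption term to be $O_R(\eta d^2)$, i.e.\ that the per-entry error in the corrupted rows is $O(Rd/n)$ rather than $O(1)$, and this is precisely what the missing certificate would have to establish. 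Your second alternative (iterative filtering plus spectral clustering on the survivor set) is plausible in spirit but is also not worked out, and it would need its own argument that the surviving vertices' induced subgraph has a centered adjacency matrix with operator norm $O(\sqrt d)$ \emph{uniformly over adversarial removals}, which again requires a pruning/Feige--Ofek-type result tailored to the corrupted setting (the paper supplies this via \cref{theorem:pruned-spectral-norm} combined with monotonicity of the spectral norm under restriction to $S_1\cap S_2$).
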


\section{Techniques}
\label{sec:techniques}

\providecommand{\znull}{z_0}

Let \(n,d,k,R \in \N\) with \(2\le d \le n\) and \(n\) a multiple of \(k\).
For a \(k\)-by-\(k\) matrix \(\Bnull\) with nonnegative entries bounded by \(R\) and averaging to \(1\),
we consider the following distribution \(\bm G\) over \(n\)-vertex graphs, called the \emph{\((\Bnull, d, n)\)-block model}:\footnote{
  In our formulation of the model, the vertex set is partitioned into parts of equal size.
  This formulation turns out to be more convenient in terms of estimation algorithms.
  The more common formulation assigns every vertex independently to one of the \(k\) parts at random.
  In \cref{sec:graphon}, we discuss how to translate guarantees between these two versions of the block model.
}
\begin{enumerate}
\item Partition the vertex set \([n]\) uniformly at random into \(k\) parts of equal size.

\item Then, for every pair of parts \(a,b\in [k]\) and every pair of vertices \(i,j\) in part~\(a\) and part~\(b\), respectively,
  connect vertices \(i\) and \(j\) by an edge in \(\bm G\) independently at random with probability \(\tfrac d n \cdot \Bnull(a,b)\).
\end{enumerate}
Since the entries of \(\Bnull\) average to \(1\), every vertex has expected degree \(d\) in \(\bm G\).

In other words, if we let \(\bm \Znull\in \set{0,1}^{n\times k}\) be the vertex-part incident matrix of the above random equipartition,
then \(\tfrac d n \cdot \bm \Znull \Bnull\transpose{ \bm \Znull}\) is the matrix of edge probabilities for \(\bm G\) conditioned on the partition.
Thus, letting \(\bm Y\) denote the adjacency matrix of \(\bm G\) scaled by \(\tfrac n d\), we have
\begin{equation}
  \label{eq:block-model}
  \E\Brac{\bm Y \Mid \bm \Znull} =\bm \Znull \Bnull \transpose{ \bm\Znull}\,.
\end{equation}
(Up to a permutation of the rows and columns, the matrix \(\bm \Znull \Bnull \transpose{ \bm\Znull}\) is a \(k\)-by-\(k\) matrix of \(\tfrac n k\)-by-\(\tfrac nk\) blocks, each containing a single entry of \(\Bnull\).)

Given the graph \(\bm G\), we aim to privately estimate the underlying block matrix \(\Bnull\) (in the sense of node-differential privacy).
For the sake of exposition, we assume at this point that the parameters \(n\), \(d\), \(k\), and \(R\) are known to the algorithm.
(Our final algorithms estimate the model parameter \(d\) from the input data and achieve essentially the same utility and privacy guarantees as for known \(d\).)

In terms of statistical utility and privacy guarantees, the best-known mechanisms are based on the following score function (and its Lipschitz extensions),
\begin{equation}
  \label{eq:nphardscore}
  s(B; Y) \seteq \max_{Z\in \cZ(n,k)} \iprod{ZB\transpose Z, Y} -\tfrac12 \normf{ZB\transpose Z}^2\,.
\end{equation}
Here, \(\cZ(n,k)\subseteq \set{0,1}^{n\times k}\) consists of all vertex-part incidence matrices for \(k\)-equipartitions of the vertex set \([n]\).

Following the well-known exponential-mechansim construction~\cite{mcsherry2007mechanism},
this score function defines a family of distributions \(p_{\lambda,Y}\) over \(k\)-by-\(k\) matrices \(B\) with nonnegative entries averaging to \(1\),
\begin{equation}
  \label{eq:exponential-distribution}
  p_{\lambda,Y}(B) \propto \exp(\lambda \cdot s(B; Y) / n)\,.
\end{equation}
(Here, we divide the exponent by \(n\) because the entries of \(Y\) are scaled linearly in \(n\).)
Using a straightforward discretization, we could sample from this distribution within the required accuracy using \(n^{\poly(k)}\) evaluations of the score function~\cite{mcsherry2007mechanism}, which would be polynomial\footnote{
  We expect that the exponential behavior in \(k\) is inherent for the number of score evaluations required to sample from the distribution \cref{eq:exponential-distribution}:
  Since the landscape of the score function is symmetric under the action of permutations on the rows and columns of \(B\), the distribution \cref{eq:exponential-distribution} is far from unimodal.
  In particular, the general framework of sampling log-concave or quasi-log-concave distributions does not directly apply to \cref{eq:exponential-distribution}
}
in the size of the input for every constant value of \(k\).

However, evaluating the score function for a single matrix \(B\) boils down to an NP-hard optimization problem.
Hence, even for small values of \(k\), say \(k=2\), it is not known how to sample from such distributions in polynomial time.
While good \emph{average-case} approximation algorithms exist (for the relevant statistical models) based on spectral techniques,
they have no direct consequences for our goals because, by definition, differential privacy must be maintained also in the \emph{worst case}.

Taking inspiration from the seminal works~\cite{Hopkins22SOSprivacy,hopkins2022robustness}, we investigate natural choices of score functions in order to achieve polynomial running time,
namely \emph{higher-order sum-of-squares (sos) relaxations} of the optimization problem underlying \cref{eq:nphardscore}.

The basic idea for our analysis is to simulate, to the extent possible, previous analyses of \cref{eq:nphardscore} within the sum-of-squares proof system.
While this strategy has been successfully applied to a wide range of problems (see~\cite{hopkins2020mean,barak2017quantum,raresClique} and reviews~\cite{barak2014sum,raghavendra2018high}), perhaps most relatedly for the design of robust algorithms to learn the parameters of arbitrary mixtures of \(k\) Gaussians~\cite{hopkins2018mixture,KSS18,robustGMM,buhai2023parallel}, several unique challenges arise for the clustering problem in \cref{eq:nphardscore}.
One of these challenges is that this optimization problem doesn't come with a low-dimensional algebraic structure like the Gaussian mixture model, where we can usefully simplify the problem, for example, by considering the (unknown) \(k\)-dimensional subspace spanned by the means and the (unknown) \(k^2\)-coordinates of all the means within this subspace.

\paragraph{Privacy}
For the privacy analysis of the resulting exponential mechanism, a key property of the score function \cref{eq:nphardscore} is its low sensitivity:
if \(Y,Y'\) are \(\tfrac n d\)-scaled adjacency matrices of two graphs that agree on all but vertex and that both have maximum vertex degree at most \(D\ge d\), then for every \(k\)-by-\(k\) matrix \(B\) with nonnegative entries bounded by \(R\),
\begin{equation}
  \Abs{s(B; Y)-s(B;Y')} \le \max_{Z\in \cZ(n,k)} \abs{\iprod{ZB\transpose Z, Y-Y'}} \lesssim D \cdot \tfrac n d \cdot R \,.
\end{equation}
Conveniently, any (reasonable) sum-of-squares relaxation of \cref{eq:nphardscore} directly inherits this kind of sensitivity bound.
Consequently, the corresponding exponential mechanisms~\cref{eq:exponential-distribution} are \(O(\lambda \cdot R\cdot D /d)\)-differentially private when restricted to input graphs with maximum degree at most \(D\)~\cite{mcsherry2007mechanism}.
Based on this mechanism, one can also achieve \(O(\lambda \cdot R)\)-differential privacy for all input graphs (without any restrictions on the maximum degree) with the same utility guarantees using the technique of Lipschitz extensions~\cite{borgs2015private}.
Later in this section, we discuss how to simulate this technique for exponential mechanisms based on sum-of-squares relaxations.

\paragraph{Utility (without sum-of-squares)}

In order to show utility guarantees of exponential mechanisms~\cite{mcsherry2007mechanism} for the current estimation problem,
we must show that with high probability over the draw of a random graph from a particular \(\Bnull\)-block model, the score of the intended solution \(\Bnull\) is sufficently larger than the scores of all solutions \(B\) far from \(\Bnull\) (in the appropriate metric for graphons).
Concretely, this gap between scores must be significantly larger than the dimension of the solution space, in our case \(k^2\) (up to constant factors).

To this end, let \(f(Z; B, Y)\seteq \iprod{Z B \transpose Z, Y}-\tfrac 12 \normf{ZB\transpose Z}^2\) denote the objective function of the optimization problem in \cref{eq:nphardscore}.
Then, for all \(Y\in \R^{n\times n}\), \(B,B_0\in \R^{k\times k}\), and \(Z,Z_0\in \cZ(n,k)\), the strong concavity of the function \(M\mapsto \iprod{M,Y}-\tfrac 12 \normf{M}^2\) yields the inequality,
\begin{equation}
  \label{eq:strongconcavity}
  f(Z;B,Y)-f(\Znull;\Bnull,Y)
  \le \iprod{U,\Yhat} - \tfrac12 \normf{U}^2
  \,,
\end{equation}
where \(U=ZB\transpose{Z}-\Znull \Bnull \transpose{\Znull}\) and \(\Yhat = Y - \Znull \Bnull \transpose{\Znull}\) is the gradient of the function \(M\mapsto \iprod{M,Y}-\tfrac 12 \normf{M}^2\) at the point \(\Znull \Bnull \transpose{\Znull}\).
(Actually the above inequality is an identity because there are no higher-order terms.)
Since the matrix \(U\) has rank at most \(2k\), the inner product with \(\Yhat\) satisfies\footnote{
  We remark that this inner product satisfies significantly stronger upper bounds\cite{gaooptimalgraphon,klopp2017oracle,borgs2015private}, smaller by a factor of about \(k/\log k\) for the case that \(n\) is sufficiently larger than both \(d\) and \(k\).
  The bound we discuss here corresponds to the best (inefficient) private mechanisms for graphon estimation in the literature\cite{borgs2018revealing}.
  While private mechanisms can match the improved bound, there is rigorous evidence that the bound we present here is best possible for polynomial-time algorithms, even if running times are allowed to be exponential in \(k\)~\cite{HopkinsSBM,luo2023computational}.
  This phenomenon is closely related to the conjectured information-computation gap for stochastic block models~\cite{sbmbook,luo2023computational}.
}
\begin{equation}
  \label{eq:matrixhoelder}
  \iprod{U,\Yhat} \le \normstar{U} \cdot \norm{\Yhat} \le \sqrt{2k\,}  \normf{U}\cdot \norm{\Yhat} \le 2k \norm{\Yhat}^2 + \tfrac 14\normf{U}^2\,.
\end{equation}
Here, \(\norm{\cdot}\) denotes the spectral norm (largest singular value) and \(\normstar{\cdot}\) its dual norm (sum of all singular values).
The first inequality holds by duality of the norms.
The second inequality follows from Cauchy-Schwarz applied to the vector of \(k\) (non-zero) singular values of \(U\).
The third step is the inequality of arithmetic and geometric mean.

Plugging the bound \cref{eq:matrixhoelder} into the inequality \cref{eq:strongconcavity}, we obtain
\begin{equation}
  \label{eq:spectralnorm-estimation-error}
  f(Z;B,Y)-f(\Znull;\Bnull,Y)
  \le 2k \norm{\Yhat}^2 - \tfrac 14\normf{U}^2\,.
\end{equation}
Since \(\tfrac 1 {n^2}\normf{U}^2\ge \deltatwosquared(B,B_0)\), the above inequality implies for the score function in \cref{eq:nphardscore},
\begin{equation}
  \label{eq:score-distinguishes}
  \deltatwosquared(B,\Bnull)
  \lesssim \tfrac 1 {n^2} \Paren{s(\Bnull; Y) - s(B; Y) + 2k \cdot \min_{\Znull\in \cZ(n,k)}\norm{Y-\Znull \Bnull \Znull}^2}
  \,.
\end{equation}
The corresponding exponential mechanism (with sufficiently high sampling accuracy) outputs with high probability a matrix \(\Bhat\) such that
\[
\tfrac \lambda n \cdot\Paren{ s(\Bnull; Y) - s(\Bhat; Y)} \lesssim k^2 \cdot \log n\,.
\]
If \(\bm Y\) is the \(\tfrac n d\)-scaled adjacency matrix of a random graph \(\bm G\) drawn from the \((\Bnull,d,n)\)-block model, then the matrix Bernstein inequality shows that with high-probability,
\[
  \norm{\bm Y - \bm \Znull \Bnull \transpose{\bm \Znull}} \lesssim \tfrac n d \cdot\sqrt{R d} \cdot \log n
\]
This bound together with the previous two inequalities shows that the output \(\Bhat\) of the exponential mechanism satisfies with high probability,
\[
  \deltatwosquared(\Bhat,\Bnull)
  \lesssim \tfrac 1 {n^2} \Paren{\tfrac n \lambda \cdot k^2 \cdot \log n + k \cdot \tfrac{ n^2 \cdot R \cdot \log n}{d}}
  = \tfrac 1 {\lambda} \cdot \tfrac {k^2 \log n}{n} + R \cdot \tfrac {k \log n} {d} \cdot
\]

\paragraph{Utility with sum-of-squares}

For our utility analysis of sos-based score functions,
we must show that these score functions sufficiently distinguish the matrix \(\Bnull\) from matrices \(B\) far from \(\Bnull\).
To this end, we aim to show that there are inequalities similar to \cref{eq:score-distinguishes} that have low-degree sum-of-squares proof.

Concretely, as the first step toward this goal, we show that the polynomial inequality \cref{eq:spectralnorm-estimation-error} has a degree-4 sum-of-squares proof in the variables \(Z\) (with slightly worse constant factors).
It is not clear that such a sum-of-squares proof should exist because the real-world proof proceeds in the eigenbasis of \(U\) --- an object that, in general, is not available to sum-of-squares proofs.
The sum-of-squares proof we establish is related to, but not implied by, inequalities that have previously been shown in the context of low-rank matrix estimation and matrix completion, e.g., the decomposability of the nuclear norm.
Indeed, the semidefinite programming relaxations used in these previous works correspond to degree-2 sum-of-squares relaxations, whereas the polynomials in inequality \cref{eq:spectralnorm-estimation-error} formally have degree-4 in the variables \(Z\) and thus falls outside the realm of degree-2 sum-of-squares relaxations.

The remaining step for our utility analysis of sos-based score functions is to show that the inequality \(\tfrac 1 {n^2}\normf{U}^2\ge \deltatwosquared(B,\Bnull)\) for \(U=ZB\transpose{Z}-\Znull \Bnull \transpose{\Znull}\) has a low-degree sum-of-squares proof in the variables \(Z\) up to a small multiplicative error.
Here, \(\tfrac 1 {n^2}\normf{U}^2\) is a degree-\(4\) polynomial in \(Z\) with coefficients depending on \(B\), \(\Bnull\), and \(\Znull\).
If we were to minimize this \(n\cdot k\)-variate polynomial over all \(Z\in \cZ(n,k)\), it is straightforward to show that the minimum is at least \(\deltatwosquared(B,\Bnull)\).
The challenge is to show that sum-of-squares can certify this lower bound up to a small (multiplicative) error.
Furthermore, we want to bound the error \emph{uniformly} over all choices of \(B\), \(\Bnull\), and \(\Znull\).
(In particular, there are no average-case properties to be exploited at this point.)

\renewcommand{\Ind}{\mathbb 1}

To this end, we show that it is possible to carry out a variable reduction from \(n\cdot k\) variables to just \(k^2\) variables.
Concretely, we associate to all choices of \(B\), \(\Bnull\), and \(\Znull\) a \(k^2\)-variate quadratic polynomial \(p(S; B,\Bnull,\Znull)\) in variables \(S\) and a linear function \(S(Z; \Znull)\) in variables \(Z\).
For every choice of \(\Znull\in\cZ(n,k)\), the linear function \(S(Z; \Znull)\)  maps \(\cZ(n,k)\) into the polytope \(P_k\) of \(k\)-by-\(k\) doubly-stochastic matrices such that the following polynomial identity in variables \(Z\) holds subject to the equipartition constraints \(\cA_1(Z)\seteq \set{Z\odot Z=Z, Z \Ind=\Ind, \transpose{Z} \Ind = \tfrac n k \Ind}\),
\[
\tfrac 1 {n^2}\normf{U}^2 = p\Bigparen{S(Z; \Znull); B,\Bnull,\Znull}\,.
\]
Consequently, in order to prove a lower bound for \(\tfrac 1 {n^2}\normf{U}^2\), it is enough to lower bound the polynomial \(p\).
We can expect this task to be easy for degree-\(O(k^2)\) sum-of-squares proofs because \(p\) has only \(k^2\) variables.
A-priori, however, the minimum of \(p\) over doubly-stochastic matrices \(S\) could be much smaller than the minimum of the polynomial \(\tfrac 1 {n^2}\normf{U}^2\) over matrices \(Z\in \Z(n,k)\).
Fortunately for us, the minimum of \(p\) over doubly-stochastic matrices \(S\) turns out to be an alternative characterization of the graphon distance \(\deltatwosquared(B,\Bnull)\).
(This characterization is related to the previously observed fact that graphon distances are Gromov-Wasserstein distances~\cite{xu2020gwdistance,Peyre15Gromov}.)
It remains to argue that level-\(O(k^2)\) sum-of-squares relaxations provide a multiplicative approximation to the exact minimum, as discussed in the following paragraph.

\paragraph{Approximation scheme for polynomial minimization over (near-)integral polytopes}

While degree bounds are available for polynomial optimization subject to general (Archimedean) polynomial systems (e.g.,~\cite{DBLP:journals/jc/NieS07}),
they do not suffice (directly) for our purposes because they only give additive\footnotemark{} guarantees and their degree bound is exponentially worse than our desired bound of \(k^{O(1)}\).
\footnotetext{
  Eventually our utility guarantees provide an additive error of about \(O(\tfrac k d)\).
  If we were ensure this small an additive error for the polynomial optimization step, our final running time would be exponential, or even doubly exponential, in the expected average degree \(d\) of the observed random graph.
}
By exploiting the convexity of the optimization domain in our setting (concretely, the convex polytope of doubly-stochastic matrices), we obtain degree bounds that are polynomial in the number of variables and the desired (additive) accuracy.
This degree bound builds on rounding techniques introduced for sum-of-squares relaxations of the best-separable-state problem~\cite{barak2017quantum}.
Finally, we exploit the integrality of the polytope and the fact that it is contained in the nonnegative orthant, in order to turn this additive guarantee into a multiplicative one for the problem of minimizing polynomials with nonnegative coefficients.

\paragraph{Lipschitz extension}

As mentioned before, exponential mechanisms based on the score function \cref{eq:nphardscore} (or its sum-of-squares relaxations) provide good guarantees of node-differential privacy only when restricted to input graphs with maximum degree \(O(d)\).
Lipschitz extensions of the linear function \(Y \mapsto \iprod{ZB\transpose Z, Y}\) allow us to extend these privacy guarantees to arbitrary input graphs~\cite{borgs2015private,borgs2018revealing,privateMedian}.
The idea is to project the input graph in a particular way into the set of graphs with maximum degree bounded by \(20R\cdot d\).\footnote{Here, the additional factor \(R\) in the degree bound reflects our assumption that the observed random graph stems from a matrix \(B_0\) with entries bounded by \(R\).}
Concretely, we replace the above linear function by the following piecewise-linear function (noting that for input graphs with maximum degree at most \(20Rd\), the scaled adjacency matrix \(Y\) has all row averages upper bounded by \(20R\)),
\begin{equation}
  \label{eq:piecewiselinearextension}
  Y\mapsto \max\Set{ \iprod{ZB\transpose Z,Y_{-}} \Mid 0 \le Y_{-} \le Y,~\tfrac 1 n Y_{-} \Ind \le 20 R\cdot \Ind,~Y_{-}=\transpose{Y_{-}}}\,.
\end{equation}
The above inequalities between vectors and matrices are understood entry-wise.
We use the notation \(Y_-\) in order to indicate that this matrix is an entry-wise lower bound for \(Y\).

Since \(ZB\transpose Z\) has only nonnegative entries, the above function agrees with the original linear function \(Y\mapsto \iprod{ZB\transpose Z, Y}\) for all symmetric matrices \(Y\) with all row averages upper bounded by \(20R\).
(In this case, the choice \(Y_- = Y\) achieves the maximum in \cref{eq:piecewiselinearextension}.)

At the same time, the function in \cref{eq:piecewiselinearextension} has small sensitivity.
In particular, for every pair of neighboring matrices \(Y,Y'\) (differing in at most one row and column), every \(Y_{-}\) feasible for \(Y\) can be made into a matrix \(Y_-'\) feasible for \(Y'\) by zeroing out one row and column.
Consequently, the function at \(Y'\) is at least as large as at \(Y\) up to an additive error of at most \(20n\cdot R\).
(Recall that \(R\) is an upper bound on the entries of \(ZB\transpose Z\).)
For symmetry reasons, it follows that the sensitivity of the function is at most \(20n\cdot R\).

In order to simulate this construction for sos-based score functions, we introduce \(Y_{-}\) as auxiliary variable for the sos proof system constrained in the same way as in \cref{eq:piecewiselinearextension}.

Concretely, we define the score function \(s_{\ell}(B; Y)\) as the largest number \(t\in \R\) such that the following polynomial system in variables  \(Z\) and \(Y_-\) is sos-consistent\footnote{
  Here, we say a polynomial is sos-consistent up to level \(\ell\ge 4\) if it has no sos-refutation within that level, which also means that there exists a level-\(\ell\) pseudo-distribution satisfying the polynomial system.\label{ft:sos-consistent}
  } up to level \(\ell\ge 4\),
\begin{equation}
  \label{eq:polynomial-system}
  \cA(Z,Y_-; B,Y, t) \seteq \cA_1(Z) \cup \cA_2(Y_-; Y) \cup \Set{\iprod{ZB\transpose Z, Y_-}-\tfrac12\normf{ZB\transpose Z}^2 \ge t}\,.
\end{equation}
Here, \(Z\) and  \(Y_-\)  are \(n\)-by-\(k\) and \(n\)-by-\(n\) matrices of variables, respectively.
The polynomial system \(\cA_1(Z)\) encodes that \(Z\) is the incidence matrix of a \(k\)-equipartition and \(\cA_2(Y_-;Y)\) contains the constraints for the Lipschitz extension in \cref{eq:piecewiselinearextension},
\begin{gather}
  \cA_1(Z) \seteq
  \Set{Z\odot Z=Z, Z \mathbf{1}=\Ind, \transpose{Z} \Ind = \tfrac n k \Ind}\,,
  \\
  \cA_2(Y_-;Y) \seteq
  \Set{0 \le Y_{-} \le Y,~\tfrac 1 n Y_{-} \Ind \le 20R\cdot \Ind,~Y_{-}=\transpose{Y_{-}}}
\end{gather}
In order to show that the score function \(s_\ell(B;Y)\) has sensitivity at most \(40n\cdot R^2\), we consider any two neighboring matrices \(Y,Y'\) differing in at most one row and column and any level-\(\ell\) pseudo-distribution \(\mu\) that witness the level-\(\ell\) sos-consistency of \cref{eq:polynomial-system} for \(t=s_\ell(B;Y)\).
Then, we construct from \(\mu\) and new level-\(\ell\) pseudo-distribution \(\mu'\) that witnesses the sos-consistency of the system \(\cA(Z,Y_-; B,Y', t-40n\cdot R^2)\).
Here, the idea is to simulate at the level of pseudo-distributions the process of zeroing out the row and column in \(Y_-\) where \(Y\) and \(Y'\) differ.
For this construction, it is important that it does not reduce the level of the pseudo-distribution.
Otherwise, we would be comparing scores \(s_\ell(B;Y)\) and \(s_{\ell'}(B;Y')\) for two different level parameters \(\ell\) and \(\ell'\), which is not enough to reason about the sensitivity of the function \(Y\mapsto s_{\ell}(B;Y)\) or the privacy guarantees of the resulting exponential mechanism.

Our previous utility analysis worked with the constraint \(\iprod{ZB\transpose Z, Y}-\tfrac12\normf{ZB\transpose Z}^2 \ge t\) and crucially exploited that \(Y\) was a constant for the proof system.
In this way, our sos proofs could make use of the existence of certain low-rank approximations for \(Y\) and spectral norm bounds for the error.
However, in the current polynomial system \(\cA(Z,Y_-; B, Y, t)\) we only have the   constraint \(\iprod{ZB\transpose Z, Y}-\tfrac12\normf{ZB\transpose Z}^2 \ge t\), where \(Y\) has been replaced by a variable \(Y_-\) for the proof system.
For this matrix of variables \(Y_-\), it is not clear if information about certain low-rank approximations or spectral norm bounds are available to us in the proof system (especially because we cannot add additional constraints for \(Y_-\) to the system without potentially sacrificing the sensitivity bound we require).

To resolve this issue, we expoit a monontonicity property of the constraints \(\cA(Z,Y_-; B, Y, t)\).
If the matrix \(Y\) satisfies our constraints on the row and column averages so that \(\cA_2(Y; Y)\) holds, then we show that we can derive our original constraint \(\iprod{ZB\transpose Z, Y}-\tfrac12\normf{ZB\transpose Z}^2 \ge t\) from the constraints \(\cA(Z,Y_-; B, Y, t)\) in the sos proof system (with small degree).
Hence, we can reuse the previous utility analysis whenever \(Y\) has upper bounded row and column averages.

\paragraph{Constant average degree}

The utility analysis outlined so far provides an upper bound on the error no better than \(\tfrac {k \log n}{d}\).
This error bound is meaningful only if \(d\gg k \log n\).
In particular, we cannot get a meaningful guarantee in this way for graphs with large constant degree.
This logarithmic factor in the error bound comes from the spectral norm bound for the centered adjacency matrix of our input graph.
(Indeed, due to the presence of vertices with degree at least \(\sqrt {\log n}\), this logarithmic factor is required for the spectral norm of the centered adjacency matrix.)
There are (at least) two approaches in the literature for avoiding this logarithmic factor in the final error bound.

The first approach is to use an analysis based on the cut-norm (as a substitute for the spectral norm) and Grothendieck's inequality (e.g.,~\cite{guedon14:_commun_groth}).
While this approach yields meaningful guarantees for constant average degree, the error bounds achieved with this approach in the literature are substantially worse than \(\tfrac k d\) as far as we are aware
(similar to the distinction between fast and slow error bounds of lasso for sparse linear regression).

The second approach to deal with constant average degree is to prune high-degree vertices (e.g.,~\cite{coja-oghlan-soda-05}) and to redo the analysis based on spectral norms for the remaining graph.
A recent iteration of this approach shows the following remarkable property:
after removing the rows and columns of vertices with average degree larger than \(20d R\), the centered adjacency matrix has spectral norm bounded by \(O(\sqrt {Rd})\) --- smaller by a factor \(\sqrt {\log n}\) than before pruning.

By virtue of being node-private, our algorithm turns out to be robust to pruning a small fraction of vertices.
That robustness allows us to carry out the utility analysis on the pruned graph (where the relevant spectral norm is nicely bounded) and conclude that utility also holds for the original graph before pruning.

\section{Preliminaries}\label{section:preliminaries}

We use boldface to denote random variables, e.g., \(\bm X, \bm Y, \bm Z\).

We write \(f \lesssim g\) to denote the inequality \(f \le C \cdot g\) for some absolute constant \(C>0\).
We write \(O(f)\) and \(\Omega(f)\) to denote quantities \(f_-\) and \(f_+\) satisfying \(f_-\lesssim f\) and \(f \lesssim f_+\), respectively.

We denote functions of the variables \(Y_1, Y_2, \ldots, Y_t\), which are parameterized by \(X_1, X_2, \ldots, X_t\), using the notation \(f(X_1, X_2, \ldots, X_t; Y_1, Y_2, \ldots, Y_t)\).

The $L_p$-norm of a measurable function $f:[0,1]^2 \to \R$ is $\norm{f}_p \seteq \Paren{\int\int \abs{f(x,y)}^p dx dy}^{1/p}$.

For a matrix \(M\in \R^{n\times m}\), we denote
its  \((i,j)\)-th entry by \(M(i,j)\),
its \(i\)-th row by \(M(i,\cdot)\),
and its  $j$-th column by \(M(\cdot, j)\).
We use \(\norm{M}\) for the spectral norm of \(M\) and \(\normf{M}\) for the Frobenius norm of \(M\).
We denote by \(\normo{M}\) and \(\normm{M}\) the sum and the maximum of the absolute values of the entries in \(M\), respectively.
For two matrices \(M,N\in \R^{n\times m}\), we denote their inner product by \(\iprod{M,N} = \Tr M \transpose{N}=\sum_{i,j} M(i,j)N(i,j)\).

Given $n,k\in\N$ with $n$ a multiple of $k$, let $\cZ(n,k) \subseteq \bits^{n\times k}$ consist of all $n\times k$ binary matrices of which each row sums to 1 and each column sums to $n/k$.
That is, each $Z\in\cZ(n,k)$ encodes a $k$-equipartition of $[n]$.

\begin{definition}[Doubly stochastic matrix]
  \label{def:ds}
  A square nonnegative real matrix $M=(M_{ij})$ is \emph{doubly stochastic} if each of its rows and columns sums to $1$.
\end{definition}

\begin{definition}[Birkhoff polytope]
  For every $k\in\N$, the set of $k\times k$ doubly stochastic matrices forms a convex polytope known as the Birkhoff polytope $B_{k}$.
\end{definition}

The following theorem shows that the Birkhoff polytope has the set of permutation matrices as its corners.

\begin{theorem}[Birkhoff--von Neumann theorem]
  \label{thm:birkhoff}
  Every $k\times k$ doubly stochastic matrix is a convex combination of at most $2k^2$ permutation matrices.
\end{theorem}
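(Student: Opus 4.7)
}

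The plan is to proceed by induction on the number of nonzero entries of a doubly stochastic matrix $M \in \R^{k\times k}$, reducing $M$ iteratively by peeling off a scalar multiple of a permutation matrix that is ``supported'' inside the support of $M$. The base case is when $M$ has exactly $k$ nonzero entries, each equal to $1$: the row- and column-sum constraints force $M$ itself to be a permutation matrix, so it is trivially a convex combination of permutation matrices.

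The key combinatorial ingredient is the following lemma, which I would prove via Hall's marriage theorem: for any doubly stochastic $M$, there is a permutation $\sigma \in S_k$ such that $M(i,\sigma(i))>0$ for every $i\in [k]$. To verify Hall's condition for the bipartite graph on $[k]\sqcup[k]$ with edges $\{(i,j): M(i,j)>0\}$, I would take any subset $S\subseteq [k]$ of rows and let $N(S)\subseteq[k]$ be its neighborhood (the set of columns $j$ with $M(i,j)>0$ for some $i\in S$). The sum of entries of $M$ over $S\times N(S)$ equals $|S|$ (summing by rows) and is at most $|N(S)|$ (summing by columns, since each column sum equals $1$), so $|N(S)|\ge |S|$. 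Hall's theorem then yields the desired $\sigma$, whose corresponding permutation matrix I denote $P_\sigma$.

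Given $P_\sigma$, set $c \seteq \min_{i} M(i,\sigma(i)) > 0$. If $c = 1$, then $M=P_\sigma$ and we are done. Otherwise $c<1$ and the matrix $M' \seteq (M - c\,P_\sigma)/(1-c)$ is nonnegative (by choice of $c$), has row and column sums equal to $1$ (by linearity), and hence is doubly stochastic; moreover $M'$ has at least one more zero entry than $M$, namely the index $(i^*, \sigma(i^*))$ achieving the minimum. Writing $M = c\,P_\sigma + (1-c)\,M'$ and applying the inductive hypothesis to $M'$ then expresses $M$ as a convex combination of permutation matrices.

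For the quantitative bound, observe that each iteration of this reduction strictly decreases the number of nonzero entries by at least one, and a $k\times k$ matrix has at most $k^2$ nonzero entries. Hence the process terminates after at most $k^2$ steps, producing a representation of $M$ as a convex combination of at most $k^2 \le 2k^2$ permutation matrices. The main obstacle is the Hall's-theorem step, but as sketched above the doubly stochastic property delivers Hall's condition essentially for free via a double-counting argument, so no deeper combinatorics is required.
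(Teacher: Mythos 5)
Your proof is correct. The paper does not actually supply a proof of \cref{thm:birkhoff} --- it is stated as a classical fact (the Birkhoff--von Neumann theorem) and used as a black box; the argument you give via Hall's marriage theorem and iterative peeling is the standard textbook proof, and it is sound.

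Two minor remarks on the quantitative bookkeeping. First, your termination count is slightly loose but still valid: starting from at most $k^2$ nonzero entries and strictly decreasing by at least one per step, the process must reach a matrix with exactly $k$ nonzeros (hence a permutation matrix) after at most $k^2-k$ peeling steps, yielding at most $k^2-k+1$ permutation matrices in total --- comfortably below the $2k^2$ the theorem asserts, so the weaker ``$k^2$ steps'' bound you state is more than enough. Second, one could alternatively obtain a count of $(k-1)^2+1$ via Carathéodory's theorem applied to the Birkhoff polytope (which has dimension $(k-1)^2$), but this is orthogonal to the paper's needs: the factor $2k^2$ is chosen generously, and either argument suffices.
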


\subsection{Differential privacy}

\paragraph{Node-adjacent graphs}
Two graphs are \textit{node-adjacent to} if they can be made isomorphic to each other by removing exactly one vertex from each graph.
In other words, we can turn one graph into the other one (up to isomorphism) by rewiring the edges of one vertex.

\begin{definition}[Node-differential privacy~\cite{KNRS13}]\label{definition:node-dp}
  A randomized algorithm $\mathcal{A}$ is $\e$-differentially (node) private if for all events in the output space $S$ and all node-adjacent graphs $G,G'$, we have
  \begin{equation*}
    \Pr\Brac{\mathcal{A}(G)\in S}\leq  \exp(\epsilon)\cdot \Pr\Brac{\mathcal{A}(G')\in S}
    \,.
  \end{equation*}
\end{definition}

Since all our algorithms on graphs have an output distribution that is invariant under permutations of the vertices of the input graph, we can assume node-adjacent input graphs to be identical up to rewiring a single vertex.

\paragraph{Exponential mechanism}

Our algorithm is based on the exponential mechanism by McSherry and Talwar~\cite{mcsherry2007mechanism}.

\begin{theorem}[Exponential Mechanism~\cite{mcsherry2007mechanism}]
  \label{thm:exp-mech}
  Let \(s\from \cX \times \cY\to \R\) be \(T\)-time computable.
  Suppose the functions \(s(\cdot; Y)\from \cX\to \R\) are \(L\)-Lipschitz.
  For all \(\eta>0\), let \(\cX_\eta\) denote an \(\eta\)-net of \(\cX\).
  Then, given \(Y\in \cY\) and \(\cX_\eta\), the exponential mechanism computes a randomized output \(\bm{ \hat X}\in \cX_\eta\) in time \(O(T\cdot \card{\cX_\eta})\) such that for all \(\delta>0\),
  \begin{displaymath}
    1-\delta \le \Pr\Set{\max_{X\in \cX}s(X,Y) \le s(\bm {\hat X})  + \eta\cdot L + \log \tfrac{\card{\cX_\eta}}{\delta}  }
  \end{displaymath}
  Furthermore, if \(\abs{s(X; Y)-s(X;Y')}\le \e\) for all \(X\in \cX\) and all pairs of adjacent inputs \(Y,Y'\in \cY\),
  then this mechanism is \(O(\e)\)-differentially private.
\end{theorem}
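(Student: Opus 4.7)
\medskip

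\noindent\textbf{Proof plan for \cref{thm:exp-mech}.}
The plan is to implement the standard exponential mechanism: enumerate the net \(\cX_\eta\), evaluate \(s(X;Y)\) for each \(X\in\cX_\eta\), and then sample \(\bm{\hat X}\in\cX_\eta\) with probability proportional to \(\exp(s(X;Y))\). Computing all scores takes \(T\cdot\card{\cX_\eta}\) time, after which drawing a single categorical sample is essentially free, giving the stated runtime.

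For privacy, fix a pair of adjacent inputs \(Y,Y'\in\cY\) and any \(X\in\cX_\eta\). The pointwise likelihood ratio factors as
\begin{equation*}
\frac{\Pr[\bm{\hat X}=X\mid Y]}{\Pr[\bm{\hat X}=X\mid Y']}
= \exp\bigparen{s(X;Y)-s(X;Y')}\cdot \frac{\sum_{X'\in\cX_\eta}\exp(s(X';Y'))}{\sum_{X'\in\cX_\eta}\exp(s(X';Y))}.
\end{equation*}
The first factor is at most \(e^{\e}\) by the assumed sensitivity bound, and the second factor is bounded by the same quantity since each numerator term is at most \(e^{\e}\) times the corresponding denominator term. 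Thus the pointwise ratio is at most \(e^{2\e}\), and summing over \(X\in S\) yields \(O(\e)\)-differential privacy for every event \(S\).

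For utility, let \(X^\star\in\cX\) attain \(s(X^\star;Y)=\max_{X\in\cX}s(X;Y)\) and pick \(X_\star\in\cX_\eta\) with distance at most \(\eta\) from \(X^\star\); by the \(L\)-Lipschitz hypothesis, \(s(X_\star;Y)\ge s(X^\star;Y)-\eta L\). Set \(\tau=\log\tfrac{\card{\cX_\eta}}{\delta}\) and let \(B=\set{X\in\cX_\eta\colon s(X;Y)<s(X^\star;Y)-\eta L-\tau}\). Bounding the mass on \(B\) by the mass at \(X_\star\),
\begin{equation*}
\Pr[\bm{\hat X}\in B]
\le \frac{\card{B}\cdot\exp(s(X^\star;Y)-\eta L-\tau)}{\exp(s(X_\star;Y))}
\le \card{\cX_\eta}\cdot e^{-\tau}
= \delta,
\end{equation*}
which is exactly the claimed tail bound after rearranging.

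The argument is standard and the only mild subtleties are bookkeeping the factor of \(2\) in the privacy computation (absorbed into the \(O(\e)\) notation) and using the \(\eta\)-net together with Lipschitz continuity to pass from \(\max_{X\in\cX} s(X;Y)\) to a point \(X_\star\in\cX_\eta\) that anchors the utility comparison; neither is an obstacle, so this is essentially a direct quotation of~\cite{mcsherry2007mechanism} adapted to our notation.
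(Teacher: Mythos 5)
The paper states \cref{thm:exp-mech} as a cited result from McSherry and Talwar and does not reproduce its proof, so there is no in-paper argument to compare against. Your proof is the standard one and is correct: the runtime follows from evaluating the score at every point of the net and then sampling; the privacy bound correctly controls both the unnormalized factor (via the pointwise sensitivity hypothesis) and the partition-function ratio (each summand changes by at most $e^{\e}$), giving $2\e$ and hence $O(\e)$; and the utility argument correctly lower-bounds the normalizer by the mass at an $\eta$-close net point $X_\star$ (so that $s(X_\star;Y)\ge \max_X s(X;Y)-\eta L$ by Lipschitzness) and union-bounds over the ``bad'' net points, producing exactly the stated tail. The only cosmetic point worth flagging is that your displayed bound writes $\exp(s(X_\star;Y))$ in the denominator where what is meant is that this single term lower-bounds the full normalizing sum; a reader will parse it, but an explicit intermediate inequality would make the step airtight.
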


\subsection{Sum-of-squares hierarchy}

In this paper, we employ the sum-of-squares hierarchy \cite{barak2014sum,sos2016note,raghavendra2018high} for both algorithm design and analysis. 
As a broad category of semidefinite programming algorithms, sum-of-squares algorithms provide many optimal or state-of-the-art results in algorithmic statistics~\cite{hopkins2018mixture,KSS18,pmlr-v65-potechin17a,hopkins2020mean}.
We provide here a brief introduction to pseudo-distributions, sum-of-squares proofs, and sum-of-squares algorithms.
For more detailed background, please refer to \cref{section:backgroundsos}.

\paragraph{Pseudo-distribution} 

We can represent a finitely supported probability distribution over $\R^n$ by its probability mass function $\mu\from \R^n \to \R$ such that $\mu \geq 0$ and $\sum_{x\in\supp(\mu)} \mu(x) = 1$.
We define pseudo-distributions as generalizations of such probability mass distributions, by relaxing the constraint $\mu\ge 0$ and only requiring that $\mu$ passes certain low-degree non-negativity tests.

\begin{definition}[Pseudo-distribution]
  \label{def:pseudo-distribution}
  A \emph{level-$\ell$ pseudo-distribution} $\mu$ over $\R^n$ is a finitely supported function $\mu:\R^n \rightarrow \R$ such that $\sum_{x\in\supp(\mu)} \mu(x) = 1$ and $\sum_{x\in\supp(\mu)} \mu(x)f(x)^2 \geq 0$ for every polynomial $f$ of degree at most $\ell/2$.
\end{definition}

We can define the formal expectation of a pseudo-distribution in the same way as the expectation of a finitely supported probability distribution.

\begin{definition}[Pseudo-expectation]
  Given a pseudo-distribution $\mu$ over $\R^n$, we define the \emph{pseudo-expectation} of a function $f:\R^n\to\R$ by
  \begin{equation}
    \tE_\mu f \seteq \sum_{x\in\supp(\mu)} \mu(x) f(x) \,.
  \end{equation}
\end{definition}

The following definition formalizes what it means for a pseudo-distribution to satisfy a system of polynomial constraints.

\begin{definition}[Constrained pseudo-distributions]
  Let $\mu:\R^n\to\R$ be a level-$\ell$ pseudo-distribution over $\R^n$.
  Let $\cA = \{f_1\ge 0, \ldots, f_m\ge 0\}$ be a system of polynomial constraints.
  We say that \emph{$\mu$ satisfies $\cA$} at level $r$, denoted by $\mu \sdtstile{r}{} \cA$, if for every multiset $S\subseteq[m]$ and every sum-of-squares polynomial $h$ such that $\deg(h)+\sum_{i\in S}\max\set{\deg(f_i),r} \leq \ell$,
  \begin{equation}
    \label{eq:constrained-pseudo-distribution}
    \tE_{\mu} h \cdot \prod_{i\in S}f_i \ge 0 \,.
  \end{equation}
  We say $\mu$ satisfies $\cA$ and write $\mu \sdtstile{}{} \cA$ (without further specifying the degree) if $\mu \sdtstile{0}{} \cA$.
\end{definition}

We remark that if $\mu$ is an actual finitely supported probability distribution, then we have  $\mu\sdtstile{}{}\cA$ if and only if $\mu$ is supported on solutions to $\cA$.

\paragraph{Sum-of-squares proof} 

We introduce sum-of-squares proofs as the dual objects of pseudo-distributions, which can be used to reason about properties of pseudo-distributions.
We say a polynomial $p$ is a sum-of-squares polynomial if there exist polynomials $(q_i)$ such that $p = \sum_i q_i^2$.

\begin{definition}[Sum-of-squares proof]
  \label{def:sos-proof}
  A \emph{sum-of-squares} proof that a system of polynomial constraints $\cA = \{f_1\ge 0, \ldots, f_m\ge 0\}$ implies $q\ge0$ consists of sum-of-squares polynomials $(p_S)_{S\subseteq[m]}$ such that\footnote{Here we follow the convention that $\prod_{i\in S}f_i=1$ for $S=\emptyset$.}
  \[
    q = \sum_{\text{multiset } S\subseteq[m]} p_S \cdot \prod_{i\in S} f_i \,.
  \]
  If such a proof exists, we say that \(\cA\) \emph{(sos-)proves} \(q\ge 0\) within degree \(\ell\), denoted by $\mathcal{A}\sststile{\ell}{} q\geq 0$.
  In order to clarify the variables quantified by the proof, we often write \(\cA(x)\sststile{\ell}{x} q(x)\geq 0\).
  We say that the system \(\cA\) \emph{sos-refuted} within degree \(\ell\) if $\mathcal{A}\sststile{\ell}{} -1 \geq 0$.
  Otherwise, we say that the system is \emph{sos-consistent} up to degree \(\ell\), which also means that there exists a level-$\ell$ pseudo-distribution satisfying the system.
\end{definition}

The following lemma shows that sum-of-squares proofs allow us to deduce properties of pseudo-distributions that satisfy some constraints.
\begin{lemma}
  \label{lem:sos-soundness}
  Let $\mu$ be a pseudo-distribution, and let $\cA,\cB$ be systems of polynomial constraints.
  Suppose there exists a sum-of-squares proof $\cA \sststile{r'}{} \cB$.
  If $\mu \sdtstile{r}{} \cA$, then $\mu \sdtstile{r\cdot r' + r'}{} \cB$.
\end{lemma}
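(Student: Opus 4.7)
The plan is to unfold the definition of $\mu \sdtstile{r \cdot r' + r'}{} \cB$ directly and reduce each required pseudo-expectation inequality to the hypothesis $\mu \sdtstile{r}{} \cA$ by substituting in the given sum-of-squares proofs. Write $\cA = \set{f_1 \geq 0, \ldots, f_m \geq 0}$ and $\cB = \set{g_1 \geq 0, \ldots, g_{m'} \geq 0}$. For each $j$, the hypothesis $\cA \sststile{r'}{} g_j \geq 0$ supplies sum-of-squares polynomials $\paren{p_{S,j}}_{S\subseteq [m]}$ with $\deg\bigparen{p_{S,j}\prod_{i\in S}f_i}\leq r'$ such that $g_j = \sum_S p_{S,j}\prod_{i\in S}f_i$.

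First I would fix an arbitrary multiset $T\subseteq [m']$ together with a sum-of-squares polynomial $h'$ satisfying the degree constraint $\deg(h') + \sum_{j\in T}\max\set{\deg(g_j),\,r\cdot r' + r'}\leq \ell$ coming from the definition of $\mu \sdtstile{r\cdot r' + r'}{} \cB$ (where $\ell$ denotes the level of $\mu$), and aim to show $\tE_\mu h'\cdot \prod_{j\in T} g_j \geq 0$. Substituting the sum-of-squares expressions for each $g_j$ and expanding gives
\[
  h' \cdot \prod_{j\in T}g_j \;=\; \sum_{(S_j)_{j\in T}} \Bigparen{h' \cdot \prod_{j\in T} p_{S_j,j}} \cdot \prod_{j\in T}\prod_{i\in S_j} f_i\,.
\]
The first factor in each summand is sum-of-squares, since products of sum-of-squares polynomials are sum-of-squares, and the second factor is a product of the $f_i$'s over the combined multiset $S^{\star} = \bigsqcup_{j\in T} S_j$ in $[m]$. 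Thus each summand has exactly the shape $(\text{sos})\cdot \prod_{i\in S^{\star}} f_i$ demanded by the definition of $\mu \sdtstile{r}{}\cA$.

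Second, I would verify the degree accounting so that the hypothesis applies termwise. After discarding trivial constant constraints we may assume $\deg(f_i)\geq 1$ for all $i$, so $|S_j|\leq \sum_{i\in S_j}\deg(f_i)\leq r'$. Using also $\deg(p_{S_j,j}) \leq r' - \sum_{i\in S_j}\deg(f_i)$, we get
\[
  \deg(p_{S_j,j}) + \sum_{i\in S_j}\max\set{\deg(f_i),\,r}
  \;\leq\; r' + r\cdot |S_j|
  \;\leq\; r'\cdot (r+1)
  \;=\; r\cdot r' + r'\,.
\]
Summing over $j\in T$ and adding $\deg(h')$ yields a total bounded by $\deg(h') + \sum_{j\in T}\max\set{\deg(g_j),\,r\cdot r' + r'}\leq \ell$, so the hypothesis $\mu \sdtstile{r}{}\cA$ applies to each summand; by linearity of $\tE_\mu$, the full sum is nonnegative.

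The main obstacle is this degree bookkeeping --- specifically, pinning down why $r\cdot r' + r'$ (rather than something smaller) is the right threshold for the conclusion. The reason is that each $g_j$ contributes up to $r'$ units of degree through its sos proof, and each $f_i$ appearing inside that proof must be padded to degree at least $r$ when invoking $\mu \sdtstile{r}{}\cA$; these combine multiplicatively via $|S_j|\leq r'$ to yield the claimed bound. Once this accounting is in hand, the remainder is linearity of pseudo-expectation and closure of sum-of-squares under products.
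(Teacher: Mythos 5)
The paper does not include a proof of \cref{lem:sos-soundness}; it is stated as a standard soundness fact from the sum-of-squares literature (cf.~\cite{barak2014sum}). Your proof is correct and complete, and fills this gap in essentially the canonical way: expand each $g_j$ via its sos certificate, distribute over the product $\prod_{j\in T} g_j$, observe that the resulting coefficient polynomials are sos by closure under products, and then verify the degree bookkeeping so that $\mu \sdtstile{r}{} \cA$ applies termwise.

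Your degree accounting is the crux and it is right: from $\deg(p_{S_j,j}) + \sum_{i\in S_j}\deg(f_i) \le r'$ and the pointwise inequality $\max\set{\deg(f_i),r} \le \deg(f_i) + r$, each $j$ contributes at most $r' + r\cdot\card{S_j}$ to the padded degree; bounding $\card{S_j}\le r'$ via the assumption $\deg(f_i)\ge 1$ then gives the threshold $r'(r+1)$, which is exactly what the definition of $\mu\sdtstile{r\cdot r' + r'}{}\cB$ permits you to charge per constraint $g_j$. The one place worth slightly more care is the reduction to $\deg(f_i)\ge 1$: a constant constraint $f_i = c$ with $c\ge 0$ is indeed discardable, since any occurrence of $f_i$ in an sos proof can be absorbed into the (still sos) coefficient polynomial without changing degree; and if $c < 0$ then $\mu\sdtstile{r}{}\cA$ already fails (taking $S=\set{i}$, $h=1$), so the conclusion is vacuous. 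With that minor point made explicit, your argument is airtight.
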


\paragraph{Sum-of-squares algorithm}

Given a system of polynomial constraints, the \emph{sum-of-squares algorithm} searches through the space of pseudo-distributions that satisfy this polynomial system, by solving semideﬁnite programming.

Since semidefinite programing can only be solved approximately, we can only find pseudo-distributions that approximately satisfy a given polynomial system.
We say that a level-$\ell$ pseudo-distribution \emph{approximately satisfies} a polynomial system, if the inequalities in \cref{eq:constrained-pseudo-distribution} are satisfied up to an additive error of $2^{-n^\ell}\cdot \norm{h}\cdot\prod_{i\in S}\norm{f_i}$, where $\norm{\cdot}$ denotes the Euclidean norm\footnote{The choice of norm is not important here because the factor $2^{-n^\ell}$ swamps the effects of choosing another norm.} of the coefficients of a polynomial in the monomial basis.

\begin{theorem}[Sum-of-squares algorithm]
  \label{theorem:SOS_algorithm}
  There exists an $(n+ m)^{O(\ell)} $-time algorithm that, given any explicitly bounded\footnote{A system of polynomial constraints is \emph{explicitly bounded} if it contains a constraint of the form $\|x\|^2 \leq M$.} and satisfiable system\footnote{Here we assume that the bit complexity of the constraints in $\cA$ is $(n+m)^{O(1)}$.} $\cA$ of $m$ polynomial constraints in $n$ variables, outputs a level-$\ell$ pseudo-distribution that satisfies $\cA$ approximately.
\end{theorem}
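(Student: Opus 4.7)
The plan is to recast the search for a level-$\ell$ pseudo-distribution satisfying $\cA$ as a semidefinite program of size $(n+m)^{O(\ell)}$ and solve it with a standard SDP method. First I would parameterize: any level-$\ell$ pseudo-distribution $\mu$ is completely determined by its tuple of pseudo-moments $\tE_\mu[x^\alpha]$ for multi-indices $\alpha$ with $\abs{\alpha}\leq \ell$, which lives in a vector space of dimension $\binom{n+\ell}{\ell}=(n+\ell)^{O(\ell)}$. The normalization $\tE_\mu[1]=1$ becomes a single linear equation, and linearity of $\tE_\mu$ is built into the parameterization.

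Next I would encode positivity and constraint-satisfaction as semidefinite constraints. The axiom $\tE_\mu[h^2]\geq 0$ for every polynomial $h$ of degree at most $\ell/2$ is equivalent to positive semidefiniteness of the moment matrix whose rows and columns are indexed by monomials of degree at most $\ell/2$ and whose $(\alpha,\beta)$-entry is the linear variable $\tE_\mu[x^{\alpha+\beta}]$. Similarly, for every multiset $S\subseteq[m]$ and every sum-of-squares polynomial $h$ with $\deg(h)+\sum_{i\in S}\max\{\deg(f_i),r\}\leq \ell$, the requirement $\tE_\mu[h\cdot\prod_{i\in S}f_i]\geq 0$ is captured by PSD-ness of an appropriate \emph{localization matrix} whose entries are explicit linear combinations of pseudo-moments with coefficients drawn from $\prod_{i\in S}f_i$. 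Aggregating over all relevant $S$ produces $(n+m)^{O(\ell)}$ PSD constraints, each on a matrix of size $(n+\ell)^{O(\ell)}$.

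Finally I would invoke a standard SDP solver on this program. The explicit boundedness hypothesis $\norm{x}^2\leq M$ is crucial here: by iterating pseudo-expectation Cauchy--Schwarz, every feasible $\mu$ satisfies $\abs{\tE_\mu[x^\alpha]}\leq M^{\abs{\alpha}/2}$, so the SDP feasible set lies inside a ball of radius $M^{O(\ell)}$ in the parameter space. Together with satisfiability (which, after an infinitesimal perturbation of the strict-inequality constraints, provides an interior point), the ellipsoid method converges in $(n+m)^{O(\ell)}$ time and returns a solution in which each PSD constraint holds up to entry-wise additive error $2^{-n^\ell}$. Translating this entry-wise slack on the moment and localization matrices back into slack on the scalar inequalities $\tE_\mu[h\cdot\prod_{i\in S}f_i]\geq 0$ produces, by expanding the products in the monomial basis and applying Cauchy--Schwarz on the coefficient vectors, exactly the $2^{-n^\ell}\cdot\norm{h}\cdot\prod_{i\in S}\norm{f_i}$ tolerance that appears in the definition of approximate satisfaction. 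The most delicate point is the bit-complexity bookkeeping for the ellipsoid method; it is precisely the explicit boundedness hypothesis that turns satisfiability into a feasible region of controllable volume and bit complexity, and thereby guarantees polynomial-time convergence. All remaining steps are routine manipulations of the Lasserre/sum-of-squares hierarchy.
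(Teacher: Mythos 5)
The paper does not prove this statement; it is presented as a background result about the sum-of-squares hierarchy, with attribution to the survey literature cited at the start of Section 3.2 (\cite{barak2014sum,sos2016note,raghavendra2018high}). Your reconstruction is the standard argument from those references: parameterize by pseudo-moments, express the nonnegativity axioms and constraint-satisfaction conditions as PSD constraints on moment and localization matrices of size $(n+\ell)^{O(\ell)}$, bound the feasible region using the explicit-boundedness hypothesis and pseudo-expectation Cauchy--Schwarz, and then run the ellipsoid method. The one place where your exposition is slightly off is the remark that satisfiability ``after an infinitesimal perturbation of the strict-inequality constraints, provides an interior point'': satisfiability of $\cA$ over $\R^n$ does not yield an interior point of the SDP (the rank-one moment sequence of a single solution is a boundary point, and the SDP feasible region may indeed have empty interior). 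The correct way around this is exactly the \emph{approximate} satisfaction built into the theorem's conclusion: one never needs an interior point, because the ellipsoid method (run against a ball of radius $M^{O(\ell)}$ guaranteed by explicit boundedness) either returns an approximately feasible point or certifies that the feasible region, slightly fattened, is empty, and satisfiability rules out the latter. Your final Cauchy--Schwarz step translating entry-wise matrix slack into the $2^{-n^\ell}\cdot\norm{h}\cdot\prod_{i\in S}\norm{f_i}$ tolerance is correct and is precisely the bookkeeping the footnoted ``approximately satisfies'' convention asks for.
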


\begin{remark}[Approximation error and bit complexity]
  \label{remark:sos-numerical-issue}  
  For a pseudo-distribution that only approximately satisfies a polynomial system, we can still use sum-of-squares proofs to reason about it in the same way as \cref{lem:sos-soundness}.
  In order for approximation errors not to amplify throughout reasoning, we need to ensure that the bit complexity of the coefficients in the sum-of-squares proof are polynomially bounded.  
\end{remark}

\section{Differentially private algorithm for graph estimation}
\label{section:mainalg}

In this section, we provide a technical overview of the analysis of our polynomial-time node-differentially-private algorithms for random graph estimation.
All the random graph models we consider have in common that with overwhelming probability, the adjacency matrix of the random graph is equal to a \(k\)-by-\(k\) block matrix up to various (small) approximation errors and statistical errors.
In order to factor out details specific to particular models, we first present differentially private algorithms whose utility guarantees are deterministic with respect to their input (but stochastic with respect to their internal randomness).
These algorithm receive as input an \(n\)-by-\(n\) matrix \(\Yin\) (eventually chosen as an appropriately scaled adjacency matrix of graph drawn from one of our random graph models) and achieve utility guarantees of the following form:
if the matrix \(\Yin\) admits certain approximations as a \(k\)-by-\(k\) block matrix \(\Znull \Bnull \transpose \Znull\) for \(\Znull\in\cZ(n,k)\) and \(\Bnull\in \R_{\ge 0}^{k \times k}\), then the randomized output \(\bm B\) is close in \(\delta_2\)-distance to \(\Bnull\) (with high probability over the internal randomness of the private algorithm).

Let \(Z\) and  \(Y\)  be \(n\)-by-\(k\) and \(n\)-by-\(n\) matrices of indeterminates, respectively.
Let \(R\ge 1\) be a scalar.
For an \(n\)-by-\(n\) matrix \(\Yin\) with nonnegative entries, we consider the following polynomial systems\footnote{
  Here, the constraint \(Z\ge 0\) in the system \(\cA_1(Z)\) is redundant in the sense that we can sos-derived it with degree \(2\) from the other constraints in the system.
  We include this constraint explicitly in the system to ensure that we can sos-derive \(Z \otimes Z\ge 0\) in degree \(2\)---a property we require for our sensitivity analysis.
}
in variables \(Y\) and \(Z\) and with coefficients depending on \(\Yin\),
\begin{gather}
  \cA_1(Z) \seteq
  \Set{Z\odot Z=Z,~Z \Ind=\Ind,~\transpose{Z} \Ind = \tfrac n k \Ind,~Z\ge 0}\,,
  \\
  \cA_2(Y;\Yin) \seteq
  \Set{0 \le Y \le \Yin,\tfrac 1 n Y \Ind \le 20 \cdot R \cdot \Ind,~Y=\transpose{Y}}
\end{gather}
Here, \(\odot\) denotes the entry-wise product of matrices (often called Hadamard product).
A matrix \(Z\) satisfies the constraints \(\cA_1(Z)\) if and only if \(Z\in \cZ(n,k)\) is the incidence matrix of a \(k\)-equipartition.
A symmetric matrix \(Y\) satisfies the constraints \(\cA(Y;\Yin)\) if and only if \(Y\) is entry-wise sandwiched between \(0\) and \(\Yin\) and \(Y\) has all row and column averages upper bounded by \(20R\).

For a \(k\)-by-\(k\) matrix \(B\) and a scalar \(t\in \R\), we consider the following combined polynomial system in variables \(Y\) and \(Z\) and with coefficients depending on \(B\), \(\Yin\), and \(t\),
\begin{equation}
  \label{eq:combined_poly_sys}
  \cA(Y,Z; B, \Yin, t) \seteq \cA_1(Z) \cup \cA_2(Y; \Yin) \cup \Set{f(Y,Z; B) \ge t}\,,
\end{equation}
where \(f(Y,Z;B)\) is the following polynomial in variables \(Y\) and \(Z\) with coefficients depending on \(B\),
\begin{equation}
  f(Y,Z; B) \seteq \iprod{Z B \transpose Z, Y} - \tfrac 12 \normf{ZB\transpose Z}^2\,.
\end{equation}
When we substitute for \(Y\) a concrete assignment \(Y_0\), we use \(f(Z;B,Y_0)\) to denote the resulting polynomial in \(Y\) with coefficients depending on \(B\) and \(Y_0\),
\begin{equation}
  f(Z; B,Y_0) \seteq \iprod{Z B \transpose Z, Y_0} - \tfrac 12 \normf{ZB\transpose Z}^2\,.
\end{equation}

In this section, we analyze exponential mechanisms based on score functions \(s_\ell(B;\Yin)\) defined as the largest value \(t\in\R\) such that the system \(\cA(Y,Z; B,\Yin, t)\) is sos-consistent up to level \(\ell\).

Toward analyzing these score functions, we first identify useful inequalities we can sos-prove in low degree from the system \(\cA(Y,Z; B, \Yin, t)\).
Suppose \(Y_0\) is a \(n\)-by-\(n\) matrix of the form \(Y_0=\Znull \Bnull \transpose \Znull\).
As discussed in \cref{sec:techniques}, a basic fact from low-rank matrix estimation is the following inequality,
\begin{equation}
  \label{eq:non-sos-lowrank}
  \normf{ZB\transpose{Z}-Y_0}^2
  \le 8k\cdot \norm{\Yin - Y_0}^2 + 4\Bigparen{f(\Znull; \Bnull, \Yin)-f(Z; B, \Yin)}\,,
\end{equation}
The above inequality shows that \(Y_0\) is identifiable from \(\Yin\) (in the sense of achieving small error in Frobenius norm) if the two matrices \(Y_0,\Yin\) are close in spectal norm.
In the non-private setting, we could aim to choose \(Z,B\) so as to maximze the correlation \(f(Z; B, \Yin)\) with \(\Yin\).
In that case, the last term on the right-hand side would be at most \(0\) and we would get an error bound for the Frobenius norm purely in terms of the spectral norm of \(\Yin-Y_0\) and the rank \(k\).
Moreover, the inequality shows that the error bound degrades gracefully even as the ``optimization gap'' \(f(\Znull; \Bnull, \Yin)-f(Z; B, \Yin)\) grows.

The following lemma shows that inequality \cref{eq:non-sos-lowrank} has a low-degree sos proof.
\begin{lemma}[Sum-of-squares proof of identifiability I]
  \label{lem:basic-sos-utility}
  For all symmetric \(n\)-by-\(n\) matrices \(\Yin, Y_0\) with \(\rank Y_0\le k\) and \(Y_0=\Znull \Bnull \transpose \Znull\) and for all symmetric \(k\)-by-\(k\) matrices \(B\),
  \[
    \cA_1(Z)
    \sststile{O(1)}{Z}
    \normf{Z B \transpose Z - Y_0}^2
    \le 48 k \cdot \norm{\Yin - Y_0}^2
    + 4\Bigparen{f(Z_0;B_0,\Yin) - f(Z; B, \Yin)}
    \,,
  \]
\end{lemma}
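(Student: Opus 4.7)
The plan is to reduce the claimed inequality to a single low-rank Cauchy--Schwarz bound inside the sos proof system, and then to establish that bound via the explicit rank-$2k$ factorization of $U \seteq ZBZ^\top - Y_0$.

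\emph{Algebraic reduction.}
Setting $\hat Y \seteq \Yin - Y_0$ and expanding $f(Z;B,\Yin) = \iprod{Y_0 + U,\Yin} - \tfrac{1}{2}\|Y_0 + U\|_F^2$ directly yields the polynomial identity
\begin{align*}
f(Z_0;B_0,\Yin) - f(Z;B,\Yin) = \tfrac{1}{2}\|U\|_F^2 - \iprod{U,\hat Y},
\end{align*}
which is trivially sos-provable.  Substituting and rearranging reduces the lemma to the sos task $4\iprod{U,\hat Y}\le\|U\|_F^2 + 48k\|\hat Y\|^2$.  An sos AM--GM, formalized by the scalar identity
\begin{align*}
12k\|\hat Y\|^2\Paren{\|U\|_F^2 - 4\iprod{U,\hat Y} + 48k\|\hat Y\|^2} = \Paren{\iprod{U,\hat Y} - 24k\|\hat Y\|^2}^{\!2} + \Paren{12k\|U\|_F^2\|\hat Y\|^2 - \iprod{U,\hat Y}^2},
\end{align*}
further reduces the task (after dividing through by the positive constant $12k\|\hat Y\|^2$) to the low-rank matrix Cauchy--Schwarz inequality
\begin{align*}
\cA_1(Z)\ \sststile{O(1)}{Z}\ \iprod{U,\hat Y}^2 \le 12k\cdot\|U\|_F^2\cdot\|\hat Y\|^2.
\end{align*}

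\emph{Rank-$2k$ trace bound in sos.}
The preceding inequality is the sos analogue of the real-world bound $\iprod{U,\hat Y}\le\|U\|_*\|\hat Y\|\le\sqrt{2k}\|U\|_F\|\hat Y\|$ that exploits $\mathrm{rank}(U)\le 2k$.  To simulate this in sos I use the polynomial rank-$2k$ factorization
\begin{align*}
U = W\tilde B W^\top,\qquad W \seteq [Z \mid Z_0]\in\R^{n\times 2k},\qquad \tilde B \seteq \diag(B,-B_0),
\end{align*}
which supplies the sos identities $\iprod{U,\hat Y} = \iprod{\tilde B, W^\top\hat Y W}$ and $\|U\|_F^2 = \Tr(\tilde B M \tilde B M)$ with $M \seteq W^\top W$.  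From $\cA_1(Z)$ one sos-derives in constant degree the relations $Z^\top Z = (n/k)I_k$ and the fact that $(k/n)Z^\top Z_0$ is entrywise nonnegative with unit row- and column-sums, giving the sos-provable psd ordering $(2n/k)I_{2k}-M\succeq 0$.

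\emph{Main obstacle.}
A naive sos matrix Cauchy--Schwarz yields $\iprod{\tilde B,W^\top\hat Y W}^2 \le \|\tilde B\|_F^2\cdot\|W^\top\hat Y W\|_F^2$, but $\|\tilde B\|_F^2 = \|B\|_F^2 + \|B_0\|_F^2$ is \emph{not} controlled by $\|U\|_F^2$ whenever $M$ is rank-deficient (for instance when $Z$ is block-aligned with $Z_0$), since the component of $\tilde B$ in the kernel of $M$ inflates $\|\tilde B\|_F$ without affecting $U$.  The real-world resolution is an $M$-weighted Cauchy--Schwarz $\iprod{\tilde B,\Lambda}^2\le \Tr(\tilde B M\tilde B M)\cdot\Tr\bigl(M^{-1}\Lambda M^{-1}\Lambda\bigr)$ for $\Lambda = W^\top\hat Y W$, whose first factor is exactly $\|U\|_F^2$ and whose second factor equals $\Tr\bigl((U_*^\top\hat Y U_*)^2\bigr)\le 2k\|\hat Y\|^2$ for $U_* \seteq WM^{-1/2}$ (which has orthonormal columns), the bound following from the sos-provable constant-matrix psd inequality $\|\hat Y\|^2 I - \hat Y^2 \succeq 0$.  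The main obstacle is porting this weighted argument into a low-degree sos proof, since $M^{-1/2}$ is not polynomial; the workaround is to replace $M^{-1}$ by the polynomial adjugate $\mathrm{adj}(M)$ via $M\cdot\mathrm{adj}(M) = \det(M)\,I_{2k}$, absorb the resulting $\det(M)$ factors against the sos bound $(2n/k)I_{2k}-M\succeq 0$, and handle the rank-deficient case with a moderate multiplicative loss covered by the slack constants.  Because all algebra takes place on $2k\times 2k$ matrices, the resulting sos proof has constant degree $O(1)$.
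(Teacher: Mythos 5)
Your algebraic reduction (the polynomial identity for $f(Z_0;B_0,\Yin) - f(Z;B,\Yin)$ and the scalar sos identity that reduces the claim to the Cauchy--Schwarz bound $\iprod{U,\hat Y}^2 \le 12k\,\normf{U}^2\,\norm{\hat Y}^2$) is correct and matches the paper's starting point. However, the core step---establishing that Cauchy--Schwarz bound within sos at degree $O(1)$---has a genuine gap. Your factorization $U=W\tilde B W^\top$ with $W=[Z \mid Z_0]$ produces a Gram matrix $M=W^\top W$ whose off-diagonal block $Z^\top Z_0$ is a \emph{variable} degree-$2$ polynomial in $Z$, so the weighted Cauchy--Schwarz you need requires $M^{-1}$ (or $M^{-1/2}$), and replacing $M^{-1}$ by $\mathrm{adj}(M)$ introduces polynomials of degree $2k-1$ in the entries of $M$, hence degree $\Theta(k)$ in $Z$. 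The claim that ``all algebra takes place on $2k\times 2k$ matrices, so the proof has constant degree'' conflates the ambient dimension with the polynomial degree: those are unrelated here, and the adjugate blows the degree up to $\Omega(k)$, which does not satisfy the stated $\sststile{O(1)}{Z}$. The rank-deficient case (where $\det M=0$ and the adjugate relation degenerates) is also left unresolved beyond a vague appeal to slack constants.

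The paper sidesteps this entirely by exploiting that $Y_0$ is a \emph{constant} matrix: it picks a fixed orthonormal basis $V_0$ for the column space of $Y_0$, sets $V=\sqrt{k/n}\,Z$ (which satisfies $V^\top V = I_k$ under $\cA_1(Z)$ by \cref{lem:sos-community-orthogonality}), and observes that $\cA_1(Z)\sststile{4}{Z}(I_n-VV^\top)\,U\,(I_n-V_0V_0^\top)=0$. This asymmetric factorization avoids any variable Gram matrix. It then invokes \cref{cor:sos-spectral-hoelder}, which expands $U = VV^\top U + UV_0V_0^\top - VV^\top U V_0 V_0^\top$, bounds each cross term by AM--GM, and controls $\normf{VV^\top \hat Y}^2 \le k\norm{\hat Y}^2$ using the constant-matrix spectral certificate $\hat Y^2 = \norm{\hat Y}^2 I_n - B^\top B$---all at degree $O(1)$. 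If you want to salvage your symmetric-factorization route you would need a genuinely different mechanism to control the inverse Gram matrix at bounded degree; as written the proposal does not deliver the claimed degree bound.
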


\begin{proof}
  For \(U=ZB\transpose Z - \Znull \Bnull \Znulltrans\) and \(\hat Y = \Yin-Y_0\), we have the following polynomial identity (which can be viewed as the Taylor expansion of \(Y\mapsto\iprod{Y,\Yin}-\tfrac 12 \normf{Y}^2\) around \(Y_0\) evaluated at \(ZB\transpose Z\)),
  \begin{equation}
    \label{eq:taylor-poly}
    f(Z; B, \Yin) - f(Z_0;B_0,\Yin) = \iprod{U, \hat Y} - \tfrac 12 \normf{U}^2\,.
  \end{equation}
  Let \(V_0\) be an orthogonal basis for the column span of \(Y_0\) so that \(\transpose {V_0}V_0=I_k\).
  Let \(V=\sqrt{\tfrac{k}{n}} Z\) so that \(\cA_1(Z)\sststile{4}{Z} \transpose V V =I_k\) by \cref{lem:sos-community-orthogonality}.
  (Here, we can view \(V\) as a formal orthogonal basis for the column span of \(Z B \transpose{Z}\).)
  It follows that \(\cA_1(Z)\sststile{4}{Z} (I_n-V\transpose V) U (I_n-\Vnull \Vnulltrans)=0\).
  (We can view this identity as an sos proof for the statement that \(U\) has rank at most \(2k\).)
  By \cref{cor:sos-spectral-hoelder} (sos version of the inequality \(\iprod{U,\hat Y}\le \sqrt{\rank U} \cdot \normf{U}\cdot \norm{\hat Y}\)),
  \begin{equation}
    \cA_1(Z)\sststile{O(1)}{Z}
    \iprod{U,\hat Y}\le \tfrac{1}{4} \normf{U}^2 + 12k \norm{\hat Y}^2
    \,.
  \end{equation}
  (When applying \cref{cor:sos-spectral-hoelder}, we choose \(M= U/\sqrt {6\,}\) and \(W=\sqrt {6\,} \hat Y\).)
  Combining this inequality with the polynomial identity \cref{eq:taylor-poly}, we obtain an sos proof of the desired inequality,
  \[
    \cA_1(Z)\sststile{O(1)}{Z}
    \normf{U}^2 \le 48 k \norm{\hat Y}^2 + 4 \cdot \bigparen{f(Z_0; B_0, \Yin) - f(Z;B,\Yin)}
    \,.
    \qedhere
  \]
\end{proof}

While \cref{lem:basic-sos-utility} already implies limited utility guarantees for some random graph models, we need the following more robust version to handle more challenging random graph models.
This version allows in addition to errors in spectral norm also errors in Frobenius norm and \(\ell_1\)-norm of the entries (denoted by \(\normsum{\cdot}\)).
To model these additional errors, we introduce intermediate matrices \(Y_1,Y_2\) and decompose the total error \(\Yin-Y_0\) into three parts \(\Yin-Y_2\), \(Y_2-Y_1\), and \(Y_1-Y_0\), which we expect to be small in \(\ell_1\)-norm, Frobenius norm, and spectral norm, respectively.

\begin{lemma}[Sum-of-squares proof of identifiability II]
  \torestate{\label{lem:abstractidentifiability}
    Let $Z_0 \in \cZ(n, k)$ and let $B, B_0 \in \R_{+}^{k \times k}$ be symmetric matrices with $\normm{B}, \normm{B_0} \leq R$.
    Then, for \(Y_0 = Z_0 B_0 \Znulltrans\) and for all \(n\)-by-\(n\) matrices \(Y_1,Y_2,\Yin\),
    \[
      \cA_1(Z), f(Z; B, \Yin) \ge t
      \sststile{O(1)}{Z}
      \normf{Z B \transpose Z - Y_0}^2 \lesssim k \cdot \norm{Y_1 - Y_0}^2 + \normf{Y_2-Y_1}^2 + R\normo{\Yin-Y_2}+\nu \,,
    \]
    where \(\nu = \max\bigset{0,f(Z_0; \Bnull, Y_2) - t}\).
  }
\end{lemma}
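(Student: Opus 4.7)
The plan is to generalize the proof of \cref{lem:basic-sos-utility} by decomposing the total error as \(\Yin - Y_0 = (Y_1-Y_0) + (Y_2-Y_1) + (\Yin-Y_2)\), so that the inner product of \(U = ZB\transpose{Z} - Y_0\) with each piece can be controlled by the norm in which that piece is small: spectral for the first, Frobenius for the second, and \(\ell_1\) (combined with an entrywise \(\ell_\infty\) bound on \(U\)) for the third. The starting point is the same polynomial identity used for \cref{lem:basic-sos-utility},
\[
  f(Z;B,\Yin)-f(Z_0;\Bnull,\Yin) \;=\; \iprod{U,\Yin - Y_0}-\tfrac12\normf{U}^2 \,,
\]
which, after rewriting \(f(Z_0;\Bnull,\Yin) = f(Z_0;\Bnull,Y_2) + \iprod{Y_0,\Yin-Y_2}\), using the hypothesis \(f(Z;B,\Yin)\ge t\), and the definition of \(\nu\), gives
\[
  \tfrac12\normf{U}^2 \;\le\; \iprod{U,\Yin-Y_0} + \nu + \iprod{Y_0,\Yin-Y_2} \,.
\]

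For the spectral piece \(\iprod{U,Y_1-Y_0}\), I would invoke \cref{cor:sos-spectral-hoelder} exactly as in the proof of \cref{lem:basic-sos-utility} (using \(V=\sqrt{k/n}\,Z\) and an orthogonal basis for the column span of \(Y_0\)) to get \(\cA_1(Z) \sststile{O(1)}{Z} \iprod{U,Y_1-Y_0}\le \tfrac18\normf{U}^2 + O(k)\norm{Y_1-Y_0}^2\). For the Frobenius piece, the identity \(\normf{\tfrac12 U-(Y_2-Y_1)}^2\ge 0\) is a degree-\(2\) sos proof of \(\iprod{U,Y_2-Y_1}\le \tfrac18\normf{U}^2 + 2\normf{Y_2-Y_1}^2\).

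The genuinely new step, and the main obstacle, is the \(\ell_1\) bound \(\iprod{U,\Yin-Y_2}\le R\,\normsum{\Yin-Y_2}\) inside the sos system. For this I first establish entrywise bounds \(\cA_1(Z) \sststile{O(1)}{Z} R\pm U_{ij} \ge 0\). Expanding \((ZB\transpose Z)_{ij}=\sum_{a,b}Z_{ia}Z_{jb}B_{ab}\) and using the constraints \(Z\odot Z=Z\), \(Z\Ind=\Ind\), and \(Z\ge 0\), the identity \(\sum_{a,b}Z_{ia}Z_{jb}=1\) has a constant-degree sos proof; multiplying the scalar inequalities \(0\le B_{ab}\le R\) by the sos-nonnegative monomials \(Z_{ia}Z_{jb}\) and summing shows both \((ZB\transpose Z)_{ij}\) and \(R-(ZB\transpose Z)_{ij}\) are sos-derivable, which combined with the scalar facts \(0\le (Y_0)_{ij}\le R\) gives the claimed \(R\pm U_{ij}\ge 0\). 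Now \(\Yin\) and \(Y_2\) are constants, so I can partition the index set by the sign of \((\Yin-Y_2)_{ij}\): on indices where \((\Yin-Y_2)_{ij}\ge 0\), I multiply \(R-U_{ij}\ge 0\) by \((\Yin-Y_2)_{ij}\), and on the rest I multiply \(R+U_{ij}\ge 0\) by \(-(\Yin-Y_2)_{ij}\). Summing yields a sos-derivation of \(R\,\normsum{\Yin-Y_2}-\iprod{U,\Yin-Y_2}\ge 0\). The same sign-partitioning argument, applied with \(U\) replaced by the constant matrix \(Y_0\), also gives the scalar bound \(\iprod{Y_0,\Yin-Y_2}\le R\,\normsum{\Yin-Y_2}\).

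Finally, I combine the three bounds in the displayed inequality above, moving the two \(\tfrac18\normf{U}^2\) contributions to the left-hand side to leave \(\tfrac14\normf{U}^2\), and collecting all \(\normsum{\Yin-Y_2}\) contributions (the one from the third inner product and the one from \(\iprod{Y_0,\Yin-Y_2}\)) into a single term \(O(R)\,\normsum{\Yin-Y_2}\). This yields
\[
  \cA_1(Z),\, f(Z;B,\Yin)\ge t \;\sststile{O(1)}{Z}\; \normf{U}^2 \;\lesssim\; k\,\norm{Y_1-Y_0}^2 + \normf{Y_2-Y_1}^2 + R\,\normsum{\Yin-Y_2} + \nu \,,
\]
which is the stated inequality. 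Every intermediate sos step is of constant degree in \(Z\), so the overall derivation is of degree \(O(1)\) as required.
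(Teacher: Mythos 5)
Your proof is correct, and it reaches the same bound using the same three core tools — the sos spectral H\"older inequality (\cref{cor:sos-spectral-hoelder}) for the low-rank piece, sos Cauchy--Schwarz for the Frobenius piece, and an $\ell_\infty/\ell_1$ H\"older bound for the entrywise piece — but organized differently. The paper pivots through the intermediate matrix $Y_1$: it shows $f(Z_0;B_0,Y_1)-f(Z;B,Y_1)$ is controlled by the Frobenius, $\ell_1$, and $\nu$ terms, and then bootstraps via \cref{lem:basic-sos-utility} (applied with $Y_1$ in place of $\Yin$) to convert this into the spectral-plus-$f$-gap bound on $\normf{U}^2$. You instead unroll the Taylor identity once around $Y_0$ evaluated at $\Yin$, split $\Yin-Y_0$ into the three summands $(Y_1-Y_0)+(Y_2-Y_1)+(\Yin-Y_2)$, and bound each $\iprod{U,\cdot}$ directly; this requires the extra scalar term $\iprod{Y_0,\Yin-Y_2}\le R\normsum{\Yin-Y_2}$, which the paper's derivation sidesteps by folding it into the $f$-differences. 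A second small divergence: the paper's $\ell_1$ bound (\cref{lem:sos-version-of-ell1-elli-Hoelder}) is applied to $ZB\transpose Z$ alone, whereas you apply it to $U=ZB\transpose Z-Y_0$ by combining the sos entrywise bound $0\le(ZB\transpose Z)_{ij}\le R$ with the constant entrywise bound $0\le(Y_0)_{ij}\le R$; both give $O(R)\normsum{\Yin-Y_2}$. Your route is more self-contained; the paper's route is more modular and makes explicit where \cref{lem:basic-sos-utility} is reused. One caveat worth stating explicitly in a final writeup: \cref{cor:sos-spectral-hoelder} requires a symmetric constant matrix, so the argument implicitly assumes $Y_1-Y_0$ is symmetric (as it is in all applications of the lemma).
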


\begin{proof}
  Since \(\normm{B}\le R\), we have \(\cA_1(Z)\sststile{O(1)}{Z} 0\le ZB\transpose{Z} \le R \cdot \Ind_n \transpose{\Ind_n}\) (by \cref{lem:block-matrix-sos-bound}).
  Consequently, by \cref{lem:sos-version-of-ell1-elli-Hoelder} (sos version of \Holder inequality for \(\ell_\infty\) and \(\ell_1\))
  \[
    \cA_1(Z)\sststile{O(1)}{Z}
    f(Z; B, Y_2) - f(Z;B, \Yin)=\iprod{ZB\transpose{Z},Y_2-\Yin}
    \ge  -R\cdot \normo{\Yin-Y_2}\,.
  \]
  At the same time, we have the following polynomial identity,
  \[
    f(Z; B, Y_1)-f(Z_0; B_0, Y_1)
    = \iprod{ZB\transpose{Z}-\Znull\Bnull\Znulltrans,Y_1-Y_2} + f(Z; B,Y_2) - f(Z_0;B_0, Y_2)\,.
  \]
  Thus, by \cref{fact:simple-sos-version-of-cauchy-schwarz} (sos version of Cauchy--Schwarz),
  \[
    \begin{aligned}
      \cA_1(Z)\sststile{O(1)}{Z}
      f(Z; B, Y_1)-f(Z_0; B_0, Y_1)
      \ge
      & f(Z; B,Y_2) - f(Z_0;B_0, Y_2)\\
      & - \tfrac 1 {16}\normf{ZB\transpose{Z}-\Znull\Bnull\Znulltrans}^2
      -  4 \normf{Y_1-Y_2}^2 \,.
    \end{aligned}
  \]
  Together, these bounds imply,
  \[
    \begin{aligned}
      \cA_1(Z), f(Z;B,\Yin)\ge t\sststile{O(1)}{Z}\\
       f(Z_0; B_0, Y_1) - f(Z; B, Y_1)
      & \le && \tfrac 1 {16}\normf{ZB\transpose{Z}-Y_0}^2 +  4 \normf{Y_1-Y_2}^2 \\
      &&&  +  f(Z_0;B_0, Y_2) - f(Z; B,Y_2)\\
      & \le && \tfrac 1 {16}\normf{ZB\transpose{Z}-Y_0}^2 +  4 \normf{Y_1-Y_2}^2 + R \cdot\normo{\Yin-Y_2} \\
      &&& +  f(Z_0;B_0, Y_2) - f(Z; B,\Yin)\\
      & \le && \tfrac 1 {16}\normf{ZB\transpose{Z}-Y_0}^2 +  4 \normf{Y_1-Y_2}^2 + R \cdot\normo{\Yin-Y_2} \\
      &&& +  v
    \end{aligned}
  \]
  By \cref{lem:basic-sos-utility} (applied to matrices \(Y_0\) and \(Y_1\)),
  \[
    \begin{aligned}
      \cA_1(Z), f(Z;B,\Yin)\ge t\sststile{O(1)}{Z}\\
      \normf{Z B \transpose Z - Y_0}^2
      & \le && 48 k \cdot \norm{Y_1 - Y_0}^2
             + 4\Bigparen{f(Z_0;B_0,Y_1) - f(Z; B, Y_1)}\\
      & \le && 48 k \cdot \norm{Y_1 - Y_0}^2+  \tfrac 1 {4}\normf{ZB\transpose{Z}-Y_0}^2 \\
      &&& +  16 \normf{Y_1-Y_2}^2 + 4R \cdot\normo{\Yin-Y_2} +  4v
          \,,
    \end{aligned}
  \]
  as desired.
\end{proof}

For the sum-of-squares proof above, it is important that in the constraint \(f(Z; B, \Yin) \ge t\)  only \(Z\) is a variable for the proof system and that, in particular, \(\Yin\) is a constant for the proof system.
However, the polynomial system \(\cA(Y,Z;B,\Yin,t)\) that we consider for our algorithms only contains a constraint \(\Set{f(Y,Z; B) \ge t}\) where \(Y\) is a variable for the proof system.
The following lemma shows that the system \(\cA(Y, Z; B, \Yin,t)\) satisfies a monotonicty property that allows us to sos-derive \(f(Z; B, \Yin) \ge t\) from it.

\begin{lemma}\torestate{\label{lem:abstractextension}
  For all matrices \(\Yin \in \R^{n\times n}\) and every matrix \(B\) with nonnegative entries,
  \[
    \cA(Y, Z; B, \Yin, t) \sststile{4}{Y,Z}
    f(Z; B, \Yin) \ge t\,.
  \]
}
\end{lemma}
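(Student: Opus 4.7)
The plan is to exploit the monotonicity of the map $Y \mapsto f(Y,Z;B)$ in its matrix argument: since $f(Y,Z;B) = \iprod{ZB\transpose{Z},Y} - \tfrac12 \normf{ZB\transpose{Z}}^2$ is linear in $Y$ with coefficients given by the entrywise-nonnegative matrix $ZB\transpose{Z}$, the constraint $Y \le \Yin$ in $\cA_2(Y;\Yin)$ should force $f(Y,Z;B) \le f(Z;B,\Yin)$, which together with the axiom $f(Y,Z;B) \ge t$ gives the conclusion. All that remains is to realize this argument inside the sos proof system.

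Concretely, I would start from the identity
\[
f(Z; B, \Yin) - t \;=\; \bigparen{f(Y,Z;B) - t} \;+\; \iprod{ZB\transpose{Z}, \Yin - Y},
\]
which isolates the target inequality as the sum of the hypothesis $f(Y,Z;B) \ge t$ and the ``slack'' term $\iprod{ZB\transpose{Z}, \Yin - Y}$. The first summand is directly a constraint in the system (multiplied by the degree-$0$ sos polynomial~$1$), and it has total degree $4$ due to the term $\tfrac12\normf{ZB\transpose{Z}}^2$. So the whole task is to sos-derive the nonnegativity of the slack term at degree at most $4$.

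For the slack, I would expand
\[
\iprod{ZB\transpose{Z}, \Yin - Y} \;=\; \sum_{i,j \in [n]} \sum_{a,b \in [k]} B_{ab}\cdot Z_{ia}\cdot Z_{jb}\cdot (\Yin - Y)_{ij}
\]
and observe that each individual summand is a product of four nonnegative quantities, each of which is either a nonnegative constant or an axiom of the system. Namely: $B_{ab} \ge 0$ is a nonnegative scalar coefficient (by the hypothesis on $B$), the entrywise inequalities $Z_{ia} \ge 0$ and $Z_{jb} \ge 0$ follow from the axiom $Z \ge 0$ in $\cA_1(Z)$, and the entrywise inequality $(\Yin - Y)_{ij} \ge 0$ follows from $Y \le \Yin$ in $\cA_2(Y;\Yin)$. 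Hence each summand has a trivial sos-derivation of degree $3$ obtained by multiplying these three axioms and weighting by the nonnegative scalar $B_{ab}$; summing these derivations gives the desired sos proof that $\iprod{ZB\transpose{Z},\Yin - Y} \ge 0$.

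There is no real obstacle here; the lemma is essentially a bookkeeping check that the intuitive monotonicity argument goes through in the sos proof system. The only thing one has to verify carefully is the degree bound: the axiom $f(Y,Z;B) \ge t$ is itself a degree-$4$ polynomial constraint in $(Y,Z)$, while the slack term is derived at degree $3$, so the combined derivation fits within degree $4$, matching the claim $\sststile{4}{Y,Z}$. The only mild subtlety is to make sure one uses the explicit constraint $Z \ge 0$ (which the authors note is included redundantly in $\cA_1(Z)$ precisely to enable derivations of this kind), rather than trying to deduce nonnegativity of $ZB\transpose{Z}$ from the idempotence $Z\odot Z = Z$, which would inflate the degree.
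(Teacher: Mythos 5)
Your proof is correct and follows essentially the same route as the paper's: the same decomposition $f(Z;B,\Yin)-t = (f(Y,Z;B)-t) + \iprod{ZB\transpose{Z},\Yin-Y}$, the same termwise sos-derivation of $\iprod{ZB\transpose{Z},\Yin-Y}\ge 0$ from $Z\ge 0$ and $Y\le\Yin$ at degree $3$, and the same degree-$4$ accounting for the final conclusion. Your observation that one should use the explicit constraint $Z\ge 0$ rather than idempotence is exactly the point the authors make in the footnote where they explain why $Z\ge 0$ is included redundantly in $\cA_1(Z)$.
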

\begin{proof}
  Since
  \[
    f(Z;B,\Yin)-t = f(Y,Z;B)-t + \Iprod{ZB\transpose{Z},\Yin-Y}
  \]
  and we have the degree-4 constraint $f(Y,Z;B)-t \ge 0$, it suffices to show $\iprod{ZB\transpose{Z},\Yin-Y} \ge 0$.
  For all $i,j$, we have $\cA_1(Z) \sststile{2}{} (ZB\transpose{Z})(i,j)\ge0$ as
  $(ZB\transpose{Z})(i,j) = \sum_{a,b} B(a,b)Z(i,a)Z(j,b)$.
  Combining the degree-2 sos proof $ZB\transpose{Z}\ge0$ with the degree-1 constraint $\Yin\ge Y$ in $\cA_2(Y;\Yin)$, 
  \[
    \cA_1(Z) \cup \cA_2(Y;\Yin) \sststile{3}{} \Iprod{ZB\transpose{Z},\Yin-Y} \ge 0 \,.
  \]
  Therefore
  \[
    \cA(Y,Z;B,\Yin,t) \sststile{4}{} f(Z;B,\Yin) \ge t \,.
  \]
\end{proof}

The final part of our utility analysis is the following connection between the Frobenius norm distance between \(n\)-by-\(n\) matrices \(Z B \transpose Z\) and \(\Znull \Bnull \Znulltrans\) and the graphon \(\delta_2\) distance of the \(k\)-by-\(k\) matrices \(B\) and \(\Bnull\).

\begin{lemma}
  \torestate{
    \label{lem:abstractrounding}
    Let \(B,B_0\) be $k$-by-$k$ matrices with nonnegative entries and let \(t\in\R_{\ge 0}\) be a scalar.
    Suppose the following polynomial system in \(Z\) is sos-consistent up to level \(k^{O(1)}\),
    \[
      \cA_1(Z) \cup \cA_1(\Znull) \cup \Set { \tfrac 1 {n^2}\normf{ZB\transpose Z - Z_0 B_0 \transpose {Z_0}}^2 \le t}\,.
    \]
    Then, \(\deltatwosquared(B,B_0)\le 1.1 \cdot t \).
  }
\end{lemma}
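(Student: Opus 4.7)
My plan is to reduce the $nk$-variate sum-of-squares problem in $Z$ to a $k^2$-variate problem over the Birkhoff polytope $B_k$ of doubly-stochastic $k$-by-$k$ matrices, and then invoke the paper's approximation scheme for polynomial optimization over near-integral polytopes.

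\emph{Step 1 (variable reduction).} I would introduce the change of variables $S \seteq \tfrac{k}{n}\transpose{Z_0} Z$, which is bilinear in $(Z, Z_0)$. From $\cA_1(Z)\cup\cA_1(Z_0)$ one sos-derives at constant degree that $S\ge 0$, $S\Ind = \Ind$, $\transpose{S}\Ind = \Ind$ (so that $S$ pseudo-distributes over $B_k$), together with the ``community orthogonality'' identity $\transpose{Z}Z = \tfrac{n}{k}I_k$. These in turn yield the polynomial identity
\begin{equation*}
  \tfrac{1}{n^2}\normf{Z B\transpose{Z} - Z_0 B_0 \transpose{Z_0}}^2
  \;=\; p(S;B,B_0)
  \;\seteq\; \tfrac{1}{k^2}\sum_{a,b,c,d}\bigparen{B(a,b)-B_0(c,d)}^2 S(c,a)\,S(d,b),
\end{equation*}
which is quadratic in $S$ with \emph{nonnegative} coefficients since $B,B_0 \ge 0$. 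Any level-$k^{O(1)}$ pseudo-distribution witnessing the hypothesis then pushes forward under $(Z,Z_0)\mapsto S$ to a level-$k^{O(1)}$ pseudo-distribution supported on $B_k$ that satisfies $p(S;B,B_0) \le t$.

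\emph{Step 2 (Birkhoff characterization and sos rounding).} Next I would establish the Gromov--Wasserstein-type identity
\begin{equation*}
  \deltatwosquared(B, B_0) \;=\; \min_{S\in B_k}\, p(S; B, B_0):
\end{equation*}
any pair of measure-preserving maps $[0,1]\to[k]$ induces a doubly-stochastic coupling of the $k$ equal parts under which the $L_2$-distance between the rearranged block graphons evaluates precisely to $p$, and by \cref{thm:birkhoff} the extremal such couplings correspond to permutations, each realizable by measure-preserving rearrangements of the atoms. I would then invoke the paper's general multiplicative-approximation scheme for sos minimization of nonnegative-coefficient polynomials over integral polytopes contained in $\R_{\ge 0}^{k^2}$, applied to $B_k$: at level $k^{O(1)}$ it rounds the pseudo-distribution produced in Step~1 into an honest $S^* \in B_k$ with $p(S^*;B,B_0)\le 1.1\cdot t$. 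Combined with the Birkhoff characterization, this yields $\deltatwosquared(B,B_0) \le 1.1\,t$, as desired.

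\emph{Main obstacle.} The crux of the argument is the multiplicative sos rounding in Step~2. Standard Positivstellensatz-type degree bounds for polynomial optimization over Archimedean systems give only \emph{additive} error and require degree exponential in the ambient dimension; obtaining degree $k^{O(1)}$ with a \emph{multiplicative} guarantee relies essentially on both the integrality of $B_k$ (its vertices are $0/1$ permutation matrices) and the nonnegativity of the coefficients of $p(S;B,B_0)$ secured in Step~1, since without either ingredient cancellations could corrupt the multiplicative bound. I would expect this step to rest on the rounding machinery of \cite{barak2017quantum} referenced earlier in the paper, adapted to the polytope $B_k$. A secondary (but routine) technical point is verifying that the polynomial identity in Step~1 actually follows in constant sos-degree from $\cA_1(Z)\cup\cA_1(Z_0)$, which hinges precisely on the sos-derivable identity $\transpose{Z}Z = \tfrac{n}{k}I_k$.
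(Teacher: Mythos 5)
Your proposal tracks the paper's proof of \cref{lem:abstractrounding} essentially step for step: the same change of variables $S = \tfrac{k}{n}\transpose{Z}\Znull$ (you write its transpose, which is equivalent), the same polynomial identity relating $\tfrac{1}{n^2}\normf{ZB\transpose Z - \Znull\Bnull\Znulltrans}^2$ to $p(S;B,B_0)$ together with the sos derivation of the doubly-stochastic constraints, the same Birkhoff characterization of $\deltatwosquared$ (the paper's \cref{lem:dsdistance}), and the same appeal to \cref{corollary:sos-multiplicative-approx-polytope} built on the reweighing machinery of \cite{barak2017quantum}. You also correctly single out the two hypotheses that drive the multiplicative guarantee: nonnegativity of the coefficients of $p$ and integrality of the polytope.

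Two small inaccuracies, neither fatal. First, the coefficients of $p$ are nonnegative because they are squares $(B(a,b)-B_0(c,d))^2$, not because $B,B_0\ge 0$; nonnegativity of $B,B_0$ plays no role in Step~1. Second, and slightly more substantive, your Step~2 invokes the Birkhoff--von Neumann theorem to argue via extremal (permutation) couplings. Since $p$ is quadratic in $S$, its minimum over $B_k$ need not be attained at a vertex, so reducing to permutations does not by itself give the inequality $\deltatwosquared(B,B_0)\le \min_{S\in B_k} p(S;B,B_0)$. The paper's \cref{lem:dsdistance} instead constructs, for \emph{every} doubly-stochastic $S$, a measure-preserving map achieving value $p(S;B,B_0)$ by splitting each interval $I_a$ into $k$ sub-intervals of lengths proportional to the entries $S(a,\cdot)$; that direct construction is what you want here (Birkhoff is only used elsewhere in the paper, e.g.\ in \cref{thm:distance-metric}).
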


\begin{proof}
  By \cref{lem:dsdistance}, the \(\delta_2\) distance has the following alternative characterization in terms of \(k\)-by-\(k\) doubly-stochastic matrices,
  \[
    \deltatwosquared(B,B_0) = \min_{S\in B_k} p(S; B,B_0)\,.
  \]
  Here, \(B_k\) denotes the set of all \(k\)-by-\(k\) doubly-stochastic matrices (also known as the Birkhoff polytope), and \(p(S;B,B_0)\) denotes the following quadratic polynomial in \(S\) with coefficients depending on \(B,B_0\),
  \[
    p(S; B,B_0) \seteq \frac{1}{k^2}\sum_{a,a',b,b'\in [k]}\Paren{B(a,b)-\Bnull(a',b')}^2\cdot S(a,a')\cdot S(b,b')\,.
  \]
  Relatedly, we can embed the set of community membership matrices \(\cZ(n,k)\) into the Birkhoff polytope \(B_k\), by the following quadratic map \(Z\mapsto S(Z)\),
  \[
    S(Z) \seteq \frac{k}{n} \transpose Z \Znull\,.
  \]
  Furthermore, this map has the property that \(p\bigparen{S(Z); B,B_0}=\tfrac 1{n^2}\normf{ZB\transpose{Z}-\Znull\Bnull\Znulltrans}^2\) for all \(Z,Z_0\in \cZ(n,k)\).

  We show in \cref{lem:sos-proof-properties-of-map-between-community-membership-and-doubly-stochastic} that these properties of the map \(S(Z)\) also have sos proofs,
  \[
    \cA_1(Z),\cA_1(\Znull) \sststile{O(1)}{Z,\Znull} \cA_{\mathrm{ds}}(S(Z)),p\bigparen{S(Z); B,B_0}=\tfrac 1{n^2}\normf{ZB\transpose{Z}-\Znull\Bnull\Znulltrans}^2\,,
  \]
  where \(\cA_{\mathrm{ds}}(S)\seteq \Set{ S \ge 0,~ S\Ind_k =\Ind_k,~\transpose S \Ind_k=\Ind_k }\) is the set of linear constraints describing \(B_k\).

  Thus, since the system \(\cA_1(Z)\cup \cA_1(\Znull)\cup \set{\tfrac 1{n^2}\normf{ZB\transpose{Z}-\Znull\Bnull\Znulltrans}^2\le t}\) is sos-consistent up to level \(k^{O(1)}\), then so is the system \(\cA_{\mathrm{ds}}(S)\cup \{p(S;B,B_0)\le t\}\) (with a small additive loss in the exact number of levels).

  In \cref{section:sos-optimization-polytope}, we show the sum-of-squares provides good multiplicative approximations for the problem of minimizing quadratic polynomials with nonnegative coefficients over integral polytopes.
  Concretely, \cref{corollary:sos-multiplicative-approx-polytope} shows that the sos-consistency of  the system \(\cA_{\mathrm{ds}}(S)\cup \{p(S;B,B_0)\le t\}\) up to level \(k^{O(1)}\) implies that there exists a doubly-stochastic matrix \(S^*\in B_k\) with \(p(S^*;B,B_0)\le 1.1t\).
  We conclude \(\deltatwosquared(B,B_0)\le 1.1t\) as desired.
\end{proof}

The following lemma allows us to bound the sensitivity of our score functions, which is the main ingredient for the privacy analysis of our algorithms.

\begin{lemma}[Sensitivity bound]\torestate{\label{lem:abstractsensitivity}
  For every two matrices \(\Yin,\Yin'\in\R^{n\times n}\) that differ in at most one row and one column, there exist linear polynomials \(L=(L_{ij})\) such that for every polynomial inequality \(p(Y,Z)\ge0\) in \(\cA(Y, Z; B, \Yin', t-\Delta)\),
  \[
    \cA(Y, Z; B, \Yin, t)
    \sststile{\deg(p)}{}
    p(L(Y),Z)
  \]
  where \(\Delta = 40nR\normm{B}\).
}
\end{lemma}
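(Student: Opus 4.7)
Let $v \in [n]$ denote the single vertex along whose row and column $\Yin$ and $\Yin'$ may disagree. The plan is to take $L$ to be the linear map that zeros out row $v$ and column $v$ of $Y$, i.e.
\[
  L_{ij}(Y) \seteq \begin{cases} Y_{ij} & \text{if } i\ne v \text{ and } j\ne v\,, \\ 0 & \text{otherwise}\,. \end{cases}
\]
Each $L_{ij}$ is a linear polynomial in the entries of $Y$, and $L$ sends symmetric matrices to symmetric matrices. Every axiom in $\cA_1(Z)$ involves only $Z$ and appears verbatim in the source system. For the axioms of $\cA_2(L(Y);\Yin')$, each has a degree-$1$ derivation from $\cA_2(Y;\Yin)$: nonnegativity $L(Y)\ge 0$ holds because $L_{ij}(Y)\in\{0,Y_{ij}\}$; the domination $L(Y)\le \Yin'$ holds entrywise because off row/column $v$ we have $L(Y)_{ij}=Y_{ij}\le \Yin_{ij}=\Yin'_{ij}$, while on row/column $v$ we have $L(Y)_{ij}=0\le \Yin'_{ij}$; the row-average bound for $L(Y)$ is dominated by that of $Y$; and symmetry is preserved by construction.

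The main step is to derive the degree-$4$ axiom $f(L(Y),Z;B)\ge t-\Delta$ from $f(Y,Z;B)\ge t$. Here I will use the polynomial identity
\[
  f(L(Y),Z;B) \;=\; f(Y,Z;B) \;-\; \Iprod{ZB\transpose{Z},\, Y - L(Y)}\,,
\]
so it suffices to upper bound the inner product on the right by $\Delta$. Invoking \cref{lem:block-matrix-sos-bound}, $\cA_1(Z) \sststile{2}{Z} 0\le (ZB\transpose{Z})_{ij}\le \normm{B}$ for every pair $(i,j)$. Combining this entrywise upper bound with the linear fact $Y-L(Y)\ge 0$ (which follows from $Y\ge 0$ in $\cA_2$ and the definition of $L$), I obtain a degree-$3$ sos derivation
\[
  \Iprod{ZB\transpose{Z},\, Y - L(Y)} \;\le\; \normm{B} \cdot \sum_{i,j}(Y-L(Y))_{ij}\,.
\]
Finally, since $Y-L(Y)$ is supported on row $v$ and column $v$, and $Y$ is symmetric with row sums at most $20nR$ (from $\tfrac{1}{n} Y\Ind \le 20R \Ind$ in $\cA_2$), the right-hand side is at most $\normm{B}\cdot(2\cdot 20nR)=40nR\normm{B}=\Delta$. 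Substituting back produces $f(L(Y),Z;B)\ge t-\Delta$ within degree $4$.

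The remaining work is a careful degree count. All axioms of the target system other than the objective inequality are linear, and their derivations from the source system are linear as well. The objective inequality has degree $4$, and the derivation above uses a degree-$2$ sos bound on the entries of $ZB\transpose{Z}$ multiplied by the degree-$1$ polynomial $(Y-L(Y))_{ij}$, added to the degree-$4$ axiom $f(Y,Z;B)\ge t$, so the total degree is $4=\deg(p)$. The main subtlety---and the reason the sensitivity parameter is $40nR\normm{B}$ rather than something smaller---is that the only entrywise upper bound on $(ZB\transpose{Z})_{ij}$ available in low degree from $\cA_1(Z)$ is the uniform bound $\normm{B}$; a tighter bound would need to exploit the equipartition structure of $Z$ in a way that is unavailable at constant sos degree, and it is precisely this worst-case bound on $ZB\transpose{Z}$ together with the row-sum constraint on $Y$ that makes the construction work without increasing the proof degree.
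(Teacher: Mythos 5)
Your proof is correct and follows essentially the same route as the paper: you define the same zeroing-out map $L$, use the same polynomial identity $f(L(Y),Z;B) = f(Y,Z;B) - \iprod{ZB\transpose{Z},\,Y-L(Y)}$, invoke the same entrywise bound $(ZB\transpose{Z})_{ij}\le\normm{B}$ from $\cA_1(Z)$, and bound the mass of $Y-L(Y)$ via the row-average constraint in $\cA_2$ to obtain $\Delta = 40nR\normm{B}$ at degree~$4$. The only cosmetic difference is that you verify the $\cA_2(L(Y);\Yin')$ axioms in slightly more detail, whereas the paper dismisses these as straightforward.
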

In particular, by \cref{lem:lip_sos-sos_consistency}, the property established in the above lemma implies that, whenever the system \(\cA(Y, Z; B, \Yin, t)\) is sos-consistent up to level \(\ell\), then so is the system \(\cA(Y, Z; B, \Yin', t-\Delta)\) (under the conditions of the lemma).
\begin{proof}
  Without loss of generality, we can assume $\Yin$ and $\Yin'$ differ on the first row and column.
  Consider the following linear functions
  \[
    L_{ij}(Y) = 
    \begin{cases}
      Y(i,j) & \text{if } i,j>1 \,, \\
      0 & \text{otherwise} \,.
    \end{cases}
  \]
  Sos proofs for inequalities in $\cA_1(Z) \cup \cA_2(Y;\Yin')$ are straightforward.
  For $f(L(Y),Z;B)\ge t-\Delta$, observe that
  \begin{align*}
    f(L(Y),Z;B) - (t-\Delta)
    &= \Paren{f(Y,Z;B) - t} + \Paren{\Delta - \Iprod{ZB\transpose{Z},Y-L(Y)}} \\
    &= \Paren{f(Y,Z;B) - t} + \Paren{\Delta - 2\Iprod{(ZB\transpose{Z})(1,\cdot),Y(1,\cdot)}} \,.
  \end{align*}
  For all $i,j$, we have $\cA_1(Z) \sststile{2}{} (ZB\transpose{Z})(i,j) \le \normm{B}$ as
  \begin{align*}
    (ZB\transpose{Z})(i,j) 
    &= \sum_{a,b} B(a,b)Z(i,a)Z(j,b) 
      \le \normm{B} \cdot \sum_{a,b} Z(i,a)Z(j,b) \\
    &= \normm{B} \cdot \Paren{\sum_a Z(i,a)} \Paren{\sum_b Z(j,b)}
      = \normm{B} \,.
  \end{align*}
  Then,
  \[
    \cA_1(Z) \cup \cA_2(Y;\Yin)
    \sststile{3}{}
    \Iprod{(ZB\transpose{Z})(1,\cdot),Y(1,\cdot)} 
    \le \normm{B}\cdot\sum_{j}Y(1,j) 
    \le 20nR\normm{B} \,.
  \]
  where the first inequality uses $Y\ge0$ and the second uses $Y\Ind \leq 20nR\cdot\Ind$ in $\cA_2(Y;\Yin)$.
  Since we have degree-4 constraint $f(Y,Z;B)-t \ge 0$, then
  \[
    \cA(Y, Z; B, \Yin, t)  
    \sststile{4}{}
    f(L(Y),Z;B)\ge t-\Delta \,.
  \]
\end{proof}

\begin{theorem}\torestate{\label{thm:abstract-main-theorem}
    For every \(R\ge 1\) and \(k\ge 2\), there exists an $\epsilon$-differentially private algorithm with the following utility guarantee:
    For every input matrix $\Yin\in\R_{\ge 0}^{n\times n}$ and all matrices \(Y_0,Y_1,Y_2\) satisfying \(\cA_2(Y_2; \Yin)\) and \(Y_0=\Znull \Bnull \transpose \Znull\) for some \(\Znull\in\cZ(n,k)\) and \(\Bnull\in\R_{\ge 0}^{k\times  k}\) with \(\normmax{\Bnull}\le R\), the randomized output \(\hat{\bm B}\in \R_+^{k\times k}\) of the algorithm satisfies with probability at least $1-n^{-\Omega(k^2)}$,
    \begin{equation*}
      \deltatwosquared(\hat{\bm B},B_0)
      \lesssim  \frac{1}{n^2}\cdot \Paren{k \cdot\norm{Y_0 - Y_1}^2 + \normf{Y_1-Y_2}^2+ R\cdot\normo{Y_2-\Yin} } + \frac{R^2 k^2\log(n)}{\epsilon n}\,.
    \end{equation*}
    Furthermore, the algorithm runs in time \(n^{\poly(k)}\).
}
\end{theorem}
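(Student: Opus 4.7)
The plan is to instantiate the exponential mechanism of \cref{thm:exp-mech} on the sos score function $s_\ell(B;\Yin)$---defined as the largest $t$ for which $\cA(Y,Z;B,\Yin,t)$ is sos-consistent at level $\ell$---with $\ell=k^{O(1)}$ chosen large enough for the rounding lemma, and to discretize the output space by an $\eta$-net of symmetric matrices in $[0,R]^{k\times k}$ with $\eta=1/\poly(n)$. The four abstract lemmas \cref{lem:abstractsensitivity,lem:abstractextension,lem:abstractidentifiability,lem:abstractrounding} will supply essentially every ingredient needed.

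For \emph{privacy}, \cref{lem:abstractsensitivity} implies that for any two node-adjacent inputs $\Yin,\Yin'$ the scores satisfy $|s_\ell(B;\Yin)-s_\ell(B;\Yin')|\le \Delta := 40nR^2$. Scaling the score by $\lambda=\Theta(\eps/(nR^2))$ and applying the privacy clause of \cref{thm:exp-mech} then yields $\eps$-node-differential privacy.

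For \emph{utility}, I would first note that the assignment $(Z,Y)=(Z_0, Y_2)$ is feasible for $\cA(Y,Z;B_0,\Yin, f(Z_0;B_0,Y_2))$ because $\cA_1(Z_0)$ and $\cA_2(Y_2;\Yin)$ hold by hypothesis; consequently $s_\ell(B_0;\Yin)\ge f(Z_0; B_0, Y_2)$. The utility clause of \cref{thm:exp-mech} together with the net-size estimate $\log|\cX_\eta|=O(k^2 \log(Rn))$ then gives, with probability $1-n^{-\Omega(k^2)}$,
\[
\nu \seteq s_\ell(B_0;\Yin) - s_\ell(\hat{\bm B};\Yin) \lesssim \frac{nR^2 k^2 \log n}{\eps}.
\]
By definition of $s_\ell$, the system $\cA(Y,Z;\hat{\bm B},\Yin, s_\ell(\hat{\bm B};\Yin))$ is sos-consistent at level $\ell$; chaining \cref{lem:abstractextension}, which sos-derives $f(Z;\hat{\bm B},\Yin)\ge s_\ell(\hat{\bm B};\Yin)$ in degree $4$, with \cref{lem:abstractidentifiability} applied at $t=s_\ell(\hat{\bm B};\Yin)$---whose slack term is then bounded above by $\nu$---produces an sos proof at degree $O(1)$ of the inequality $\normf{Z\hat{\bm B}\transpose Z - Y_0}^2 \lesssim k\norm{Y_1-Y_0}^2+\normf{Y_2-Y_1}^2+R\normo{\Yin-Y_2}+\nu$. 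Enlarging the witness pseudo-distribution by a deterministic copy of the ground-truth $Z_0$ then makes the rounding system $\cA_1(Z)\cup\cA_1(Z_0)\cup\{\tfrac{1}{n^2}\normf{Z\hat{\bm B}\transpose Z - Z_0 B_0 \transpose{Z_0}}^2 \le t\}$ sos-consistent at level $k^{O(1)}$ with $t=\tfrac{1}{n^2}$ times the right-hand side above, so \cref{lem:abstractrounding} delivers the claimed bound on $\delta_2^2(\hat{\bm B}, B_0)$.

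The main technical obstacle I anticipate is the quantitative Lipschitz bound on $s_\ell(\cdot;\Yin)$ in the argument $B$ that is needed to absorb the $\eta L$ term of the exponential mechanism; this should follow by a direct calculation, since $f(Y,Z;B)$ is a low-degree polynomial in $B$ whose coefficients are polynomially bounded on feasible pseudo-distributions, so a polynomially small choice of $\eta$ suffices. The running-time bound $n^{\poly(k)}$ is immediate from \cref{theorem:SOS_algorithm}: each sos-consistency test at level $\ell=k^{O(1)}$ is an SDP of size $n^{\poly(k)}$, and the $\eta$-net has at most $(\poly(n))^{k^2}$ points.
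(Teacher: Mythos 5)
Your proposal is correct and mirrors the paper's proof: exponential mechanism on the sos score, sensitivity from \cref{lem:abstractsensitivity}, and utility via \cref{thm:exp-mech} chained with \cref{lem:abstractextension}, \cref{lem:abstractidentifiability}, and \cref{lem:abstractrounding}. The two details you make explicit---that $s_\ell(B_0;\Yin)\ge f(Z_0;B_0,Y_2)$ (so that the slack $\nu$ in \cref{lem:abstractidentifiability} is controlled by the score gap) and that the witness pseudo-distribution must be extended by a Dirac at $Z_0$ before invoking \cref{lem:abstractrounding}---are precisely the points the paper leaves implicit.
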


\begin{proof}
  For any symmetric matrix $B\in [0,R]^{k\times k}$ , consider the score function \(\Score(B;\Yin)\) which is defined as
  \begin{equation*}
    \min \Set{t\in \R:\exists \text{ level-}k^{10}\text{ pseudo-distribution } \tilde\mu \text{ satisfying } \cA(Y, Z; B, \Yin, t)}
  \end{equation*}
  and the corresponding exponential mechanism \[\hat{\bm B}\sim \exp\Paren{\frac{\epsilon\cdot \Score(B;\Yin)}{\Delta}}\,,\] where $\Delta=40R^2 n$.

  By \cref{lem:abstractsensitivity}, the sensitivity of the score function is upper bounded by $40R^2 n$.
  This sensitivity bound together with \cref{thm:exp-mech} implies that the exponential mechanism is $\e$-differentially node private.
  Moreover, for each matrix \(B\), it takes \(n^{\poly(k)}\) time to evaluate the score function \(\Score(B;\Yin)\).
  Therefore by taking \(\eta=n^{-100}\) in \cref{thm:exp-mech} (exponential mechanism), our algorithm runs in time \(n^{\poly(k)}\).

  It remains to analyze the utility of this exponential mechanism.
  To this end let \(\hat{\bm {B}}\) denote the randomized output of the mechansim for input \(\Yin\).
  Since our score function is $L$-Lipschitz with $L\leq O(n^{10})$.
  By taking net size $\eta=n^{-100}$ and probability bound $\delta=n^{-\Omega(k^2)}$, in \cref{thm:exp-mech} (utility guarantees of exponential mechanism), the difference of the scores \(\bm t=\Score(\hat{\bm B};\Yin)\) and \(t_0=\Score(\Bnull;\Yin)\) is at most $O\Paren{\frac{n k^2\log(n)}{\epsilon}}$ with probability at least $1-n^{-\Omega(k^2)}$.

  Let $\ell\geq k^{10}$ and \(\mu\) be the level-\(\ell\) pseudo distribution that witnesses that the system \(\cA(Y,Z;  \bm B, \Yin, t)\) is sos-consistent up to level \(\ell\).
  Let \(Y_2\) be the matrix obtained from \(\Yin\) by removing the rows and columns averaging to more than \(20R\).
  By \cref{lem:abstractextension}, the pseudo-distribution \(\mu\) satisfies the constraint \(f(Z; \hat{\bm B}, \Yin)\ge \Score(\hat{\bm B};\Yin)=\bm t\).
  Therefore, by \cref{lem:abstractidentifiability}, the pseudo-distribution \(\mu\) also satisfies the constraint
  \begin{equation*}
    \normf{Z \hat{\bm B} \transpose Z - Y_0}^2 \lesssim {k \cdot \norm{Y_0 -Y_1}^2 + \normf{Y_1-Y_2}^2 + \normm{B}\normo{Y_2-\Yin}+(t_0-\bm t)}\,.
  \end{equation*}
  Thus, the pseudo-distribution \(\mu\) witnesses that the following polynomial system in variables \(Z\) is sos-consistent up to level $k^{10}$,
  \[
    \cA_1(Z), \normf{Z \hat{\bm B} \transpose Z - \Znull \Bnull \transpose \Znull}^2 \le O\Paren{k \cdot \norm{Y_0 - Y_1}^2 + \normf{Y_1-Y_2}^2 + \normm{B} \normo{Y_2-\Yin}+(t_0-\bm t)}\,.
  \]
  By \cref{lem:abstractrounding}, we can conclude the desired bound
  \begin{equation*}
    \deltatwosquared(\hat{\bm B},B_0)\leq \frac{1}{n^2}\cdot O\Paren{k \cdot\norm{Y_0 - Y_1}^2 + \normf{Y_1-Y_2}^2 + R\normo{Y_2-\Yin}+(t_0-\bm t)}\,.
  \end{equation*}
  Since as proved, with probability at least \(1-n^{-\Omega(k^2)}\), we have \(t_0-\bm t\leq O\Paren{\frac{n k^2\log(n)}{\epsilon}}\), it follows that
    \begin{equation*}
        \deltatwosquared(\hat{\bm B},B_0)\leq \frac{1}{n^2}\cdot O\Paren{k \cdot\norm{Y_0 - Y_1}^2 + \normf{Y_1-Y_2}^2 + R\cdot\normo{Y_2-\Yin}+\frac{R^2 nk^2\log(n)}{\epsilon}}\,.
    \end{equation*}
\end{proof}

\begin{algorithmbox}[Private estimation algorithm for low rank matrix estimation]
  \label{algo:privateMatrix}
  \mbox{}\\
  \textbf{Input:} matrix $\Yin$, rank $k$, privacy parameter $\epsilon>0$, and a bound $R\geq \normm{B}$.

  \noindent
  \textbf{Output:} .
  \begin{enumerate}[1.]
    \item For each $B\in [0,R]^{k\times k}$, let $\Score(B;\Yin)$ be defined as
    \begin{align*}
      \min \Set{t\in \R:\exists \text{ level-}k^{10}\text{ pseudo-distribution } \tilde\mu \text{ satisfying } \cA(Y, Z; B, \Yin, t)}\,.
    \end{align*}
    \item Output $\hat{\bm B}\in [0,R]^{k\times k}$, which is sampled from the distribution
      \begin{equation*}
        \hat{\bm B}\propto \exp\Paren{\frac{\epsilon}{\Delta} \Score(B;\Yin)}
      \end{equation*}
      where $\Delta=40nR^2$\,.
    \end{enumerate}
\end{algorithmbox}

\begin{remark}[Numerical issues]
  \label{remark:numerical-issues}
  Since we only have efficient algorithms for solving semidefinite programing up to a given precision, we can only efficiently search for pseudo-distributions that approximately satisfy a given polynomial system.
  Actually, in the first step of \cref{algo:privateMatrix}, we are binary searching for the minimum $t$ such that there exists a level-$k^{10}$ pseudo-distributions \emph{approximately} satisfying $\cA(Y, Z; B, \Yin, t)$.
  The analysis of our algorithm based on sos proofs still works due to our discussion in \cref{remark:sos-numerical-issue}.
\end{remark}

\subsection{Private estimation for balanced stochastic block models}
In this section, we prove our results for stochastic block models.
\restatetheorem{thm:mainSBM}

We describe our algorithm in \cref{algo:privateSBM}.
\begin{algorithmbox}[Private estimation algorithm for perfectly balanced stochastic block models]
    \label{algo:privateSBM}
    \mbox{}\\
    \textbf{Input:} Adjacency matrix $A$, number of blocks $k$, privacy parameter $\epsilon>0$, and a bound $R\geq 20 \normm{B_0}$.

    \noindent
    \textbf{Output:} A matrix $\hat{\bm B}\in [0,R]^{k\times k}$.
    \begin{enumerate}[1.]
    \item Run \cref{algo:SBMDensityEstimation}, and obtain private edge density estimator $\hat{\bm \rho}\in [0,1]$.
    \item Run \cref{algo:privateMatrix} with $\bm\Yin=\frac{\bm A}{\hat{\bm \rho}}$, and return the result $\hat{\bm B}$.
    \end{enumerate}
\end{algorithmbox}

\subsubsection{Guarantees of private edge density estimation}
We first prove the guarantees of the private edge density estimation algorithm in the setting of stochastic block model.
\begin{lemma}\label{lem:densityestimationSBM}
  Under the setting of \cref{thm:mainSBM}, with probability at least $1-O(\frac{R}{n})$, we have
  \begin{equation*}
    \Abs{\hat{\bm \rho}-\rho(\bm G)}^2\leq  \polylog(n)\cdot O\Paren{\frac{R^2\rho^2}{\e^2 n^2}+ \frac{1}{\e^4 n^4}}
  \end{equation*}
\end{lemma}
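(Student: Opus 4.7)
The plan is to reduce the lemma to the private edge density estimation framework for Erd\H{o}s--R\'enyi graphs developed by Ullman and Sealfon~\cite{ullman2019efficiently}, and then to verify that their guarantees transfer to the SBM setting through degree concentration. The core statistic is the (scaled) edge count of $\bm G$, which equals $\rho(\bm G) \cdot \binom{n}{2}$ up to normalization. Node-adjacent graphs can differ in at most $n-1$ edges, so releasing the raw edge count privately would require Laplace noise of scale $n/\epsilon$, yielding error $\Omega(1/\epsilon)$ in the density---far worse than the target. The standard fix is to replace the edge count by its Lipschitz extension obtained by truncating at degree threshold $\tau$: sum only those edges incident to vertices of current degree at most $\tau$. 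This statistic has node-sensitivity $O(\tau)$, and for $\tau$ chosen slightly larger than the typical maximum degree, it agrees with the true edge count on the realized graph $\bm G$.

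\textbf{Key steps, in order.} First, I would establish a high-probability degree bound for $\bm G$. Each vertex has degree that is a sum of independent Bernoullis with mean $d$ and with entry-wise connection probabilities bounded by $Rd/n$; a Bernstein/Chernoff estimate yields that, with probability at least $1 - O(R/n)$, every vertex has degree at most $O(Rd + \log n) \le O(Rd \cdot \polylog(n))$ (using the assumption $R \le \sqrt{n\epsilon / \polylog(n)}$ together with $d \le n$ to absorb lower-order terms into the polylog factor). This gives us a reliable target truncation level $\tau^\star = \Theta(Rd \cdot \polylog(n))$. Second, I would import the two-phase Ullman--Sealfon scheme: (i) privately choose a truncation threshold $\hat{\bm\tau}$ by scanning an exponentially spaced grid $\tau_j = 2^j$ via stability-based or propose-test-release selection, so that with high probability $\tau^\star \le \hat{\bm\tau} \le O(\tau^\star)$ using privacy budget $\epsilon/2$; (ii) release the truncated edge count at threshold $\hat{\bm\tau}$ plus Laplace noise of scale $\hat{\bm\tau}/\epsilon$. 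Third, I would convert the noisy edge count into $\hat{\bm\rho}$ by dividing by $\binom{n}{2}$, and analyze the total error.

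\textbf{Error accounting.} Conditioned on the good event that every vertex of $\bm G$ has degree at most $\tau^\star$ and that $\hat{\bm\tau} \ge \tau^\star$, the truncated edge count equals the true edge count, so
\begin{equation*}
  \bigl|\hat{\bm\rho} - \rho(\bm G)\bigr| \;\le\; \frac{|\mathrm{Lap}(\hat{\bm\tau}/\epsilon)|}{\binom{n}{2}} \;\lesssim\; \frac{Rd\cdot \polylog(n)}{\epsilon n^2} \;=\; \frac{R\rho \cdot \polylog(n)}{\epsilon n},
\end{equation*}
using $\rho = \Theta(d/n)$. Squaring gives the first term $\polylog(n) \cdot R^2\rho^2/(\epsilon^2 n^2)$. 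The lower-order term $1/(\epsilon^4 n^4)$ arises from the privacy cost of selecting $\hat{\bm\tau}$: the threshold-selection step introduces an additive contribution proportional to the square of the stability/threshold noise, which after normalization by $n^2$ becomes $O(1/(\epsilon^2 n^2))^2 = O(1/(\epsilon^4 n^4))$. The assumption $\epsilon^4 n^2 \ge \polylog(n)$ ensures that the threshold-selection step succeeds with probability $1 - o(1/n)$, so it contributes to the overall failure probability only at the $O(R/n)$ level claimed.

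\textbf{Main obstacle.} The technical crux is the private truncation-threshold selection, not the Laplace release itself. One has to verify two things simultaneously: that the selector returns, with probability at least $1 - O(R/n)$, a threshold $\hat{\bm\tau}$ that (a) is $\Theta(\tau^\star)$ so the noise scale remains $O(Rd \cdot \polylog(n))$, and (b) does not truncate any vertex of the realized $\bm G$, so the Lipschitz extension equals the unbiased edge count. Both rely on the degree upper bound from step one together with the quantitative version of Ullman--Sealfon's analysis of the stability-based selection step, which requires $\epsilon^4 n^2 \gtrsim \polylog(n)$ precisely to guarantee that the selector's noise is small enough not to jump past $\tau^\star$. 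Once the selection succeeds, the rest of the argument is a direct application of the Laplace mechanism and concentration of $\mathrm{Lap}(\hat{\bm\tau}/\epsilon)$, matching the claimed bound.
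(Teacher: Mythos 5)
Your overall plan---reduce to Ullman--Sealfon's bounded-degree density estimator and establish the needed degree bound from concentration in the block model---is the right framework, but the concrete mechanism you propose for fixing the degree threshold differs from the paper's and is where the gap lies. The paper's \cref{algo:SBMDensityEstimation} does not run any grid scan or propose-test-release loop over candidate thresholds. It first releases a coarse Laplace estimate $\hat{\bm\rho}_c = \rho(G) + \Lap\Paren{\tfrac{10}{n\epsilon}}$ (using $\epsilon/2$ budget), inflates it to $\hat{\bm\rho}_u = \hat{\bm\rho}_c + \tfrac{100\log n}{n\epsilon}$, and then feeds the deterministic degree bound $D = 10R\hat{\bm\rho}_u n\log n$ into the Ullman--Sealfon estimator of \cref{cor:privateDensitydegreeconcentrated}, which is a black box taking the degree bound \emph{as input}. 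The privacy analysis is then a simple composition of two $\epsilon/2$-DP steps, and the utility analysis is: (i) $\hat{\bm\rho}_u \in [\rho/2,\, 2\rho + \tfrac{200\log n}{n\epsilon}]$ w.h.p.\ by \cref{lem:private_edge_density} and \cref{lem:rhoQtorho}, hence $D$ is a valid degree bound for $\bm G$ w.h.p.; (ii) plug this $D$ into the corollary's error bound. This is substantially simpler than what you describe.

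Two concrete issues with your version. First, the threshold-selection step is underspecified: the max degree has node-sensitivity $n-1$, so you cannot noise it directly, and a rigorous propose-test-release or AboveThreshold scan over $\log n$ levels needs its own sensitivity and accuracy analysis, which you do not supply. The paper sidesteps this entirely by setting the threshold from a single Laplace noise addition to a low-sensitivity statistic ($\rho(G)$ has node-sensitivity $O(1/n)$). Second, your attribution of the $\tfrac{1}{\epsilon^4 n^4}$ term to ``the privacy cost of selecting $\hat{\bm\tau}$'' does not match the actual source. In the paper, that term appears already in the statement of \cref{cor:privateDensitydegreeconcentrated} for any \emph{fixed} degree bound $D$: it is intrinsic to the bounded-degree subroutine imported from~\cite{SmoothErdosRenyi}, not a consequence of how the degree bound is chosen. (A second copy of the same order does arise from the $\hat{\bm\rho}_u^2$ factor hitting the $\tfrac{\log n}{n\epsilon}$ slack in the coarse estimate, which is perhaps closer to what you had in mind, but that is not the dominant mechanism you describe.) So while your route could perhaps be made to work, as written it re-derives internals that the paper calls as a lemma, and the pieces you sketch are not carried to the level of rigor the lemma requires.
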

We describe the algorithm in \cref{algo:SBMDensityEstimation}
\begin{algorithmbox}[Private algorithm for estimation of target edge density]
  \label{algo:SBMDensityEstimation}
  \mbox{}\\
  \textbf{Input:} Adjacency matrix $A$, privacy parameter $\epsilon>0$, and a bound $R\geq 20 \Lambda =20 \normm{B_0}$.

  \noindent
  \textbf{Output:} $\e$-differentially private target edge density estimator $\hat{\bm \rho}$.
  \begin{enumerate}[1.]
  \item Let $\hat{\bm \rho}_{c}=\rho(G)+\Lap\Paren{\frac{10}{n\epsilon}}$, and $\hat{\bm \rho}_{u}=\hat{\bm \rho}_{c}+\frac{100\log(n)}{n\epsilon}$.
  \item Let $D=10R \hat{\bm \rho}_{u} n \log(n)$, run $\frac{\e}{2}$-differentially private density estimation \cref{algo:privateDensityEstimation} and return $\hat{\bm \rho}$.
  \end{enumerate}
\end{algorithmbox}
\begin{proof}[Proof of \cref{lem:densityestimationSBM}]
  By the promise of Laplacian mechanism, $\hat{\bm \rho}_c$ is $\frac{\epsilon}{2}$-differentially private. 
  Moreover by \cref{cor:privateDensitydegreeconcentrated}, for fixed $\hat{\bm \rho}_c$, $\hat{\bm \rho}$ is $\frac{\epsilon}{2}$-differentially private.
  Therefore by composition theorem, \cref{algo:SBMDensityEstimation} is $\e$-differentially private.

  By the \cref{lem:private_edge_density}, with probability at least $1-\frac{1}{n^{10}}$, we have $\Abs{\hat{\bm \rho}_c-\rho(\bm Q_0)}\leq \frac{100\log(n)}{n\epsilon}+\frac{\rho(\bm Q_0)}{10}$.
  By the \cref{lem:rhoQtorho}, with probability $1-O(\frac{R}{n})$, we have $\Abs{\rho(\bm Q_0)-\rho}\leq \frac{1}{10}\rho$. 
  Therefore with probability at least $1-O(\frac{R}{n})$, we have
  \begin{align*}
      \frac{1}{2}\rho\leq \hat{\bm \rho}_{c}+\frac{100\log(n)}{n\epsilon} \leq 2\rho+\frac{200\log(n)}{n\epsilon} \,. 
  \end{align*}
  As result, with probability at least $1-O(\frac{R}{n})$, the degree of graph $\bm G$ is bounded by $10R \hat{\bm \rho}_{u} n \log(n)$.

  Now applying \cref{cor:privateDensitydegreeconcentrated}, with probability at least $1-O(\frac{R}{n})$, we have
  \begin{equation*}
      \Abs{\hat{\bm \rho}-\rho(\bm G)}^2\leq O\Paren{\frac{\log^2(n) D^2}{\e^2 n^4}+\frac{\log^4(n)}{\e^4 n^4}}\leq \polylog(n)\cdot O\Paren{\frac{R^2\rho^2}{\e^2 n^2}+ \frac{1}{\e^4 n^4}}\,.
  \end{equation*}
\end{proof}

As result, we obtain the following corollary, which is easier to apply for obtaining our utility guarantees:
\begin{corollary}\label{corollary:target-density-estimation-sbm}
  Under the setting of \cref{thm:mainSBM}, let $\rho=\frac{d}{n}$, we can output a $\e$-differentially private estimator $\hat{\bm \rho}$ such that with probability at least $1-O\Paren{\frac{R}{n}}$, $\Abs{\hat{\bm \rho}-\rho}\leq \frac{\rho}{10}$ and
  \begin{equation*}
      \Abs{\hat{\bm \rho}-\rho}^2\leq \frac{\rho^2}{n\e}+\frac{\rho\polylog(n)}{n^2}\,.
  \end{equation*}
  Furthermore, the algorithm runs in  $\poly(n)$ time.
\end{corollary}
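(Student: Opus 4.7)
The plan is to obtain this corollary as essentially a direct restatement of \cref{lem:densityestimationSBM} after (i) converting the bound from $|\hat{\bm\rho}-\rho(\bm G)|$ to $|\hat{\bm\rho}-\rho|$ using concentration of $\rho(\bm G)$ around $\rho$, and (ii) plugging in the standing parameter assumptions $R\le\sqrt{n\e/\polylog(n)}$ and $\e^{4}n^{2}\ge\polylog(n)$ to simplify the resulting expression into the two clean terms $\rho^{2}/(n\e)$ and $\rho\polylog(n)/n^{2}$. The algorithm is exactly \cref{algo:SBMDensityEstimation}, whose $\e$-node privacy was already established inside the proof of \cref{lem:densityestimationSBM} via basic composition of the $\e/2$-private Laplace release $\hat{\bm\rho}_{c}$ with the $\e/2$-private bounded-degree density estimator from \cref{algo:privateDensityEstimation}; I would just cite that without reproducing it.

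For utility, I would proceed in three steps. First, by \cref{lem:densityestimationSBM}, with probability at least $1-O(R/n)$,
\begin{equation*}
  \Abs{\hat{\bm\rho}-\rho(\bm G)}^{2}\le \polylog(n)\cdot O\Paren{\tfrac{R^{2}\rho^{2}}{\e^{2}n^{2}}+\tfrac{1}{\e^{4}n^{4}}}\,.
\end{equation*}
Second, conditional on the balanced random partition, $\rho(\bm G)=\tfrac{1}{n^{2}}\sum_{i,j}\bm A_{ij}$ is a sum of independent Bernoulli variables whose probabilities average to $\rho$ and are bounded by $R\rho$; Bernstein's inequality then gives $|\rho(\bm G)-\rho|^{2}\le O\Paren{R\rho\log(n)/n^{2}}$ with probability $1-n^{-\Omega(1)}$. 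Third, combining the two bounds via triangle inequality and $(a+b)^{2}\le 2a^{2}+2b^{2}$, the first term becomes $\rho^{2}/(n\e)$ after substituting $R^{2}\le n\e/\polylog(n)$; the term $1/(\e^{4}n^{4})$ is absorbed into $\rho\polylog(n)/n^{2}$ using $\e^{4}n^{2}\ge\polylog(n)$ together with $\rho\ge 2/n$; and the Bernstein term $R\rho\log(n)/n^{2}$ is itself of the form $\rho\polylog(n)/n^{2}$ after bounding $R\le\sqrt{n\e/\polylog(n)}$ and folding the polylogarithmic dependence into the $\polylog(n)$ factor.

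The multiplicative closeness $|\hat{\bm\rho}-\rho|\le \rho/10$ then follows by observing that both surviving error terms are $o(\rho^{2})$ under the hypotheses: $\rho^{2}/(n\e)\le\rho^{2}/100$ as soon as $\e n\ge 100$, and $\rho\polylog(n)/n^{2}\le\rho^{2}/100$ because $\rho\ge 2/n$ and $\e^{4}n^{2}\ge\polylog(n)$ imply $n\rho\gg \polylog(n)$. The main (purely bookkeeping) delicacy I expect is making sure all failure probabilities --- the Laplace tail for $\hat{\bm\rho}_{c}$, the raw Bernstein event for $\rho(\bm G)$, and the guarantee of \cref{cor:privateDensitydegreeconcentrated} --- union-bound into the stated $1-O(R/n)$, and verifying that the cap $D=10R\hat{\bm\rho}_{u}n\log(n)$ chosen in \cref{algo:SBMDensityEstimation} really upper-bounds the maximum degree of $\bm G$ with that probability so that the bounded-degree estimator operates on the untruncated graph; both of these verifications are already carried out inside the proof of \cref{lem:densityestimationSBM}, so I would simply invoke that argument. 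The $\poly(n)$ running time is immediate, since the Laplace step is $O(n^{2})$ and \cref{algo:privateDensityEstimation} runs in $\poly(n)$ time.
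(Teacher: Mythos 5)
Your overall strategy — take \cref{lem:densityestimationSBM} for $\lvert\hat{\bm\rho}-\rho(\bm G)\rvert^2$, add a concentration bound for $\lvert\rho(\bm G)-\rho\rvert^2$, combine and then simplify using the assumptions on $R$ and $\e$ — is exactly what the paper does (the paper routes the concentration step through $\rho(\bm Q_0)$ via \cref{lem:convergence-edge-density}, but since the entries of $\Bnull$ average to $1$ and the partition is balanced, $\rho(\bm Q_0)=\rho$, so it is the same bound). The privacy and running-time remarks are also fine.

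There is, however, a real gap in your step (ii). You bound each edge probability by $R\rho$ and thereby obtain a Bernstein bound of the form $\lvert\rho(\bm G)-\rho\rvert^{2}\le O\bigparen{R\rho\log n/n^{2}}$; you then claim this extra factor of $R$ can be ``folded into the $\polylog(n)$ factor'' using $R\le\sqrt{n\e/\polylog(n)}$. That absorption is not valid: $R$ is allowed to be as large as $\sqrt{n\e/\polylog(n)}=\Omega_{\e}(n^{1/2-o(1)})$, which is far from polylogarithmic, so $R\rho\log n/n^{2}$ is in general \emph{not} $O\bigparen{\rho\polylog(n)/n^{2}}$. The $R$ factor should not appear in the first place. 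In Bernstein (or the Chernoff bound stated in \cref{lem:convergence-edge-density}) what matters is the \emph{sum} of the variances, $\sum_{i<j}\bm Q_0(i,j)\bigparen{1-\bm Q_0(i,j)}\le\normo{\bm Q_0}=\Theta(\rho n^{2})$, not the per-entry maximum $R\rho$. Using the total variance gives directly $\lvert\rho(\bm G)-\rho\rvert^{2}\le O\bigparen{\rho\log n/n^{2}}$ with high probability, which is already of the target form $\rho\polylog(n)/n^{2}$ with no $R$. With that correction your derivation matches the paper's argument (which invokes \cref{lem:convergence-edge-density} with $t=\Theta\bigparen{\sqrt{\rho\log n/n^{2}}}$), and the remaining bookkeeping — rewriting $\polylog(n)\cdot R^{2}\rho^{2}/(\e^{2}n^{2})$ as $\rho^{2}/(n\e)$ via $R^{2}\le n\e/\polylog(n)$, absorbing $\polylog(n)/(\e^{4}n^{4})$, and then deducing $\lvert\hat{\bm\rho}-\rho\rvert\le\rho/10$ from $\rho^{2}/(n\e)+\rho\polylog(n)/n^{2}\le\rho^{2}/100$ — goes through as you describe.
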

\begin{proof}
  By taking $t=\sqrt{\frac{10\rho(\bm Q_0)\log n}{n^2}}$ in \cref{lem:convergence-edge-density}, we have $\Pr\Brac{\Paren{\rho(\bm Q_0)-\rho(\bm G)}^2\geq t}\leq \frac{1}{n^{10}}$. 
  Combined with \cref{lem:densityestimationSBM}, with probability at least $1-O\Paren{\frac{\Lambda}{n}}$,
  \begin{equation*}
      \Abs{\hat{\bm \rho}-\rho}^2\leq  \polylog(n)\cdot O\Paren{\frac{R^2\rho^2}{\e^2 n^2}+ \frac{1}{\e^4 n^4}}+\frac{10\rho\log(n)}{n^2}\,.
  \end{equation*}
  Under the assumption that $R\leq \sqrt{\frac{n\epsilon}{\polylog(n)}}$ and $\e^4 n^2\geq \polylog(n)$, we have 
\begin{equation*}
  \Paren{\hat{\bm \rho}-\rho}^2\leq \frac{\rho^2}{n\e}+\frac{\rho\polylog(n)}{n^2}\,.
\end{equation*}
As  $\frac{\rho^2}{n\e}+\frac{\rho\polylog(n)}{n^2}\leq \frac{1}{100}\rho^2$, we have $\Abs{\hat{\bm \rho}-\rho}\leq \frac{\rho}{10}$.
\end{proof}

\subsubsection{Proof of error rate for balanced stochastic block models}

Now we finish the proof of \cref{thm:mainSBM}.
\begin{proof}
We first show that the \cref{algo:private} is \(\e\)-differentially private.
Indeed, by the promise of Laplacian mechanism, our estimator \(\hat{\bm \rho}\) is \(\frac{\e}{2}\)-differentially private.
For every \(\hat{\bm \rho}\in \R\), \(\hat{\bm B}\) is \(\frac{\e}{2}\)-differentially private.
By the basic composition theorem~\cite{dwork2014algorithmic}, \cref{algo:private} is $\e$ differentially private.

Let the average edge density $\rho=\frac{d}{n}$. 
By \cref{lem:densityestimationSBM}, with probability at least $1-O(\frac{R}{n})$ we have $\frac{1}{2}\rho\leq \hat{\bm \rho}\leq 2\rho$.

Let $\bm \Qnull=\bm Z_0B_0\bm Z_0^\top$ be the edge connection probability matrix.
Let $S\subseteq [n]$ be the set of vertices with degree larger than $20R d$.
We let $\bar{\bm A}$ be the adjacency matrix of the graph obtained by removing edges incident to vertices in $S$, and $\bar{\bm \Qnull}\in [0,1]^{n\times n}$ be the edge connection probability matrix restricted to $S$.

Since $\bm W=\bm A-\bm Q_0$ is a symmetric random matrix with
\begin{equation*}
    \bm W(i,j) =
      \begin{cases}
        1 - \bm Q_0({i,j}) & \text{w.p. } \bm Q_0({i,j})\\
        - \bm Q_0({i,j}) & \text{w.p. } 1- \bm Q_0({i,j})
      \end{cases}
\end{equation*}
Furthermore, we have $\bm Q_0(i,j)\leq R\cdot \rho$ by definition.
By \cref{theorem:pruned-spectral-norm}, with probability at least \(1-\frac{1}{n^2}\), we have \(\norm{\bar{\bm A}- \bar{\bm \Qnull}}\leq O\Paren{\sqrt{R\rho n}}\).
By \cref{lem:degree-counting}, with probability at least \(1-\exp(-R\rho n)\), we have \(\normo{\bar{\bm A}-\bm A}\leq R \rho n^2 \cdot\exp(-R \rho n)\).

We let $\bm \Yin=\frac{\bm  A}{\hat{\bm\rho}}$, $\bm Y_2=\frac{\bar{A}}{\hat{\bm \rho}}$, $\bm Y_1=\frac{\bar{\bm Q_0}}{\hat{\bm\rho}}$ and $\bm Y_0=\bm Z_0B_0\bm Z_0^\top$.
Furthermore, we let $\bm t_0 = \iprod{\bm Y_0,\bm \Yin}- \tfrac 12\normf{\bm Y_0}^2$ and $t_1 = \iprod{ZBZ^\top,\bm \Yin}- \tfrac 12\normf{ZBZ^\top}^2$.
Since $\cA_2(\bm Y_2,\bm \Yin)$ is satisfied, we can apply \cref{thm:abstract-main-theorem}, and get
\begin{equation*}
    \deltatwosquared(\bm B,B_0)\leq \frac{1}{n^2}\cdot O\Paren{k \cdot\norm{\bm Y_2 -\bm Y_1}^2 + \normf{\bm Y_1-\bm Y_0}^2 + R\normo{\bm Y_2-\bm \Yin}+\frac{R^2 nk^2\log(n)}{\epsilon}}\,.
\end{equation*}

Therefore, since $\norm{\bm Y_2-\bm Y_1}^2\leq \frac{R \rho n}{\hat{\bm \rho}^2}\leq O\Paren{\frac{R n}{\rho}}$, and
\begin{equation*}
    \normo{\bm Y_2-\bm \Yin}\leq \frac{1}{\hat{\bm \rho}}\normo{\bar{\bm A}-\bm A}\leq \frac{R \rho n^2}{\hat{\bm \rho}} \cdot\exp\Paren{-R \rho n}\leq 2R n^2 \exp(-R \rho n)\,.
\end{equation*}
Furthermore, by \cref{corollary:target-density-estimation-sbm}, with probability at least $1-O(\frac{\Lambda}{n})$,
\begin{equation*}
  \Paren{\hat{\bm \rho}-\rho}^2\leq \frac{\rho^2}{n\e}+\frac{\rho\polylog(n)}{n^2}\,.
\end{equation*}

As result, with probability at least $1-\frac{1}{(Rd)^{100}}-O\Paren{\frac{R}{n}}$, we have
\begin{align*}
    \normf{\bm Y_1-Y_0}^2 &\leq \Normf{\frac{\bar{\bm Q}_0}{\hat{\bm \rho}}-\frac{\bm Q_0}{\rho}}^2\\
    &\leq \frac{2}{\hat{\bm \rho}^2}\Normf{\bm Q_0-\bar{\bm Q}_0}^2+ 2\Normf{\bm Q_0}^2 \cdot \Paren{\frac{\hat{\bm \rho}-\rho}{\rho\hat{\bm \rho}}}^2\\
    &\leq \exp(-Rd)\cdot \frac{2R^2 d^2}{\hat{\bm \rho}^2}+\frac{R^2 n^2}{\rho^2}\cdot\Paren{\frac{\rho^2}{n\e}+\frac{\rho\polylog(n)}{n^2}}\\
    &\leq \exp(-Rd)\cdot 8R^2 n^2+R^2n^2\Paren{\frac{1}{n\e}+\frac{\polylog(n)}{\rho n^2}}
\end{align*}

In conclusion, with probability at least $1-\exp(-R\rho n)-O(\Lambda/n)$ we have
\begin{align*}
    \deltatwosquared(\bm B,B_0)&\lesssim \frac{1}{n^2}\cdot \Paren{k \cdot\norm{\bm Y_2 -\bm Y_1}^2 + \normf{\bm Y_1-\bm Y_0}^2 + R\normo{\bm Y_2-\bm \Yin}+\frac{R^2 nk^2\log(n)}{\epsilon}}\\
    &\lesssim \frac{R k}{d}+R^2 \exp(-Rd)+R^2\Paren{\frac{1}{n\e}+\frac{\polylog(n)}{d n}}+\frac{R^2 k^2\log(n)}{n\epsilon}\\
    & \lesssim \frac{R k}{d}+\frac{R^2 k^2\log(n)}{n\epsilon}\,.
\end{align*}
\end{proof}

\subsection{Private estimation for graphon}
\subsubsection{Preliminaries for graphon estimation}
\label{sec:graphon-preliminaries}
A graphon is a bounded and measurable function $W:[0,1]^2\to \R_{+}$ such that $W(x,y)=W(y,x)$, which is said to be normalized if $\int W=1$.
Given a target edge density $\rho$, and a normalized graphon $W\in [0,1]^2\to \R_+$, a $(W,\rho,n)$-random graph on $n$ vertices is sampled in the following way.
For each vertex $i\in [n]$ in the graph, a real value $\bm x_i$ is sampled uniformly at random from $[0,1]$.
Then each pair of vertices $i,j$ is connected independently with probability $\bm\Qnull(i,j) \seteq \rho\cdot W(\bm x_i,\bm x_j)$.
We call $\bm\Qnull$ the edge connection probability matrix.

For any given graphon $W$ and a measure-preserving mapping $\phi:[0,1]\to[0,1]$ (with regard to the Lebesgue measure), $W^\phi(x,y) \seteq W(\phi(x),\phi(y))$ define the same distribution on graphs as $W(x,y)$.
Thus we consider the following distance between graphons.

\begin{definition}[$\delta_2$ distance between graphons]
  \label{def:delta_2}
  The $\delta_2$ distance between two graphons $W_1,W_2$ is
  \[
    \delta_2(W_1, W_2) \seteq
    \inf_{\substack{\phi \from [0,1]\to [0,1]\\ \text{measure preserving}}} 
    \Normt{W_1^{\phi}-W_2}\,,
  \]
  where $W_1^{\phi}(x,y) \seteq W_1(\phi(x),\phi(y))$.
\end{definition}

\paragraph{Block graphons}
Given a symmetric and nonnegative $k\times k$ matix $B$, we can define a $k$-block graphon $W[B]$ as follows. 
Let $(I_1,I_2,\ldots,I_k)$ be the partition of $[0,1]$ into adjacent and disjoint intervals of lengths $1/k$. 
Let $W[B]:[0,1]^2\to\R_{+}$ to be the step function that equals $B(i,j)$ on $I_i\times I_j$ for every $i,j\in[k]$.

Given an arbitrary graphon $W$, when we approximate it by $k$-block graphons, the approximation error is given by
\begin{equation}
  \epsilon_k^{(O)}(W)\coloneqq \min_{B\in \R_{+}^{k\times k}} \Norm{W[B]-W}_2 \,.
\end{equation}
It is easy to see $\epsilon_k^{(O)}(W)\to 0$ as $k\to \infty$.

\subsubsection{Private algorithm for graphon estimation}\label{sec:graphon}
In this section, we prove our main result for graphon estimation.
The proof is similar to the setting of the stochastic block model, but now we need to tackle agnostic error induced by model misspecification.
\begin{theorem}[Main theorem for private graphon estimation]\torestate{\label{thm:formalmaintheorem}
    For any $\rho\in [0,1],R\in \R_+$ and arbitrary graphon $W:[0,1]\times [0,1]\to [0,R]$, suppose we are given $n$-vertex $(W,\rho,n)$ random graph $G$ and $R\in\R_+$ such that $R\geq 8 \Lambda$. 
    Suppose for privacy parameter $\e>0$, $\normm{W}\leq R\leq \sqrt{\frac{n\epsilon}{\polylog(n)}}$ for some $R\in \R_+$, and $\e^4 n^2\geq \polylog(n)$.
    Then for any positive integer $k\in \Z_+$, \cref{algo:private} outputs a symmetric matrix $\hat{\bm B}\in [0,R]^{k\times k}$ such that
    \begin{equation*}
        \E\delta_2^2(W[\hat{\bm B}],W)\lesssim \frac{R k}{\rho n}+\frac{R^2 k^2\log(n)}{n\epsilon}+R^2\sqrt{\frac{k}{n}}+\Paren{\epsilon_k^{(O)}(W)}^2\,,
    \end{equation*}
    where $\epsilon_k^{(O)}(W)$ represents the approximation error:
    \begin{equation*}
        \epsilon_k^{(O)}(W)\coloneqq \min_{B\in [0,R]^{k\times k}} \norm{W[B]-W}_2\,.
    \end{equation*}
    Furthermore \cref{algo:private} is \(\e\)-differentially private, and runs in $n^{\poly(k)}$ time.
}
\end{theorem}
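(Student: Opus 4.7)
}

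The plan is to reduce graphon estimation to the abstract low-rank matrix estimation problem handled by \cref{thm:abstract-main-theorem}, and then convert the bound on $\delta_2^2(\hat{\bm B}, B_0^*)$ for an appropriate block matrix $B_0^*$ into a bound on $\delta_2^2(W[\hat{\bm B}], W)$ by a triangle inequality that absorbs the approximation error $\epsilon_k^{(O)}(W)^2$. The privacy guarantee follows directly from composing Laplace-mechanism edge density estimation with the $(\e/2)$-private mechanism in \cref{algo:privateMatrix}, just as in the SBM case.

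First I would set up the statistical objects. Let $B_0^* \in [0,R]^{k\times k}$ attain $\norm{W[B_0^*] - W}_2 = \epsilon_k^{(O)}(W)$. Condition on the latent coordinates $\bm x_1,\dots,\bm x_n \sim \mathrm{Unif}[0,1]$; sort them and define $\bm Z_0 \in \cZ(n,k)$ by assigning to block $a \in [k]$ the vertices whose rank falls in $((a-1)n/k, an/k]$. Let $\bm W_0$ be the matrix with entries $W(\bm x_i,\bm x_j)$, and let $\bm W^*$ be the matrix with entries $W[B_0^*](\bm x_i,\bm x_j)$; then $\bm Q_0 = \rho \bm W_0$ is the edge probability matrix of $\bm G$. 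I then set $\bm Y_{\mathrm{in}} = \bm A/\hat{\bm \rho}$, $\bm Y_2 = \bar{\bm A}/\hat{\bm \rho}$ (with $\bar{\bm A}$ the pruning of $\bm A$ to vertices of degree at most $20R\rho n$), $\bm Y_1 = \bar{\bm Q}_0/\hat{\bm \rho}$, and $\bm Y_0 = \bm Z_0 B_0^* \bm Z_0^\top$. One checks $\cA_2(\bm Y_2; \bm Y_{\mathrm{in}})$ holds deterministically after pruning.

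Next I would bound each error term entering \cref{thm:abstract-main-theorem}. (i) By matrix Bernstein and the pruned-norm bound \cref{theorem:pruned-spectral-norm}, $\norm{\bar{\bm A} - \bm Q_0} \lesssim \sqrt{R\rho n}$ with high probability, and together with the concentration of $\hat{\bm \rho}$ from \cref{corollary:target-density-estimation-sbm}, this gives $\norm{\bm Y_1 - \bm Y_0'}^2 \lesssim Rn/\rho$ where $\bm Y_0' = \bm W^*/1$; the gap to $\bm Y_0$ is absorbed into the Frobenius term below. (ii) The Frobenius term $\normf{\bm Y_1 - \bm Y_0}^2$ decomposes into the sampling error $\normf{\bm W_0 - \bm W^*}^2 \lesssim n^2 \epsilon_k^{(O)}(W)^2$ (by unbiasedness and concentration of a random $n$-sample) plus a lower-order term from the mismatch between the quantile partition and the exact block intervals, contributing $R^2 n^{3/2}\sqrt{k}$ (this is where the $R^2\sqrt{k/n}$ term will arise). (iii) The pruning correction $R\normsum{\bm Y_2 - \bm Y_{\mathrm{in}}}$ is exponentially small in $R\rho n$ by degree concentration (\cref{lem:degree-counting}). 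Plugging these into \cref{thm:abstract-main-theorem} yields
\[
  \deltatwosquared(\hat{\bm B}, B_0^*) \lesssim \frac{Rk}{\rho n} + \frac{R^2 k^2 \log n}{n\epsilon} + R^2 \sqrt{\frac{k}{n}} + k \cdot \epsilon_k^{(O)}(W)^2.
\]

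Finally, I would translate this matrix bound into a graphon bound. By definition of $W[\cdot]$, $\delta_2^2(W[\hat{\bm B}], W[B_0^*]) = \deltatwosquared(\hat{\bm B}, B_0^*)$, so by the triangle inequality on $\delta_2$,
\[
  \delta_2^2(W[\hat{\bm B}], W) \lesssim \deltatwosquared(\hat{\bm B}, B_0^*) + \epsilon_k^{(O)}(W)^2,
\]
which after taking expectations gives the stated bound (the $k\cdot \epsilon_k^{(O)}(W)^2$ term is dominated by $R^2\sqrt{k/n} + \epsilon_k^{(O)}(W)^2$ for the relevant parameter regime, or can be moved into a slight redefinition of $B_0^*$). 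The main obstacle I anticipate is the careful probabilistic accounting in step (ii): one must handle the fact that the quantile-based $\bm Z_0$ is not exactly the partition induced by the true intervals $I_1,\dots,I_k$ used to define $W[B_0^*]$, and this discrepancy is precisely what produces the $R^2\sqrt{k/n}$ sampling term. Everything else reduces to invocations of results already assembled earlier in the paper.
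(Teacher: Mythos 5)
Your overall plan — reduce to \cref{thm:abstract-main-theorem}, bound the three error terms, then use the triangle inequality for $\delta_2$ to absorb $\epsilon_k^{(O)}(W)^2$ — is the same skeleton the paper uses, and your choice of matrices $\bm Y_0 = \bm Z_0 B_0^*\bm Z_0^\top$, $\bm Y_1 = \bar{\bm Q}_0/\hat{\bm\rho}$, $\bm Y_2 = \bar{\bm A}/\hat{\bm\rho}$, $\bm\Yin = \bm A/\hat{\bm\rho}$ matches the paper's \cref{lem:agnosticSBM} verbatim. The main structural difference is that you fix the population-optimal block matrix $B_0^*$ up front and build a quantile-sorted $\bm Z_0$, whereas the paper conditions on $\bm Q_0$ and works with the sample-adapted minimizer through $\hat{\epsilon}_k^{(O)}(\bm Q_0)$ (via \cref{lem:block_approximation_error}), then takes expectation over $\bm Q_0$ at the very end. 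Both routes are viable and ultimately lean on the same quoted ingredients (\cref{lem:samplingupperbound}, \cref{lem:rhoQtorho}, \cref{theorem:pruned-spectral-norm}, \cref{lem:degree-counting}).

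However, there is a genuine gap in the way you allocate the error pieces between the spectral-norm term and the Frobenius term of \cref{thm:abstract-main-theorem}. The term that carries a factor of $k$ in the abstract bound is the \emph{spectral} term, and the paper uses it exclusively for the pruned noise $\bar{\bm A} - \bar{\bm Q}_0$ (whose operator norm is $O(\sqrt{R\rho n})$ by \cref{theorem:pruned-spectral-norm}, so $k\cdot Rn/\rho / n^2 = Rk/(\rho n)$). The approximation error $\bar{\bm Q}_0/\hat{\bm\rho} - \bm Y_0$ goes entirely into the \emph{Frobenius} term, which has no $k$ factor, and after dividing by $n^2$ yields exactly $\hat{\epsilon}_k^{(O)}(\bm Q_0)^2 + (\text{lower order})$. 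Your step (i) instead compares $\bm Y_1$ to $\bm W^*$ and cites the $\sqrt{R\rho n}$ operator-norm bound, which only controls adjacency-vs-probability, not probability-vs-block-approximation; you then push the remaining gap (which contains the approximation error) into the Frobenius term without accounting for what remains in the spectral term. This is precisely why your intermediate bound acquires the stray $k\cdot\epsilon_k^{(O)}(W)^2$, and the attempted dismissal — ``dominated for the relevant parameter regime, or absorbed by redefining $B_0^*$'' — does not hold in general: $\epsilon_k^{(O)}(W)$ need not be tied to $n$ at all, so $k\,\epsilon_k^{(O)}(W)^2$ is not dominated by $R^2\sqrt{k/n} + \epsilon_k^{(O)}(W)^2$. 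The fix is simple but necessary: place \emph{only} the pruned noise $\bar{\bm A} - \bar{\bm Q}_0$ in the operator-norm slot, place the scaled approximation error $\bar{\bm Q}_0/\hat{\bm\rho} - \bm Y_0$ in the Frobenius slot, and place $\bm A - \bar{\bm A}$ in the $\normo{\cdot}$ slot, as \cref{lem:agnosticSBM} does. Once the error is allocated this way, the factor $k$ only multiplies the noise term and the extra $k$ on the approximation error disappears.

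A smaller issue: the statement ``the gap to $\bm Y_0$ is absorbed into the Frobenius term below'' needs to be an actual inequality, not a heuristic. \cref{thm:abstract-main-theorem} requires a bound on $\norm{Y_0-Y_1}$ for the specific $Y_0, Y_1$ you feed it; you cannot bound $\norm{Y_1 - \bm W^*}$ and defer the remainder informally. Likewise, if you construct $\bm Z_0$ by quantile sorting and set $B_0^*$ to the population minimizer, you should make explicit that the induced mismatch contributes $\normf{\bm Z_0 B_0^*\bm Z_0^\top - \bm W^*}^2 \lesssim R^2 n^{3/2}\sqrt{k}$ in expectation (which is what makes $R^2\sqrt{k/n}$ appear); the paper sidesteps this bookkeeping by invoking \cref{lem:block_approximation_error} and \cref{lem:samplingupperbound} directly.
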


The previous work \cite{borgs2018revealing} achieves error rate
\begin{equation*}
        \E\delta_2^2(W[B],W)\lesssim\frac{Rk^2}{n^2}+\frac{R\log(k)}{\rho n}+\frac{R^2k^2\log(n)}{n\epsilon}+R^2\sqrt{\frac{k}{n}}+\frac{R^2}{n^2\rho^2\e^2}+\Paren{\epsilon_k^{(O)}(W)}^2\,.
\end{equation*}
with an exponential time algorithm.
On the other hand, for the non-private error rate, \cite{luo2023computational} provides evidence that obtaining better guarantees than $O(\frac{k}{\rho n})$ is inherently hard for polynomial time algorithms, based on lower bound against low degree polynomial estimators.
Our result matches the privacy error terms in \cite{borgs2018revealing}, and matches the lower bound of non-private error rate provided in \cite{luo2023computational}.
The algorithm is essentially the same as stochastic block models, and is described in \cref{algo:private}.
\begin{algorithmbox}[Private estimation algorithm for graphon estimation]
  \label{algo:private}
  \mbox{}\\
  \textbf{Input:} Adjacency matrix $A$, number of blocks $k$, privacy parameter $\epsilon>0$, and a bound of underlying graphon $R\geq 20 \Lambda =20 \normi{W}$.

  \noindent
  \textbf{Output:} A matrix $\hat{\bm B}\in [0,R]^{k\times k}$, with associated graphon estimator $W[\hat{\bm B}]:[0,1]\times [0,1]\to \R_+$.
  \begin{enumerate}[1.]
  \item  Run \cref{algo:graphonDensityEstimation}, and obtain private edge density estimator $\hat{\bm \rho}\in [0,1]$.
  \item Run \cref{algo:privateMatrix} with $\Yin=\frac{A}{\hat{\bm \rho}}$, and return the result $\hat{\bm B}$.
  \end{enumerate}
\end{algorithmbox}

\subsubsection{Guarantees of private edge density estimation}
We first prove the guarantees of the private edge density estimation algorithm in the setting of graphon.
\begin{lemma}\label{lem:densityestimationgraphon}
  Under the setting of \cref{thm:mainSBM}, With probability at least $1-O(\frac{R}{n})$, we have $\Abs{\hat{\bm \rho}-\frac{d}{n}}\leq \frac{d}{10n}$.
  Furthermore, with probability at least $1-O(\frac{R}{n})$, we have
  \begin{equation*}
    \Abs{\hat{\bm \rho}-\rho(\bm G)}^2\leq  \polylog(n)\cdot O\Paren{\frac{R^2\rho^2}{\e^2 n^2}+ \frac{1}{\e^4 n^4}}
  \end{equation*}
\end{lemma}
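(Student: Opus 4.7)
The argument will be a direct adaptation of the proof of \cref{lem:densityestimationSBM}, with Algorithm~\ref{algo:graphonDensityEstimation} playing the role of Algorithm~\ref{algo:SBMDensityEstimation}. The expected structure of the algorithm is the same two-step pipeline: first compute a crude private estimate $\hat{\bm \rho}_c = \rho(\bm G) + \Lap(10/(n\e))$ and set $\hat{\bm \rho}_u = \hat{\bm \rho}_c + 100\log(n)/(n\e)$ (which is $\e/2$-DP by the Laplace mechanism), then run the degree-bounded private density estimator of \cref{algo:privateDensityEstimation} with threshold $D = 10 R\, \hat{\bm \rho}_u\, n \log(n)$ (which is $\e/2$-DP by \cref{cor:privateDensitydegreeconcentrated}); by basic composition the overall procedure is $\e$-DP.

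For utility, I would chain the following three concentrations under the parameter regime $R \le \sqrt{n\e/\polylog(n)}$ and $\e^4 n^2 \ge \polylog(n)$. First, the expected graphon density concentrates around $\rho$: since $\bm Q_0(i,j) = \rho\, W(\bm x_i,\bm x_j)$ with $\bm x_i$ i.i.d. uniform on $[0,1]$, $\int W = 1$, and $\|W\|_\infty \le R$, the quantity $\rho(\bm Q_0) = \frac{1}{n^2}\sum_{i,j} \bm Q_0(i,j)$ differs from $\rho$ by changing any single $\bm x_i$ in at most $O(R\rho/n)$, so McDiarmid gives $|\rho(\bm Q_0) - \rho| \le \rho/10$ with probability $1 - O(R/n)$ — this is the graphon analogue of \cref{lem:rhoQtorho}. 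Second, conditional on the latent positions, $\rho(\bm G)$ is a sum of independent Bernoullis concentrating around $\rho(\bm Q_0)$ to within $\rho/10$ by Bernstein, and the Laplace noise contributes $|\hat{\bm \rho}_c - \rho(\bm G)| \le 100\log(n)/(n\e)$ with probability $1 - n^{-10}$. Chaining these three bounds gives $\frac{1}{2}\rho \le \hat{\bm \rho}_u \le 2\rho + 200\log(n)/(n\e)$, which under our assumptions implies the first claim $|\hat{\bm \rho} - d/n| \le d/(10n)$.

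To finish the second claim, I would verify that $D = 10R\,\hat{\bm \rho}_u\,n\log(n)$ is a valid high-probability upper bound on the maximum degree of $\bm G$: each vertex has conditional expected degree $\rho \sum_j W(\bm x_i,\bm x_j) \le R\rho n$, and a Chernoff bound together with a union bound over vertices yield that the maximum degree stays below $D$ with probability $1 - O(R/n)$. Then \cref{cor:privateDensitydegreeconcentrated} applied with this threshold gives
\[
  |\hat{\bm \rho} - \rho(\bm G)|^2 \le O\!\Paren{\frac{\log^2(n)\, D^2}{\e^2 n^4} + \frac{\log^4(n)}{\e^4 n^4}} \le \polylog(n)\cdot O\!\Paren{\frac{R^2 \rho^2}{\e^2 n^2} + \frac{1}{\e^4 n^4}},
\]
which is exactly the second claim after substituting the upper bound on $\hat{\bm \rho}_u$.

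\textbf{Main obstacle.} The only point that genuinely differs from the SBM analysis is the graphon-specific concentration $|\rho(\bm Q_0) - \rho| \le \rho/10$, because $\bm Q_0$ is now driven by continuous latent variables rather than a discrete equipartition. Establishing this cleanly at the rate $O(R/\sqrt{n})$ — and verifying that this deviation is dominated by $\rho/10$ (for which we need $\rho \gtrsim R/\sqrt{n}$, easily satisfied under our hypotheses on $R$ and $\e$) — is the only non-routine step; everything else transfers verbatim from the SBM case once the threshold $D$ is justified.
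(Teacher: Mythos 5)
Your proposal follows essentially the same structure as the paper's proof: establish $\e$-DP via the Laplace mechanism, \cref{cor:privateDensitydegreeconcentrated}, and composition; chain the concentration of $\rho(\bm Q_0)$ around $\rho$, of $\rho(\bm G)$ around $\rho(\bm Q_0)$, and of $\hat{\bm\rho}_c$ around $\rho(\bm G)$ to pin down $\hat{\bm\rho}_u$; verify that the degree threshold $D = 10R\hat{\bm\rho}_u n\log n$ caps the maximum degree with the claimed probability; and invoke \cref{cor:privateDensitydegreeconcentrated} for the final quadratic bound. The one place you diverge is in the concentration $|\rho(\bm Q_0) - \rho| \le \rho/10$ with probability $1 - O(R/n)$: you propose to re-derive it via McDiarmid, whereas the paper simply invokes \cref{lem:rhoQtorho} (Lemma 12 of Borgs et al.), which is already stated in the paper with exactly the desired failure probability $O(\Lambda/(n\delta^2))$ from a Chebyshev-type argument. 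Your McDiarmid route is sound in the paper's parameter regime $R \le \sqrt{n\e/\polylog(n)}$ — it actually gives the much stronger deviation probability $\exp(-\Omega(n/R^2))$ — but it is redundant given the cited lemma, and you flag it as ``the only non-routine step'' when in fact it is the one step the paper handles purely by citation; the cited lemma is also unconditional in $R$, whereas your exponential bound needs the hypothesis on $R$ to dominate $O(R/n)$. A secondary cosmetic difference: you split the bound on $|\hat{\bm\rho}_c - \rho(\bm Q_0)|$ into a Bernstein step plus a Laplace-tail step, while the paper uses \cref{lem:private_edge_density} which packages both; this is equivalent. Everything else, including the $D$-threshold justification and the final substitution $D \lesssim R\rho n\log n$, matches the paper.
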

The algorithm is the same as \cref{algo:SBMDensityEstimation}
\begin{algorithmbox}[Private algorithm for estimation of target edge density]
\label{algo:graphonDensityEstimation}
  \mbox{}\\
  \textbf{Input:} Adjacency matrix $A$, privacy parameter $\epsilon>0$, and a bound of underlying graphon $R\geq 20 \Lambda =20 \normi{W}$.

  \noindent
  \textbf{Output:} $\e$-differentially private target edge density estimator $\hat{\bm \rho}$.
  \begin{enumerate}[1.]
  \item Let $\hat{\bm \rho}_{c}=\rho(G)+\Lap\Paren{\frac{10}{n\epsilon}}$, and $\hat{\bm \rho}_{u}=\hat{\bm \rho}_{c}+\frac{100\log(n)}{n\epsilon}$.
  \item Let $D=10R \hat{\bm \rho}_{u} n \log(n)$, run $\e$-differentially private density estimation \cref{algo:privateDensityEstimation}.
  \end{enumerate}
\end{algorithmbox}
\begin{proof}[Proof of \cref{lem:densityestimationSBM}]
  By the promise of Laplacian mechanism, $\hat{\bm \rho}_c$ is $\frac{\epsilon}{2}$-differentially private. 
  Moreover by  \cref{cor:privateDensitydegreeconcentrated}, for fixed $\hat{\bm \rho}_c$, $\hat{\bm \rho}$ is $\frac{\epsilon}{2}$-differentially private.
  Therefore by composition theorem, \cref{algo:privateDensityEstimation} is $\e$-differentially private.

  By the \cref{lem:private_edge_density}, with probability at least $1-\frac{1}{n^{10}}$, we have $\Abs{\hat{\bm \rho}_c-\rho(\bm Q_0)}\leq \frac{100\log(n)}{n\epsilon}+\frac{\rho(\bm Q_0)}{10}$.
  By the \cref{lem:rhoQtorho}, with probability $1-O(\frac{R}{n})$, we have $\Abs{\rho(\bm Q_0)-\rho}\leq \frac{1}{10}\rho$. 
  Therefore with probability at least $1-O(\frac{R}{n})$, we have
  \begin{align*}
      \frac{1}{2}\rho\leq \hat{\bm \rho}_{c}+\frac{100\log(n)}{n\epsilon} \leq 2\rho+\frac{200\log(n)}{n\epsilon} \,. 
  \end{align*}
  As result, with probability at least $1-O(\frac{R}{n})$, the degree of graph $\bm G$ is bounded by $10R \hat{\bm \rho}_{u} n \log(n)$.

  Now applying \cref{cor:privateDensitydegreeconcentrated}, with probability at least $1-O(\frac{R}{n})$, we have
  \begin{equation*}
      \Abs{\hat{\bm \rho}-\rho(\bm G)}^2\leq O\Paren{\frac{\log^2(n) D^2}{\e^2 n^4}+\frac{\log^4(n)}{\e^4 n^4}}\leq \polylog(n)\cdot O\Paren{\frac{R^2\rho^2}{\e^2 n^2}+ \frac{1}{\e^4 n^4}}\,.
  \end{equation*}
\end{proof}

As result, we obtain the following corollary, which is easier to apply for obtaining our utility guarantees:
\begin{corollary}\label{corollary:target-density-estimation-graphon}
  Under the setting of \cref{thm:mainSBM}, let $\rho=\frac{d}{n}$, we can output a $\e$-differentially private estimator $\hat{\bm \rho}$ such that with probability at least $1-O\Paren{\frac{R}{n}}$, $\Abs{\hat{\bm \rho}-\rho}\leq \frac{\rho}{10}$ and
  \begin{equation*}
      \Abs{\hat{\bm \rho}-\rho}^2\lesssim \frac{\rho^2}{n\e}+\frac{\rho\polylog(n)}{n^2}+\Abs{\rho-\rho(Q_0)}^2\,.
  \end{equation*}
  Furthermore, the algorithm runs in  $\poly(n)$ time.
\end{corollary}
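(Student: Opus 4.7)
The argument will closely follow the template of \cref{corollary:target-density-estimation-sbm}, with a single additional error term to account for the fact that in the graphon model the latent coordinates $\bm x_i$ are random, so the edge density $\rho(\bm Q_0)$ of the expected adjacency matrix is not equal to the target density $\rho$ but only concentrates around it.

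First, I would invoke \cref{lem:densityestimationgraphon} directly: its first conclusion yields $|\hat{\bm\rho}-\rho|\le \rho/10$ with probability $1-O(R/n)$, and its second conclusion gives
\[
|\hat{\bm\rho}-\rho(\bm G)|^2 \le \polylog(n)\cdot O\Paren{\frac{R^2\rho^2}{\e^2 n^2}+\frac{1}{\e^4 n^4}}
\]
on the same event. Next, to pass from $\rho(\bm G)$ to $\rho$, I would apply the triangle inequality
\[
|\hat{\bm\rho}-\rho|^2 \lesssim |\hat{\bm\rho}-\rho(\bm G)|^2 + |\rho(\bm G)-\rho(\bm Q_0)|^2 + |\rho(\bm Q_0)-\rho|^2 \,.
\]
Conditional on $\bm Q_0$, the quantity $\rho(\bm G)$ is the average of independent Bernoullis with mean $\rho(\bm Q_0)$, so \cref{lem:convergence-edge-density} applied with $t=\sqrt{10\rho(\bm Q_0)\log n/n^2}$ gives $|\rho(\bm G)-\rho(\bm Q_0)|^2 \le O(\rho\log n/n^2)$ with probability $1-n^{-10}$ (using $\rho(\bm Q_0)\le 2\rho$ on the high probability event provided by \cref{lem:rhoQtorho}). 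The third term $|\rho(\bm Q_0)-\rho|^2$ is precisely the extra quantity that the statement leaves explicit.

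It then remains to absorb polylogarithmic factors using the standing hypotheses $R\le \sqrt{n\e/\polylog(n)}$ and $\e^4 n^2\ge \polylog(n)$, exactly as in the proof of \cref{corollary:target-density-estimation-sbm}: the first hypothesis collapses $\polylog(n)\cdot R^2\rho^2/(\e^2 n^2)$ into $\rho^2/(n\e)$, and the second shows that $\polylog(n)/(\e^4 n^4)\lesssim \rho\polylog(n)/n^2$ in every regime where the bound is nontrivial. Summing the three contributions yields the claimed estimate. The $\poly(n)$ running time is immediate from the runtimes of the Laplace mechanism and of \cref{algo:privateDensityEstimation} invoked inside \cref{algo:graphonDensityEstimation}.

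The only place where the graphon argument genuinely departs from the SBM one is the appearance of the stand-alone term $|\rho(\bm Q_0)-\rho|^2$, and I expect this to be the single conceptual step worth flagging; concretely, the concentration inequality \cref{lem:convergence-edge-density} applies verbatim once we condition on $\bm Q_0$, so no new probabilistic input beyond what is already in the preceding lemmas is required.
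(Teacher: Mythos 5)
Your proposal is correct and follows essentially the same route as the paper: it uses the identical triangle-inequality decomposition of $|\hat{\bm\rho}-\rho|^2$ into $|\hat{\bm\rho}-\rho(\bm G)|^2+|\rho(\bm G)-\rho(\bm Q_0)|^2+|\rho(\bm Q_0)-\rho|^2$, the same application of \cref{lem:convergence-edge-density} with $t=\sqrt{10\rho(\bm Q_0)\log n/n^2}$ to control the middle term, the same invocation of \cref{lem:densityestimationgraphon} for the first term, and the same absorption of polylog factors under the standing hypotheses. The only difference is that you make the triangle inequality and the conditioning on $\bm Q_0$ explicit, which the paper leaves implicit.
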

\begin{proof}
  By taking $t=\sqrt{\frac{10\rho(\bm Q_0)\log n}{n^2}}$ in \cref{lem:convergence-edge-density}, we have $\Pr\Brac{\Paren{\rho(\bm Q_0)-\rho(\bm G)}^2\geq t}\leq \frac{1}{n^{10}}$. 
  Combined with \cref{lem:densityestimationgraphon}, with probability at least $1-O\Paren{\frac{\Lambda}{n}}$,
  \begin{equation*}
      \Abs{\hat{\bm \rho}-\rho}^2\lesssim  \polylog(n)\cdot O\Paren{\frac{R^2\rho^2}{\e^2 n^2}+ \frac{1}{\e^4 n^4}}+\frac{10\rho\log(n)}{n^2}+\Abs{\rho-\rho(\bm Q_0)}^2\,.
  \end{equation*}
  Under the assumption that $R\leq \sqrt{\frac{n\epsilon}{\polylog(n)}}$ and $\e^4 n^2\geq \polylog(n)$, we have 
\begin{equation*}
  \Paren{\hat{\bm \rho}-\rho}^2\leq \frac{\rho^2}{n\e}+\frac{\rho\polylog(n)}{n^2}+\Abs{\rho-\rho(\bm Q_0)}^2\,.
\end{equation*}
\end{proof}

Finally we need a result from~\cite{borgs2015private} for bounding the difference between $\rho(\bm Q_0)$ and the target edge density $\rho$.
\begin{lemma}[Lemma 12 in~\cite{borgs2015private}]\label{lem:rhoQtorho}
  Let $W:[0,1]^2\to [0,\Lambda]$ be a normalized graphon. 
  Let $\rho\in (0,\frac{1}{\Lambda})$, and $\bm Q_0$ be the edge connection probability matrix generated from graphon(i.e $\bm Q_0=\rho W(\bm x_i,\bm x_j)$ where $\bm x_i,\bm x_j$ are the labels of vertex $i$ and vertex $j$). 
  Let $\rho(\bm Q_0)=\frac{\norm{\bm Q_0}_1}{n^2}$.

  Then for any $\delta>0$, we have 
  \begin{equation*}
    \Pr\Brac{\Abs{\rho(\bm Q_0)-\rho}\geq \delta\rho}\leq O(\frac{\Lambda}{n\delta^2})\,.
  \end{equation*}
  As corollary, we have
  \begin{equation*}
    \E [\Paren{\rho(\bm Q_0)-\rho}^2]\leq O\Paren{\frac{\Lambda\rho^2}{n}}\,.
  \end{equation*} 
\end{lemma}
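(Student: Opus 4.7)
The plan is to view $\rho(\bm Q_0) = \tfrac{\rho}{n^2}\sum_{i,j\in[n]} W(\bm x_i,\bm x_j)$ as a $U$-statistic in the i.i.d.\ uniform variables $\bm x_1,\ldots,\bm x_n$ and control it via a first-and-second-moment calculation followed by Chebyshev's inequality. The bound $\|W\|_\infty \le \Lambda$ together with the normalization $\int W = 1$ will give us essentially everything.

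First I would compute the mean. Splitting the double sum into off-diagonal and diagonal parts: for $i\neq j$ the pair $(\bm x_i,\bm x_j)$ is uniform on $[0,1]^2$, so $\E[W(\bm x_i,\bm x_j)] = \int\!\!\int W = 1$; for $i=j$, $\E[W(\bm x_i,\bm x_i)] \le \Lambda$. Thus $\E[\rho(\bm Q_0)] = \rho + O(\rho\Lambda/n)$, which is within $\delta\rho/2$ of $\rho$ as soon as $n \gtrsim \Lambda/\delta$ (the small-$n$ regime is trivially handled by the $O(\Lambda/(n\delta^2))$ bound since it exceeds~$1$).

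Next I would bound $\Var(\rho(\bm Q_0))$. Expanding, the variance equals $\tfrac{\rho^2}{n^4}\sum_{(i,j),(k,l)} \mathrm{Cov}\bigl(W(\bm x_i,\bm x_j),W(\bm x_k,\bm x_l)\bigr)$, and by independence of the $\bm x_i$ the covariance vanishes unless $\{i,j\}\cap\{k,l\}\neq\emptyset$. There are $O(n^3)$ index 4-tuples sharing exactly one vertex and $O(n^2)$ sharing both. In the one-shared case, conditioning on the shared variable gives
\begin{equation*}
\mathrm{Cov}\bigl(W(\bm x,\bm y),W(\bm x,\bm z)\bigr) \;=\; \E_{\bm x}\!\left[\Bigl(\int_0^1 W(\bm x,t)\,dt\Bigr)^{\!2}\right] - 1 \;\le\; \Lambda\cdot\E_{\bm x}\!\int W(\bm x,t)\,dt - 1 \;\le\; \Lambda\mper
\end{equation*}
Here I used $\int W(\bm x,t)\,dt \le \Lambda$ to pull one factor out, and then the normalization. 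In the two-shared case each covariance is at most $\E[W^2] \le \Lambda\cdot\E[W] = \Lambda$ by the same $L^\infty$/normalization argument. Combining,
\begin{equation*}
\Var\bigl(\rho(\bm Q_0)\bigr) \;\le\; \frac{\rho^2}{n^4}\cdot O(n^3\Lambda) \;=\; O\!\Paren{\frac{\rho^2\Lambda}{n}}\mper
\end{equation*}

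Finally I would apply Chebyshev's inequality centered at $\E[\rho(\bm Q_0)]$, absorbing the $O(\rho\Lambda/n)$ bias into the event $\{|\rho(\bm Q_0)-\rho|\ge \delta\rho\}$ (the case $\delta \le C\Lambda/n$ being vacuous), to obtain the claimed tail bound $O(\Lambda/(n\delta^2))$. The corollary then follows from $\E[(\rho(\bm Q_0)-\rho)^2] = \Var(\rho(\bm Q_0)) + (\E[\rho(\bm Q_0)]-\rho)^2 = O(\rho^2\Lambda/n) + O(\rho^2\Lambda^2/n^2)$, the first term dominating for $\Lambda \le n$. There is no real obstacle: the only mildly nontrivial step is recognizing that the normalization $\int W = 1$ together with $W\le \Lambda$ gives the clean $\Lambda$ bound on every nontrivial covariance.
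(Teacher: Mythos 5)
The paper does not prove this lemma---it is imported verbatim by citation from Borgs et al.~\cite{borgs2015private}, so there is no in-paper argument to compare against. Your second-moment / Chebyshev approach is correct and is the standard (and, modulo cosmetics, the cited) proof: the covariance terms vanish unless the index pairs share a vertex, the $O(n^3)$ one-shared pairs each contribute at most $\Lambda$ (via $\E_{\bm x}[(\int W(\bm x,t)\,dt)^2]\le\Lambda\cdot\E[\int W]=\Lambda$), and the $O(n^2)$ two-shared pairs contribute at most $\E[W^2]\le\Lambda$, giving $\Var(\rho(\bm Q_0))=O(\rho^2\Lambda/n)$.

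One small point worth tightening: you bound the bias as $O(\rho\Lambda/n)$ and then dismiss $\delta\lesssim\Lambda/n$ as vacuous, but vacuity there needs $n\gtrsim\Lambda$, which the lemma does not assume. The cleaner observation is that $\bm Q_0$ is an edge-probability matrix, hence has zero diagonal, so $\E[\rho(\bm Q_0)]=\rho(1-1/n)$ exactly and the bias is $\rho/n$, \emph{independent} of $\Lambda$. Then $\delta\lesssim 1/n$ gives $\Lambda/(n\delta^2)\gtrsim\Lambda n\ge 1$ (using $\Lambda\ge\int W=1$), so the vacuity step goes through without any side condition, and the corollary's bias-squared term is $O(\rho^2/n^2)$ rather than $O(\rho^2\Lambda^2/n^2)$, which is dominated by the variance without needing $\Lambda\le n$.
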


\subsubsection{Proof of error rate for private graphon estimation}
We first prove an error bound which holds with high probability given $\bm Q_0$
\begin{lemma}\label{lem:agnosticSBM}
    Under the setting of \cref{thm:formalmaintheorem}, conditioning on the edge connection probability matrix $\bm Q_0$, with probability at least $1-O\Paren{\frac{\Lambda}{n}+\frac{1}{(R\rho n)^{100}}}$, we have
\begin{equation*}
    \deltatwosquared(\hat{\bm B},B_0)\lesssim \frac{R k}{\rho n}+\frac{R^2 k^2\log(n)}{n\epsilon}+\Paren{\hat{\epsilon}_k^{(O)}(\bm Q_0)}^2+R^2\Abs{\rho-\rho(\bm Q_0)}^2
\end{equation*}
where $\hat{\epsilon}_k^{(O)}(\bm \Qnull)$ represents approximation error defined as
\begin{equation*}
    \hat{\epsilon}_k^{(O)}(\bm \Qnull)\coloneqq \frac{1}{\rho n}\min_{Z,B} \Normf{\rho\cdot ZBZ^\top-\bm \Qnull}\,.
\end{equation*}
with minimum taken over balanced community membership matrix $Z\in \{0,1\}^{n\times k}$ and $B\in [0,R]^{k\times k}$.
\end{lemma}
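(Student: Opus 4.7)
The argument parallels the proof of \cref{thm:mainSBM}, with the only substantive change being that the unknown edge-connection matrix $\bm Q_0$ is no longer exactly $\rho Z_0 B_0 Z_0^\top$; its discrepancy from the best $k$-block approximation is the agnostic error $\hat{\epsilon}_k^{(O)}(\bm Q_0)$ that must be absorbed into the Frobenius-norm term of \cref{thm:abstract-main-theorem}. The algorithm (\cref{algo:private}) is identical to the SBM case, so privacy and efficiency carry over.

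I begin by fixing the reference block structure: let $(Z_0, B_0) \in \cZ(n,k) \times [0,\Lambda]^{k\times k}$ achieve the minimum in the definition of $\hat{\epsilon}_k^{(O)}(\bm Q_0)$, so that $\normf{\rho Z_0 B_0 Z_0^\top - \bm Q_0} = \rho n \cdot \hat{\epsilon}_k^{(O)}(\bm Q_0)$; the optimal $B_0$ has entries in $[0,\Lambda]$ since it is a block-average of $W$. Setting $d = \rho n$, let $S \subseteq [n]$ consist of the vertices of degree exceeding $20 R d$ in $\bm G$, let $\bar{\bm A}$ be the adjacency matrix with edges incident to $S$ removed, and define $\bar{\bm Q}_0$ analogously. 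As in the SBM proof, set $\bm Y_0 = Z_0 B_0 Z_0^\top$, $\bm Y_1 = \bar{\bm Q}_0/\hat{\bm\rho}$, $\bm Y_2 = \bar{\bm A}/\hat{\bm\rho}$, and $\bm\Yin = \bm A/\hat{\bm\rho}$; the constraint $\cA_2(\bm Y_2;\bm\Yin)$ holds by the pruning cutoff together with $\hat{\bm\rho}\asymp\rho$ from \cref{corollary:target-density-estimation-graphon}.

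Invoking \cref{thm:abstract-main-theorem} then yields, with probability $1 - n^{-\Omega(k^2)}$, a bound of the form $\tfrac{1}{n^2}\bigparen{k\norm{\bm Y_2-\bm Y_1}^2 + \normf{\bm Y_1-\bm Y_0}^2 + R\normo{\bm Y_2-\bm\Yin}} + \tfrac{R^2 k^2\log n}{\e n}$. The pruned matrix-Bernstein bound \cref{theorem:pruned-spectral-norm} gives $\norm{\bar{\bm A}-\bar{\bm Q}_0}\lesssim\sqrt{R\rho n}$ with probability at least $1-n^{-2}$, producing the $Rk/(\rho n)$ contribution; the sum-norm term is exponentially small in $R\rho n$ by \cref{lem:degree-counting}. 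The Frobenius term carries all of the new content, and I split
\[
\bm Y_1 - \bm Y_0 = \frac{\bar{\bm Q}_0-\bm Q_0}{\hat{\bm\rho}} + \frac{\bm Q_0 - \rho Z_0 B_0 Z_0^\top}{\hat{\bm\rho}} + \Paren{\tfrac{\rho}{\hat{\bm\rho}}-1} Z_0 B_0 Z_0^\top.
\]
The pruning summand has Frobenius norm exponentially small by Chernoff; the agnostic summand has Frobenius norm $\rho n\cdot\hat{\epsilon}_k^{(O)}(\bm Q_0)/\hat{\bm\rho}$, giving $\hat{\epsilon}_k^{(O)}(\bm Q_0)^2$ after normalization by $n^2$; the density-miscalibration summand is controlled via $\normf{Z_0 B_0 Z_0^\top}\lesssim Rn$ together with the sharp bound on $|\hat{\bm\rho}-\rho|$ from \cref{corollary:target-density-estimation-graphon}, which yields $R^2/(n\e) + R^2\polylog(n)/(\rho n^2) + R^2|\rho-\rho(\bm Q_0)|^2$ (after using $\rho(\bm Q_0)\asymp\rho$ to absorb normalizing factors).

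Assembling the pieces, the lower-order contributions $R^2/(n\e)$ and $R^2\polylog(n)/(\rho n^2)$ are absorbed into the privacy term and into $Rk/(\rho n)$ respectively (using $k\ge 2$ and $n$ sufficiently large), and the exponentially small pruning tails are negligible. A union bound over the failure events (matrix Bernstein, the degree-pruning tail via \cref{lem:degree-counting}, the density estimator from \cref{corollary:target-density-estimation-graphon}, and the exponential mechanism) yields the claimed probability $1 - O(\Lambda/n) - O((R\rho n)^{-100})$. The main technical obstacle is the bookkeeping of the density-miscalibration summand: ensuring that the ratio $|\hat{\bm\rho}-\rho|/\hat{\bm\rho}$ times $\normf{Z_0 B_0 Z_0^\top}$ lands within the tolerated $R^2|\rho-\rho(\bm Q_0)|^2$ budget when invoking \cref{corollary:target-density-estimation-graphon}, without leaking spurious factors of $k$ or $1/\rho$ into the agnostic or privacy terms.
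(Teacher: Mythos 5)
Your proof is correct and follows essentially the same route as the paper: the same assignment $\bm Y_0=\bm Z_0\bm B_0\bm Z_0^\top$, $\bm Y_1=\bar{\bm Q}_0/\hat{\bm\rho}$, $\bm Y_2=\bar{\bm A}/\hat{\bm\rho}$, $\bm\Yin=\bm A/\hat{\bm\rho}$, the same invocation of \cref{thm:abstract-main-theorem} with $\cA_2(\bm Y_2;\bm\Yin)$ verified via the pruning cutoff and $\hat{\bm\rho}\asymp\rho$, and the same triangle-inequality split of $\bm Y_1-\bm Y_0$ into a pruning piece, an agnostic piece (yielding $(\hat{\epsilon}_k^{(O)}(\bm Q_0))^2$), and a density-miscalibration piece (yielding $R^2\abs{\rho-\rho(\bm Q_0)}^2$ via \cref{corollary:target-density-estimation-graphon}). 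One small thing worth noting: both your write-up and the paper's invoke \cref{thm:abstract-main-theorem} with $k\norm{\bm Y_2-\bm Y_1}^2+\normf{\bm Y_1-\bm Y_0}^2$, which transposes the spectral and Frobenius roles relative to the theorem's stated form $k\norm{\bm Y_0-\bm Y_1}^2+\normf{\bm Y_1-\bm Y_2}^2$ --- that transposition is the one the error budget actually requires (the agnostic error must enter through the Frobenius term, not the $k$-weighted spectral term) and is provable by the identical SoS argument as \cref{lem:abstractidentifiability}, so there is no gap, but it does not match the theorem statement verbatim.
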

\begin{proof}
By \cref{lem:densityestimationgraphon}, with probability at least $1-O(\frac{R}{n})$ we have $\frac{1}{2}\rho\leq \hat{\bm \rho}\leq 2\rho$.

Let $\bm \Qnull\in \R^{n\times n}$ be the edge connection probability matrix, i.e vertices $i,j$ are connected with probability $\bm \Qnull(i,j)$.
Let $S\subseteq [n]$ be the set of vertices with degree larger than $20R d$.
We let $\bar{\bm A}$ be the adjacency matrix of the graph obtained by removing edges incident to vertices in $S$, and $\bar{\bm \Qnull}\in [0,1]^{n\times n}$ be the edge connection probability matrix restricted to $S$.

Since $\bm W=\bm A-\bm Q_0$ is a symmetric random matrix with
\begin{equation*}
    \bm W(i,j) =
      \begin{cases}
        1 - \bm Q_0({i,j}) & \text{w.p. } \bm Q_0({i,j})\\
        - \bm Q_0({i,j}) & \text{w.p. } 1- \bm Q_0({i,j})
      \end{cases}
\end{equation*}
Furthermore, we have $\bm Q_0(i,j)\leq R\cdot \rho$ by definition.
By \cref{theorem:pruned-spectral-norm}, with probability at least \(1-\frac{1}{n^2}\), we have \(\norm{\bar{\bm A}- \bar{\bm \Qnull}}\leq O\Paren{\sqrt{R\rho n}}\).
By \cref{lem:degree-counting}, with probability at least \(1-\exp(-R\rho n)\), we have \(\normo{\bar{\bm A}-\bm A}\leq R \rho n^2 \cdot\exp(-R \rho n)\).

We let $\bm \Yin=\frac{\bm A}{\hat{\bm \rho}}$, $\bm Y_2=\frac{\bar{\bm A}}{\hat{\bm \rho}}$, $\bm Y_1=\frac{\bar{\bm Q_0}}{\hat{\bm \rho}}$ and $\bm Y_0=\bm Z_0B_0\bm Z_0^\top$.
By definition, \(Y_2,\Yin\) satisfies the constraints \(\cA_2(Y_2;\Yin)\).
As result, we can apply \cref{thm:abstract-main-theorem}, and get
\begin{equation*}
    \deltatwosquared(\hat{\bm B},\bm B_0)\leq \frac{1}{n^2}\cdot O\Paren{k \cdot\norm{\bm Y_2 - \bm Y_1}^2 + \normf{\bm Y_1-\bm Y_0}^2 + R\normo{\bm Y_2-\bm \Yin}+\frac{R^2 nk^2\log(n)}{\epsilon}}\,.
\end{equation*}

Now we have $\norm{\bm Y_2-\bm Y_1}^2\leq \frac{R \rho n}{\hat{\bm \rho}^2}\leq O\Paren{\frac{R n}{\rho}}$, and
\begin{equation*}
    \normo{\bm Y_2-\bm \Yin}\leq \frac{1}{\hat{\bm \rho}}\normo{\bar{\bm A}-\bm A}\leq \frac{R \rho n^2}{\hat{\bm \rho}} \cdot\exp\Paren{-R \rho n}\leq 2R n^2 \exp(-R \rho n)\,.
\end{equation*}
Furthermore, by \cref{corollary:target-density-estimation-graphon}, 
\begin{equation*}
  \Paren{\hat{\bm \rho}-\rho}^2\leq \frac{\rho^2}{n\e}+\frac{\rho\polylog(n)}{n^2}+\Abs{\rho-\rho(\bm Q_0)}^2\,.
\end{equation*}

As result, with probability at least $1-\frac{1}{(Rd)^{100}}-O\Paren{\frac{R}{n}}$,
\begin{align*}
    \normf{\bm Y_1-\bm Y_0}^2 &\leq \Normf{\frac{\bar{\bm Q}_0}{\hat{\bm \rho}}-\frac{\bm Z_0 \bm B_0\bm Z_0}{\rho}}^2\\
    &\leq \frac{2}{\hat{\bm \rho}^2}\Normf{\bm Q_0-\bar{\bm Q}_0}^2+ 2\Normf{\bm Q_0}^2 \cdot \Paren{\frac{\hat{\bm \rho}-\rho}{\rho\hat{\bm \rho}}}^2+ \Normf{\frac{\bm Q_0-\bm Z_0 \bm B_0\bm Z_0}{\rho}}^2\\
    &\leq \exp(-R\rho n)\cdot \frac{2R^2 \rho^2 n^2}{\hat{\bm \rho}^2}+\frac{R^2 n^2}{\rho^2}\cdot\Paren{\frac{\rho^2}{n\e}+\frac{\rho\polylog(n)}{n^2}+\Abs{\rho-\rho(\bm Q_0)}^2}\\
    &+n^2 \Paren{\hat{\epsilon}_k^{(O)}(\bm Q_0)}^2\\
    &\leq 8\exp(-R\rho n)\cdot R^2 n^2+R^2n^2\Paren{\frac{1}{n\e}+\frac{\polylog(n)}{\rho n^2}+\Abs{\rho-\rho(\bm Q_0)}^2}\\
    &+n^2 \Paren{\hat{\epsilon}_k^{(O)}(\bm Q_0)}^2\,.
\end{align*}

As result, we have
\begin{align*}
    \deltatwosquared(\hat{\bm B},\bm B_0)&\leq \frac{1}{n^2}\cdot O\Paren{k \cdot\norm{\bm Y_2 -\bm Y_1}^2 + \normf{\bm Y_1-\bm Y_0}^2 + R\normo{\bm Y_2-\bm \Yin}+\frac{R^2 nk^2\log(n)}{\epsilon}}\\
    &\lesssim \frac{R k}{\rho n}+R^2 \exp(-R\rho n)+\frac{R^2 k^2\log(n)}{n\epsilon}+\Paren{\hat{\epsilon}_k^{(O)}(\bm Q_0)}^2+R^2 \Abs{\rho-\rho(\bm Q_0)}^2\\
    & \lesssim\frac{R k}{\rho n}+\frac{R^2 k^2\log(n)}{n\epsilon}+\Paren{\hat{\epsilon}_k^{(O)}(\bm Q_0)}^2+R^2\Abs{\rho-\rho(\bm Q_0)}^2\,.
\end{align*}
which concludes the proof.
\end{proof}

Now we prove \cref{thm:formalmaintheorem} by taking expectation over $\bm A$ and $\bm Q_0$.
\begin{proof}[Proof of \cref{thm:formalmaintheorem}]
    The running time and privacy guarantees of the algorithm directly follow as analyzed in the stochastic block models.

    Now we prove the utility guarantees for graphon estimation.

By \cref{lem:agnosticSBM}, for any edge connection probability matrix $\bm Q_0$, with probability at least $1-O(\frac{\Lambda}{n})-\frac{1}{(R\rho n)^{100}}$ over $\bm A$, we have 
\begin{align*}
    \deltatwosquared(\hat{\bm B},\bm B_0)\lesssim \frac{R k}{\rho n}+\frac{R^2 k^2\log(n)}{n\epsilon}+\Paren{\hat{\epsilon}_k^{(O)}(\bm Q_0)}^2+R^2\Abs{\rho-\rho(\bm Q_0)}^2
\end{align*}
Therefore we have
\begin{align*}
    \E\Brac{\deltatwosquared(\hat{\bm B},B_0)|\bm Q_0}&\lesssim \frac{R k}{\rho n}+\frac{R^2 k^2\log(n)}{n\epsilon}+\Paren{\hat{\epsilon}_k^{(O)}(\bm Q_0)}^2+\frac{R^2}{(R\rho n)^{100}}+R^2\Abs{\rho-\rho(\bm Q_0)}^2\\
    &\lesssim \frac{R k}{\rho n}+\frac{R^2 k^2\log(n)}{n\epsilon}+\Paren{\hat{\epsilon}_k^{(O)}(\bm Q_0)}^2+R^2\Abs{\rho-\rho(\bm Q_0)}^2\,.
\end{align*}

Now by \cref{lem:block_approximation_error}, we have
\begin{equation*}
    \Paren{\hat{\e}_k^{(O)}(\bm \Qnull)}^2\leq O\Paren{\Paren{\e_k^{(O)}(W)}^2+(\bm\e_n(W,\bm Q_0))^2}\,.
\end{equation*}
where $\epsilon_k^{(O)}(W)=\min_{B} \norm{W[B]-W}_2$ with minimum taken over $k\times k$ symmetric matrices $B$, and $\epsilon_n(W,\bm Q_0)\coloneqq \min_\pi\Norm{W\Brac{\frac{\pi \bm \Qnull \pi^\top}{\rho}}-W}_2$ with minimum taken over $n\times n$ permutation matrices.
Furthermore, \(\E\bm \e_n^2(W,\bm \Qnull)\leq R^2\sqrt{k/n}\).

By \cref{lem:rhoQtorho}, we have $\E\Abs{\rho-\rho(\bm Q_0)}^2\leq O\Paren{\frac{R \rho^2}{n}}$
As result, we have
\begin{align*}
    \E\deltatwosquared(\hat{\bm B},\bm B_0)&\lesssim \frac{R k}{\rho n}+\frac{R^2 k^2\log(n)}{n\epsilon}+\E\Paren{\hat{\epsilon}_k^{(O)}(\bm Q_0)}^2+R^2\E\Abs{\rho-\rho(\bm Q_0)}^2\\
    &\lesssim \frac{R k}{\rho n}+\frac{R^2 k^2\log(n)}{n\epsilon}+R^2\sqrt{\frac{k}{n}}+\Paren{\epsilon_k^{(O)}(W)}^2+\frac{R^3 \rho^2}{n}\\
    &\lesssim \frac{R k}{\rho n}+\frac{R^2 k^2\log(n)}{n\epsilon}+R^2\sqrt{\frac{k}{n}}+\Paren{\epsilon_k^{(O)}(W)}^2\,.
\end{align*}

Since we also have
\begin{equation*}
    \E\deltatwosquared(W[\bm B_0],W)\leq O\Paren{\Paren{\hat{\e}_k^{(O)}(\Qnull)}^2+\E \e_n^2(W,\bm Q_0)}\leq O\Paren{\Paren{\e_k^{(O)}(W)}^2+R^2\sqrt{k/n}}\,,
\end{equation*}
it can be concluded that
\begin{equation*}
  \E\deltatwosquared(W[\hat{\bm B}],W)\lesssim \frac{R k}{\rho n}+\frac{R^2 k^2\log(n)}{n\epsilon}+R^2\sqrt{\frac{k}{n}}+\Paren{\epsilon_k^{(O)}(W)}^2\,.
\end{equation*}
\end{proof}

\section{Extended sum-of-squares spectral \Holder inequality}
\label{sec:sosidentifiability}

In this section, we prove the extended sum-of-squares spectral \Holder inequality in \cref{cor:sos-spectral-hoelder}, which is the key result that is needed in the proof of our sos identifiability lemma \cref{lem:abstractidentifiability}. The proof of \cref{cor:sos-spectral-hoelder} follows directly from the two general sos inequalities in \cref{lem:general_lr_sosidentifiability} and \cref{lem:certify_cut_norm_bound}.

\begin{lemma}\label{lem:general_lr_sosidentifiability}
  Consider the following polynomial constraint system with respect to $V\in \R^{n\times k}$, $\Vnull \in \R^{n\times k}$ and $M\in \R^{n\times n}$
  \begin{equation*}
    \cA=\Set{V^\top V=\Id_k; \Vnulltrans \Vnull =\Id_k; (\Id_n-VV^\top)M(\Id_n-\Vnull \Vnulltrans )=0}\,.
  \end{equation*}
  For any symmetric matrix $W\in \R^{n\times n}$, there is a sum-of-squares proof that 
  \begin{equation*}
    \cA \sststile{8}{M, V, V_0}  \iprod{M,W}\leq \tfrac 3 2 \Normf{M}^2+\Normf{VV^\top W}^2+\Normf{\Vnull \Vnulltrans W}^2\,.
  \end{equation*} 
\end{lemma}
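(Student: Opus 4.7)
The plan is to use the constraint $(\Id_n - VV^\top)M(\Id_n - \Vnull \Vnulltrans) = 0$ to decompose $M$ as $M = VV^\top M + M \Vnull \Vnulltrans - VV^\top M \Vnull \Vnulltrans$, and then handle each of the three resulting inner products $\iprod{\cdot, W}$ by the standard sos Young inequality $\iprod{A,B} \le \tfrac12 \Normf{A}^2 + \tfrac12 \Normf{B}^2$. Using that $VV^\top$ and $\Vnull\Vnulltrans$ are symmetric, we move the projection factors onto $W$ to get
\[
  \iprod{M, W} = \iprod{M, VV^\top W} + \iprod{M, W \Vnull \Vnulltrans} - \iprod{M, VV^\top W \Vnull \Vnulltrans}.
\]
Applying Young to each of the three terms (a degree-$4$ sos step) and using the symmetry $W=W^\top$ to rewrite $\Normf{W\Vnull\Vnulltrans}^2 = \Normf{\Vnull\Vnulltrans W}^2$ (immediate from the cyclic trace identity) will yield a bound of the form $\tfrac32\Normf{M}^2 + \tfrac12\Normf{VV^\top W}^2 + \tfrac12\Normf{\Vnull\Vnulltrans W}^2 + \tfrac12\Normf{VV^\top W \Vnull \Vnulltrans}^2$.

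The main step is then to absorb the cross term $\Normf{VV^\top W \Vnull \Vnulltrans}^2$ into the allowed right-hand side. The key identity is
\[
  \Normf{VV^\top W}^2 - \Normf{VV^\top W \Vnull \Vnulltrans}^2 = \Normf{(\Id_n - \Vnull \Vnulltrans) W V}^2,
\]
which is manifestly a sum of squares. To sos-prove it, I expand both sides using $V^\top V = \Id_k$ (so $VV^\top VV^\top = VV^\top$ in sos, using the constraint at degree $4$) and the cyclic identity to obtain $\Normf{VV^\top W}^2 = \Tr(W VV^\top W)$ and $\Normf{VV^\top W \Vnull\Vnulltrans}^2 = \Tr(\Vnull\Vnulltrans W V V^\top W)$. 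Their difference is $\Tr\bigparen{(\Id_n - \Vnull\Vnulltrans) WVV^\top W}$, and the degree-$2$ sos identity $(\Id_n - \Vnull\Vnulltrans)^2 = \Id_n - \Vnull \Vnulltrans$ (obtained from $\Vnulltrans \Vnull = \Id_k$) converts this trace into $\Normf{(\Id_n - \Vnull\Vnulltrans) W V}^2$. Swapping the roles of $V$ and $\Vnull$ gives the symmetric bound $\Normf{VV^\top W \Vnull \Vnulltrans}^2 \le \Normf{\Vnull\Vnulltrans W}^2$, and averaging the two yields
\[
  \Normf{VV^\top W \Vnull\Vnulltrans}^2 \le \tfrac12\Paren{\Normf{VV^\top W}^2 + \Normf{\Vnull \Vnulltrans W}^2}.
\]

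Substituting back gives the coefficient $\tfrac12 + \tfrac14 = \tfrac34 \le 1$ in front of each of the two projection-norm terms, and the claim follows. I expect no real obstacle: the proof is simply the observation that the algebra of orthogonal projectors onto the column spans of $V$ and $\Vnull$ goes through verbatim in sos once one has the degree-$2$ identities $VV^\top VV^\top = VV^\top$ and $(\Id_n - \Vnull\Vnulltrans)^2 = \Id_n - \Vnull\Vnulltrans$. Counting degrees, each Young inequality is degree $4$; the cross-term absorption is degree $8$ because $\Normf{VV^\top W \Vnull\Vnulltrans}^2$ is quartic in each of $V$ and $\Vnull$. This matches the promised $\sststile{8}{M,V,V_0}$ in the lemma.
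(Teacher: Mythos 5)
Your proof is correct and follows the same overall route as the paper's: the same expansion $M = VV^\top M + M\Vnull\Vnulltrans - VV^\top M\Vnull\Vnulltrans$ from the third constraint, the same three applications of the sos Young/AM--GM inequality, and the same final cross-term bound $\Normf{VV^\top W\Vnull\Vnulltrans}^2 \le \tfrac12\Paren{\Normf{VV^\top W}^2 + \Normf{\Vnull\Vnulltrans W}^2}$. The only place you diverge from the paper is in how this last bound is established. The paper rewrites the cross term as an inner product, $\Normf{VV^\top W\Vnull\Vnulltrans}^2 = \Tr\Paren{W VV^\top W \Vnull\Vnulltrans} = \iprod{VV^\top W, W\Vnull\Vnulltrans}$, using the idempotency $VV^\top VV^\top = VV^\top$, $\Vnull\Vnulltrans\Vnull\Vnulltrans = \Vnull\Vnulltrans$ together with cyclicity, and then applies AM--GM once more. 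You instead establish two \emph{one-sided} bounds $\Normf{VV^\top W\Vnull\Vnulltrans}^2 \le \Normf{VV^\top W}^2$ and $\Normf{VV^\top W\Vnull\Vnulltrans}^2 \le \Normf{\Vnull\Vnulltrans W}^2$ by exhibiting the explicit sos certificate $\Normf{(\Id_n - \Vnull\Vnulltrans)WV}^2$ (and its $V\leftrightarrow\Vnull$ counterpart) for the difference, and then average. Both are valid degree-$8$ sos proofs yielding the same constants; yours proves a stronger (one-sided) statement as an intermediate step, while the paper's is slightly more direct. One small imprecision worth flagging: you say ``each Young inequality is degree $4$,'' but the third application, $-\iprod{M, VV^\top W\Vnull\Vnulltrans} \le \tfrac12\Normf{M}^2 + \tfrac12\Normf{VV^\top W\Vnull\Vnulltrans}^2$, is the square $\tfrac12\Normf{M + VV^\top W\Vnull\Vnulltrans}^2\ge0$, which has degree $8$ since $VV^\top W\Vnull\Vnulltrans$ is quartic. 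This doesn't affect the conclusion since the overall degree is still $8$.
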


\begin{proof}
  First, expanding $(\Id_n - VV^\top)M(\Id_n - \Vnull \Vnulltrans )=0$ and rearranging terms, we have
  \begin{equation*}
    \cA \sststile{8}{M,V,V_0} M=VV^\top M+M\Vnull  \Vnulltrans -VV^\top M \Vnull  \Vnulltrans \,.
  \end{equation*}
  Plugging this into $\iprod{M,W}$, we get
  \begin{equation*}
    \cA \sststile{8}{M,V,V_0} \iprod{M,W}= \iprod{VV^\top M,W}+\iprod{M\Vnull  \Vnulltrans ,W}-\iprod{VV^\top M \Vnull  \Vnulltrans , W}\,.
  \end{equation*}
  Now, we bound the three terms on the right hand side separately. First, by AM-GM inequality we have
  \begin{equation*}
    \sststile{4}{M,V,V_0} \iprod{VV^\top M,W}=\iprod{M,VV^\top W}\leq \frac{1}{2}\Normf{M}^2+ \frac{1}{2}\Normf{VV^\top W}^2\,.
  \end{equation*}
  For the same reason, we have 
  \begin{equation*}
    \sststile{4}{M,V,V_0} \iprod{M \Vnull \Vnulltrans,W} = \iprod{M, \Vnull \Vnulltrans W} \leq \frac{1}{2} \Normf{M}^2+ \frac{1}{2}\Normf{\Vnull \Vnulltrans  W}^2\,.
  \end{equation*}
  For the third term, we have
  \begin{equation*}
    \sststile{8}{M,V,V_0} - \iprod{VV^\top M \Vnull \Vnulltrans ,W} = - \iprod{ M ,VV^\top W\Vnull \Vnulltrans } \leq \frac{1}{2} \Normf{M}^2+ \frac{1}{2}\Normf{VV^\top W \Vnull \Vnulltrans}^2\,.
  \end{equation*}
  Notice that, by the cyclicity of the matrix trace and AM-GM inequality, we have the following degree-$8$ sum-of-squares inequality for the term $\Normf{VV^\top W \Vnull \Vnulltrans}^2$
  \begin{align*}
    \cA \sststile{8}{M,V,V_0}
    \Normf{VV^\top W \Vnull  \Vnulltrans }^2
    = & \Tr\Paren{ W VV^\top W \Vnull  \Vnulltrans} \\
    = & \iprod{VV^\top W, W \Vnull  \Vnulltrans} \\
    \leq & \frac{1}{2}\Normf{VV^\top W}^2 + \frac{1}{2}\Normf{\Vnull  \Vnulltrans W}^2 \,.
  \end{align*}
  Combining everyting together, we have 
  \begin{align*}
    \cA \sststile{8}{M,V,V_0}  \iprod{M,W} &=  \iprod{VV^\top M,W}+\iprod{M\Vnull  \Vnulltrans ,W}-\iprod{VV^\top M \Vnull  \Vnulltrans , W}\\
    &\leq \frac{3}{2} \Normf{M}^2+\Normf{VV^\top W}^2+\Normf{\Vnull \Vnulltrans W}^2\,.
  \end{align*}
\end{proof}

\begin{lemma}\label{lem:certify_cut_norm_bound}
  Consider the following polynomial constraint system with respect to $V\in \R^{n\times k}$
  \begin{equation*}
    \cA\coloneqq \Set{V^\top V=\Id_k}\,.
  \end{equation*}
  For any symmetric matrix $W \in \R^{n\times n}$, there is a sum-of-squares proof that
  \begin{equation*}
    \cA \sststile{4}{V} \Normf{VV^\top W}^2 \leq k\cdot \Norm{W}^2\,.
  \end{equation*}
\end{lemma}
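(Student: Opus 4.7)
The plan is to reduce the inequality to an obvious sum-of-squares statement by exploiting the symmetry of $W$ and the constraint $V^\top V = \Id_k$. Informally, the non-SOS proof is: $\|VV^\top W\|_F^2 = \Tr(WVV^\top VV^\top W) = \Tr(V^\top W^2 V)$, and since $W^2 \preceq \|W\|^2 \Id_n$ as constant matrices, $\Tr(V^\top W^2 V) \le \|W\|^2 \Tr(V^\top V) = k\|W\|^2$. I want to certify this by SOS.

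First I would simplify $\|VV^\top W\|_F^2$. Expanding the Frobenius norm and using the cyclic property of the trace (purely symbolic, no SOS content),
\[
  \|VV^\top W\|_F^2 = \Tr(W^\top V V^\top V V^\top W).
\]
Multiplying the degree-$2$ matrix constraint $V^\top V = \Id_k$ by the degree-$2$ polynomial $W^\top V(\cdot)V^\top W$ (entrywise), we get within degree $4$ that
\[
  \cA \sststile{4}{V}\; \|VV^\top W\|_F^2 = \Tr(W V V^\top W) = \Tr(V^\top W^2 V),
\]
where in the last step I used that $W$ is symmetric.

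Next, I exploit that $W$ is a constant symmetric matrix, so $C \seteq \|W\|^2 \Id_n - W^2$ is a \emph{constant} positive-semidefinite matrix. Therefore $C$ admits a real square root: writing $W = \sum_i \lambda_i u_i u_i^\top$ for its spectral decomposition, take $M \seteq \sum_i \sqrt{\|W\|^2 - \lambda_i^2}\, u_i u_i^\top$, so that $M^\top M = M^2 = C$. With this choice,
\[
  k\,\|W\|^2 - \Tr(V^\top W^2 V) = \Tr\bigl(V^\top (\|W\|^2 \Id_n - W^2) V\bigr) = \Tr(V^\top M^\top M V) = \|MV\|_F^2,
\]
which is manifestly a sum of squares in the entries of $V$, of degree $2$. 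Combining this with the previous step gives the desired degree-$4$ SOS proof
\[
  \cA \sststile{4}{V}\; k\cdot \|W\|^2 - \|VV^\top W\|_F^2 = \|MV\|_F^2 \ge 0.
\]

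There is no real obstacle here: the only subtlety is that the square-root construction of $M$ uses only constant (data-dependent) coefficients and no new variables, so it does not inflate the SOS degree in $V$. The simplification from $VV^\top VV^\top$ to $VV^\top$ uses the orthonormality constraint once, contributing degree $2$, and the remaining rewriting is a polynomial identity in the matrix entries. Thus everything fits comfortably within degree $4$, matching the statement of the lemma.
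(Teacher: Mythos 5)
Your proof is correct and follows essentially the same route as the paper's: both reduce $\Normf{VV^\top W}^2$ to $\Tr(V^\top W^2 V)$ via the constraint $V^\top V = \Id_k$, then write $\|W\|^2 \Id_n - W^2$ as a Gram matrix to exhibit the slack $\|MV\|_F^2$ as an explicit sum of squares (the paper denotes your $M$ by $B$ and merely asserts its existence rather than constructing it from the spectral decomposition, but that is a cosmetic difference). Both proofs also use the constraint a second time, implicitly, to replace $\Tr(V^\top V)$ by $k$; your accounting of the resulting degrees is accurate.
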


\begin{proof}
  Since $V^\top V=\Id_k$, we have
  \begin{align*}
    \cA \sststile{4}{V} \Normf{VV^\top W}^2
    & = \Tr\Paren{WVV^\top VV^\top W}\\
    & = \Tr\Paren{WVV^\top W}\\
    & = \Tr\Paren{VV^\top W^2} \,.
  \end{align*}
  Notice that $\Norm{W^2} \leq \Norm{W}^2$. Therefore, we can write $W^2 = \Norm{W}^2 \Id_n - B^{\top} B$ for some matrix $B \in \R^{n\times n}$. It follows that
  \begin{align*}
    \cA \sststile{4}{V} \Normf{VV^\top W}^2
    & = \Tr\Paren{VV^\top \Paren{\Norm{W}^2 \Id_n - B^{\top} B}} \\
    & = \Tr\Paren{VV^\top} \cdot \Norm{W}^2 - \Tr\Paren{VV^\top B^{\top} B} \\
    & = \Tr\Paren{V^\top V} \cdot \Norm{W}^2 - \Tr\Paren{(B V)^{\top} B V} \\
    & = \Tr\Paren{\Id_k} \cdot \Norm{W}^2 - \Normf{BV}^2 \\
    & \leq k\cdot \Norm{W}^2\,.
  \end{align*}
\end{proof}

By combining the previous two lemmas, we obtain the following direct corollary.

\begin{corollary}
  \label{cor:sos-spectral-hoelder}
  Let \(V,\Vnull\) be \(n\)-by-\(k\) matrices of indeterminates and let \(M\) be an \(n\)-by-\(n\) matrix of indeterminates.
  Consider the following polynomial system in variables \(V,\Vnull,M\),
  \begin{equation*}
    \cA=\Set{V^\top V=\Id_k; \Vnulltrans \Vnull =\Id_k; (\Id_n-VV^\top)M(\Id_n-\Vnull \Vnulltrans )=0}\,.
  \end{equation*}
  Then, for every symmetric matrix $W\in \R^{n\times n}$,
  \begin{equation*}
    \cA \sststile{8}{M, V, V_0}
    \iprod{M,W}\leq \tfrac 3 2 \Normf{M}^2 + 2k \norm{W}^2\,.
  \end{equation*}
\end{corollary}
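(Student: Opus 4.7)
The plan is to obtain the corollary by directly composing the two preceding lemmas, with essentially no additional work beyond careful degree accounting. First, I would apply \cref{lem:general_lr_sosidentifiability} to the full system $\cA$ (which contains all three of its required constraints) to obtain the degree-$8$ sos inequality
\[
\cA \sststile{8}{M,V,V_0} \iprod{M,W} \le \tfrac{3}{2}\Normf{M}^2 + \Normf{VV^\top W}^2 + \Normf{V_0 V_0^\top W}^2.
\]
This step is immediate since $\cA$ includes the hypothesis $(\Id_n - VV^\top)M(\Id_n - V_0 V_0^\top) = 0$ together with both orthonormality constraints.

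Next, I would bound each of the two Frobenius-norm terms using \cref{lem:certify_cut_norm_bound}. The constraint $V^\top V = \Id_k$ in $\cA$ yields $\cA \sststile{4}{V} \Normf{VV^\top W}^2 \le k \norm{W}^2$, and by the symmetric argument applied to $V_0$, the constraint $V_0^\top V_0 = \Id_k$ gives $\cA \sststile{4}{V_0} \Normf{V_0 V_0^\top W}^2 \le k \norm{W}^2$. Adding these two degree-$4$ sos inequalities yields
\[
\cA \sststile{4}{V,V_0} \Normf{VV^\top W}^2 + \Normf{V_0 V_0^\top W}^2 \le 2k\norm{W}^2.
\]

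Finally, chaining the two bounds (the sum of sos proofs of $p \le q$ and $q \le r$ gives an sos proof of $p \le r$) produces
\[
\cA \sststile{8}{M,V,V_0} \iprod{M,W} \le \tfrac{3}{2}\Normf{M}^2 + 2k\norm{W}^2,
\]
as desired. The only thing to verify is the degree bookkeeping: the first step is degree $8$, the second step is degree $4$, and combining inequalities of the form $p \le q$ takes the maximum of the two degrees, so the overall certificate remains degree $8$. There is no real obstacle here; the work of the corollary has already been done by the two lemmas, and this step simply combines them.
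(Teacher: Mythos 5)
Your proof is correct and follows exactly the route the paper intends: the paper presents the corollary with the remark ``By combining the previous two lemmas, we obtain the following direct corollary,'' and your composition of \cref{lem:general_lr_sosidentifiability} with two applications of \cref{lem:certify_cut_norm_bound} (one for $V$, one for $V_0$), together with the degree accounting, is precisely that combination.
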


\section{Graphon distance as quadratic optimization over the Birkhoff polytope}\label{sec:graphonBirkhoff}

\paragraph{Doubly stochastic distance metric}
Recall that we define $B_k$ be the set of all $k$-by-$k$ doubly stochastic matrices (also known as the Birkhoff polytope). For $k$-by-$k$ matrices with nonnegative entries $B,\Bnull$, we define the error metric
 \begin{equation*}
	\dsdistance(B,\Bnull)
	=\min_{S\in B_k} \frac{1}{k^2}\sum_{a,a',b,b'\in [k]}\Paren{B(a,b)-\Bnull(a',b')}^2\cdot S(a,a')\cdot S(b,b') \,.
\end{equation*}

It is easy to relate this distance metric to $\delta_2(\cdot,\cdot)$. Similar inequalities appears in~\cite{xu2020gwdistance,Peyre15Gromov}.

\begin{lemma}[Relation between distance matrices]\label{lem:dsdistance}
	Let  $B,\Bnull$ be $k$-by-$k$ matrices with non-negative entries, then $\delta_2^2(B,\Bnull ) = \dsdistance(B, \Bnull)$.
\end{lemma}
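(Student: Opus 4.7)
The plan is to establish the identity by showing each direction of the equality, using a bijection between ``joint distributions'' of pairs of measure preserving maps $\phi_1,\phi_2\from [0,1]\to[k]$ and doubly stochastic $k$-by-$k$ matrices.

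First, I would fix measure preserving maps $\phi_1,\phi_2\from [0,1]\to [k]$ and expand
\[
  \Normt{B^{\phi_1}-B_0^{\phi_2}}^2
  = \int_{[0,1]^2}\Bigparen{B(\phi_1(x),\phi_1(y)) - B_0(\phi_2(x),\phi_2(y))}^2\,dx\,dy
\]
by partitioning $[0,1]^2$ according to the values of $(\phi_1(x),\phi_2(x))$ and $(\phi_1(y),\phi_2(y))$. Setting $T(a,a')\seteq \vol\set{x\in[0,1]\from \phi_1(x)=a,\,\phi_2(x)=a'}$, the measure preserving property forces the row and column sums of $T$ to all equal $1/k$, so $S\seteq kT$ lies in $B_k$. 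Substituting yields the identity
\[
  \Normt{B^{\phi_1}-B_0^{\phi_2}}^2
  = \tfrac{1}{k^2}\sum_{a,a',b,b'}\Paren{B(a,b)-B_0(a',b')}^2 S(a,a')\,S(b,b')\,,
\]
which shows $\delta_2^2(B,B_0)\ge \dsdistance(B,B_0)$ by taking the infimum over $(\phi_1,\phi_2)$.

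For the reverse inequality, I would show that every $S\in B_k$ arises this way. Given $S$, I partition $[0,1]$ into $k^2$ disjoint intervals $I_{a,a'}$ of lengths $S(a,a')/k$ (indexed lexicographically in $(a,a')$), and define $\phi_1(x)=a$, $\phi_2(x)=a'$ for $x\in I_{a,a'}$. The row-sum condition $\sum_{a'}S(a,a')=1$ gives $\vol(\phi_1^{-1}(a))=1/k$, and similarly the column-sum condition gives $\vol(\phi_2^{-1}(a'))=1/k$, so both maps are measure preserving. Plugging this $(\phi_1,\phi_2)$ into the identity above reproduces the $\dsdistance$ objective for $S$, proving $\delta_2^2(B,B_0)\le \dsdistance(B,B_0)$.

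No step is really an obstacle here: the only subtlety is being careful with the normalization (a factor of $k$) when passing between $T$ and $S$, and observing that the piecewise constant construction of $\phi_1,\phi_2$ is automatically measurable since each $I_{a,a'}$ is an interval. One could alternatively appeal to the Birkhoff--von Neumann theorem (\cref{thm:birkhoff}) to realize $S$ as a convex combination of permutations and then construct $\phi_1,\phi_2$ from the resulting partition, but the direct interval construction is shorter and yields exactly the required joint distribution.
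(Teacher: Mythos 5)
Your proof is correct and takes essentially the same approach as the paper: identifying measure-preserving pairs $(\phi_1,\phi_2)$ with doubly stochastic matrices via the joint-level-set measures, and constructing $(\phi_1,\phi_2)$ from intervals of lengths $S(a,a')/k$ for the converse. The paper phrases the converse through block graphons and a single map $\phi\colon[0,1]\to[0,1]$ rather than working throughout with the $\phi_1,\phi_2\colon[0,1]\to[k]$ definition as you do, but the underlying construction and both inequalities are the same.
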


\begin{proof}
	The direction $\delta_2^2(B,\Bnull ) \geq \dsdistance(B, \Bnull)$ is trivial because the two measure preserving functions $\phi_1, \phi_2: [0,1] \rightarrow [k]$ that optimizes $\delta_2^2(B,\Bnull)$ defines a doubly stochastic matrix
	\begin{equation*}
		S(a, b) = \phi_1^{-1}(a) \cdot \phi_2^{-1}(b) \,,
	\end{equation*}
	such that
	\begin{equation*}
		\frac{1}{k^2}\sum_{a,a',b,b'\in [k]}\Paren{B(a,b)-\Bnull(a',b')}^2\cdot S(a,a')\cdot S(b,b') = \delta_2^2(B,\Bnull) \,.
	\end{equation*}

	Now, we prove $\delta_2(B,\Bnull )^2 \leq \dsdistance(B, \Bnull)$.
	We consider graphons the $W=W[B]$, $W^*=W[\Bnull]$ and given a doubly stochastic matrix $S\in [0,1]^{k\times k}$,  we construct a mapping $\phi:[0,1]\to [0,1]$ such that
	\begin{equation*}
		\Norm{W^\phi-W^*}_2^2\leq \frac{1}{k^2}\sum_{a,a',b,b'\in [k]}\Paren{B(a,b)-\Bnull(a',b')}^2\cdot S(a,a')\cdot S(b,b')\,.
	\end{equation*}
	The proof then follows since by the definition, we have $\delta_2(B,\Bnull )^2\leq \Norm{W^\phi-W^*}_2^2$.
	
	Now we describe our construction of bijective mapping $\phi$. 
	We consider the standard partition of $[0,1]$ into $k$ equal length intervals. The intuition is that, for each interval $t$, we map certain mass of points to each intervals $h\in [k]$, according to the doubly stochastic matrix $S$. Concretely consider the $a$-th interval $I_a$, we further partition $I_a$ into $k$ sub-intervals $I_{a,1}\,,\ldots\,,I_{a,k}$ each with weight respectively proportional to $S(a,1),S(a,2),\ldots,S(a,k)$. Then for $x$ inside the $b$-th sub-interval, we require $\phi(x)\in \Brac{\frac{b-1}{k},\frac{b}{k}}$. Such bijective mapping $\phi$ exists, since $S$ is a doubly stochastic matrix with each row and each column summing up to $1$.
	
	Under the standard partition, $W(x,y)=B_1(a,b)$ when $\frac{a-1}{k}\leq x<\frac{a}{k}$ and $\frac{b-1}{k}\leq y<\frac{b}{k}$. 
	Then we have 
	\begin{align*}
		\Norm{W^\phi-W^*}_2^2 &=\sum_{a,a',b,b'\in [k]} \int_{x\in I_{a,a'}}\int_{y\in I_{b,b'}} 
		\Paren{B_1(a,b)-B_2(a',b')}^2 dx dy\\
		&= \frac{1}{k^2}\sum_{a,a',b,b'\in [k]}\Paren{B_1(a,b)-B_2(a',b')}^2 \cdot S(a,a')\cdot  S(b,b')
	\end{align*}
	which finishes the proof.
\end{proof}

\section{Polynomial optimization over convex polytopes via sum-of-squares}\label{section:sos-optimization-polytope}

We reuse the notation introduced in \cref{sec:graphonBirkhoff} and expand it. For a degree-$d$ $n$-multivariate polynomial $p(x)=p(x_1,\ldots,x_n)$, we index its coefficients with unordered multi-indices $\alpha$ in $[n]^d$ so that $p(x)=\sum_{\alpha\in [n]^d}p_\alpha \underset{i\in \alpha}{\prod}x_i$. We let $\Normo{p}$ be the sum of the absolute values of the coefficients of $p$.  For a convex polytope $P\,,$ we write $V(P)$ for the set of its extreme points. We denote by $\cB_n\subseteq \R^n$ the unit ball in $\R^n$.

Our main tool is the following statement about optimization of quadratic polynomials with non-negative coefficients over convex polytopes
\begin{theorem}[Reweighed pseudo-distribution]\label{theorem:pseudo-distribution-polytopes}
	Let $C>0$ be a constant, let $\zeta>n^{-C}$ and let $\ell\geq q+(n/\zeta)\cdot (\log n)^{C'}$, for some $C'$ depending only on $C$. Let $\cA(x)$ be a set of polynomial constraints of degree at most $q$ in $n$-dimensional vector of indeterminates $x$ and let $p(x)$ be a quadratic polynomial.
	Suppose $\cA(x)$ is sos-consistent up to degree-$\ell$. Then, there exists a pseudo-distribution of level $\ell-(n/\zeta)\cdot \log(n)^{C'}$ satisfying $\cA(x)$ such that
	\begin{align}
		\max_{i,j \in [n]}\Abs{\tilde{\E}\Brac{ x_i x_j}- \tilde{\E}\Brac{x_i}\tilde{\E}\Brac{x_j}}\leq \zeta\,.
	\end{align}
\end{theorem}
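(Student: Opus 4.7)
The plan is a reweighting argument based on reducing a pseudo-variance potential, inspired by the rounding techniques for the best-separable-state problem in \cite{barak2017quantum}. Starting from a level-$\ell$ pseudo-distribution $\tilde\mu_0$ satisfying $\cA(x)$, I would iteratively multiply by non-negative low-degree polynomials until every coordinate's pseudo-variance is at most $\zeta$; by the sum-of-squares Cauchy--Schwarz inequality this will imply $\abs{\tilde\E[x_i x_j]-\tilde\E[x_i]\tilde\E[x_j]}\le \sqrt{\tilde\Var(x_i)\,\tilde\Var(x_j)}\le \zeta$ for every pair $i,j$, which is exactly the required bound.

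Concretely, the potential I would track is $\Phi(\tilde\mu)=\sum_{i=1}^n \tilde\Var(x_i)$. Since $\cA$ is explicitly bounded---say by a constraint $\norm{x}^2\le M$ with $M\le\poly(n)$---we have $0\le \Phi(\tilde\mu)\le M$ throughout the process. The basic reweighting operation is: for any non-negative polynomial $g$ with $\tilde\E[g]>0$, define $\tilde\mu'$ by $\tilde\E_{\tilde\mu'}[f]:=\tilde\E_{\tilde\mu}[fg]/\tilde\E_{\tilde\mu}[g]$; this yields a valid pseudo-distribution of level $\ell-\deg(g)$ which still satisfies $\cA$, because multiplying any positivity test $h^2 \prod_i f_i\ge 0$ by $g\ge 0$ preserves the non-negativity of its pseudo-expectation.

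The core lemma I would prove is a single-coordinate variance-reduction step: if $\tilde\Var(x_{i^*})>\zeta$ for some coordinate $i^*$, then there is a non-negative polynomial $g$ of degree at most $(\log n)^{O(1)}$ such that reweighting by $g$ yields $\Phi(\tilde\mu')\le \Phi(\tilde\mu)-\Omega(\zeta)$. I would pick a threshold $t$ splitting the range of $x_{i^*}$ into two halves of roughly balanced pseudo-mass, approximate the indicator of $\{x_{i^*}\le t\}$ by a non-negative polynomial of polylogarithmic degree (for example by squaring a sufficiently sharp Chebyshev approximation of $\sign$ on the bounded interval for $x_{i^*}$ inherited from $\cA$), and then apply the law of total variance at the pseudo-distribution level. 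The conclusion is that the pseudo-variance of $x_{i^*}$ after reweighting drops by at least a constant fraction of $\tilde\Var(x_{i^*})$, while the average pseudo-variance of every other coordinate does not grow; one of the two halves therefore has $\Phi$ smaller by $\Omega(\tilde\Var(x_{i^*}))=\Omega(\zeta)$. The polylogarithmic factor $C'$ ultimately comes from taking the Chebyshev approximation sharp enough (error $1/\poly(n)$) that the targeted potential drop survives the approximation error.

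Given the key lemma, the argument closes by iteration. Each reweighting costs $(\log n)^{O(1)}$ sum-of-squares levels and strictly decreases $\Phi$ by $\Omega(\zeta)$; since $\Phi$ is bounded above by $\poly(n)$ and cannot go negative, we can iterate at most $O(n/\zeta)$ times, after which the maximum pseudo-variance must be at most $\zeta$. The resulting pseudo-distribution has level at least $\ell-(n/\zeta)\cdot(\log n)^{C'}$ and satisfies $\cA(x)$ together with the pairwise covariance bound. I expect the main obstacle to be this variance-reduction lemma inside the pseudo-distribution world: the law of total variance is an identity for honest distributions, but for pseudo-distributions with only polynomial-approximate indicators it becomes an inequality with slack, and one must choose the approximation parameters so that the slack is strictly dominated by the $\Omega(\zeta)$ drop. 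A secondary concern is to guard against the numerical errors flagged in \cref{remark:sos-numerical-issue} accumulating across $O(n/\zeta)$ iterations, which requires a polynomially bounded bit-complexity budget for the reweighting polynomials at each step.
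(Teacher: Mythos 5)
Your overall skeleton matches the paper's: reduce the pairwise covariance bound to a potential bound on the pseudo-variances $\Phi(\tilde\mu)=\sum_i \tilde\Var(x_i)$, then obtain a reweighted pseudo-distribution with small $\Phi$, paying one reweighting's worth of level per iteration. Your first step is in fact \emph{tighter} than the paper's written version: you only need each $\tilde\Var(x_i)\le\zeta$ (hence $\Phi\le\zeta$ suffices) via the clean Cauchy--Schwarz bound $\abs{\widetilde{\mathrm{Cov}}(x_i,x_j)}\le\sqrt{\tilde\Var(x_i)\tilde\Var(x_j)}$, whereas the paper's chain of inequalities loses a factor of $n$ and asks for $\Phi\le\zeta/n$. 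Where the routes genuinely diverge is in how the small-$\Phi$ pseudo-distribution is produced: the paper treats this step as a black box and cites the subspace-fixing reweighting lemma of Barak--Kothari--Steurer (Lemma~7.1 of \cite{barak2017quantum}, restated as \cref{lem:bks-sos-reweighing} and boosted in \cref{cor:bks-sos-reweighing}), while you attempt to re-derive a coordinate-wise variance-reduction lemma from scratch via threshold conditioning and Chebyshev indicator approximations.

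That re-derivation is the gap, and you have correctly located but not closed it. Two concrete issues. First, you invoke the law of total variance to argue that conditioning on a half drops $\tilde\Var(x_{i^*})$ by $\Omega(\zeta)$; for that you need a threshold $t$ with large ``between-groups'' pseudo-variance $\widetilde{\Var}(\tilde\E[x_{i^*}\mid A])$, but you give no argument that such a $t$ exists for a pseudo-distribution (for honest distributions it follows from elementary facts about the cdf; for pseudo-distributions you would need to exploit the univariate moment problem and the level budget to justify it). Second, the Chebyshev-squared indicator you propose has error $\exp(-\Omega(d\eta))$ at distance $\eta$ from the jump, so with degree $d=(\log n)^{O(1)}$ the error is only small a distance $\Omega(1/\polylog n)$ away --- and the pseudo-distribution can concentrate pseudo-mass near the threshold, in which case the slack in your approximate law of total variance is not obviously dominated by the $\Omega(\zeta)$ gain. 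These are precisely the difficulties that the BKS reweighting lemma packages away; their proof uses a different reweighting (not coordinate-wise threshold conditioning), so what you are proposing is not a restatement of that lemma but a new proof of a comparable fact, which you have not supplied. Absent a careful proof of your variance-reduction lemma, you should cite \cref{lem:bks-sos-reweighing} and \cref{cor:bks-sos-reweighing} as the paper does.
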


We defer the proof of the Theorem to the end of the section.
Crucially, \cref{theorem:pseudo-distribution-polytopes} implies the following result, of which \cref{lem:abstractrounding} is an immediate consequence.

\begin{corollary}[Sos multiplicative approximation over polytopes]\torestate{\label{corollary:sos-multiplicative-approx-polytope}
		Let $C>0$ be a constant, let $1/2> \zeta>n^{-C}\,, \tau>0$ and let $\ell\geq 2+(n^3/(\zeta^2\cdot \tau^2))\cdot (\log n)^{C'}$, for some $C'$ depending only on $C$.
		Let $p(x)$ be a quadratic polynomial with non-negative coefficients and let $P$ be a convex polytope satisfying
		\begin{enumerate}[(i)]
			\item $P\subseteq \cB_n\,,$
			\item  $\forall i \in [n]\,, \underset{z\in V(P)\,, \textnormal{ s.t. }z_i\neq 0}{\min} z_i\geq \tau>0\,.$
		\end{enumerate}
		Suppose  there exists a level-$\ell$ pseudo-distribution over $P$ satisfying the constraint
		\begin{align}\label{eq:pseudo-distribution-quadratic-polynomial-constraint}
			\Set{p(x)\leq t}\,.
		\end{align}
		Then there exists $x^*\in P$ satisfying
		\begin{align*}
			p(x^*)\leq \Paren{1+5n\sqrt{\zeta}}t\,.
		\end{align*}
	}
\end{corollary}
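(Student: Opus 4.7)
The overall approach is to produce a pseudo-distribution on $P$ whose second moments are almost independent (via \cref{theorem:pseudo-distribution-polytopes}), round to its first-moment vector, and convert the resulting additive error into the multiplicative bound claimed in the statement. The level budget $\ell \ge 2 + (n^3/(\zeta^2\tau^2))(\log n)^{C'}$ is chosen so that we can afford to invoke the reweighing theorem with a parameter $\zeta' \ll \zeta$ (concretely, $\zeta' := \tau^4\zeta/n^2$ will do), obtaining a low-level pseudo-distribution $\tilde{\mu}$ supported on $P$ that satisfies the constraint $\{p(x)\le t\}$ together with $\bigabs{\tE_{\tilde{\mu}}[x_ix_j] - \tE_{\tilde{\mu}}[x_i]\tE_{\tilde{\mu}}[x_j]} \le \zeta'$ for every $i,j\in[n]$.

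Setting $x^* := \tE_{\tilde\mu}[x]$, convexity of $P$ and linearity of the pseudo-expectation give $x^*\in P$. Hypothesis~(ii) forces $P\subseteq\R_+^n$, so $\tE_{\tilde\mu}[x_ix_j]\ge 0$ for every $i,j$. Expanding the quadratic $p(x)=\sum_{ij}p_{ij}x_ix_j$ around the first moment yields the identity
\[
p(x^*) \;=\; \tE_{\tilde\mu}[p(x)] \;-\; \sum_{i,j} p_{ij}\,\mathrm{Cov}_{\tilde\mu}(x_i,x_j) \;\le\; t + \sum_{i,j} p_{ij}\bigabs{\mathrm{Cov}_{\tilde\mu}(x_i,x_j)}.
\]
I would then split the index pairs according to whether $x^*_ix^*_j \ge \sqrt{\zeta}$ (``large'' pairs) or not (``small'' pairs). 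For a large pair, near-independence gives $\tE[x_ix_j] \ge x^*_ix^*_j - \zeta' \ge \tfrac{1}{2}\sqrt{\zeta}$; combined with the term-wise bound $p_{ij}\tE[x_ix_j]\le t$, this yields $\sum_{\text{large}} p_{ij} \le 2t/\sqrt{\zeta}$, so the large pairs contribute at most $2\sqrt{\zeta}\cdot t$ to the covariance sum. For a small pair, the negative part of $\mathrm{Cov}(x_i,x_j)$ is bounded by $x^*_ix^*_j < \sqrt{\zeta}$ directly (using $\tE[x_ix_j]\ge 0$), while the positive part is bounded by $\zeta' \le \sqrt{\zeta}$.

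The main obstacle is controlling the small-pair contribution by $O(n\sqrt{\zeta})\cdot t$, which is what is needed to match the prefactor $5n$ in the conclusion. The pseudo-expectation alone does not constrain $\sum_{\text{small}} p_{ij}$, since $\tE[x_ix_j]$ can vanish on these pairs, so I would leverage both hypotheses on $P$: assumption~(ii) implies that for every extreme point $z$ of $P$ with support $S(z)$, nonnegativity of the $p_{ij}$ yields $p(z) \ge \tau^2\sum_{i,j\in S(z)}p_{ij}$, while (i) together with $\|z\|\le 1$ bounds the number of simultaneously active coordinates of any such $z$ by $\tau^{-2}$. Averaging these extreme-point estimates against a Birkhoff-style decomposition of $x^*$ should yield $\sum_{\text{small}}p_{ij}\lesssim (n/\tau^2)\cdot t/\sqrt{\zeta}$, which together with the choice $\zeta' = \tau^4\zeta/n^2$ produces the claimed additive slack $5n\sqrt{\zeta}\cdot t$. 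The delicate part is making this small-pair argument go through uniformly over all polytopes satisfying (i)--(ii), and this is where I expect the main technical difficulty of the proof to lie.
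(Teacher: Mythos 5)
Your approach correctly identifies the reweighing lemma (\cref{theorem:pseudo-distribution-polytopes}) as the main tool and gets the covariance estimates right, but the decision to round to $x^*=\tE_{\tilde\mu}[x]$ directly---without truncation---is exactly what the paper avoids, and the ``small-pair'' patch you propose cannot close the gap.

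The structural problem: the only constraint the pseudo-distribution supplies on $p$ is the aggregate $\tE_{\tilde\mu}[p(x)]=\sum_{ij}p_{ij}\tE_{\tilde\mu}[x_ix_j]\le t$. For pairs where $\tE_{\tilde\mu}[x_ix_j]$ vanishes (or is tiny), this inequality is vacuous for $p_{ij}$, so there is no hook by which to bound $\sum_{\text{small}}p_{ij}$. Your proposed remedy would need to convert $\tE_{\tilde\mu}[p(x)]\le t$ into a bound on $p(z)$ at individual extreme points $z$ of the Carath\'eodory decomposition, but no such implication holds: the expectation of $p$ can be $0$ while $p$ is large at every point outside the support. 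A concrete falsifier: take $P=\{x\in\R_+^n:\sum_i x_i=1\}$ (so $\tau=1$ and $P\subseteq\cB_n$), $p(x)=x_1x_2$, $t=0$, and the \emph{actual} distribution $\tilde\mu$ placing mass $1-\delta$ on $e_1$ and mass $\delta/2$ on each of $e_2,e_3$ with $\delta:=\zeta'$. This $\tilde\mu$ satisfies $\{p(x)\le0\}$ and has every $|\tE[x_ix_j]-\tE[x_i]\tE[x_j]|\le\zeta'$, yet $p(\tE_{\tilde\mu}[x])=(1-\delta)\delta/2>0=(1+5n\sqrt\zeta)\,t$, so your $x^*$ violates the conclusion. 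The same example disproves the conjectured inequality $\sum_{\text{small}}p_{ij}\lesssim(n/\tau^2)\,t/\sqrt\zeta$ (the left side is $1$, the right side is $0$).

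The missing idea is the truncate-and-renormalize step. The paper writes $\tE_\mu[x]=\sum_{z\in S}\mu(z)z$ by Carath\'eodory with $|S|\le n+1$, discards the contributions from vertices with small weight $\mu(z)$, and rescales the remaining sum by $1/r$ with $r\in[1-O(n\sqrt\zeta),1]$ so that $x^*:=\tilde x/r$ still lies in $P$. By hypothesis~(ii), every surviving coordinate of $\tilde x$ is $\ge\Omega(\tau\sqrt\zeta)$, so for every pair $(i,j)\in\cQ_1\times\cQ_1$ (the support of $x^*$) the covariance estimate upgrades the additive bound $\tE[x_i]\tE[x_j]\le\tE[x_ix_j]+\tau^2\zeta^2$ to the multiplicative one $\tE[x_i]\tE[x_j]\le(1+\zeta)\tE[x_ix_j]$. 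Nonnegativity of the $p_{ij}$ then lets one drop all pairs outside $\cQ_1\times\cQ_1$ for free, and the $1/r^2$ renormalization contributes the remaining $O(n\sqrt\zeta)$ factor. In short, truncation is not a refinement of your route but a replacement for it: it ensures that the ``small pairs'' simply do not appear in $\supp(x^*)\times\supp(x^*)$, so there is nothing left to control.
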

\begin{proof}
	We assume without loss of generality that the constant coefficient in $p$ is zero. 
	By \cref{theorem:pseudo-distribution-polytopes} and choice of $\ell$, we may assume there exists a level-$\geq 2$ pseudo-distribution $\mu$ over $P$ satisfying 		
	\cref{eq:pseudo-distribution-quadratic-polynomial-constraint} and such that, for all $i,j$
	\begin{align}\label{eq:bound-covariance}
		\Abs{\tilde{\E}_\mu \Brac{x_ix_j}-\tilde{\E}_\mu\Brac{x_i}\tilde{\E}_\mu \Brac{x_j}}\leq \tau^2\zeta^2\,.
	\end{align}

	By Carathéodory's theorem, we have $\tilde{\E}_\mu \Brac{x}=\sum_{z\in S}\mu(z) z $ for some $S\subseteq V(P)$ with size at most $n+1$. 
	Let $\tilde{x}$ be obtained from zeroing out coefficients $\mu(z)$ which are smaller than $2\tau\sqrt{\zeta}$. 
	Then we have $\tilde{x}_i\leq \tilde{\E}_\mu \Brac{x_i}$, and $\tilde{x}_i=0$ if $\tilde{\E}_\mu \Brac{x_i}\leq 2\tau\sqrt{\zeta}$.
	Moreover for non-zero entries of $\tilde{x}$, we have $\tilde{x}_i\geq 2\tau\sqrt{\zeta}$.
	By choice of $\zeta$, for some $r\in \Brac{1-2n\tau \sqrt \zeta,1 }$ we have $x^*:=\frac{\tilde{x}}{r}\in P\,.$ 
	
	Define $\cQ_1$ to be the set of indices in $[n]$ such that $\tilde{x}_i\geq 2\tau\sqrt{\zeta}$.
	Similarly, let $\cQ_2$ be the set of pairs $(i,j)$ such that $\tilde{\E}_\mu \Brac{x_ix_j}\leq \tau^2\zeta$. 
	Observe that by \cref{eq:bound-covariance} we have $\cQ_1\times \cQ_1 \subseteq \cQ_2$ as $\tau^2\zeta+\tau^2\zeta^2<4\tau^2\zeta\,.$
	
	Hence, putting things together
	\begin{align*}
		\sum_{i,j\in [n]} p_\ij x^*_ix^*_j&= 	\sum_{\substack{i\in \cQ_1\\j \in \cQ_1}} p_\ij x^*_ix^*_j\\
		&\leq \frac{1}{r^2}	\sum_{\substack{i\in \cQ_1\\j \in \cQ_1}} p_\ij  \tilde{\E}_\mu\Brac{x_i}\tilde{\E}_\mu\Brac{x_j}\\
		&\leq \frac{1}{r^2}	\sum_{\substack{i\in \cQ_1\\j \in \cQ_1}} p_\ij  \Paren{\tilde{\E}_\mu \Brac{x_ix_j}+\tau^2\zeta^2}\\
		&\leq  \frac{1}{r^2}	\sum_{\substack{(\ij)\in \cQ_2}} p_\ij  \Paren{\tilde{\E}_\mu \Brac{x_ix_j}+\tau^2\zeta^2}\\
		&\leq  \frac{1+\zeta}{r^2}\sum_{\substack{(\ij)\in \cQ_2}}p_\ij \tilde{\E}_\mu \Brac{x_ix_j}\\
		&\leq \frac{1+\zeta}{r^2}\sum_{i,j \in[n]}p_\ij\tilde{\E}_\mu  \Brac{x_ix_j}\\
		&\leq (1+5n\sqrt \zeta)\sum_{i,j \in[n]}p_\ij\tilde{\E}_\mu  \Brac{x_ix_j}\,.
	\end{align*}
\end{proof}

\paragraph{Additive sos approximation over convex polytopes}
We remark that \cref{theorem:pseudo-distribution-polytopes} also shows how sum-of-squares can be used to certify an additive approximation for polynomial optimization over polytopes. We include the result for completeness although we emphasize that we do not use it anywhere.

\begin{corollary}[Sos additive approximation over polytopes]\label{corollary:sos-additive-approx-polytope}
	Let $C>0$ be a constant, let $1/2> \zeta>n^{-C}\,, \tau>0$ and let $\ell\geq 2+(n/\zeta)\cdot (\log n)^{C'}$, for some $C'$ depending only on $C$.
	Let $p(x)$ be a quadratic polynomial and let $P$ be a convex polytope in $\cB_n\,.$ 
	Suppose  there exists a level-$\ell$ pseudo-distribution over $P$ satisfying the constraint
	\begin{align}\label{eq:pseudo-distribution-quadratic-polynomial-constraint-2}
		\Set{p(x)\leq t}\,.
	\end{align}
	Then there exists $x^*\in P$ satisfying
	\begin{align*}
		p(x^*)\leq t+\zeta\Normo{p}\,.
	\end{align*}
\end{corollary}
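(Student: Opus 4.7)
The plan is to apply \cref{theorem:pseudo-distribution-polytopes} directly to the polynomial system $\cA$ consisting of the (linear) constraints defining $P$ together with the quadratic constraint $\Set{p(x) \leq t}$, and then simply output the vector of pseudo-expectations $x^*_i \seteq \tilde{\E}_\mu\Brac{x_i}$ coming from the resulting low-covariance pseudo-distribution $\mu$.

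More precisely, since by hypothesis there exists a level-$\ell$ pseudo-distribution satisfying $\cA$, the system $\cA$ (which has maximum degree $q = 2$) is sos-consistent up to level $\ell$. \cref{theorem:pseudo-distribution-polytopes} with accuracy parameter $\zeta$ then yields a pseudo-distribution $\mu$ of level at least $2$ that still satisfies $\cA$ and in addition obeys $\Abs{\tilde{\E}_\mu\Brac{x_i x_j} - \tilde{\E}_\mu\Brac{x_i}\tilde{\E}_\mu\Brac{x_j}} \leq \zeta$ for all $i,j \in [n]$. Since $P$ is cut out by linear inequalities and $\mu$ satisfies them, linearity of pseudo-expectation immediately gives $x^* \in P$.

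It remains to bound $p(x^*)$. Writing the quadratic part of $p$ as $\sum_{i,j} p_{ij} x_i x_j$, the linear and constant parts of $p$ contribute identically to $p(x^*)$ and to $\tilde{\E}_\mu\Brac{p(x)}$, so
\[
  p(x^*) - \tilde{\E}_\mu\Brac{p(x)} = \sum_{i,j} p_{ij} \Bigparen{\tilde{\E}_\mu\Brac{x_i}\tilde{\E}_\mu\Brac{x_j} - \tilde{\E}_\mu\Brac{x_i x_j}}\,,
\]
which is at most $\zeta \sum_{i,j} \Abs{p_{ij}} \leq \zeta \Normo{p}$ in absolute value by the covariance bound. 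Since $\mu$ satisfies $\Set{p(x) \leq t}$, we have $\tilde{\E}_\mu\Brac{p(x)} \leq t$, and hence $p(x^*) \leq t + \zeta \Normo{p}$.

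The work is essentially all done by \cref{theorem:pseudo-distribution-polytopes}; the only thing to verify beyond its conclusion is that $x^* \in P$, for which we crucially use that a convex polytope is cut out by \emph{linear} inequalities, so that the pseudo-expectation of a feasible pseudo-distribution is a genuine feasible point of $P$. Unlike the multiplicative case of \cref{corollary:sos-multiplicative-approx-polytope}, no Carathéodory decomposition or thresholding of small-weight extreme points is required, which is what makes the additive bound clean and the argument very short.
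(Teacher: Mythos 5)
Your proof is correct and follows essentially the same route as the paper: invoke \cref{theorem:pseudo-distribution-polytopes} to obtain a low-covariance pseudo-distribution over $P$ satisfying $\Set{p(x)\leq t}$, set $x^*=\tilde{\E}_\mu[x]$, and bound the quadratic gap $\abs{p(x^*)-\tilde{\E}_\mu[p(x)]}$ by $\zeta\Normo{p}$ using the entrywise covariance bound. Your write-up is in fact slightly tidier than the paper's, which also notes the (unnecessary and not generally valid) reverse-Jensen inequality $\tilde{\E}_\mu[p(x)]\le p(\tilde{\E}_\mu[x])$, and you explicitly justify $x^*\in P$ via linearity, which the paper leaves implicit.
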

\begin{proof}
  Using \cref{theorem:pseudo-distribution-polytopes}, let $\mu$ be a pseudo-distribution over $P$ satisfying
  \begin{align*}
    \Set{p(x)\leq t}\,,
  \end{align*}
  and such that
  \begin{align*}
		\sum_{\ij\in[n]} 	\Abs{\tilde{\E}_\mu\Brac{p_\ij x_i x_j}-p_\ij \tilde{\E}_\mu\Brac{x_i}\tilde{\E}_\mu\Brac{x_j}}\leq \zeta \cdot \Normo{p}\,.
  \end{align*}
 Notice such pseudo-distribution must exist by assumption on $\ell$.
  Now,  picking $x^*=\tilde{\E}\brac{x}$, we get
  \begin{align*}
    \tilde{\E}\Brac{p(x)}\leq p(\tilde{\E}\brac{x})  \leq \tilde{\E}\Brac{p(x)} +\zeta \cdot \Normo{p}
  \end{align*}
  concluding the argument as desired.
\end{proof}

\paragraph{Proof of pseudo-distribution reweighing}
We now prove the main theorem of the section, our argument leverages a statement appearing  in ~\cite{barak2017quantum}.

\begin{proof}[Proof of \cref{theorem:pseudo-distribution-polytopes}]
	Let $\mu$ be a pseudo-distribution satisfying $\cA(x)$.
	We claim that the result follows if (dropping the subscript $\mu$)
	\begin{align}\label{eq:polytope-reweighted}
		\tilde{\E}\Brac{\Snormt{x-\tilde{\E} \brac{x}}}\leq \zeta/n\,.
	\end{align} 
	Indeed, observe that for any $i,j\in[n]$
	\begin{align*}
		\Abs{\tilde{\E}\Brac{ x_i x_j}- \tilde{\E}\Brac{x_i}\tilde{\E}\Brac{x_j}}&=\Abs{\tilde{\E}\Brac{x_i x_j}-\tilde{\E}\Brac{x_i}\tilde{\E}\Brac{x_j}}\\
		&= \Abs{\tilde{\E}\Brac{\Paren{x_i-\tilde{\E}\brac{x_i}}\Paren{x_j-\tilde{\E}\brac{x_j}}}}\\
		&\leq \tilde{\E}\Brac{\Paren{x_i-\tilde{\E}\Brac{x_i}}^2}^{1/2}\tilde{\E}\Brac{\Paren{x_j-\tilde{\E}\Brac{x_j}}^2}^{1/2}\\
		&\leq \Paren{\sum_{a\,,b\in [n]}\tilde{\E}\Brac{\Paren{x_a-\tilde{\E}\Brac{x_a}}^2}^{1/2}\tilde{\E}\Brac{\Paren{x_b-\tilde{\E}\Brac{x_b}}^2}^{1/2}}\\
		&=\Paren{\sum_{a}\tilde{\E}\Brac{\Paren{x_a-\tilde{\E}\Brac{x_a}}^2}^{1/2}}^2\\
		&\leq  n \cdot  \sum_{a}\tilde{\E}\Brac{\Paren{x_a-\tilde{\E}\Brac{x_a}}^2}\\
		&\leq  n \cdot  	\tilde{\E}\Brac{\Snormt{x-\tilde{\E} \brac{x}}}\\
		&\leq \zeta\,.
	\end{align*}
	where in the third step we used Cauchy-Schwarz for pseudo-distributions \cref{fact:pd-cauchy-schwarz}.
	
	It remains to show a pseudo-distribution satisfying $\cA(x)$ and \cref{eq:polytope-reweighted} exists.
	Let $\mu'$ be a level-$\ell$ pseudo-distribution satisfying $\cA(x)$ but not \cref{eq:polytope-reweighted}.
	Note then  it must be $\E_{\mu'}\Brac{\Snormt{x}}\geq n^{-C}$ as otherwise the inequality is satisfied by choice of $\zeta$.
	By Lemma 7.1 in~\cite{barak2017quantum} (see restatement in \cref{lem:bks-sos-reweighing}) and its direct corollary \cref{cor:bks-sos-reweighing}, we can always reweight $\mu'$ to obtain a level-$\Paren{\ell-(n/\zeta)\cdot (\log n)^{C'}}$ pseudo-distribution over $P$ satisfying $\cA(x)$ and such that \cref{eq:polytope-reweighted} holds.
\end{proof}

\section{Lipschitz extensions within sum-of-squares}
\label{sec:lip_sos}

In this section, we demonstrate how to incorporate Lipschitz extensions into our sum-of-squares framework for private graphon estimation.
To this end, we first set up the framework for Lipschitz extensions in the context of differential privacy.

\paragraph{Lipschitz extensions as a privacy tool}

Lipschitz extensions are basic mathematical objects that can be defined on general metric space (see the lecture note by Naor \cite{lipschitz2015note} for more background).
Here we focus on the space of graphs (equivalently, adjacency matrices) equipped with node distance\footnote{Recall the node distance between two $n$-vertex graphs $G,G'$ is the minimum number of vertices of $G$ that need to be rewired to obtain $G'$.} $\dnode$.
Let $\bbG_{n,d}$ denote the set of adjacency matrices of $n$-vertex graphs with maximum degree at most $d$, and let $\bbG_{n}$ be the abbreviation of $\bbG_{n,n}$.
A graph function $f:\bbG_n\to\R$ has a Lipschitz constant $c$ if $\abs{f(G)-f(G')} \le c\cdot\dnode(G,G')$ for all $G,G'\in\bbG_n$.
In differential privacy, the smallest Lipschitz constant of $f$ is also known as the sensitivity of $f$.

Functions with lower sensitivity can be approximated more accurately by private algorithms.
For example, the Laplace mechanism releases a private approximation of $f(G)$ by adding Laplace noise proportinal to the sensitivity of $f$ to $f(G)$.
However, there are many cases where the function of interest can only have low sensitivity when its input space is restricted to a subset $\bbG'_n$ of $\bbG_n$.
For example, the sensitivity of the number of edges in a graph is $d$ when restricted to $\bbG_{n,d}$.
Given a function $f:\bbG_{n}'\to\R$ with low sensitivity $\Delta'$, if we have a Lipschitz extension $\hat{f}$ of $f$ to $\bbG_n$, then we can add noise proportinal to this low sensitivity $\Delta'$ to $\hat{f}(G)$ and guarantee privacy on all graphs in $\bbG_n$. Moreover, we can still approximate $f(G)$ accurately for $G\in\bbG_n'$, as $\hat{f}(G)=f(G)$ for $G\in\bbG_n'$.

\paragraph{Towards efficient Lipschitz extensions}

Lipschitz extensions always exist for real-valued functions~\cite{mcshane1934extension}.
However, in general it is not known if these extensions are efficiently computable even when $f$ is efficiently computable.
There is a line of work on constructing efficient Lipschitz extensions for designing node-DP algorithms~\cite{blocki2013differentially,nodeDP,raskhodnikova2015efficient,borgs2015private,MR3631012,borgs2018revealing,kalemaj2023node}.
Let us first abstract a common thread in this line of work.
Many interesting statistics of graphs can be formulated as the optimal value of an optimizaiton problem, i.e.,
\begin{equation}
  \label{eq:lip_sos-f}
  f(A) = \max_{x\in\cX} s(x;A) \,,
\end{equation}
where $A$ is the adjacency matrix of input graph, $x=(x_i)_{i\in[m]}$ are auxiliary variables, $\cX\subseteq\R^m$ is the feasible set.
For objective functions $s(x;A)$ that are linear and nondecreasing in $A$ after fixing $x\in\cX$, previous work~\cite{nodeDP,borgs2015private,borgs2018revealing} constructed a Lipschitz extension $\hat{f}$ of $f$ from $\bbG_{n,d}$ to $\bbG_{n}$ as follows,
\begin{equation}
  \label{eq:lip_sos-f_extension}
  \hat{f}(A) = \max_{C} f(C) \text{ such that } 
  \left\{
    \begin{array}{l}
      C \in [0,1]^{n\times n} \text{ is symmetric} \,,   \\
      0 \leq C_{ij} \leq A_{ij} \text{ for all } i,j\in[n] \,, \\
      \sum_{j \neq i} C_{ij} \leq d \text{ for all } i\in[n] \,. 
    \end{array}
  \right.
\end{equation}
It is not difficult for verify that
\begin{itemize}
\item $\max_{A\sim A' \in \bbG_n} \abs{\hat{f}(A) - \hat{f}(A')} = \max_{A\sim A' \in \bbG_{n,d}} \abs{f(A) - f(A')}$ ;
\item $\hat{f}(A) = f(A)$ for all $A\in\bbG_{n,d}$ .
\end{itemize}
To see $\hat{f}:\bbG_n\to\R$ has the same sensitivity as $f:\bbG_{n,d}\to\R$, we observe that for every fixed $x\in\cX$, the optimizaiton problem in \cref{eq:lip_sos-f_extension} is a linear programing.
Since the constraint set is a 0/1 polytope, the maximum is achieved by some 0/1 solution $C\in\bbG_{n,d}$.
Suppose $A,A'$ differ in the first row and column, we can find a feasible $C'\in\bbG_{n,d}$ for $A'$ by zeroing out the first row and column of $C$.

\subsection{Simulating Lipschitz extensions within sum-of-squares}

Now we show how to simulate the above Lipschitz extension \cref{eq:lip_sos-f_extension} within sos.
To fit into the sum-of-squares framework, we need to work with polynomials.
We assume $s(x,C)$ is a polynomial in variables $x$ and $C$.
When we substitue for $C$ a concrete assignment $A$, we use $s(x;A)$ to denote the resulting polynomial in variables $x$ with coefficients depending on $A$.
We also assume the feasible set $\cX\subseteq\R^m$ can be encoded by a polynomial system, i.e.,
\[
  \cX = \Set{x\in\R^m : x \text{ satisfies } \cP_1(x)}
\]
where 
\begin{equation}
  \cP_1(x) \seteq \set{p_1(x)\ge0, \dots, p_k(x)\ge0}  
\end{equation}
for some polynomials $p_1, \dots, p_k$.

We define the sum-of-squares relaxtion of function $f$ in \cref{eq:lip_sos-f} as
\begin{equation}
  \label{eq:lip_sos-f_sos}
  \fsos(A) \seteq \max t \text{ s.t. $\exists$ level-$\ell$ pseudo-distribution } 
  \mu \sdtstile{}{x} \Set{s(x;A)\ge t} \cup \cP_1(x) \,.
\end{equation}
For a graph $A\in\bbG_n$ and a degree bound $d\in[n]$, define a polynomial system
\begin{equation}
  \cP_2(C;A,d) \seteq \Set{0 \le C \le A, C\Ind \le d\cdot\Ind, C=\transpose{C}} \,,
\end{equation}
which encodes weighted subgraphs of $A$ with weighted maximum degree at most $d$.
Then we can simulate the Lipschitz extension in \cref{eq:lip_sos-f_extension} within sos as follows,\footnote{We say a level-$\ell'$ pseudo-distribution satisfies a polynomial system in variables $x,C$ with the additional condition that $\deg(x)\le\ell$, if we only allow polynomials with degree at most $\ell$ in $x$ and total degree at most $\ell'$ when defining the pseudo-distribution.}
\begin{equation}
  \label{eq:lip_sos-f_sos_extension}
	\begin{split}
  \hatfsos(A) \seteq \max t \text{ s.t. $\exists$ }  
    & \text{level-$\ell'$ pseudo-distribution } \\
    & \mu' \sdtstile{\deg(x)\leq\ell}{x,C} \Set{s(x,C)\ge t} \cup \cP_1(x) \cup \cP_2(C;A,d) \,.
	\end{split}
\end{equation}

We remark that $\hatfsos$ is a Lipschitz extension of $\fsos$ from $\bbG_{n,d}$ to $\bbG_{n}$ in the following sense:
\textit{(i)} $\hatfsos$ extends $\fsos$, i.e., $\hatfsos(A) = \fsos(A)$ for all $A\in\bbG_{n,d}$;
\textit{(ii)} the sensitivity proof of $\fsos$ can be carried out in the same way for $\hatfsos$.\footnote{We might not have the most stringent case that $\max_{A\sim A' \in \bbG_n} \abs{\hatfsos(A) - \hatfsos(A')} = \max_{A\sim A' \in \bbG_{n,d}} \abs{\fsos(A) - \fsos(A')}$, but property \textit{(ii)} suffices to design differentially private algorithms.}
In the remaining part of this section, we bound sensitivity of $\hatfsos$ and prove its extension property, in \cref{lem:lip_sos-hatfsos_sensitivity} and \cref{lem:lip_sos-hatfsos_extension} respectively.

\paragraph{Sensitivity}

To bound sensitivity of functions defined in terms of sos-consistency like $\fsos$ and $\hatfsos$, we introduce the following general lemma that relates sos-consistency of two polynomial systems.

\begin{lemma}
  \label{lem:lip_sos-sos_consistency}
  Consider two polynomial systems $\cP(x)$ and $\cQ(x)$ in variables $x=(x_i)$.
  Suppose there exists a linear function $L(x)$ such that for every polynomial inequality $q(x)\ge0$ in $\cQ(x)$, we have $\cP(x) \sststile{\deg(q)}{} q(L(x))$. 
  Then, if $\cP(x)$ is sos-consistent up to level $\ell$, $\cQ(x)$ is also sos-consistent up to level $\ell$.
\end{lemma}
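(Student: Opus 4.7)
The plan is to build a level-$\ell$ pseudo-distribution $\mu'$ satisfying $\cQ(x)$ by pushing forward any given level-$\ell$ pseudo-distribution $\mu$ satisfying $\cP(x)$ along the linear map $L$. Concretely, I would define $\tE_{\mu'}[f(x)] \seteq \tE_\mu[f(L(x))]$ for every polynomial $f$. Because $L$ is linear, precomposition with $L$ preserves degree, so for every polynomial $g$ of degree at most $\ell/2$ the polynomial $g \circ L$ again has degree at most $\ell/2$. This yields $\tE_{\mu'}[g(x)^2] = \tE_\mu[(g\circ L)(x)^2] \ge 0$ and $\tE_{\mu'}[1] = 1$, confirming that $\mu'$ is indeed a level-$\ell$ pseudo-distribution.

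Next I would verify that $\mu' \sdtstile{}{} \cQ(x)$. Unpacking the definition, this amounts to showing that for every multiset of constraints $q_1,\dots,q_s$ from $\cQ(x)$ and every sos polynomial $h$ with $\deg(h)+\sum_j \deg(q_j)\le \ell$,
\[
  \tE_{\mu'}\Bigl[h(x)\prod_j q_j(x)\Bigr] = \tE_\mu\Bigl[h(L(x))\prod_j q_j(L(x))\Bigr] \ge 0\,.
\]
By hypothesis, each $q_j(L(x))$ admits an sos proof from $\cP(x)$ of degree at most $\deg(q_j)$, i.e., an identity $q_j(L(x)) = \sum_T p_{T,j}(x)\prod_{i\in T} f_i(x)$ with sos polynomials $p_{T,j}$, $\cP$-constraints $f_i$, and $\deg(p_{T,j}) + \sum_{i\in T}\deg(f_i) \le \deg(q_j)$. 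Since $L$ is linear, $h\circ L$ also remains sos, of degree $\deg(h)$. Substituting the expansions rewrites the integrand as a non-negative combination of terms of the form (sos polynomial) times (product of $\cP$-constraints), and applying $\mu \sdtstile{}{} \cP(x)$ termwise then yields the desired non-negativity.

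The only delicate point will be the degree bookkeeping: after substituting the sos proof of each $q_j(L(x))$ into the product, the total degree of each resulting summand must still be at most $\ell$ in order for $\mu$ to certify its pseudo-expectation to be non-negative. This works out because $L$ is degree-preserving and each sos proof of $q_j(L(x))$ has degree at most $\deg(q_j)$; hence the total degree of every summand is bounded by $\deg(h) + \sum_j \deg(q_j) \le \ell$, matching exactly the level of $\mu$. This tightness is precisely why the hypothesis demands the bound $\cP(x)\sststile{\deg(q)}{} q(L(x))$ rather than any looser degree.
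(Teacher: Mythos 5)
Your proposal is correct and takes exactly the same route as the paper: push $\mu$ forward along $L$ to define $\tE_{\mu'}[f(x)] = \tE_\mu[f(L(x))]$, then verify the constrained-pseudo-distribution condition. The paper leaves the verification as "routine work"; your termwise expansion of each $q_j(L(x))$ via its degree-$\deg(q_j)$ sos proof, combined with the observation that $h\circ L$ remains sos of the same degree, fills in that routine step with the correct degree bookkeeping.
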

\begin{proof}
  Suppose $\cP(x)$ is sos-consistent up to level $\ell$, which is witnessed by pseudo-distribution $\mu$.
  In the following, we construct a pseudo-distribution $\mu'$ that witnesses the level-$\ell$ sos-consistency of $\cQ(x)$. 
  For each polynomial $q(x)$ of degree at most $\ell$, let $\tE_{\mu'} q(x) = \tE_{\mu} q(L(x))$.
  Note $\tE_{\mu} q(L(x))$ is well-defined as $L$ is linear and thus $q(L(x))$ is a polynomial of degree at most $\ell$.
  Then it is a routine work to verify $\mu' \sdtstile{}{} \cQ(x)$. %
\end{proof}

We remark that the polynomial systems in \cref{eq:combined_poly_sys} satisfy the abstract property assumed in \cref{lem:lip_sos-sos_consistency}, as shown in \cref{lem:abstractsensitivity}.

As a corollary of \cref{lem:lip_sos-sos_consistency}, we can bound sensitivity of functions defined in terms of sos-consistency.

\begin{corollary}
  \label{coro:lip_sos-meta_sensitivity}
  Let $\cY$ be a space of datasets, $\ell\in\N$, and $\Delta\ge0$.
  Consider a function $h:\cY\to\R$ defined as
  \[
    h(Y) \seteq \max t \text{ s.t. $\cP(x;Y,t)$ is sos-consistent up to level $\ell$,} 
  \]
  where $\cP(x;Y,t)$ is a polynomial system in variables $x=(x_i)$.
  Suppose for every $t$ and every pair of neigboring datasets $Y,Y'$,
  there exists a linear function $L(x)$ such that any polynomial inequality $p(x)\ge0$ in $\cP(x;Y',t-\Delta)$ has an sos proof $\cP(x;Y,t) \sststile{\deg(p)}{} p(L(x))$.
  Then, 
  \[
    \max_{Y\sim Y'\in\cY} \Abs{h(Y)-h(Y')} \le \Delta \,.
  \]
\end{corollary}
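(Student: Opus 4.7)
The plan is to deduce this corollary as a direct reduction to \cref{lem:lip_sos-sos_consistency}. Fix neighboring datasets $Y,Y'\in\cY$ and set $t_* = h(Y)$, so that $\cP(x;Y,t_*)$ is sos-consistent up to level $\ell$ (using the fact that the max defining $h(Y)$ is attained; otherwise we can work with $t_*$ slightly below $h(Y)$ and take a limit at the end). By the hypothesis applied with $t = t_*$ and the pair $(Y, Y')$, there is a linear function $L(x)$ such that every polynomial inequality $p(x)\ge 0$ in the target system $\cQ(x) \seteq \cP(x;Y', t_*-\Delta)$ admits an sos-derivation $\cP(x;Y,t_*) \sststile{\deg(p)}{} p(L(x))$. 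This is precisely the hypothesis of \cref{lem:lip_sos-sos_consistency} applied with $\cP(x) \seteq \cP(x;Y,t_*)$ and $\cQ(x)$ as above.

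Invoking \cref{lem:lip_sos-sos_consistency} yields that $\cP(x;Y',t_*-\Delta)$ is sos-consistent up to level $\ell$, which by the definition of $h$ gives $h(Y')\ge t_*-\Delta = h(Y)-\Delta$. To close the inequality, I would swap the roles of $Y$ and $Y'$. The adjacency relation on datasets is symmetric (node-adjacency between graphs is symmetric in our setting), and the hypothesis is quantified over every pair of neighboring datasets, so the same argument applied to the pair $(Y', Y)$ gives a (possibly different) linear map $L'$ witnessing that $h(Y)\ge h(Y')-\Delta$. Combining the two bounds yields $\Abs{h(Y)-h(Y')}\le\Delta$, which is the claim.

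The proof is essentially a one-line reduction; I do not expect a real obstacle. The only subtlety worth flagging is the question of whether the supremum defining $h$ is attained. If one does not want to assume this, one picks an arbitrary $t_* < h(Y)$ with $\cP(x;Y,t_*)$ sos-consistent, applies the same reasoning to conclude $h(Y')\ge t_*-\Delta$, and then takes the supremum over such $t_*$ to obtain $h(Y')\ge h(Y)-\Delta$. A second minor point: the degree bookkeeping must match — since $\cP(x;Y,t_*)$ is sos-consistent up to level $\ell$ and the derivations $\cP(x;Y,t_*) \sststile{\deg(p)}{} p(L(x))$ respect $\deg(p)$, the constructed pseudo-distribution on $\cQ(x)$ (via pushforward through $L$ as in the proof of \cref{lem:lip_sos-sos_consistency}) is also of level $\ell$, not a smaller level. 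This matches the requirement that the transfer of sos-consistency happens at the same level $\ell$ in both directions, so no privacy/utility tradeoff is lost.
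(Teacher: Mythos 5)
Your proof is correct and takes essentially the same route as the paper: reduce to \cref{lem:lip_sos-sos_consistency} to transfer sos-consistency from $\cP(x;Y,t)$ at level $\ell$ to $\cP(x;Y',t-\Delta)$ at the same level, conclude $h(Y')\ge h(Y)-\Delta$, and symmetrize. The extra remarks on attainment of the supremum and on level bookkeeping are sound refinements that the paper leaves implicit but do not change the argument.
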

\begin{proof}
  Fix an arbitrary pair of neigboring datasets $Y,Y'$.
  Due to symmetry of $Y$ and $Y'$, it suffices to prove $h(Y') \ge h(Y)-\Delta$.  
  Suppose $h(Y)=t$. Then $\cP(x;Y,t)$ is sos-consistent up to level $\ell$.
  By \cref{lem:lip_sos-sos_consistency}, $\cP(x;Y',t-\Delta)$ is also sos-consistent up to level $\ell$.
  Thus $h(Y') \ge t-\Delta = h(Y)-\Delta$.
\end{proof}

Then it is straightforward to bound the sensitivity of $\hatfsos$ using \cref{coro:lip_sos-meta_sensitivity}.

\begin{lemma}[Sensitivity of $\hatfsos$]
  \label{lem:lip_sos-hatfsos_sensitivity}
  Consider function $\hatfsos$ defined in \cref{eq:lip_sos-f_sos_extension}.
  Let $\Delta\ge0$. Suppose for every $t$ and every pair of node-adjacent graphs $A,A'\in\bbG_n$, there exist linear functions $L_1(x)$ and $L_2(C)$ such that for every polynomial inequality $p(x,C)\ge0$ in $\Set{s(x,C)\ge t-\Delta} \cup \cP_1(x) \cup \cP_2(C;A',d)$, we have
  \begin{equation*}
    \Set{s(x,C)\ge t} \cup \cP_1(x) \cup \cP_2(C;A,d) 
    \sststile{\deg(p)}{}
    p(L_1(x),L_2(C)) \ge 0 \,.
  \end{equation*}
  Then 
  \[
    \max_{A\sim A' \in \bbG_n} \abs{\hatfsos(A) - \hatfsos(A')} \le \Delta \,.
  \]
\end{lemma}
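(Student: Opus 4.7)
The plan is to reduce this statement to the general sensitivity framework captured by \cref{lem:lip_sos-sos_consistency} (or equivalently \cref{coro:lip_sos-meta_sensitivity}), applied with the variable vector $(x, C)$ in place of $x$, and with the datasets parameterized by $A \in \bbG_n$. By the symmetry of the role of $A$ and $A'$, it will suffice to show $\hatfsos(A') \ge \hatfsos(A) - \Delta$ for every pair of node-adjacent graphs $A, A' \in \bbG_n$.

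First I would unpack the definition of $\hatfsos(A) = t$: there is a level-$\ell'$ pseudo-distribution $\mu$ in variables $(x, C)$, with $\deg(x) \le \ell$, that satisfies the polynomial system $\{s(x,C) \ge t\} \cup \cP_1(x) \cup \cP_2(C; A, d)$. Following the construction in the proof of \cref{lem:lip_sos-sos_consistency}, I would define a new pseudo-distribution $\mu'$ on variables $(x, C)$ by
\[
\tE_{\mu'}\, q(x, C) \seteq \tE_{\mu}\, q\bigparen{L_1(x), L_2(C)}
\]
for every polynomial $q$ of degree at most $\ell'$ with degree at most $\ell$ in $x$. Since $L_1$ and $L_2$ are linear, $q(L_1(x), L_2(C))$ has the same total degree as $q$ and the same degree in $x$ as $q$, so $\mu'$ is well defined as a level-$\ell'$ object respecting the $\deg(x) \le \ell$ constraint, and $\mu'$ inherits pseudo-non-negativity on squares from $\mu$ (since $q^2 \circ L = (q \circ L)^2$).

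Next I would verify that $\mu'$ satisfies the target polynomial system $\{s(x,C) \ge t - \Delta\} \cup \cP_1(x) \cup \cP_2(C; A', d)$: for any constraint $p(x, C) \ge 0$ in this system (or any product of such constraints multiplied by an sos polynomial), the hypothesis of the lemma yields a degree-$\deg(p)$ sos proof
\[
\{s(x,C) \ge t\} \cup \cP_1(x) \cup \cP_2(C; A, d) \sststile{\deg(p)}{} p\bigparen{L_1(x), L_2(C)} \ge 0,
\]
so by \cref{lem:sos-soundness} applied to $\mu$ we get $\tE_{\mu'}\, p(x, C) = \tE_{\mu}\, p(L_1(x), L_2(C)) \ge 0$. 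Hence $\hatfsos(A') \ge t - \Delta = \hatfsos(A) - \Delta$, and by symmetry we conclude $|\hatfsos(A) - \hatfsos(A')| \le \Delta$.

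The main obstacle I expect is bookkeeping around the hybrid degree bound: $\hatfsos$ is defined with a level-$\ell'$ pseudo-distribution subject to the additional side constraint $\deg(x) \le \ell$, which is not a standard pseudo-distribution in $(x, C)$. I need to check that substituting the linear maps $L_1, L_2$ preserves both constraints simultaneously, and that the sos proofs supplied by the hypothesis remain valid once multiplied (as required in \cref{def:sos-proof}) by sos polynomials of bounded total degree without exceeding the budgets in either variable. Because both $L_1$ and $L_2$ are linear, these degree accountings go through verbatim, so the reduction is clean once the definitions are aligned.
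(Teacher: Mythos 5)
Your proposal is correct and follows essentially the same path the paper takes: the paper's proof is a one-line appeal to \cref{coro:lip_sos-meta_sensitivity}, whose own proof (via \cref{lem:lip_sos-sos_consistency}) is exactly the linear-substitution-of-pseudo-distributions argument you spell out. Your extra care about the hybrid degree bound ($\deg(x)\le\ell$ within a level-$\ell'$ pseudo-distribution in $(x,C)$) is a worthwhile observation, since \cref{coro:lip_sos-meta_sensitivity} is stated for ordinary level-$\ell$ pseudo-distributions and the paper implicitly relies on the linearity of the substitution to preserve both budgets; you make that explicit without changing the argument.
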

\begin{proof}
  By \cref{coro:lip_sos-meta_sensitivity}.
\end{proof}

\paragraph{Extension}

Now we prove the extension property of $\hatfsos$.
Recall $\hat{f}$ in \cref{eq:lip_sos-f_extension} is shown to be an extension of $f$ in \cref{eq:lip_sos-f} by the assumption that $s(x;A)$ is nondecreasing in $A$ after fixing $x$.
Assuming an sos version of this nondecreasing property, we can also show that $\hatfsos$ is an extension of $\fsos$.

\begin{lemma}[$\hatfsos$ extends $\fsos$]
  \label{lem:lip_sos-hatfsos_extension}
  Consider $\fsos$ and $\hatfsos$ defined in  \cref{eq:lip_sos-f_sos} and \cref{eq:lip_sos-f_sos_extension} respectively.
  Suppose for every $t$ that
  \begin{equation}
    \label{eq:lip_sos-nondecreasing}
    \Set{s(x,C) \ge t} \cup \cP_1(x) \cup \cP_2(C;A,d) 
    \sststile{}{x,C}
    \Set{s(x;A) \ge t} \cup \cP_1(x) \,.
  \end{equation}
  Then there exists an $\ell^*\ge\ell$ such that as long as $\ell'\ge\ell^*$, we have $\hatfsos(A) = \fsos(A)$ for all $A\in\bbG_{n,d}$.
\end{lemma}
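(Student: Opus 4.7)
The plan is to prove the two inclusions $\hatfsos(A) \ge \fsos(A)$ and $\hatfsos(A) \le \fsos(A)$ separately for every $A \in \bbG_{n,d}$.

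For $\hatfsos(A) \ge \fsos(A)$, I would lift any level-$\ell$ pseudo-distribution $\mu$ over $x$ witnessing $\fsos(A) \ge t$ to a pseudo-distribution $\mu'$ over $(x,C)$ by taking the ``product'' of $\mu$ with a point mass at $C = A$. Concretely, set $\tE_{\mu'}[q(x,C)] \seteq \tE_\mu[q(x,A)]$ for every polynomial $q$ of total degree at most $\ell'$ and degree in $x$ at most $\ell$. Since $A \in \bbG_{n,d}$, the assignment $C = A$ satisfies $\cP_2(C;A,d)$ as polynomial identities or numerical inequalities, and by construction $s(x,A) = s(x;A)$, so $\mu'$ inherits all required constraints from $\mu$, witnessing $\hatfsos(A) \ge t$.

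For the reverse inequality $\hatfsos(A) \le \fsos(A)$, I would combine the monotonicity hypothesis \cref{eq:lip_sos-nondecreasing} with sos soundness (\cref{lem:sos-soundness}). Let $D$ be the maximum degree of the sos proof promised by \cref{eq:lip_sos-nondecreasing}; crucially, because $t$ appears only as an additive constant in $\{s(x,C) \ge t\}$ and $\{s(x;A) \ge t\}$, the actual work of the proof reduces to sos-deriving $s(x;A) \ge s(x,C)$ from $\cP_2(C;A,d)$, so $D$ depends only on the fixed polynomial $s$ and the systems $\cP_1,\cP_2$, not on $t$ or $A$. Given a level-$\ell'$ pseudo-distribution $\mu'$ witnessing $\hatfsos(A) \ge t$, once $\ell'$ is large enough for \cref{lem:sos-soundness} to apply (taking $\ell^* = \ell \cdot D + D$, or more conservatively $\ell^* = \ell + D$ given that we only need the conclusion in variables $x$), $\mu'$ also satisfies $\{s(x;A) \ge t\} \cup \cP_1(x)$. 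Then I would define the marginal $\mu$ on $x$ by $\tE_\mu[q(x)] \seteq \tE_{\mu'}[q(x)]$ for polynomials $q$ in $x$ of degree at most $\ell$. The mixed level condition on $\mu'$ (overall level $\ell'$, degree in $x$ capped at $\ell$) guarantees $\mu$ is a well-defined level-$\ell$ pseudo-distribution, and since $\{s(x;A) \ge t\} \cup \cP_1(x)$ involves only the variables $x$, the pseudo-expectation inequalities witnessing satisfaction by $\mu'$ descend verbatim to $\mu$, yielding $\fsos(A) \ge t$.

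The main obstacle I anticipate is the degree bookkeeping around the asymmetric condition on $\mu'$ (level $\ell'$ overall, $\deg(x) \le \ell$). In particular, when invoking sos soundness to derive the $x$-only constraint $\{s(x;A) \ge t\}$, I must verify that the derivation only multiplies the hypothesis constraints by sos polynomials whose combined degree (together with the proof's fixed degree-$D$ overhead in $C$) stays within the budget $\ell'$, so that the resulting inequalities are genuinely guaranteed by $\mu'$ and, after marginalization, certify $\mu$ at the required level $\ell$. Once this accounting is done, the definition of $\ell^*$ is routine.
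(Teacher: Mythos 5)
Your proposal mirrors the paper's own proof exactly: the paper also splits the statement into the two inclusions, proves $\hatfsos(A)\ge\fsos(A)$ by lifting a pseudo-distribution in $x$ to one in $(x,C)$ via the substitution $C=A$ (valid since $A\in\bbG_{n,d}$ makes $\cP_2(A;A,d)$ hold), and proves $\hatfsos(A)\le\fsos(A)$ by using the monotonicity hypothesis \cref{eq:lip_sos-nondecreasing} together with sos soundness (\cref{lem:sos-soundness}) and then marginalizing out $C$, noting that the derived constraints are $x$-only so they transfer to the restricted pseudo-distribution. Your extra discussion of why the proof degree $D$ is independent of $t$, and the resulting estimate for $\ell^*$, is a sensible elaboration of the paper's terse "by choosing a large enough $\ell'$," but it does not change the argument.
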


The reason why we might need to take the level $\ell'$ of pseudo-distributions in $\hatfsos$ larger than the level $\ell$ in $\fsos$ is that additional variables $(C_{ij})$ are introduced to $\hatfsos$.
The level $\ell^*$ depends on the degree of the sos proof in \cref{eq:lip_sos-nondecreasing}. The larger the degree of that sos proof, the larger $\ell^*$ needs to be.
However, in cases when $s(x,C)$ and $s(x;A)$ are polynomials of the same degree and the sos proof in \cref{eq:lip_sos-nondecreasing} is equal to the degree of $s(x;A)$, we can take $\ell^*=\ell$.
The polynomial system in \cref{eq:combined_poly_sys} is such an example, as show in \cref{lem:abstractextension}.

\cref{lem:lip_sos-hatfsos_extension} follows directly from the following two lemmas.

\begin{lemma}
  \label{lem:lip_sos-hatfsos_extension_le}
  Consider $\fsos$ and $\hatfsos$ defined in  \cref{eq:lip_sos-f_sos} and \cref{eq:lip_sos-f_sos_extension} respectively.
  Suppose for every $t$ that
  \[
    \Set{s(x,C) \ge t} \cup \cP_1(x) \cup \cP_2(C;A,d) 
    \sststile{}{}
    \Set{s(x;A) \ge t} \cup \cP_1(x) \,,
  \]
  Then there exists an $\ell^*\geq\ell$ such that as long as $\ell'\ge\ell^*$, we have $\fsos(A) \ge \hatfsos(A)$ for any $A\in\bbG_{n,d}$.
\end{lemma}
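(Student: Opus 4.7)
The natural plan is to prove the inequality $\fsos(A) \ge \hatfsos(A)$ by \emph{marginalization}: given a pseudo-distribution that witnesses $\hatfsos(A) \ge t$, we extract from it a pseudo-distribution over the smaller variable set $x$ alone that witnesses $\fsos(A) \ge t$. Concretely, suppose $\hatfsos(A) = t$, so that there is a level-$\ell'$ pseudo-distribution $\mu'$ in variables $(x,C)$ with $\deg(x) \le \ell$ satisfying $\set{s(x,C) \ge t} \cup \cP_1(x) \cup \cP_2(C;A,d)$. Define the marginal $\mu$ on $x$ by $\tE_\mu q(x) \seteq \tE_{\mu'} q(x)$ for every polynomial $q$ in $x$ of degree at most $\ell$.

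First I would verify that $\mu$ is a valid level-$\ell$ pseudo-distribution: normalization is immediate, and for any polynomial $h(x)$ of degree at most $\ell/2$ we have $\tE_\mu h(x)^2 = \tE_{\mu'} h(x)^2 \ge 0$ because $h(x)^2$ is a polynomial in $(x,C)$ of degree in $x$ at most $\ell$, which falls within the admissible class for $\mu'$. Next, I would check that $\mu$ satisfies $\cP_1(x)$: since every constraint in $\cP_1$ depends only on $x$ (and the sos proof witnessing $\mu' \sdtstile{}{} \cP_1$ can be taken in $x$ alone without raising the degree in $x$ above $\ell$), the satisfaction transfers immediately to the marginal.

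The crucial step is to show that $\mu$ satisfies $\set{s(x;A) \ge t}$. By the hypothesis \cref{eq:lip_sos-nondecreasing}, there is a sum-of-squares proof of $s(x;A) \ge t$ from $\set{s(x,C) \ge t} \cup \cP_1(x) \cup \cP_2(C;A,d)$. Let $r$ denote the degree of this sos proof. By the soundness lemma \cref{lem:sos-soundness}, as long as $\ell'$ is large enough relative to $r$ and the degrees of the defining constraints (so that we can apply soundness at level $\ell$ with respect to $x$), the pseudo-distribution $\mu'$ satisfies $s(x;A) \ge t$. Since $s(x;A)$ is a polynomial in $x$ alone, this satisfaction passes down to the marginal $\mu$, completing the argument. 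Choosing $\ell^* \seteq \ell \cdot r + r$ (or any larger value made safe by \cref{lem:sos-soundness}) gives the required threshold on $\ell'$.

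The main obstacle is purely a bookkeeping one: one must carefully track which polynomials are admissible under the restricted degree condition $\deg(x) \le \ell$ used in the definition of $\hatfsos$, and make sure that the sos-derivation from \cref{eq:lip_sos-nondecreasing} can be applied within that class. In particular, the polynomials in $C$ appearing in $\cP_2(C;A,d)$ and in the sos-multipliers must not be required to have bounded degree in $x$ for this to work, which is exactly the reason the definition of $\hatfsos$ bounds only the degree in $x$. With this caveat respected, the argument reduces to invoking \cref{lem:sos-soundness} and reading off the resulting level $\ell^*$.
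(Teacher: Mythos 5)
Your proof is correct and takes essentially the same approach as the paper: marginalize the witnessing pseudo-distribution on $(x,C)$ to a level-$\ell$ pseudo-distribution on $x$, and use the hypothesized sos derivation together with soundness (and a large enough $\ell'$) to show the marginal satisfies $\set{s(x;A)\ge t}\cup\cP_1(x)$. Your write-up just makes explicit the bookkeeping that the paper leaves implicit, including the choice $\ell^* = \ell\cdot r + r$.
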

\begin{proof}
  Fix an arbitrary $A\in\bbG_{n,d}$.
  Suppose $\hatfsos(A)=t$ witnessed by a degree-$\ell'$ pseudo-distribution $\mu' \sdtstile{\deg(x)\le\ell}{x,C} 
  \set{s(x,C) \ge t} \cup \cP_1(x) \cup \cP_2(C;A,d)$.
  We construct a degree-$\ell$ pseudo-distribution $\mu$ by setting $\tE_\mu p(x) = \tE_{\mu'} p(x)$ for every polynomial $p(x)$ of degree at most $\ell$.
  Using our assumption, we can make $\mu \sdtstile{}{x} \set{s(x;A) \ge t} \cup \cP_1(x)$ by choosing a large enough $\ell'$.
  Thus, $\fsos(A) \ge t = \hatfsos(A)$.  
\end{proof}

\begin{lemma}
  \label{lem:lip_sos-hatfsos_extension_ge}
  Consider $\fsos$ and $\hatfsos$ defined in \cref{eq:lip_sos-f_sos} and \cref{eq:lip_sos-f_sos_extension} respectively.
  For any $\ell'\ge\ell$, we have $\hatfsos(A) \ge \fsos(A)$ for all $A\in\bbG_{n,d}$.
\end{lemma}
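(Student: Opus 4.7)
The plan is to argue that for any $A \in \bbG_{n,d}$, the assignment $C = A$ is feasible for $\cP_2(C; A, d)$, so any pseudo-distribution witnessing $\fsos(A) \ge t$ can be trivially lifted to one witnessing $\hatfsos(A) \ge t$ by ``hard-coding'' the auxiliary matrix variable to $A$ itself.

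More concretely, suppose $\fsos(A) = t$, witnessed by a level-$\ell$ pseudo-distribution $\mu \sdtstile{}{x} \{s(x;A) \ge t\} \cup \cP_1(x)$. I define a level-$\ell'$ pseudo-distribution $\mu'$ on the variables $(x, C)$ with $\deg(x) \le \ell$ by the substitution
\[
  \tE_{\mu'} p(x, C) \seteq \tE_{\mu} p(x, A)
\]
for every polynomial $p(x, C)$ of total degree at most $\ell'$ and degree in $x$ at most $\ell$. This is well-defined because substituting the constant matrix $A$ for $C$ produces a polynomial in $x$ of the same degree in $x$ as $p$, hence at most $\ell$, which $\mu$ can evaluate. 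Positivity on squares is inherited from $\mu$ in the same way.

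It remains to check that $\mu'$ satisfies the required polynomial system. For constraints in $\cP_1(x)$ this is immediate from $\mu \sdtstile{}{x} \cP_1(x)$ since $\cP_1$ does not involve $C$. For $\{s(x,C) \ge t\}$, substituting $C = A$ gives $s(x, A) = s(x; A)$, and $\mu$ already satisfies $\{s(x;A) \ge t\}$. For $\cP_2(C; A, d) = \{0 \le C \le A,\ C \Ind \le d \cdot \Ind,\ C = \transpose{C}\}$, every inequality becomes a trivially true statement about $A$ after substitution (using $A \in \bbG_{n,d}$, so $A$ is a symmetric $0/1$ matrix with row sums at most $d$); these are constant nonnegative polynomials and hence satisfied by any pseudo-distribution. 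Therefore $\hatfsos(A) \ge t = \fsos(A)$, as required. There is no real obstacle here --- the lemma is essentially a sanity check that the auxiliary variable $C$ can always be set to the input $A$ when $A$ already satisfies the degree bound.
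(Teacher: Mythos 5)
Your proof is correct and is essentially the same construction as the paper's: both fix the auxiliary variable $C$ to the constant matrix $A$ by defining $\tE_{\mu'} p(x,C) = \tE_{\mu} p(x;A)$ and then verify the constraints. The only difference is that you spell out the ``routine'' verification of $\cP_1$, $\cP_2$, and $\{s(x,C)\ge t\}$ explicitly, which the paper omits.
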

\begin{proof}
  Fix an arbitrary $A\in\bbG_{n,d}$ and an arbitrary $\ell'\ge\ell$.
  Suppose $\fsos(A) = t$ witnessed by a degree-$\ell$ pseudo-distribution $\mu \sdtstile{}{x} \set{s(x;A) \ge t} \cup \cP_1(x) \,. $
  We construct a \mbox{degree-$\ell'$} pseudo-distribution $\mu'$ in variables $x,C$ as follows. 
  For every polynomial $p(x,C)$ of degree in $x$ at most $\ell$ and total degree at most $\ell'$, let $\tE_{\mu'} p(x,C) = \tE_{\mu} p(x;A)$.
  Note $\tE_{\mu} p(x;A)$ is well-defined as $p(x;A)$ is a polynomial in variables $x$ of degree at most $\ell$.
  Then it is a routine work to verify $\mu' \sdtstile{\deg(x)\le\ell}{x,C} \Set{s(x,C) \ge t} \cup \cP_1(x) \cup \cP_2(C;A,d)$.
  Thus $\hatfsos(A) \ge t = \fsos(A)$.
\end{proof}

\begin{proof}[Proof of \cref{lem:lip_sos-hatfsos_extension}]
  By \cref{lem:lip_sos-hatfsos_extension_le} and \cref{lem:lip_sos-hatfsos_extension_ge}.
\end{proof}

\section{Acknowledgement}
The authors are grateful to Afonso S.Bandeira for helpful discussions. 
We thank Rares Darius Buhai for feedbacks on the earlier version of the paper.
We thank the anonymous reviewers for their valuable comments and suggestions.
Hongjie Chen, Jingqiu Ding, Yiding Hua and David Steurer are supported by funding from the European Research Council (ERC) under the European Union’s Horizon 2020 research and innovation programme (grant agreement No 815464).
Chih-Hung Liu is supported by Ministry of Education, Taiwan under Yushan Fellow Pro gram with the grant number NTU-112V1023-2 and by National Science and Technology Council, Taiwan with the grant number 111-2222-E-002-017-MY2.
Tommaso D'Orsi is supported by the project MUR FARE2020 PAReCoDi.

\phantomsection
\addcontentsline{toc}{section}{References}
\bibliographystyle{amsalpha}
\bibliography{bib/mathreview,bib/dblp,bib/custom,bib/scholar}

\appendix

\section{Algorithm based on subsample and aggregation}

In this section, we present a simple differentially private algorithm for estimating the graphon from a graph $G$ with $n$ vertices.
The idea is based on the classical framework, namely subsample and aggregate. 
We first randomly partition the vertex set $[n]$ into $m$ disjoint subsets $V_1,\ldots,V_m$ of equal size, and then for each $t\in [m]$, we run the non-private algorithms on the induced subgraph on $V_i$. 
As result, for each induced subgraph, we obtain an estimate of the graphon.
Finally, we aggregate the outputs of the $m$ non-private algorithms to obtain the final estimate.

For simplification, we assume that the graphon $W$ is $k$-block, and we only consider the case of dense graphons.

\subsection{Non-private algorithm}
We first describe a simple non-private algorithm for graphon estimation.
\begin{algorithmbox}[Non-private algorithm for graphon estimation]
    \label{algo:non-private}
    \mbox{}\\
    \textbf{Input:} Adjacency matrix $\bm A$, privacy parameter $\epsilon>0$.

    \noindent
    \textbf{Output:} $\e$-differentially private graphon estimator $\hat{B}\in \R^{k\times k}$.

    \begin{enumerate}[1.]
        \item Find the best rank-$k$ approximation $\hat{\bm Q}$ of the adjacency matrix $\bm A$;
        \item  Run $O(1)$ approximation algorithm for balanced k-means\cite{ding2020faster} on the row vectors of $\hat{\bm Q}$. 
        Obtain 
        \begin{itemize}
            \item $\tilde{\bm Q}\in \R^{n\times n}$ where only $k$ rows are different;
            \item $\hat{\bm Z}\in \{0,1\}^{n\times k}$ which records the cluster of $i$-th row vector for $i\in [n]$;
        \end{itemize}  
        \item Output $\hat{\bm B}=\frac{k^2}{n^2}\hat{\bm Z}^\top\tilde{\bm Q}\hat{\bm Z}$, i.e we average entries in $\tilde{\bm Q}$ which belong to the same cluster.
    \end{enumerate}
\end{algorithmbox}

\begin{theorem}
    \label{thm:non-private}
    For an arbitrary integer $k=O(1)$, consider a $k$-block graphon $W$, and graph $\bm G$ generated from $W$.
    Let the edge connection probability matrix for $\bm G$ be $\bm Q_0\in \R^{n\times n}$.
    Then the non-private algorithm in \cref{algo:non-private} outputs an estimate $\hat{\bm B}\in \R^{k\times k}$ such that with high probability, we have
    \begin{equation*}
        \frac{1}{n}\normf{\bm Z\hat{\bm B}\bm Z^\top-\bm Q_0}\leq O\Paren{\sqrt{\frac{k}{n}}+\Paren{\frac{k}{n}}^{1/4}}
    \end{equation*}
    for some balanced community membership matrix $\bm Z\in \{0,1\}^{n\times k}$.
    Furthermore, the running time of the algorithm is bounded by $n^{\poly(k)}$.
\end{theorem}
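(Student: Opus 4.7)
The plan is to analyze the three steps of \cref{algo:non-private} in turn. The key observation is that if we let $P = \frac{k}{n}\hat{\bm Z}\hat{\bm Z}^\top$ denote the orthogonal projector onto the column span of $\hat{\bm Z}$, then $\tilde{\bm Q} = P\hat{\bm Q}$ (since $\tilde{\bm Q}$ replaces each row of $\hat{\bm Q}$ by its cluster mean) and hence $\hat{\bm Z}\hat{\bm B}\hat{\bm Z}^\top = P\hat{\bm Q}P$. Taking $\bm Z = \hat{\bm Z}$ in the theorem, the goal reduces to bounding $\tfrac{1}{n}\|P\hat{\bm Q}P - \bm Q_0\|_F$, which I split via the triangle inequality through $\hat{\bm Q}$ into an approximation error $\tfrac{1}{n}\|P\hat{\bm Q}P - \hat{\bm Q}\|_F$ and a sampling/low-rank error $\tfrac{1}{n}\|\hat{\bm Q} - \bm Q_0\|_F$.

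The sampling/low-rank error is handled by standard random matrix tools. Matrix Bernstein gives $\|\bm A - \bm Q_0\| = O(\sqrt{n})$ with high probability (each entry of $\bm A - \bm Q_0$ is a centered Bernoulli bounded by $1$). Eckart--Young then yields $\|\hat{\bm Q} - \bm A\| \le \|\bm Q_0 - \bm A\|$ since $\bm Q_0$ already has rank at most $k$, so $\|\hat{\bm Q} - \bm Q_0\| \le 2\|\bm A - \bm Q_0\| = O(\sqrt{n})$. Because $\hat{\bm Q} - \bm Q_0$ has rank at most $2k$, we conclude $\|\hat{\bm Q} - \bm Q_0\|_F \le \sqrt{2k}\,\|\hat{\bm Q} - \bm Q_0\| = O(\sqrt{kn})$, contributing $O(\sqrt{k/n})$ after dividing by $n$.

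For the approximation error, I pass from the two-sided expression $P\hat{\bm Q}P - \hat{\bm Q}$ to the one-sided balanced $k$-means cost via the identity $P\hat{\bm Q}P - \hat{\bm Q} = P\hat{\bm Q}(P - I) + (P - I)\hat{\bm Q}$. Using $\|PM\|_F \le \|M\|_F$ together with the symmetry $\hat{\bm Q}^\top = \hat{\bm Q}$ (which gives $\|\hat{\bm Q}(I - P)\|_F = \|(I - P)\hat{\bm Q}\|_F$), we obtain $\|P\hat{\bm Q}P - \hat{\bm Q}\|_F \le 2\|\hat{\bm Q} - \tilde{\bm Q}\|_F$. The $O(1)$-approximation guarantee of the balanced $k$-means routine from \cite{ding2020faster} then gives $\|\hat{\bm Q} - \tilde{\bm Q}\|_F^2 \le O(\OPT)$, where $\OPT$ is the optimal balanced-$k$-means cost on the rows of $\hat{\bm Q}$.

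The main obstacle is to upper bound $\OPT$, which is where the stochastic structure (beyond matrix Bernstein) enters. The plan is to exhibit a balanced competitor close to the ground truth: since the vertex labels are i.i.d.\ uniform on $[0,1]$, binomial concentration gives $|n_a - n/k| = O(\sqrt{n/k})$ for each true cluster size $n_a$ with high probability, so by rewiring at most $m = O(\sqrt{kn})$ vertices we obtain an exactly balanced clustering $\bm Z_0'$. Using as centers the natural row-templates of $\bm Q_0$ (i.e.\ the rows of $B_0 \bm Z_0^\top$) and splitting into moved and non-moved vertices, each non-moved vertex contributes at most $\|\hat{\bm Q}_i - (\bm Q_0)_i\|_2^2$, while each moved vertex contributes at most $2\|\hat{\bm Q}_i - (\bm Q_0)_i\|_2^2 + O(n)$, where the $O(n)$ template-gap bound uses only that the entries of $B_0$ are $O(1)$. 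Summing yields $\OPT \le 2\|\hat{\bm Q} - \bm Q_0\|_F^2 + m\cdot O(n) = O(kn + n^{3/2}\sqrt{k}) = O(n^{3/2}\sqrt{k})$, so $\|P\hat{\bm Q}P - \hat{\bm Q}\|_F^2 \lesssim n^{3/2}\sqrt{k}$, contributing $O((k/n)^{1/4})$ after dividing by $n$. Adding the two contributions gives the claimed bound, and the $n^{\poly(k)}$ running time follows from the cost of the rank-$k$ SVD together with the cited balanced $k$-means approximation.
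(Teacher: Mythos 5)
Your proof is correct and reaches the same final bound, but the route differs from the paper's in two genuine ways. First, for the rank-$k$ PCA step, the paper argues via strong convexity of the squared Frobenius norm: from $\normf{\bm A-\hat{\bm Q}}\le\normf{\bm A-\bm Q_0}$ it deduces $\normf{\bm Q_0-\hat{\bm Q}}^2\le 2\iprod{\bm A-\bm Q_0,\bm Q_0-\hat{\bm Q}}$ and then uses the rank-$2k$ structure of $\bm Q_0-\hat{\bm Q}$ to conclude $\normf{\bm Q_0-\hat{\bm Q}}=O(\sqrt{kn})$. You instead invoke Eckart--Young in spectral norm to get $\norm{\hat{\bm Q}-\bm Q_0}\le 2\norm{\bm A-\bm Q_0}=O(\sqrt n)$ and convert to Frobenius via the rank bound. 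Both give the same $O(\sqrt{kn})$. Second, and more substantively, the paper bounds the $k$-means-relevant quantity $\normf{\bm Z_0 B_0 \bm Z_0^\top - \bm Q_0}$ by citing \cref{lem:samplingupperbound} (Corollary~D.1 of \cite{borgs2015private}) as a black box, which already packages the balancing and sampling error in $\delta_2$ terms. You give a self-contained combinatorial argument: concentration of the latent-label multinomial shows that $O(\sqrt{kn})$ vertices must be rewired to balance the partition, each at $O(n)$ per-row cost since $B_0$ has $O(1)$ entries, so the balanced competitor cost is $O(n^{3/2}\sqrt k)$; this is in fact exactly what the paper's cited lemma is hiding, and makes the origin of the $(k/n)^{1/4}$ term transparent. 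Your projector identity $\hat{\bm Z}\hat{\bm B}\hat{\bm Z}^\top = P\hat{\bm Q}P$ with $P=\frac{k}{n}\hat{\bm Z}\hat{\bm Z}^\top$ and the bound $\normf{P\hat{\bm Q}P-\hat{\bm Q}}\le 2\normf{\hat{\bm Q}-\tilde{\bm Q}}$ also repackages the paper's \cref{lem:non-private-kmeans} more cleanly, with the triangle inequality passing through $\hat{\bm Q}$ rather than through $\bm Z_0 B_0 \bm Z_0^\top$. One small caveat, which you share with the paper: deriving $\norm{\bm A-\bm Q_0}=O(\sqrt n)$ (without a $\sqrt{\log n}$ factor) requires a refined spectral norm bound for random matrices with bounded entries (F\"uredi--Koml\'os / Bandeira--van Handel style), not vanilla matrix Bernstein, which only yields $O(\sqrt{n\log n})$; this is left implicit in both proofs, so it is not a gap specific to yours.
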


We split the proof in \cref{lem:non-private-pca} and \cref{lem:non-private-kmeans}.
\begin{lemma}
    \label{lem:non-private-pca}
    With high probability over $\bm A$, we have $\normf{\bm Q_0-\hat{\bm Q}}\leq O(\sqrt{kn})$.
\end{lemma}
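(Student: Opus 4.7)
The plan is to use the Eckart--Young--Mirsky theorem together with a standard spectral-norm concentration bound for the centered adjacency matrix. The key observation is that $\bm Q_0$ has rank at most $k$: since $W$ is a $k$-block graphon with blocks of size roughly $n/k$, the edge-connection-probability matrix factors as $\bm Q_0 = \bm Z_0 B_0 \bm Z_0^{\top}$ for some $B_0 \in \R^{k\times k}$ and a community-membership matrix $\bm Z_0$, so $\rank(\bm Q_0)\le k$. Combined with $\rank(\hat{\bm Q})\le k$ by construction, this gives $\rank(\bm Q_0 - \hat{\bm Q})\le 2k$.

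First, I would pass from Frobenius to spectral norm using the rank bound:
\[
\normf{\bm Q_0 - \hat{\bm Q}} \le \sqrt{2k\,}\cdot \norm{\bm Q_0 - \hat{\bm Q}}\,.
\]
Second, I would control the spectral norm on the right-hand side by the spectral norm of the noise $\bm A - \bm Q_0$. Since $\hat{\bm Q}$ is the best rank-$k$ approximation of $\bm A$ in spectral norm (Eckart--Young--Mirsky) and $\bm Q_0$ is itself a rank-$k$ candidate, we have $\norm{\bm A - \hat{\bm Q}} \le \norm{\bm A - \bm Q_0}$, and hence by the triangle inequality
\[
\norm{\bm Q_0 - \hat{\bm Q}} \le \norm{\bm Q_0 - \bm A} + \norm{\bm A - \hat{\bm Q}} \le 2 \norm{\bm A - \bm Q_0}\,.
\]

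Third, I would invoke a matrix-concentration bound to show $\norm{\bm A - \bm Q_0} = O(\sqrt{n\,})$ with high probability. The centered matrix $\bm W := \bm A - \bm Q_0$ is symmetric with independent entries above the diagonal, zero mean, and entries bounded in $[-1,1]$ with variance at most $\normi{W} = O(1)$ (this is the dense regime treated in the section, where $\rho = \Theta(1)$). This is exactly the setting handled by the spectral-norm bound the paper already uses (the unpruned version of \cref{theorem:pruned-spectral-norm}, or directly a matrix Bernstein/Seginer-type inequality), which yields $\norm{\bm W} \lesssim \sqrt{n\,}$ with probability $1 - n^{-\Omega(1)}$. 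Combining the three steps gives $\normf{\bm Q_0 - \hat{\bm Q}} \le O(\sqrt{kn\,})$ as desired.

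The proof is short and the only mild subtlety is making sure the invoked spectral-norm concentration bound applies in the dense regime without a $\sqrt{\log n}$ factor; for dense graphs this follows from standard Wigner-type tail bounds, and if an extra logarithmic factor sneaks in it does not affect the claim since the target error rate in \cref{thm:non-private} already absorbs lower-order terms. No obstacle of substance beyond invoking the right concentration inequality is anticipated.
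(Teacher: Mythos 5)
Your proof is correct but arrives at the bound by a mildly different decomposition than the paper's. You invoke Eckart--Young--Mirsky in the \emph{spectral} norm, use the triangle inequality to get $\norm{\bm Q_0 - \hat{\bm Q}} \le 2\norm{\bm A - \bm Q_0}$, and then pass to the Frobenius norm via $\normf{M}\le \sqrt{\rank M}\,\norm{M}$. The paper instead invokes Eckart--Young in the \emph{Frobenius} norm ($\normf{\bm A - \hat{\bm Q}}\le\normf{\bm A-\bm Q_0}$), expands the square to isolate the cross term $\normf{\bm Q_0 - \hat{\bm Q}}^2 \le 2\iprod{\bm A - \bm Q_0, \hat{\bm Q} - \bm Q_0}$, bounds the inner product by trace duality $\iprod{M,N}\le \norm{M}\cdot\normstar{N}\le\norm{M}\sqrt{\rank N}\,\normf{N}$, and cancels one factor of $\normf{\bm Q_0-\hat{\bm Q}}$. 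Both arguments use the same rank-$2k$ observation and the same spectral-norm concentration for $\bm A-\bm Q_0$, and in fact both give the identical leading constant $2\sqrt{2k}$. Your route is the more elementary triangle-inequality one; the paper's quadratic-expansion argument is the oracle-inequality style that mirrors the strong-concavity analysis used in the main body (and is the form that the paper later ports into the sum-of-squares proof system), which is presumably why the authors chose it here. No gap.
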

\begin{proof}
    For the first step, we have
    \begin{equation*}
        \normf{\bm A- \hat{\bm Q}}\leq \normf{\bm A-\bm Q_0}\,. 
    \end{equation*}
    As result, we have
    \begin{equation*}
        \normf{\bm Q_0-\hat{\bm Q}}^2\leq 2\iprod{\bm A-\bm Q_0,\bm Q_0-\hat{\bm Q}}\,. 
    \end{equation*}
    Since $\norm{\bm A-\bm Q_0}\leq O(\sqrt{n})$ with probability at least $1-\frac{1}{n^{100}}$, and $\bm Q_0-\hat{\bm Q}$ has rank $2k$, we have
    \begin{equation*}
        \iprod{\bm A-\bm Q_0,\bm Q_0-\hat{\bm Q}}\leq O(\sqrt{k\cdot n})\cdot \normf{\bm Q_0-\hat{\bm Q}}\,.
    \end{equation*}
    Therefore we conclude that $\normf{\bm Q_0-\hat{\bm Q}}\leq O(\sqrt{k\cdot n})$ with probability at least $1-\frac{1}{n^{100}}$.
\end{proof}

\begin{lemma}
    \label{lem:non-private-kmeans}
    Consider the setting of \cref{thm:non-private}. 
    Suppose for some balanced community membership matrix $\bm Z_0\in \{0,1\}^{n\times k}$ and $B_0\in \R^{k\times k}$, we have $\normf{\bm Z_0B_0\bm Z_0^\top-\hat{\bm Q}}\leq \delta$. 
    Then there exists balanced community membership matrix $\bm Z\in \{0,1\}^{n\times k}$ such that $\Normf{\bm Z\hat{\bm B}\bm Z^\top-\bm Z_0B_0\bm Z_0^\top}\leq O\Paren{\delta}$.
\end{lemma}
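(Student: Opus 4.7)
The plan is to take $\bm Z = \hat{\bm Z}$ and exploit the fact that $\hat{\bm Z}\hat{\bm B}\hat{\bm Z}^\top$ coincides with a two-sided projection of $\tilde{\bm Q}$.  First I will record the balanced $k$-means cost bound: the pair $(\bm Z_0, B_0)$ is itself a feasible balanced $k$-clustering of the rows of $\hat{\bm Q}$ (using the $k$ distinct rows of $\bm Z_0 B_0 \bm Z_0^\top$ as centers), and its cost equals $\|\bm Z_0 B_0 \bm Z_0^\top - \hat{\bm Q}\|_F^2 \le \delta^2$. Since the algorithm of \cite{ding2020faster} returns an $O(1)$-approximation, the output satisfies $\|\tilde{\bm Q} - \hat{\bm Q}\|_F^2 \le O(\delta^2)$, and by the triangle inequality $\|\tilde{\bm Q} - M\|_F \le O(\delta)$, where $M := \bm Z_0 B_0 \bm Z_0^\top$.

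Next I will interpret $\hat{\bm B}$ geometrically. Let $P := \tfrac{k}{n}\hat{\bm Z}\hat{\bm Z}^\top$; because $\hat{\bm Z}^\top\hat{\bm Z} = \tfrac{n}{k} I_k$, $P$ is the orthogonal projection onto the column span of $\hat{\bm Z}$. A direct computation gives $\hat{\bm Z}\hat{\bm B}\hat{\bm Z}^\top = \tfrac{k^2}{n^2}\hat{\bm Z}\hat{\bm Z}^\top \tilde{\bm Q}\hat{\bm Z}\hat{\bm Z}^\top = P\tilde{\bm Q}P$. Since $\tilde{\bm Q}$ has at most $k$ distinct rows, all indexed by the clusters of $\hat{\bm Z}$, we can write $\tilde{\bm Q} = \hat{\bm Z} R$ for some $R \in \R^{k\times n}$; each column of $\tilde{\bm Q}$ therefore lies in the column span of $\hat{\bm Z}$, so $P\tilde{\bm Q} = \tilde{\bm Q}$. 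Hence $\hat{\bm Z}\hat{\bm B}\hat{\bm Z}^\top = \tilde{\bm Q}P$.

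It remains to bound $\|\tilde{\bm Q}P - M\|_F$. I will write $\tilde{\bm Q}P - M = (\tilde{\bm Q} - M) - M(I - P)$ and control the two pieces. The first is bounded by $\|\tilde{\bm Q}-M\|_F \le O(\delta)$ from the $k$-means step. For the second, the key observation is symmetry: using $M^\top = M$ and the fact that $(I-P)\tilde{\bm Q} = 0$, we obtain
\[
\|M(I-P)\|_F = \|(I-P)M\|_F = \|(I-P)(M - \tilde{\bm Q})\|_F \le \|M - \tilde{\bm Q}\|_F \le O(\delta).
\]
Combining these two estimates via the triangle inequality yields $\|\hat{\bm Z}\hat{\bm B}\hat{\bm Z}^\top - M\|_F \le O(\delta)$, which establishes the lemma with $\bm Z := \hat{\bm Z}$.

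The only potentially subtle point is making sure that the $O(1)$-approximation guarantee of \cite{ding2020faster} applies in this setting: it must yield a cost bound in the Frobenius sense (so that $\|\tilde{\bm Q} - \hat{\bm Q}\|_F^2 \le O(\mathrm{OPT})$, not just an $O(1)$ multiplicative bound up to an additive slack depending on the problem diameter), and it must enforce the balanced constraint exactly so that $\hat{\bm Z} \in \cZ(n,k)$. Once this is in hand, the rest is a short linear-algebraic computation built around the two identities $P\tilde{\bm Q} = \tilde{\bm Q}$ and $(I-P)\tilde{\bm Q} = 0$, and no combinatorial matching between the clusterings $\hat{\bm Z}$ and $\bm Z_0$ is needed.
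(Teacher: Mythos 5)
Your proof follows essentially the same route as the paper's: both establish the identity $\hat{\bm Z}\hat{\bm B}\hat{\bm Z}^\top = \tilde{\bm Q}P$ (where $P := \tfrac{k}{n}\hat{\bm Z}\hat{\bm Z}^\top$ is the orthogonal cluster-averaging projection) from $P\tilde{\bm Q}=\tilde{\bm Q}$, invoke the $O(1)$-approximation guarantee of balanced $k$-means (with $(\bm Z_0,B_0)$ as a witness) to get $\normf{\tilde{\bm Q}-M}\leq O(\delta)$ for $M:=\bm Z_0B_0\bm Z_0^\top$, and then leverage the symmetry of $M$ together with the fact that right-multiplication by $P$ is a Frobenius contraction. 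The only real difference is cosmetic: the paper routes the triangle inequality through the intermediate matrix $\tilde{\bm Q}^\top$ (using $\tilde{\bm Q}^\top P=\tilde{\bm Q}^\top$), whereas you decompose $\tilde{\bm Q}P-M$ directly. Your spelled-out verification that $(\bm Z_0,B_0)$ is a feasible balanced clustering of the rows of $\hat{\bm Q}$ with cost $\leq\delta^2$ is a welcome precision over the paper's terser invocation of the approximation guarantee.

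One algebraic slip: the identity you write, $\tilde{\bm Q}P - M = (\tilde{\bm Q}-M) - M(I-P)$, is false in general (expanding the right-hand side gives $\tilde{\bm Q}-2M+MP$, not $\tilde{\bm Q}P-M$). The correct decomposition is $\tilde{\bm Q}P - M = (\tilde{\bm Q}-M)P - M(I-P)$. Fortunately this is harmless: the extra factor of $P$ on the first term only helps, since $\normf{(\tilde{\bm Q}-M)P}\leq\normf{\tilde{\bm Q}-M}\leq O(\delta)$, and your bound on the second piece, $\normf{M(I-P)} = \normf{(I-P)M}=\normf{(I-P)(M-\tilde{\bm Q})}\leq\normf{M-\tilde{\bm Q}}\leq O(\delta)$, is correct and clean. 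After fixing the decomposition the argument closes exactly as you intend.
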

\begin{proof}
Let $\hat{\bm Z}\in \{0,1\}^{n\times k}$ be the clustering obtained from the balanced k-means algorithm(theorem 4 in \cite{ding2020faster}).
Then $\frac{k}{n}\hat{\bm Z}\hat{\bm Z}^\top \tilde{\bm Q}=\tilde{\bm Q}$ as we are averaging the rows in the same clusters. 
On the other hand, by the approximation ratio guarantee of the balanced k-means algorithm\cite{ding2020faster}, we have $\normf{\tilde{\bm Q}-\bm Q_0}\leq O(\delta)$.
This implies  $\normf{\tilde{\bm Q}-\tilde{\bm Q}^\top}\leq O(\delta)$ since $\bm Q_0$ is symmetric. 
It follows that 
 \begin{equation*}
    \Normf{\frac{k}{n}\Paren{\tilde{\bm Q}-\tilde{\bm Q}^\top} \hat{\bm Z}\hat{\bm Z}^\top}\leq O(\delta)\,,
 \end{equation*}   
    since $\frac{k}{n}\hat{\bm Z}\hat{\bm Z}^\top$, as an operation for averaging columns, only reduces the Frobenius norm of a matrix. 

    However, $\frac{k}{n}\tilde{\bm Q}^\top \hat{\bm Z}\hat{\bm Z}^\top=\tilde{\bm Q}$ since the columns correspond to the same cluster are the same in $\tilde{\bm Q}^\top$. As a result, we have $\frac{k}{n}\tilde{\bm Q}^\top \hat{\bm Z}\hat{\bm Z}^\top=\tilde{\bm Q}$. 
    It follows that 
    \begin{equation*}
         \Normf{\frac{k}{n}\tilde{\bm Q}\hat{\bm Z}\hat{\bm Z}^\top-\tilde{\bm Q}^\top}\leq O(\delta)\,,
    \end{equation*}
    By triangle inequality, we again have 
    \begin{equation*}
        \Normf{\frac{k}{n}\tilde{\bm Q}\hat{\bm Z}\hat{\bm Z}^\top-\bm Q_0}\leq O(\delta)\,,
    \end{equation*}
    which finishes the proof.
\end{proof}

\begin{proof}[Proof of \cref{thm:non-private}]
    By \cref{lem:non-private-pca}, we have $\normf{\bm Q_0-\hat{\bm Q}}\leq O(\sqrt{kn})$ with high probability. 
    By \cref{lem:samplingupperbound}, with high probability, we have $\frac{1}{n}\normf{\bm Z_0 B_0 \bm Z_0^\top-\bm Q_0}\leq O\Paren{\Paren{\frac{k}{n}}^{1/4}}$.
    By the triangle inequality, we have 
    \begin{equation*}
        \frac{1}{n}\normf{\bm Z_0B_0\bm Z_0^\top-\hat{\bm Q}}\leq \frac{1}{n}\normf{\bm Z_0B_0\bm Z_0^\top-\bm Q_0}+\frac{1}{n}\normf{\hat{\bm Q}-\bm Q_0}\leq O\Paren{\sqrt{\frac{k}{n}}+\Paren{\frac{k}{n}}^{1/4}}\,.
    \end{equation*}
    By \cref{lem:non-private-kmeans}, with high probability, we have $\frac{1}{n}\normf{\hat{\bm Z}\hat{\bm B}\hat{\bm Z}^\top-\hat{\bm Q}}\leq O\Paren{\sqrt{\frac{k}{n}}+\Paren{\frac{k}{n}}^{1/4}}$.
    Therefore we conclude that $\frac{1}{n}\normf{\hat{\bm Z}\hat{\bm B}\hat{\bm Z}^\top-\bm Q_0}\leq O\Paren{\sqrt{\frac{k}{n}}+\Paren{\frac{k}{n}}^{1/4}}$.    

    Finally, for the running time, the best rank-$k$ approximation takes $\poly(n)$ time to compute while the balanced k-means algorithm with $O(1)$ approximation ratio takes $n^{\poly(k)}$ time to compute(by taking $\Delta=\frac{1}{\poly(n)}$ in the theorem 4 of \cite{ding2020faster}). 
    Therefore the running time of the algorithm is bounded by $n^{\poly(k)}$.
\end{proof}

\subsection{Private algorithm based on subsample and aggregation}
We describe the algorithm in \cref{algo:subsample-and-aggregate}.
\begin{algorithmbox}[Private algorithm based on subsample and aggregation]
    \label{algo:subsample-and-aggregate}
    \mbox{}\\
    \textbf{Input:} Adjacency matrix $A$, privacy parameter $\epsilon>0$.

    \noindent
    \textbf{Output:} $\e$-differentially private graphon estimator $\hat{W}$.

    \begin{enumerate}[1.]
        \item Randomly partition the vertex set $[n]$ into $m=\frac{k^3\log(n)}{\e}$ disjoint subsets $V_1,\ldots,V_T$ of equal size.
        \item For each $t\in [m]$, run the non-private \cref{algo:non-private} on the induced subgraph $G[V_t]$ to obtain the estimate $\hat{\bm B}_t$.
        \item Aggregation: apply the exponential mechanism $\bm B\propto\exp\Paren{-\frac{\epsilon \Score(B)}{2}}$ with score function being 
        \(\Score(B)=\Abs{\Set{t\in [m]:\delta_2(\hat{\bm B}_t,B)\leq \Paren{\sqrt{\frac{k}{n}}+\Paren{\frac{k}{n}}^{1/4}}\cdot \poly(k/\e)\cdot\polylog(n)}}\)
    \end{enumerate}
\end{algorithmbox}

\begin{theorem}
    \label{thm:subsample-and-aggregate}
    For an arbitrary integer $k=O(1)$, suppose the underlying graphon $W$ is $k$-block graphon, generated from the matrix $B_0\in \R^{k\times k}$.
    Then given the random graph $\bm G$, the \cref{algo:subsample-and-aggregate} is $\e$-differentially private, and runs in $n^{\poly(k)}$ time. 
    Further it outputs a matrix $\hat{\bm B}$ such that $\delta_2(\hat{\bm B}, B_0)\leq \Paren{\sqrt{\frac{k}{n}}+\Paren{\frac{k}{n}}^{1/4}}\cdot \poly(k/\e)\cdot\polylog(n)$.
\end{theorem}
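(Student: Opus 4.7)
}

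The plan is to handle the three claims---privacy, runtime, and utility---separately, with utility being the main technical content. For \emph{privacy}, observe that the random partition in Step~1 is independent of the input graph, and in Step~2 the estimate $\hat{\bm B}_t$ is a function of only the induced subgraph $G[V_t]$. Hence for any two node-adjacent graphs $G,G'$ (differing in the edges incident to a single vertex $v$), the estimates $\hat{\bm B}_t$ agree on every $t$ with $v\notin V_t$, and only a single $\hat{\bm B}_t$ can change. Thus the score function $\Score(\cdot)$ has sensitivity at most $1$ on the output space of the subsample stage, and the exponential mechanism in Step~3 is $\e$-node-private by the standard guarantee of \cref{thm:exp-mech}. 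For \emph{runtime}: Step~2 invokes \cref{algo:non-private} on each subsample of size $n/m$, at cost $m\cdot n^{\poly(k)}=n^{\poly(k)}$; Step~3 discretizes the set of candidate $B\in[0,R]^{k\times k}$ at granularity $\tau/k$ (giving a net of size $n^{O(k^2)}$) and, for each candidate in the net, computes $\delta_2$ with each $\hat{\bm B}_t$. Since $\delta_2$ on $k\times k$ matrices reduces by \cref{lem:dsdistance} to a quadratic minimization over the Birkhoff polytope, brute enumeration of the $k!$ vertex-permutations makes each evaluation $\poly(k)$ time, so the total runtime remains $n^{\poly(k)}$.

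For \emph{utility}, first set the threshold parameter $\tau=C\cdot\bigparen{\sqrt{k/(n/m)}+(k/(n/m))^{1/4}}$ for a sufficiently large constant $C$, where $n/m=n\e/(k^3\log n)$; algebra then gives $\tau=O\bigparen{(\sqrt{k/n}+(k/n)^{1/4})\cdot\poly(k/\e)\cdot\polylog(n)}$, matching the target error up to a factor of $2$. Each induced subgraph $G[V_t]$ is itself a $(W,1,n/m)$-random graph (here I use that the vertex labels $\bm x_i$ are i.i.d.\ uniform on $[0,1]$ and the random partition is independent of the labels), so by \cref{thm:non-private} applied to each $t$ together with a union bound over $t\in[m]$, with probability at least $1-m\cdot n^{-10}\ge 1-n^{-8}$ every subsample satisfies $\delta_2(\hat{\bm B}_t,B_0)\le\tau$. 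In particular $\Score(B_0)\ge m$ (up to modulating $B_0$ by its continuous net-approximant, which costs at most $\tau/2$ in $\delta_2$), while for any candidate $B$ in the net with $\delta_2(B,B_0)>3\tau$, the triangle inequality (and invariance of $\delta_2$ under permutation of blocks) forces $\delta_2(\hat{\bm B}_t,B)>\tau$ for every $t$, so $\Score(B)=0$.

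Finally, combine the sensitivity-$1$ bound with this gap of $m$ between $\Score(B_0)$ and the score of every bad candidate: the exponential mechanism's utility bound in \cref{thm:exp-mech} gives that the probability of outputting any $B$ with $\delta_2(B,B_0)>3\tau$ is at most $|\mathrm{net}|\cdot\exp(-\e m/2)\le n^{O(k^2)}\cdot\exp(-k^3(\log n)/2)=n^{-\Omega(k^2)}$. A union bound with the earlier $1-n^{-8}$ event concludes that $\delta_2(\hat{\bm B},B_0)\le 3\tau$ with high probability, which is within the claimed bound. The step I expect to require the most care is making the triangle-inequality step rigorous: because $\delta_2$ involves an \emph{infimum} over measure-preserving maps, neighbors-in-$\delta_2$ need not share a common realization of the optimal alignment, so one must either invoke a compactness argument to realize each infimum by a minimizer (straightforward since each $\hat{\bm B}_t$ and $B_0$ are $k$-block, reducing to a minimum over the finite set of column/row permutations, cf.\ the Birkhoff--von Neumann corner characterization in \cref{thm:birkhoff}) or else introduce a small slack $\tau/10$ and pass to $\e$-near-optimal alignments. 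Either route makes the triangle inequality for $\delta_2$ work out and the remaining calculation is routine.
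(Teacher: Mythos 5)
Your proposal follows the paper's argument: privacy via sensitivity-$1$ of the counting score under the subsample-and-aggregate framework, utility via a union bound showing all (or all but $o(1)$ of) the per-fold estimates land within $\tau$ of $B_0$ followed by the triangle inequality and the exponential-mechanism utility bound, and runtime via per-fold invocations of \cref{algo:non-private} plus brute enumeration of permutations for $\delta_2$. Your writeup is actually tighter and more explicit than the paper's (which leaves the $\poly(m)$ factor and the discretization net implicit), but the decomposition and key lemmas are the same.
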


We first prove that our algorithm is differentially private.

\begin{lemma}\label{lem:privacy-sa}
    The \cref{algo:subsample-and-aggregate} is $\epsilon$-differentially private. 
\end{lemma}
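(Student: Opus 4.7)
The plan is to reduce the privacy analysis to the standard exponential-mechanism guarantee of \cref{thm:exp-mech} by showing that the score function $\Score(B)$ has node-sensitivity at most $1$ (so that the factor $\tfrac{\epsilon}{2}$ in the exponent gives $\epsilon$-differential privacy).

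First I would couple the randomness of the partitioning step across two node-adjacent inputs $G$ and $G'$: since node-adjacent graphs are identical up to rewiring a single vertex $v$, we may assume the same uniformly random partition $V_1,\ldots,V_m$ is used for both $G$ and $G'$. Let $t^*$ be the (unique) index with $v\in V_{t^*}$. Then for every $t\neq t^*$, the induced subgraphs $G[V_t]$ and $G'[V_t]$ are identical (the rewired vertex is absent from both), hence the non-private subroutine, which is a randomized algorithm depending only on its input subgraph, can be coupled to produce exactly the same estimate $\hat{\bm B}_t$ on $G$ and $G'$. Only $\hat{\bm B}_{t^*}$ may differ between the two runs.

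Next I would use this coupling to bound the sensitivity of the score function. For any $B\in [0,R]^{k\times k}$, the value $\Score(B)$ is the cardinality of a subset of $[m]$ whose membership is determined coordinatewise by whether $\delta_2(\hat{\bm B}_t,B)$ is below the fixed threshold. Since the coupled executions agree on $\hat{\bm B}_t$ for all $t\neq t^*$, the two score values can differ in at most the coordinate $t^*$, giving $\lvert \Score_G(B)-\Score_{G'}(B)\rvert \le 1$ pointwise. The (joint) distribution over scores induced by the private partition therefore has $L^\infty$-sensitivity at most $1$ for every fixed $B$.

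Finally, conditional on the partition (and on the coupled randomness of the non-private subroutine), the third step of \cref{algo:subsample-and-aggregate} is exactly the exponential mechanism of \cref{thm:exp-mech} applied to the score $B\mapsto -\Score(B)$ with sensitivity bound $\Delta=1$ and privacy parameter $\epsilon$ (the factor $1/2$ in the exponent is the standard $\epsilon/(2\Delta)$ scaling). Hence for each realization of the auxiliary randomness the output distribution satisfies the $\epsilon$-DP inequality between $G$ and $G'$. Averaging this pointwise inequality over the shared random partition and the shared internal randomness of the non-private subroutine preserves $\epsilon$-DP, completing the proof. The main (and essentially only) conceptual point is the coupling argument that justifies sensitivity $1$; once that is in place, the rest is a direct invocation of \cref{thm:exp-mech}.
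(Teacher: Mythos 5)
Your proof is correct and matches the paper's approach: the coupling argument you spell out (same partition, same non-private subroutine randomness, so only $\hat{\bm B}_{t^*}$ can change between $G$ and $G'$, giving score sensitivity~$1$) is exactly the content that the paper abstracts away by citing the subsample-and-aggregate framework, after which both proofs invoke the exponential-mechanism guarantee. Your version is self-contained rather than a citation, which is a reasonable choice here since \cref{algo:subsample-and-aggregate} partitions the \emph{vertices of a graph} rather than i.i.d.\ records, so it is worth verifying explicitly that disjointness of the parts yields the one-coordinate change in the intermediate estimates.
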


\begin{proof}
    By \cite{subsample-and-aggregate}, we only need to show that the exponential mechanism is differentially private with respect to inputs $\hat{B}_1,\hat{B}_2,\ldots,\hat{B}_m$.
    Since for neighboring dataset, the sensitivity of the score function is bounded by $1$.
     By the promise of exponential mechanism, the algorithm is differentially private.
\end{proof}

Next we prove that the utility guarantees of our algorithm. 
\begin{lemma}\label{lem:utility-sa}
    With high probability, \cref{algo:subsample-and-aggregate} outputs a matrix $\hat{\bm B}$ such that $\delta_2(\hat{\bm B}, B_0)\leq \Paren{\sqrt{\frac{k}{n}}+\Paren{\frac{k}{n}}^{1/4}}\cdot \poly(k/\e)\cdot\polylog(n)$.
\end{lemma}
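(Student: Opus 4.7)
The plan is to combine the utility of the non-private subgraph estimators with a standard ``stability'' argument for the exponential aggregation step. Throughout, write $\tau \seteq \big(\sqrt{k/n}+(k/n)^{1/4}\big)\cdot\poly(k/\e)\cdot \polylog(n)$ for the target error, and recall that the score threshold in \cref{algo:subsample-and-aggregate} is chosen to be exactly $\tau$.

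First I would analyze each subsample. Conditioned on the random equipartition into $V_1,\ldots,V_m$, each induced subgraph $\bm G[V_t]$ is distributed as a $(W,1,n/m)$-random graph drawn from the same $k$-block graphon $W=W[B_0]$ (with $n/m$ vertices, where $m=k^3\log(n)/\e$). Applying \cref{thm:non-private} to each $\bm G[V_t]$ and a union bound over $t\in[m]$, with probability at least $1-m/n^{10}$ every estimator $\hat{\bm B}_t$ satisfies
\[
  \delta_2(\hat{\bm B}_t,B_0)
  \le C\Paren{\sqrt{\tfrac{k}{n/m}}+\Paren{\tfrac{k}{n/m}}^{1/4}}
  \le \tfrac{\tau}{3} \,,
\]
where the last inequality absorbs the factor $m^{1/2}=\poly(k/\e)\cdot\polylog(n)$ into $\tau$. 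In particular, $\Score(B_0)=m$ with high probability.

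Next I would analyze the exponential mechanism step. The score is integer-valued with sensitivity $1$ in the inputs $\hat{\bm B}_1,\ldots,\hat{\bm B}_m$ (changing one subgraph estimator shifts each count by at most $1$). Applying \cref{thm:exp-mech} over a standard $\eta$-net of $[0,R]^{k\times k}$ of size $(R/\eta)^{O(k^2)}$ yields, with probability at least $1-1/\poly(n)$,
\[
  \Score(\bm B)
  \ge \max_{B} \Score(B) - O\!\Paren{\tfrac{k^2\log n}{\e}}
  \ge m - O\!\Paren{\tfrac{k^2\log n}{\e}}
  \ge \tfrac{m}{2} \,,
\]
by our choice $m=k^3\log(n)/\e$ (taking $k$ large enough absorbs the hidden constant; the small-$k$ case is absorbed into $\poly(k/\e)$).

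Finally, I would combine the two bounds via a triangle inequality. Let $T_0=\{t:\delta_2(\hat{\bm B}_t,B_0)\le\tau/3\}$ and $T^*=\{t:\delta_2(\hat{\bm B}_t,\bm B)\le\tau\}$. By the first step $|T_0|\ge m-o(m)$, and by the second $|T^*|\ge m/2$, so $T_0\cap T^*\neq\emptyset$. Picking any $t\in T_0\cap T^*$ and applying the triangle inequality for the pseudometric $\delta_2$ (which follows from its definition as an infimum of $L_2$-distances over measure-preserving maps) gives
\[
  \delta_2(\bm B,B_0)\le \delta_2(\bm B,\hat{\bm B}_t)+\delta_2(\hat{\bm B}_t,B_0)\le \tau+\tfrac{\tau}{3}\le 2\tau \,,
\]
which is the desired bound after adjusting constants inside $\poly(k/\e)$.

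The main obstacle I anticipate is twofold. First, one must verify that $m/(n/m)$ bookkeeping works out, i.e.\ that the blow-up factor coming from running the non-private algorithm on only $n/m$ vertices can indeed be folded into the advertised $\poly(k/\e)\cdot\polylog(n)$ multiplier in $\tau$; this is essentially arithmetic but requires care with the two terms $\sqrt{k/n}$ and $(k/n)^{1/4}$ separately. Second, one needs the exponential-mechanism guarantee in a nontrivial form: because $\Score$ is defined via a non-convex, permutation-invariant condition ($\delta_2\le\tau$), the Lipschitz constant and net-size arguments must be made precise---this is where the specific choice $m=k^3\log n/\e$ matters, since we need $m$ to dominate the $O(k^2\log n/\e)$ cost of sampling from an exponential mechanism over a $k^2$-dimensional parameter space.
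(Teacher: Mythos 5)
Your proof is correct and follows essentially the same strategy as the paper's: (a) use \cref{thm:non-private} on each subsample to show all $\hat{\bm B}_t$ are within $\approx\tau$ of $B_0$, (b) invoke the exponential-mechanism utility guarantee to argue the output's score is $\ge m-O(k^2\log n/\e)\ge m/2$, and (c) conclude via triangle inequality on $\delta_2$. Your version is in fact somewhat more careful than the paper's, which is terse and leaves the final triangle-inequality step implicit.
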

\begin{proof}
    Let $n'=\frac{n}{m}$.
    By the guarantees of the non-private algorithm in \cref{thm:non-private}, with high probability, we have $\delta_2(\hat{B_t},B_0)\leq \poly(m)\cdot \Paren{\sqrt{\frac{k}{n'}}+\Paren{\frac{k}{n'}}^{1/4}}$ for $t\in [m]$.
    With high probability, the value of score function is lower bounded by $m-O\Paren{\frac{k^2 \log(n)}{\e}}$.
    Therefore when $m\gg \frac{k^3\log(n)}{\e}$, with high probability, $1-o(1)$ fraction among $\hat{B_1},\hat{B_2},\ldots,\hat{B_t}$ are within distance $\Paren{\sqrt{\frac{k}{n}}+\Paren{\frac{k}{n}}^{1/4}}\cdot \poly(k/\e)\cdot\polylog(n)$ to $B_0$.
    As result, we have 
    \begin{equation*}
        \delta_2(\hat{\bm B},B_0)\leq \poly(k/\e)\cdot\polylog(n)\cdot \Paren{\sqrt{\frac{k}{n'}}+\Paren{\frac{k}{n'}}^{1/4}}\,.
    \end{equation*}
\end{proof}

Now we finish the proof of \cref{thm:subsample-and-aggregate}.
\begin{proof}[Proof of \cref{thm:subsample-and-aggregate}]
    By \cref{lem:privacy-sa}, the algorithm is $\e$-differentially private.
    By \cref{lem:utility-sa}, with high probability, the algorithm outputs a matrix $\hat{\bm B}$ such that $\delta_2(\hat{\bm B}, B_0)\leq \Paren{\sqrt{\frac{k}{n}}+\Paren{\frac{k}{n}}^{1/4}}\cdot \frac{\poly(k)\cdot\polylog(n)}{\e^2}$.
    Finally, since the algorithm for each fold of the data runs in time $n^{\poly(k)}$ and the sampling algorithm can be implemented in time $n^{\poly(k)}$, the total running time of the algorithm is bounded by $n^{\poly(k)}$.
\end{proof}

\section{Private estimation of edge density}\label{sec:privatedensityestimation}
In this section, we provide a private and robust algorithm for estimating the number of edges in the random graph.
A naive and natural private estimator is $\hat{\bm \rho}= \rho(\bm G)+\Lap\Paren{\frac{10}{n\epsilon}}$.
As shown in~\cite{SmoothErdosRenyi,borgs2018revealing}, it is possible to beat the guarantees of this private estimator for Erdos Renyi random graphs.
In this paper, we need to study the edge density estimation for more general random graphs.

\subsection{Coarse average edge density estimation for random graphs}
Now we prove the utility guarantees of the naive algorithm, which adds Laplacian noise to the empirical edge density.
\begin{lemma}[Private estimation of edge density]\label{lem:private_edge_density}
  For a $n$-vertex random graph $\bm G$ with each pair of vertices $i,j$ connected with probability $Q(i,j)\in [0,1]$, let the average density $\rho(\bm Q)\coloneqq \sum_{i,j\in [n]}\frac{Q(i,j)}{n^2}$ and let $\bm \rho(\bm G)\coloneqq \frac{2|E(\bm G)|}{n(n-1)}$.
  For $\epsilon=O(1)$, let $\hat{\rho}=\bm \rho(\bm G)+\Lap\Paren{\frac{10}{n\epsilon}}$.
  Then with probability at least $1-\frac{1}{n^{10}}$, we have $\Abs{\hat{\bm \rho}-\rho(\bm Q)}\leq \frac{200\log(n)}{n\epsilon}+\frac{\rho(\bm Q)}{10}$\,.
\end{lemma}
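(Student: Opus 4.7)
The plan is to decompose the error into a sampling part and a noise part via the triangle inequality
\[
  \bigabs{\hat{\bm \rho}-\rho(\bm Q)} \le \bigabs{\bm \rho(\bm G)-\E\bm \rho(\bm G)} + \bigabs{\E\bm \rho(\bm G)-\rho(\bm Q)} + \bigabs{\hat{\bm \rho}-\bm \rho(\bm G)}\,,
\]
and then bound each of the three pieces separately. The middle term is essentially free: since $\E\bm \rho(\bm G)=\tfrac{1}{n(n-1)}\sum_{i\ne j}Q(i,j)$ differs from $\rho(\bm Q)=\tfrac{1}{n^2}\sum_{i,j}Q(i,j)$ only by the diagonal contribution and a $1/(n-1)$ normalization, one immediately gets $|\E\bm \rho(\bm G)-\rho(\bm Q)|\lesssim \rho(\bm Q)/n$, which is absorbed into $\rho(\bm Q)/10$ for large enough $n$.

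For the third (Laplace noise) term, I would invoke the standard Laplace tail bound: if $Y\sim\Lap(10/(n\e))$ then $\Pr[|Y|\ge t]=\exp(-n\e t/10)$, so taking $t=100\log(n)/(n\e)$ yields $|\hat{\bm \rho}-\bm \rho(\bm G)|\le 100\log(n)/(n\e)$ with probability at least $1-n^{-10}$. This already matches (half of) the Laplace part of the claimed bound.

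For the first (sampling) term, I would apply Bernstein's inequality to $|E(\bm G)|=\sum_{i<j}\bm X_{ij}$, a sum of independent Bernoulli variables with $\bm X_{ij}\in[0,1]$ and variance at most $\mu\seteq\E|E(\bm G)|\le \binom{n}{2}\rho(\bm Q)$. This gives
\[
  \Pr\Bigbrac{|E(\bm G)-\mu|>t}\le 2\exp\Paren{-\tfrac{t^2}{2\mu+2t/3}}\,,
\]
and choosing $t = C(\sqrt{\mu\log n}+\log n)$ for a suitable constant $C$ makes the right-hand side at most $n^{-10}$. Dividing by $\binom{n}{2}$ yields $|\bm \rho(\bm G)-\E\bm \rho(\bm G)|\lesssim \sqrt{\rho(\bm Q)\log(n)/n^2}+\log(n)/n^2$.

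To finish, split into two cases. If $\rho(\bm Q)\ge 100\log(n)/n^2$, then $\sqrt{\rho(\bm Q)\log(n)/n^2}\le \rho(\bm Q)/10$ and the $\log(n)/n^2$ term is dominated by the Laplace bound. If $\rho(\bm Q)<100\log(n)/n^2$, then both concentration terms are at most $O(\log(n)/n^2)\le 100\log(n)/(n\e)$ since $\e=O(1)$ and $n\ge 1$. Summing the three bounds gives $|\hat{\bm \rho}-\rho(\bm Q)|\le \rho(\bm Q)/10 + 200\log(n)/(n\e)$ with probability $1-n^{-10}$ (after a union bound over the Bernstein and Laplace events, absorbing constants into the two $100$'s). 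There is no real obstacle here; the only mild care is the case split that converts the additive Bernstein deviation $\sqrt{\rho(\bm Q)\log(n)/n^2}$ into the mixed form $\rho(\bm Q)/10 + O(\log(n)/(n\e))$ required by the statement.
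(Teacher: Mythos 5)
Your proof is correct, and structurally it matches the paper's: a Laplace tail bound for the noise, concentration of $\bm \rho(\bm G)$ around $\rho(\bm Q)$, and a union bound. The difference is in how the concentration step is executed. The paper invokes its \cref{lem:convergence-edge-density} directly with $t=\rho(\bm Q)/10$, which yields failure probability $\exp(-n^2\rho(\bm Q)/1000)$; this silently requires $n^2\rho(\bm Q)\gg\log n$ to reach the asserted $n^{-100}$, so the very sparse regime is not literally covered by the written argument. Your route via Bernstein produces the additive deviation $\lesssim\sqrt{\rho(\bm Q)\log(n)/n^2}+\log(n)/n^2$, and the explicit case split on whether $\rho(\bm Q)$ exceeds $100\log(n)/n^2$ converts this into the required mixed form $\rho(\bm Q)/10+O(\log(n)/(n\e))$ in all regimes, with the residual absorbed into the Laplace budget; this is slightly more general and more self-contained. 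You also account explicitly for the term $|\E\bm\rho(\bm G)-\rho(\bm Q)|\lesssim\rho(\bm Q)/n$ coming from the $n^2$ versus $n(n-1)$ normalizations, a discrepancy the paper glosses over (its \cref{lem:convergence-edge-density} in fact normalizes $\rho(Q)$ by $n(n-1)$, not by $n^2$ as in this lemma's statement). One small slip: union-bounding over the Laplace and Bernstein events each failing with probability $n^{-10}$ gives $1-2n^{-10}$, not $1-n^{-10}$; tighten one of the two thresholds, exactly as the paper implicitly does by using $n^{-100}$ for the concentration event.
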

\begin{proof}
  By the density of Laplacian distribution, for $\eta\sim \Lap\Paren{\frac{10}{n\epsilon}}$, with probability $1-\exp(-t)$, we have $|\eta|\leq \frac{20t}{n\epsilon}$.
 Taking $t=10\log(n)$, with probability at least $1-\frac{1}{n^{10}}$, we have $|\eta|\leq \frac{200\log(n)}{n\epsilon}$.
  
    On the other hand, by taking $\alpha=R$ in \cref{lem:convergence-edge-density}, with probability at least $1-\frac{1}{n^{100}}$, we have $|\rho(\bm Q)-\rho(\bm G)|\leq \frac{\rho(\bm Q)}{10}$. 
    By union bound, we have the claim.
\end{proof}

\subsection{Target edge density estimation with higher accuracy}
For achieving higher accuracy in estimating the edge density, we use the following result from~\cite{SmoothErdosRenyi}\footnote{Particularly they consider degree $D$ concentrated graphs, which include the graphs with degree bounded by $D$ as a special case.}.

\begin{theorem}[Corollary of theorem 3.3 in~\cite{SmoothErdosRenyi}]
    For any $n$-vertex graph with degree bounded by $D$, there is a $\e$-differentially (node) private algorithm, which given the graph $\bm G$ and degree bound $D$, outputs $\hat{\bm \rho}$ such that 
    \begin{equation*}
        \E \Brac{\rho(G)-\hat{\bm \rho}}^2\leq O\Paren{\frac{D^2}{\e^2 n^4}+\frac{1}{\e^4 n^4}}\,.
    \end{equation*}
    Furthermore the algorithm runs in $O(n^2)$ time.
\end{theorem}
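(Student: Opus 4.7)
The plan is to derive this statement as a direct corollary of Theorem 3.3 of~\cite{SmoothErdosRenyi}. Their theorem is stated for the broader class of \emph{degree $D$-concentrated} graphs, i.e., graphs in which almost all vertices have degree at most $D$ (in a suitable quantitative sense). The key observation, which the footnote already highlights, is that any graph whose maximum degree is bounded by $D$ trivially falls into this class: every vertex has degree at most $D$, so the concentration condition holds in its strongest possible form.

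The steps I would carry out are as follows. First, I would recall the precise notion of degree $D$-concentration used in~\cite{SmoothErdosRenyi} and the exact statement of their Theorem 3.3, which supplies an $\e$-differentially node-private algorithm whose mean-squared error in estimating $\rho(G)$ is $O\bigparen{D^2/(\e^2 n^4) + 1/(\e^4 n^4)}$ on inputs from that class, together with an $O(n^2)$ runtime bound. Second, I would verify the trivial inclusion: if every vertex of $G$ has degree at most $D$, then $G$ is degree $D$-concentrated with parameters that are at least as favorable as any threshold their theorem requires; in particular, no tail correction is needed. Third, I would simply instantiate their algorithm with the supplied degree bound $D$ and input graph $\bm G$, and transport both the privacy guarantee and the error bound verbatim. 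Privacy holds on \emph{all} input graphs (not just those satisfying the degree bound), because their mechanism is designed to be $\e$-node-DP over the full space of graphs; the degree bound only enters the utility analysis.

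There is no real obstacle in this reduction: the content lies entirely in Theorem 3.3 of~\cite{SmoothErdosRenyi}, and the maximum-degree hypothesis is a strict strengthening of their concentration hypothesis. The only thing that requires a line of care is to confirm that ``degree $D$-concentrated'' in their paper does not include additional structural assumptions (e.g.\ on the fraction of high-degree vertices being strictly positive) that a uniformly degree-bounded graph might trivially fail; from the wording of the cited footnote, this is not the case. I would conclude the proof in one or two sentences once this inclusion is pointed out.
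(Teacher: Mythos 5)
Your proposal is correct and takes the same route as the paper: the paper simply imports Theorem 3.3 of~\cite{SmoothErdosRenyi} and notes in a footnote that graphs with maximum degree at most $D$ form a special case of the ``degree $D$-concentrated'' graphs for which that theorem is stated. Your extra care in separating the privacy claim (which holds on all graphs) from the utility claim (which uses the degree bound) is the right reading of the statement, but it does not change the argument.
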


By repeatedly running the algorithm and taking the median of the estimators, we obtain the following corollary
\begin{corollary}\label{cor:privateDensitydegreeconcentrated}
    For any $n$-vertex graph with degree bounded by $D$, there is a $\e$-differentially (node) private algorithm, which given the graph $\bm G$ and degree bound $D$, outputs $\hat{\bm \rho}$ such that with probability at least $1-\frac{1}{n^{100}}$
    \begin{equation*}
        (\rho(\bm G)-\hat{\bm \rho})^2\leq O\Paren{\frac{\log^2(n) D^2}{\e^2 n^4}+\frac{\log^4(n)}{\e^4 n^4}}\,.
    \end{equation*}
    Furthermore the algorithm runs in $\poly(n)$ time.
\end{corollary}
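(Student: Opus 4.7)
The plan is to boost the expectation bound from the base theorem of \cite{SmoothErdosRenyi} into a high-probability bound via a standard median-of-independent-estimators trick, while paying only logarithmic factors in the privacy budget to basic composition.

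First, I would set $T = \Theta(\log n)$ (with a constant chosen large enough to make the Chernoff bound below conclude with failure probability at most $n^{-100}$) and run the base $(\e/T)$-node-differentially-private algorithm $T$ times independently on the same input $\bm G$ to obtain estimators $\hat{\bm \rho}_1,\ldots,\hat{\bm \rho}_T$. By the basic composition theorem for differential privacy, the overall mechanism is $\e$-node-differentially private. By the base theorem applied with privacy parameter $\e/T$, each individual estimator satisfies
\[
\E\bigbrac{(\rho(\bm G)-\hat{\bm \rho}_t)^2} \;\le\; C\cdot \Paren{\frac{T^2 D^2}{\e^2 n^4}+\frac{T^4}{\e^4 n^4}}
\]
for some absolute constant $C>0$.

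Next, by Markov's inequality applied to each $t$, the event
\[
E_t \;\seteq\; \Set{(\rho(\bm G)-\hat{\bm \rho}_t)^2 \;\le\; 4C\cdot \Paren{\frac{T^2 D^2}{\e^2 n^4}+\frac{T^4}{\e^4 n^4}}}
\]
has probability at least $3/4$. Since the $T$ runs use independent internal randomness, the indicators $\mathbf{1}[E_t]$ are independent Bernoullis of mean at least $3/4$, so by a Chernoff bound the fraction of indices $t$ with $E_t$ occurring is at least $1/2$ with probability at least $1 - \exp(-\Omega(T)) \ge 1 - n^{-100}$, provided the constant in $T = \Theta(\log n)$ is chosen sufficiently large.

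Finally, I would output the median $\hat{\bm \rho} \seteq \median_{t\in[T]} \hat{\bm \rho}_t$. Whenever more than half of the $\hat{\bm \rho}_t$ lie in the interval $\brac{\rho(\bm G) - \Delta,\, \rho(\bm G) + \Delta}$ with $\Delta \seteq 2\sqrt{C}\cdot \bigparen{\frac{T D}{\e n^2}+\frac{T^2}{\e^2 n^2}}$, the median itself lies in this interval. Substituting $T = \Theta(\log n)$ gives the claimed bound $(\rho(\bm G)-\hat{\bm \rho})^2 \lesssim \frac{\log^2(n) D^2}{\e^2 n^4}+\frac{\log^4(n)}{\e^4 n^4}$, and the whole procedure runs in $T \cdot O(n^2) = \poly(n)$ time. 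The main (mild) obstacle is only bookkeeping — choosing the constant in $T$ and verifying that the composition-induced $T^2$ and $T^4$ factors exactly reproduce the stated polylogarithmic factors — since no new algorithmic ideas beyond median boosting are needed.
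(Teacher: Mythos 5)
Your proposal is correct and matches the paper's proof essentially verbatim: the paper's \cref{algo:privateDensityEstimation} runs the base mechanism of \cite{SmoothErdosRenyi} for $\log n$ independent rounds at privacy budget $\e/(10\log n)$ each, applies Markov's inequality to each run, and returns the median, concluding exactly as you do. Your write-up is in fact a bit more careful (the paper's proof has a dropped phrase and leaves the Chernoff-plus-median step implicit), but the argument is identical.
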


We describe our algorithm in \cref{algo:privateDensityEstimation}.
\begin{algorithmbox}[Private algorithm for estimation of target edge density]
    \label{algo:privateDensityEstimation}
    \mbox{}\\
    \textbf{Input:} Adjacency matrix $A$, privacy parameter $\epsilon>0$, and a bound of underlying graphon $R\geq 20 \Lambda =20 \normi{W}$.

    \noindent
    \textbf{Output:} $\e$-differentially private target edge density estimator $\hat{\bm \rho}$.
    \begin{enumerate}[1.]
    \item  Run algorithm 1 from~\cite{SmoothErdosRenyi} with privacy parameter $\frac{\e}{10\log(n)}$ for $\log(n)$ independent rounds.
    \item Return the median of the $\log(n)$ estimators obtained in step 2 as $\hat{\bm \rho}$.
    \end{enumerate}
\end{algorithmbox}

\begin{proof}[Proof of \cref{cor:privateDensitydegreeconcentrated}]
    By the composition theorem and post-processing theorem, the algorithm is differentially private. 
    For the utility guarantees, we   that for each of the $\log(n)$ estimators $\hat{\bm \rho}_i$, by Markov inequality,
    \begin{equation*}
        \Pr\Brac{(\rho(\bm G)-\hat{\bm \rho}_i)^2\lesssim \frac{\log^2(n) D^2}{\e^2 n^4}+\frac{\log^4(n)}{\e^4 n^4}}\geq 1-\frac{1}{100}\,.
    \end{equation*}
    By independently running $\log(n)$ rounds, we have 
    \begin{equation*}
            \Pr\Brac{(\rho(\bm G)-\hat{\bm \rho}_i)^2\lesssim \frac{\log^2(n) D^2}{\e^2 n^4}+\frac{\log^4(n)}{\e^4 n^4}}\geq 1-\exp(-100\log(n))=1-\frac{1}{n^{100}}\,.
    \end{equation*}
\end{proof}

\section{Robust estimation of stochastic block model}\label{sec:robustness}

In this section, we present a $\poly(n,k)$-time algorithm which robustly estimates block-connectivity matrices for stochastic block model.

\subsection{Node-robust average edge density estimation}
First, we give a polynomial time algorithm for robustly estimating the average edge density in the graph.
Our algorithm can be described as 
\begin{algorithmbox}[Node-robust edge density estimation]\label{algo:node-robust-density-estimation}
        \mbox{}\\
        \textbf{Input:} Corrupted adjacency matrix $\bm A$, corruption fraction $\eta$.
        \begin{enumerate}[1.]
        \item Consider the level-$O(1)$ sum-of-squares relaxation of the following integer program: 
        \begin{align}\label{eq:robust-density-estimation}
            &\min \frac{2}{n(n-1)}\iprod{\bm A,(1-x)(1-x)^\top}\\
            \text{s.t\qquad} &x_i^2=x_i\nonumber\\
            &\sum_{i=1}^n x_i\leq \eta n \nonumber
        \end{align}
        \item Return the objective value $\hat{\rho}$.
        \end{enumerate}
\end{algorithmbox}

We give the robust density estimation guarantees of the algorithm in the stochastic block models.
\begin{theorem}\label{thm:robust-density-estimation}
    Consider the following model: $\bm G_0\sim \SBM(n,d,B_0)$ and $\bm G$ is obtained from $\bm G_0$ by arbitrarily corrupting $\eta\cdot n$ vertices. 
    Suppose $\normi{B_0}\leq R$.
    Then with high probability over $\bm G_0$, the output of \cref{algo:node-robust-density-estimation} satisfies
    \begin{equation*}
        \abs{\hat{\bm \rho}-\bm \rho}\leq O\Paren{\frac{\eta R d\log(n)}{n}+\frac{\sqrt{\rho}\log(n)}{n}}\,.
    \end{equation*}
\end{theorem}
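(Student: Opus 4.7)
My plan is to sandwich $\hat{\bm \rho}$ between $\bm \rho \pm O\Paren{\eta R d \log n / n + \sqrt{\rho}\log n / n}$ by plugging in a good feasible primal solution for the upper bound and certifying a matching lower bound with a constant-degree sum-of-squares argument. Throughout I will condition on two high-probability events for $\bm G_0 \sim \SBM(n,d,B_0)$: (a) the empirical density concentrates, $|\rho(\bm G_0) - \rho| \le O(\sqrt{\rho\log n}/n)$, via Chernoff on $|E(\bm G_0)|$; and (b) the maximum degree is bounded, $D_{\max}(\bm G_0) \le O(Rd + \log n) \le O(Rd\log n)$, via Bernstein on each $\deg_{\bm G_0}(i)$ (whose mean is at most $Rd$ since $\normi{B_0}\le R$) together with a union bound over the $n$ vertices.

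For the upper bound I plug the integer point $x = \mathbf{1}_{S^*}$, the indicator of the adversary's $\eta n$ corrupted vertices, into the SDP; it is feasible. Because $\bm A$ and $\bm A_0$ agree on $(V\setminus S^*)\times (V\setminus S^*)$, the objective at this point equals (up to the normalization in the algorithm) the edge density of $\bm G_0$ restricted to $V\setminus S^*$, which lies within $O(\sqrt\rho\log n / n)$ of $\bm\rho$ by (a) together with $|E(\bm G_0[V\setminus S^*])|\le |E(\bm G_0)|$.

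For the SoS lower bound, fix any pseudo-distribution satisfying the constraints and decompose
\[
  \iprod{\bm A,(1-x)(1-x)^\top} = \iprod{\bm A_0,(1-x)(1-x)^\top} + \iprod{\bm A - \bm A_0,(1-x)(1-x)^\top}.
\]
Expanding the first term as $2|E(\bm G_0)| - 2\sum_i x_i \deg_{\bm G_0}(i) + \iprod{\bm A_0, xx^\top}$, I drop the last summand via $\iprod{\bm A_0,xx^\top}\ge 0$ (an SoS consequence of $\bm A_0 \ge 0$ and the Boolean identity $x_ix_j = (x_ix_j)^2$) and control the middle summand by $D_{\max}(\bm G_0) \sum_i x_i \le O(\eta n Rd\log n)$, using (b) and the constraint $\sum_i x_i \le \eta n$. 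For the cross term I split $\bm A - \bm A_0 = \bm D_+ - \bm D_-$ into adversary-added and adversary-deleted edges, both entrywise nonnegative and supported on $S^*$-incident entries. The Boolean identities $(1-x_j)^2 = 1-x_j$ and $x_i^2 = x_i$ yield the degree-$4$ SoS facts $(1-x_i)(1-x_j) = \bigl((1-x_i)(1-x_j)\bigr)^2 \ge 0$ and $1 - (1-x_i)(1-x_j) = x_i(1-x_j) + x_j \ge 0$ (each summand a product of SoS factors). These imply, in constant degree, $\iprod{\bm D_+,(1-x)(1-x)^\top}\ge 0$ and $\iprod{\bm D_-,(1-x)(1-x)^\top}\le \iprod{\bm D_-,J}$, and the latter is twice the number of $\bm G_0$-edges erased by the adversary, hence at most $2\sum_{i\in S^*}\deg_{\bm G_0}(i) \le O(\eta n Rd\log n)$ by (b). Combining with (a) and renormalizing by $n(n-1)/2$ yields the claimed two-sided bound on $\hat{\bm\rho}-\bm\rho$.

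The main obstacle is the cross term $\iprod{\bm A-\bm A_0,(1-x)(1-x)^\top}$: a naive \Holder-type bound $\Abs{\iprod{\bm A-\bm A_0,(1-x)(1-x)^\top}} \le \normsum{\bm A-\bm A_0}\cdot\normmax{(1-x)(1-x)^\top}$ pays $\Omega(\eta n^2)$ and translates to $\Omega(\eta)$ in density, overshooting the target by a factor of $n/(Rd\log n)$. The key saving is that only the deletion part $\bm D_-$ can pull the SDP objective downward, and $\bm D_-$ is entrywise dominated by $\bm A_0$ on $S^*$-incident entries, so its total mass is controlled by the uncorrupted degrees of the corrupted vertices---each of which is only $O(Rd\log n)$---rather than by $n$. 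Converting this structural observation into a constant-degree SoS certificate is exactly where the Boolean upper bound $(1-x_i)(1-x_j)\le 1$, provable in degree four from $x_i^2 = x_i$, is used in an essential rather than cosmetic way.
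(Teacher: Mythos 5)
Your proof is correct and achieves the paper's error rate via a genuinely different decomposition. For the SoS lower bound, the paper splits $\bm A$ by the \emph{positions} of its entries relative to the corrupted set $S^*$ -- uncorrupted block $\bm A\odot(1-x^*)(1-x^*)^\top$, corrupted-to-all strips $2\bm A\odot x^*\Ind^\top$, and corrupted block $\bm A\odot x^*(x^*)^\top$ -- and then uses that $\bm A=\bm A_0$ on the uncorrupted block, invoking a spectral-norm bound $\norm{\bm A_0}\lesssim Rd\log n$ to control the $x$-dependent parts. You instead split $\bm A = \bm A_0 + (\bm A-\bm A_0)$ and then the perturbation by sign, $\bm A-\bm A_0=\bm D_+-\bm D_-$, using only the Boolean-cube inequalities $x_ix_j\ge 0$, $(1-x_i)(1-x_j)\ge 0$, and $(1-x_i)(1-x_j)\le 1$ (each a degree-4 SoS consequence of $x_i^2=x_i$) together with the max-degree bound $D_{\max}(\bm G_0)\lesssim Rd\log n$. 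Both routes pay the same: an $\eta Rd\log n/n$ term from the at most $\eta n\cdot D_{\max}(\bm G_0)$ edges touching corrupted vertices, and a $\sqrt\rho\log n/n$ term from concentration of $\rho(\bm G_0)$. Your version arguably reads cleaner -- the max-degree bound is more elementary than the spectral bound (and implies it, since $\norm{\bm A_0}\le D_{\max}(\bm G_0)$ for a nonnegative symmetric matrix), the three entrywise Boolean inequalities are exactly tailored to your three pieces, and you sidestep a couple of apparent sign/notation slips in the paper's displayed inequalities for the uncorrupted block.
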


We split the proof of the theorem into two lemmas.

\begin{lemma}\label{lem:robust-density-ub}
   The objective value of the level-$O(1)$ sum-of-squares relaxation of \cref{eq:robust-density-estimation} is upper bounded by $\rho(\bm G_0)$.
\end{lemma}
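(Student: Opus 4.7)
The plan is to exhibit one integer feasible point of the program in \cref{eq:robust-density-estimation} whose objective value is at most (a constant times) $\rho(\bm G_0)$, and then to invoke the standard fact that the Dirac distribution at any integer feasible point is a valid pseudo-distribution at every SoS level. Since the program is a minimization, this immediately yields the desired upper bound on the level-$O(1)$ SoS relaxation.

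Concretely, let $S \subseteq [n]$ denote the set of $\eta n$ corrupted vertices used by the adversary in producing $\bm G$ from $\bm G_0$, and set $x^\star = \mathbf{1}_S \in \{0,1\}^n$. The constraints $(x^\star_i)^2 = x^\star_i$ and $\sum_i x^\star_i \le \eta n$ hold trivially, so the point mass $\delta_{x^\star}$ is a level-$\ell$ pseudo-distribution satisfying the polynomial system for every $\ell$ (its pseudo-expectations are honest expectations, and $x^\star$ satisfies each constraint pointwise). Writing $E_S \seteq \bigset{(i,j) : \bm A(i,j)=1,\ i,j \notin S}$ for the (ordered) pairs of adjacent, uncorrupted vertices, the objective at $x^\star$ evaluates to
\begin{equation*}
  \tfrac{2}{n(n-1)}\iprod{\bm A,(1-x^\star)(1-x^\star)^\top}
  \;=\;\tfrac{2}{n(n-1)}\sum_{i,j\notin S}\bm A(i,j)
  \;=\;\tfrac{2\,|E_S|}{n(n-1)}\,,
\end{equation*}
i.e.\ the empirical density of $\bm G$ on the uncorrupted subgraph.

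The key combinatorial observation is the invariance of ``internal'' edges: the corruption operation only rewires edges incident to vertices in $S$, so every edge of $\bm G$ with both endpoints in $[n]\setminus S$ is already an edge of $\bm G_0$, and conversely. Hence $|E_S| \le 2|E(\bm G_0)|$ (with the factor $2$ coming from counting ordered pairs), and plugging this into the display above gives the value at $x^\star$ bounded by $\rho(\bm G_0) = 2|E(\bm G_0)|/(n(n-1))$, up to a harmless constant factor that is absorbed into the big-$O$ in \cref{thm:robust-density-estimation}. I do not foresee any genuine obstacle in formalizing this: the proof is weak duality for SoS minimization relaxations combined with the edge-preservation property of node corruption, and no nontrivial SoS machinery is required because the witness is an honest integer point.
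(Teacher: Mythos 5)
Your proof takes essentially the same approach as the paper's: both exhibit the indicator $x^*$ of the corrupted set as a feasible integer point, invoke (implicitly in the paper, explicitly in your write-up) the fact that the Dirac pseudo-distribution at $x^*$ witnesses the SoS upper bound, and use the observation that node corruption preserves edges between uncorrupted vertices. You are actually a bit more careful than the paper: you correctly note that $\iprod{\bm A,(1-x^*)(1-x^*)^\top}$ counts ordered pairs, giving a bound of $2\rho(\bm G_0)$ rather than the stated $\rho(\bm G_0)$; the paper's one-line justification elides this factor of $2$, which is harmless for the big-$O$ conclusion of \cref{thm:robust-density-estimation} but does not literally match the lemma statement.
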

\begin{proof}
  Let $x^*\in \Set{0,1}^n$ be the indicator vector of the corrupted vertices. 
  Then $x^*$ is a feasible solution to the program \cref{eq:robust-density-estimation}, and the corresponding objective value is upper bounded by 
  \begin{equation*}
    \frac{2}{n(n-1)}|E(\bm G_0)|=\rho(\bm G_0)\,.
  \end{equation*}
\end{proof}

\begin{lemma}\label{lem:robust-density-lb}
    With probability at least $1-\frac{1}{n^{100}}$, the objective value of the level-$O(1)$ sum-of-squares relaxation of \cref{eq:robust-density-estimation} is lower bounded by $\rho(\bm G_0)-O\Paren{\frac{\eta R d\log(n)}{n}}$.
\end{lemma}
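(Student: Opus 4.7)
The plan is to build a constant-degree sum-of-squares proof that reduces the objective on the corrupted graph $\bm A$ to an objective on the uncorrupted SBM sample $\bm G_0$, then invoke a standard concentration bound on the maximum degree of $\bm G_0$. Let $T \subseteq [n]$ denote the set of $\eta n$ vertices rewired by the adversary (unknown to the algorithm but fixed for the analysis), and let $\bm G_0^T$ be the matrix obtained from $\bm G_0$ by zeroing out every row and column indexed by $T$. Because the adversary only alters edges incident to $T$, the entry-wise inequalities $\bm A \ge \bm G_0^T$ and $\bm G_0 \ge \bm G_0^T$ hold deterministically, and both $\bm A - \bm G_0^T$ and $\bm G_0 - \bm G_0^T$ are entry-wise nonnegative with the latter supported on rows/columns indexed by $T$.

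The sos ingredients are the elementary facts $(1-x_i)(1-x_j) \in [0,1]$ and $x_i x_j \ge 0$, all derivable in degree at most $4$ from $x_i^2 = x_i$. Indeed, $(1-x_i)^2 = 1 - x_i$ gives $(1-x_i)(1-x_j) = \bigl((1-x_i)(1-x_j)\bigr)^2 \ge 0$, and analogously $x_i x_j = (x_i x_j)^2 \ge 0$; the upper bound $(1-x_i)(1-x_j) \le 1$ follows from the identity $1 - (1-x_i)(1-x_j) = x_i(1-x_j) + x_j$ together with $x_i(1-x_j) = (x_i(1-x_j))^2 \ge 0$ and $x_j = x_j^2 \ge 0$. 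Combining the first fact with $\bm A - \bm G_0^T \ge 0$ and the upper-bound fact with $\bm G_0 - \bm G_0^T \ge 0$, one obtains the sos implication
\[
  \cA \sststile{O(1)}{x} \iprod{\bm A, (1-x)(1-x)^\top} \ge \iprod{\bm G_0, (1-x)(1-x)^\top} - 2\sum_{i \in T} \deg_{\bm G_0}(i),
\]
where $\cA$ denotes the constraint system of the relaxation \cref{eq:robust-density-estimation}.

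Next, I will expand $\iprod{\bm G_0, (1-x)(1-x)^\top} = \iprod{\bm G_0,\Ind\Ind^\top} - 2\sum_i x_i \deg_{\bm G_0}(i) + \sum_{ij} \bm G_{0,ij} x_i x_j$, drop the nonnegative quadratic term (combining $\bm G_{0,ij} \ge 0$ with the sos fact $x_i x_j \ge 0$), and bound $\sum_i x_i \deg_{\bm G_0}(i) \le \bigl(\max_i \deg_{\bm G_0}(i)\bigr) \cdot \sum_i x_i \le \bigl(\max_i \deg_{\bm G_0}(i)\bigr) \cdot \eta n$ by observing that $\sum_j x_j (\max_i \deg_{\bm G_0}(i) - \deg_{\bm G_0}(j)) \ge 0$ is a degree-$2$ sos from $x_j \ge 0$ and the constant nonnegativity of the coefficients, and that $\eta n - \sum_i x_i \ge 0$ is already in $\cA$. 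The only probabilistic input is the bound $\max_i \deg_{\bm G_0}(i) \le O(dR + \log n) \le O(dR \log n)$, which holds with probability at least $1 - n^{-100}$ by Chernoff applied to each vertex's degree (a sum of at most $n$ independent Bernoullis with mean at most $dR$) and a union bound over the $n$ vertices; the same bound also gives $\sum_{i \in T} \deg_{\bm G_0}(i) \le \eta n \cdot O(dR \log n)$.

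Chaining these inequalities yields the sos-provable bound $\iprod{\bm A, (1-x)(1-x)^\top} \ge \iprod{\bm G_0,\Ind\Ind^\top} - O(\eta n dR \log n)$, and taking pseudo-expectation and normalizing by $\tfrac{2}{n(n-1)}$ gives $\hat{\bm\rho} \ge \rho(\bm G_0) - O(\eta R d \log n / n)$, matching the claim. The main thing to verify carefully is that every reduction step is a genuine sos derivation from $\cA$ of constant degree, rather than an integer-solution argument; this is why the four Boolean-based sign facts must be invoked explicitly and the quadratic term $\sum \bm G_{0,ij} x_i x_j$ dropped via sos-nonnegativity of $x_i x_j$ rather than by case analysis on integer $x$.
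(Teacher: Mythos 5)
Your proof is correct and reaches the same bound, but it organizes the SoS argument differently from the paper, and arguably more cleanly. The paper's proof splits $\iprod{\bm A,(1-x)(1-x)^\top}$ into three Hadamard-product pieces according to the corrupted-vertex indicator $x^*$ (uncorrupted block, cross-corrupted, within-corrupted), then bounds each piece separately; in particular it invokes a spectral-norm bound $\norm{\bm A_0}$ for the uncorrupted block rather than a pure degree argument. You instead sandwich the objective between the all-uncorrupted matrix $\bm G_0^T$ and $\bm G_0$ using the deterministic entry-wise inequalities $\bm A \ge \bm G_0^T$ and $\bm G_0 \ge \bm G_0^T$, then do a single expansion of $\iprod{\bm G_0,(1-x)(1-x)^\top}$. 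This buys you two things: all the error is charged to the two degree sums $\sum_{i\in T}\deg_{\bm G_0}(i)$ and $\sum_i x_i\deg_{\bm G_0}(i)$, both controlled by the same max-degree Chernoff bound, and every intermediate inequality is explicitly justified as a constant-degree SoS derivation from the Boolean constraints together with $\sum_i x_i \le \eta n$ (which the paper leaves largely implicit). Your route also makes it transparent why the estimate is $\eta n \cdot O(dR\log n)$: it is literally the total degree of corrupted vertices plus the weighted degree of the algorithm's own deletion set. One thing to track but not a gap: the algorithm's objective carries a $\tfrac{2}{n(n-1)}$ prefactor and $\iprod{\bm G_0,\Ind\Ind^\top}$ counts each edge twice, so the normalization should be checked against the upper-bound lemma; both your write-up and the paper's proof of Lemma~\ref{lem:robust-density-ub} appear to share the same constant-factor looseness, which is absorbed by the $O(\cdot)$ in the statement.
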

\begin{proof}
    Let $x^*\in \Set{0,1}^n$ be the indicator vector for the set of the corrupted vertices. 
    Then we have
    \begin{align*}
       \iprod{\bm A,(1-x)(1-x)^\top}=\iprod{\bm A\odot (1-x^*) (1-x^{*})^\top,(1-x)(1-x)^\top}\\
       +2\iprod{\bm A\odot x^*1^\top, (1-x)(1-x)^\top}-\iprod{\bm A\odot x^*x^{*\top},(1-x)(1-x)^\top}\,.
    \end{align*}
    First, 
    \begin{align*}
        \iprod{\bm A\odot (1-x^*) (1-x^{*})^\top,(1-x)(1-x)^\top}&=\iprod{\bm A_0\odot (1-x^*) (1-x^{*})^\top,(1-x)(1-x)^\top}\\
        &\geq \Norm{\bm A_0\odot (1-x^*) (1-x^{*})^\top}_1+\eta\cdot n \cdot \norm{\bm A_0}\,.
    \end{align*}
    Now since $\Norm{\bm A_0\odot (1-x^*) (1-x^{*})^\top}_1$ is given by the number of edges incident to uncorrupted vertices, with probability at least $1-\frac{1}{n^{100}}$, we have
    \begin{equation*}
        \Normf{\bm A_0\odot (1-x^*) (1-x^{*})^\top}^2\leq \Abs{E(\bm G)}-O\Paren{\eta nR d\log(n)}\,.
    \end{equation*}
    Moreover, we have $\Norm{\bm A_0}\leq O\Paren{R d\log(n)}$.
    Therefore, 
    \begin{equation*}
        \iprod{\bm A\odot (1-x^*) (1-x^{*})^\top,(1-x)(1-x)^\top}\geq \Abs{E(\bm G)}-O\Paren{\eta nR d\log(n)}\,.
    \end{equation*}

    Second, we have
    \begin{equation*}
        2\iprod{\bm A\odot x^*1^\top, (1-x)(1-x)^\top}\geq 0\,.
    \end{equation*}
    Finally, we have 
    \begin{equation*}
        \iprod{\bm A\odot x^*x^{*\top},(1-x)(1-x)^\top}\leq \Normf{\bm A\odot x^*x^{*\top}}=\Normf{\bm A_0\odot x^*x^{*\top}}\leq O\Paren{\eta^2 nR d\log(n)}\,.
    \end{equation*}

    Therefore, we have
    \begin{equation*}
        \iprod{\bm A,(1-x)(1-x)^\top}\geq \Abs{E(\bm G)}-O\Paren{\eta nR d\log(n)}\,.
    \end{equation*}
    Finally, our proof is captured by degree-$O(1)$ sum-of-squares proof.
    Therefore the claim follows.
\end{proof}

\begin{proof}[Proof of \cref{thm:robust-density-estimation}]
    The proof follows from the above two lemmas. On the one hand, by \cref{lem:convergence-edge-density}, with probability at least $1-\exp\Paren{-\frac{n^2 t^2}{10\rho(\bm Q)}}$,
    \begin{equation*}
      \Abs{\rho(\bm G_0)-\rho(\bm Q)}\leq t  \,.
    \end{equation*}
    Therefore, with probability at least $1-\frac{1}{\poly(n)}$, we have $\Abs{\rho(\bm G_0)-\rho(\bm Q)}\lesssim \frac{\sqrt{\rho(\bm Q)}\log(n)}{n}$.
    As result, by \cref{lem:robust-density-ub}, with probability at least $1-\frac{1}{\poly(n)}$, we have $\hat{\bm \rho}\leq \bm \rho+O\Paren{\frac{\sqrt{\bm \rho}\log(n)}{n}}$.
  
    On the other hand, we have the lower bound for robust density estimation. 
    By \cref{lem:robust-density-lb}, we have 
    \begin{equation*}
        \hat{\bm \rho}\geq \rho(\bm G_0)-O\Paren{\frac{\eta R d\log(n)}{n}}\geq \rho-O\Paren{\frac{\eta R d\log(n)}{n}+\frac{\sqrt{\bm \rho}\log(n)}{n}} \,.      
    \end{equation*}

    In all we have
    \begin{equation*}
        \Abs{\hat{\bm\rho}-\bm \rho}\leq O\Paren{\frac{\eta R d\log(n)}{n}+\frac{\sqrt{\bm \rho}\log(n)}{n}}\,.
    \end{equation*}
\end{proof}

\subsection{Robust estimation algorithm}

\begin{theorem}
    \label{thm:robust-estimator}
     Consider balanced stochastic block model (\cref{definition:sbm-balanced}).
      With high probability over $\bm G_0\sim \SBM(n,d,B_0)$, given average degree $d$ and any graph $\bm G$ obtained from $\bm G_0$ by arbitrarily corrupting $\eta\cdot n$ vertices, there is a $\poly(n,k)$-time algorithm which outputs a matrix $\hat{\bm B} \in [0,1]^{k\times k}$, and a community membership matrix $\hat{\bm Z} \in \{0,1\}^{n\times k}$ such that
    \begin{equation*}
        \|\hat{\bm Z}\hat{\bm B}\hat{\bm Z}^\top-\bm \Znull \Bnull \bm\Znull^{\top}\|_F^2\leq O_{R}\Paren{\frac{n^2}{d}\cdot k+\eta\cdot n^2} \,,
    \end{equation*}
    where $O_{R}$ hides $R$ which is the upper bound of entries in $\Bnull$.
\end{theorem}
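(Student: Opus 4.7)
The plan is to combine degree pruning, best rank-$k$ SVD, and a $\poly(n,k)$-time constant-factor balanced $k$-means (in the spirit of \cref{algo:non-private}), and then analyze the Frobenius error by splitting the index set into uncorrupted vertices $U$ and corrupted vertices $V$ (with $|V|\le \eta n$). For the corrupted part we will be content with the trivial bound $O_R(\eta n^2)$; all the nontrivial work concerns the uncorrupted submatrix, where we want the statistical rate $O_R(n^2 k/d)$.

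Concretely, given $\bm G$ and $d$, first remove every vertex of degree exceeding $20Rd$ to obtain a pruned adjacency $\bar{\bm A}$; the same pruning is used throughout the paper to kill both high-degree adversarial vertices and heavy-tailed statistical fluctuations (cf.~the theorem of pruned spectral norm invoked in the proof of \cref{thm:mainSBM}). Set $\bar{\bm Y} = (n/d)\bar{\bm A}$ and compute its best rank-$k$ approximation $\hat{\bm Y}$. Run an $O(1)$-approximation algorithm for balanced $k$-means on the rows of $\hat{\bm Y}$ (which exists in $\poly(n,k)$ time, e.g.~a standard LP rounding, or the cited \cite{ding2020faster} with constant approximation setting) to obtain the assignment matrix $\hat{\bm Z}\in\{0,1\}^{n\times k}$ and set $\hat{\bm B}=(k/n)^2 \hat{\bm Z}^\top \hat{\bm Y}\,\hat{\bm Z}$, as in \cref{algo:non-private}.

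For the analysis, let $M_0=\bm Z_0 B_0 \bm Z_0^\top$. The pruned-spectral-norm bound yields $\|\bar{\bm Y}_{U\times U}-M_{0,U\times U}\|\le O(n\sqrt{R/d})$ with high probability, which via the low-rank slack inequality $\|\hat X-Y\|_F^2\le 8k\|X-Y\|^2$ for the best rank-$k$ approximant (as in \cref{lem:non-private-pca}) gives $\|\hat{\bm Y}-M_0\|_F^2$ bounded by $O_R(kn^2/d)$ plus the contribution from the at most $\eta n$ rows and columns in $V$. On $V$-rows/columns, $M_0$ has entries in $[0,R]$, so $\|M_{0,V\cup(\cdot\times V)}\|_F^2 \lesssim R^2\eta n^2$; the pruned adversarial entries in $\bar{\bm Y}$ there are contained in a rank-$2\eta n$ subspace, and an application of Weyl/Wedin's perturbation identities (or equivalently, decomposing $\bar{\bm Y}-M_0$ as a spectral part plus a rank-$2\eta n$ part and using that the best rank-$k$ approximant is $O(k)$-stable under both) confines their contribution to $O_R(\eta n^2)$. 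Finally, balanced $k$-means with a constant approximation factor preserves this bound up to a constant, by the same block-averaging argument as \cref{lem:non-private-kmeans}: $\|\hat{\bm Z}\hat{\bm B}\hat{\bm Z}^\top - \hat{\bm Y}\|_F = O\bigparen{\|\hat{\bm Y}-M_0\|_F}$, so the triangle inequality gives the claim.

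The main obstacle is the corruption term. After scaling by $n/d$, the rows of $\bar{\bm Y}$ corresponding to surviving corrupted vertices can each have Frobenius mass as large as $O(\sqrt{Rn^2/d})$, which, if naively bounded row-by-row, would blow up the Frobenius error by a factor polynomial in $n/d$. The key observation that avoids this is purely structural: these bad rows and columns span a subspace of dimension at most $2\eta n$, so we should bound their effect on the rank-$k$ SVD via a \emph{rank-based} (not norm-based) perturbation argument. Formally, write $\bar{\bm Y}-M_0 = E_{\mathrm{spec}}+E_{\mathrm{rank}}$ where $E_{\mathrm{spec}}$ is the $U\times U$ piece (controlled in spectral norm by pruning), and $E_{\mathrm{rank}}$ is the $V$-supported piece (controlled by its rank and the trivial entrywise bound). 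Combining the two contributions via a uniform argument on the top-$k$ singular subspace of $\bar{\bm Y}$, we obtain the stated bound $O_R(n^2 k/d+\eta n^2)$, and the $\poly(n,k)$ running time is immediate from the three algorithmic ingredients (pruning, SVD, constant-factor balanced $k$-means).
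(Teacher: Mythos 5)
Your proposal takes a genuinely different route from the paper's. The paper solves a constant-level sum-of-squares relaxation of the integer program \cref{eq:robust-program}, where degree pruning is encoded via the matrix variable $C$ and the crucial constraint $B\in\{0,R\}^{k\times k}$ forces the candidate signal $ZB\transpose Z$ to be entrywise bounded by $R$. You instead propose degree-pruning followed by plain rank-$k$ SVD and constant-factor balanced $k$-means. Unfortunately, this alternative route has a gap, and the gap is precisely at the step you flag as the "main obstacle."

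Write $\bar{\bm Y}-M_0=E_{\mathrm{spec}}+E_{\mathrm{rank}}$ as you propose, with $E_{\mathrm{rank}}$ supported on the $\leq 2\eta n$ corrupted rows/columns. After degree pruning each surviving corrupted row of $\bar{\bm A}$ still has up to $20Rd$ ones, and after the $n/d$ scaling these become entries of magnitude $n/d$, so $\normf{E_{\mathrm{rank}}}^2 = \Theta(R\eta n^3/d)$ and, if the adversary points all corrupted vertices at the same $20Rd$ targets, $\norm{E_{\mathrm{rank}}} = \Theta\bigparen{n\sqrt{R\eta n/d}}$. The rank-$k$ best-approximant inequality gives $\normf{\hat{\bm Y}-M_0}^2\leq 2\iprod{\hat{\bm Y}-M_0, E_{\mathrm{spec}}+E_{\mathrm{rank}}}$, and for the $E_{\mathrm{rank}}$ term neither pairing you have access to helps: nuclear--spectral gives $\sqrt{2k}\normf{\hat{\bm Y}-M_0}\cdot\norm{E_{\mathrm{rank}}}$ and Cauchy--Schwarz gives $\normf{\hat{\bm Y}-M_0}\cdot\normf{E_{\mathrm{rank}}}$; either way you end up with $\normf{\hat{\bm Y}-M_0}^2\lesssim R\,k\, n^2/d + R\,\eta\, n^3/d$, a factor $n/d$ worse than the target $O_R(\eta n^2)$. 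Weyl/Wedin bound singular-value and subspace perturbations, not the Frobenius loss you need, and knowing that $E_{\mathrm{rank}}$ has rank $\leq 2\eta n$ does not control this loss by itself. The paper escapes this because $\tE_\mu[ZB\transpose Z]$ is \emph{entrywise} bounded by $R$ (an sos-provable consequence of $\cA_1(Z)$ and $B\in\{0,R\}^{k\times k}$, see \cref{lem:block-matrix-sos-bound}), so the corruption term is handled by the $\ell_\infty$--$\ell_1$ H\"older inequality (\cref{lem:sos-version-of-ell1-elli-Hoelder}) as $R\cdot\normsum{E_{\mathrm{rank}}}=O_R(\eta n^2)$; the SVD output $\hat{\bm Y}$ has entries up to $n/d$ and cannot replicate this, and naively clipping $\hat{\bm Y}$ to $[0,R]$ restores boundedness but destroys the low-rank/best-approximant structure that your Frobenius estimate rests on. So the "structural observation" you invoke does not close the gap, and the algorithm as proposed does not meet the stated $O_R(n^2k/d+\eta n^2)$ bound.
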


As a corollary, we obtain a robust estimator for $k$-block graphon, because the matrix $\hat{Z}\hat{B}\hat{Z}^\top$ here is a k-block matrix and we can output a $k$-block-wise constant function as our estimator for the underlying graphon.

The key ingredient of our robust algorithm is the following integer optimization program, where the observed corrupted graph is denoted as $A$ and the hidden uncorrupted graph is denoted as $A_0$.

\begin{equation}
\label{eq:robust-program}
    \begin{aligned}
    \min_{Z, B, C} \quad & \|ZBZ^\top\|_F^2 - 2 \iprod{\frac{n}{d} C, ZBZ^{\top}} \\
    \textrm{s.t.} \quad & Z \in \{0,1\}^{n\times k} \\
      & \sum_{a=1}^k Z_{i,a} = 1 \quad \forall i \in [n] \\
      & \sum_{i=1}^n Z_{i,a} = \frac{n}{k} \quad \forall a \in [k] \\
      & B \in \{0,R\}^{k \times k} \\
      & C \in[0,1]^{n \times n} \text { is symmetric}\\
      & C_{i, j} \leq A_{i, j} \quad \forall i, j \in [n] \\
      & \sum_{j \in [n]} C_{i, j} \leq 20 R d \quad \forall i \in[n] \\
    \end{aligned}
\end{equation}

Since the program in \cref{eq:robust-program} is an integer program which cannot be efficiently solved, we apply sum-of-squares relaxation in our algorithm to obtain a polynomial-time computable approximation. We can show that the approximate solution of the sum-of-squares program is close enough to the true value $\bm \Znull \Bnull \bm\Znull^{\top}$.

Given the approximate solution, we can apply balanced-$k$-means algorithm to obtain an integral solution $\hat{Z}$ and the corresponding $\hat{B}$. It can be shown that $\hat{Z}\hat{B}\hat{Z}^\top$ is also close to the true value $\bm \Znull \Bnull \bm\Znull^{\top}$.

\begin{algorithmbox}[Node-robust polynomial time algorithm]
\label{algo:node-robust}
	\mbox{}\\
	\textbf{Input:} Corrupted adjacency matrix $A$.
	\begin{enumerate}[1.]
    \item Obtain level-$O(1)$ pseudo-distribution $\mu$ for program \cref{eq:robust-program}.
    \item Run $O(1)$-approximation algorithm for k-means\cite{KANUNGO200489} on the row vectors of $\tilde{\E}_{\mu} [ZBZ^\top]$ to obtain:
    \begin{itemize}
        \item $\hat{Z} \in \{0,1\}^{n\times k}$ which is a community membership matrix;
        \item $\tilde{\theta}\in \R^{n\times n}$ where row $i$ is the center of cluster for $i$-th vertex.
        \item diagonal matrix $D\in \R^{k\times k}$, where the $i$-th entry is given by $\frac{1}{n_i}$, with $n_i$ being the number of vertices in the $i$-th cluster.
    \end{itemize}
    \item Compute the k-by-k block connectivity matrix $\hat{B} = D\hat{Z}^\top \tilde{\theta} \hat{Z}D$ by averaging entries in $\tilde{\theta}$ according to community membership matrix $\hat{Z}$.
    \item Output $\hat{Z}$ and $\hat{B}$. 
	\end{enumerate}
\end{algorithmbox}

\begin{proof}[Proof sketch]
    In Step 1 of the algorithm, we can show that, with high probability,
    \begin{equation*}
        \|\tilde{\E} [ZBZ^\top]-\bm \Znull \Bnull \bm\Znull^{\top}\|_F^2 \leq O_{R} \Paren{\frac{n^2}{d}\cdot k+\eta\cdot n^2}\,.
    \end{equation*}

    In step 2 of the algorithm, using the constant approximation ratio guarantee of balanced-$k$-means, it follows that
    \begin{equation*}
        \|\tilde{\theta}-\bm \Znull \Bnull \bm\Znull^{\top}\|_F^2 \leq O_{R} \Paren{\frac{n^2}{d}\cdot k+\eta\cdot n^2}\,.
    \end{equation*}
    
    Finally, using the fact that $\hat{Z}$ is a balanced community membership matrix, we can show that
    \begin{equation*}
        \Normf{\hat{Z}\hat{B}\hat{Z}^\top- \bm \Znull \Bnull \bm\Znull^{\top}}^2 \leq O_{R} \Paren{\frac{n^2}{d}\cdot k+\eta\cdot n^2} \,.
    \end{equation*}
\end{proof}

\subsection{Proofs}

\begin{lemma}
    In \cref{algo:node-robust}, with high probability, the estimator $\tilde{\E} [ZBZ^\top]$ obtained in step 1 satisfies
    \begin{equation*}
        \|\tilde{\E} [ZBZ^\top]-\bm \Znull \Bnull \bm\Znull^{\top}\|_F^2 \leq O_{R} \Paren{\frac{n^2}{d} \cdot k+\eta\cdot n^2}\,.
    \end{equation*}
\end{lemma}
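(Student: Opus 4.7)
My plan is to adapt the sos identifiability framework of \cref{lem:basic-sos-utility,lem:abstractidentifiability} to the robust program \cref{eq:robust-program}. Writing $U = ZBZ^\top$, $Y = \tfrac n d C$, $U_0 = Z_0 B_0 Z_0^\top$, and $F(U, Y) = \normf{U}^2 - 2\iprod{U, Y}$ for the program's objective, the three steps are: (i) produce an integer-feasible witness $(Z_0, B_0, C^*)$ to upper bound the sos value; (ii) use the strong concavity of $F$ to convert this upper bound into control on $\tE_\mu[\normf{U - U_0}^2]$; (iii) bound the residual inner-product terms using sos versions of the spectral \Holder (\cref{cor:sos-spectral-hoelder}) and $\ell_\infty/\ell_1$ \Holder (\cref{lem:sos-version-of-ell1-elli-Hoelder}) inequalities.

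\textbf{Feasibility and strong-concavity reduction.} Set $C^* = \bar{\bm A}_0$, the uncorrupted adjacency $\bm A_0$ with rows/columns of the $\eta n$ corrupted vertices and of any vertex whose $\bm A_0$-degree exceeds $20Rd$ zeroed out. Standard SBM concentration (e.g.\ \cref{lem:degree-counting,theorem:pruned-spectral-norm}) implies that whp only $\eta n + o(n)$ rows are pruned, that the pruned centered noise has spectral norm $O(\sqrt{Rd})$, and that $(Z_0, B_0, C^*)$ is feasible for \cref{eq:robust-program}. Setting $Y^* = \tfrac n d C^*$, the quadratic identity
\begin{equation*}
  F(U, Y) - F(U_0, Y^*) = \normf{U - U_0}^2 - 2\iprod{U - U_0, Y - U_0} - 2\iprod{U_0, Y - Y^*}
\end{equation*}
combined with sos duality $\tE_\mu[F(U, Y)] \le F(U_0, Y^*)$ reduces the problem, after taking pseudo-expectation, to bounding $\tE_\mu \iprod{U - U_0, Y - U_0}$ and $\iprod{U_0, \bar Y - Y^*}$, where $\bar Y \seteq \tE_\mu[Y]$ is a fixed matrix respecting the constraints $\cA_2$.

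\textbf{Bounding the two terms and main obstacle.} Decompose $Y - U_0 = (Y - Y^*) + (Y^* - U_0)$. The constant piece $Y^* - U_0$ splits into a centered-noise part on the uncorrupted low-degree block (spectral norm $O(n\sqrt{R/d})$ by pruned matrix Bernstein) and a rank-$O(\eta n)$ part from the zeroed bad block with entries in $[-R, R]$ (spectral norm $O(Rn\sqrt{\eta})$). Applying sos spectral \Holder to the rank-$\le 2k$ matrix $U - U_0$ (which follows from $\cA_1(Z)\cup\cA_1(Z_0)$ via \cref{lem:sos-community-orthogonality}) yields
\begin{equation*}
  \iprod{U - U_0, Y^* - U_0} \le \tfrac 14 \normf{U - U_0}^2 + O_R\Paren{\tfrac{k n^2}{d} + k \eta n^2}.
\end{equation*}
For the residual $Y - Y^*$, on the good block $Y^*$ saturates $\tfrac n d \bm A$ so $Y - Y^* \le 0$ pointwise there; off the good block the row-sum budget in $\cA_2$ gives $\normsum{Y - Y^*} \lesssim R \eta n^2$. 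Combining the sos $\ell_\infty/\ell_1$ \Holder with $\cA_1(Z) \sststile{}{} \normmax{U} \le R$ then yields $\abs{\iprod{U - U_0, Y - Y^*}} \le O_R(\eta n^2)$, and the analogous row-sum argument bounds $\abs{\iprod{U_0, \bar Y - Y^*}}$ by $O_R(\eta n^2)$. Summing and using pseudo-Jensen ($\normf{\tE_\mu[U] - U_0}^2 \le \tE_\mu[\normf{U - U_0}^2]$) produces the target bound. The main obstacle is exactly that $Y$ is a pseudo-distribution variable (the continuous relaxation of $C$) rather than a fixed input, so sos spectral \Holder cannot be invoked directly on $Y - U_0$; splitting around the constant reference $Y^*$ is the key technical move, since it confines the spectral-norm argument to the fixed matrix $Y^* - U_0$ while leaving a residual that the row-sum and box constraints in $\cA_2$ alone suffice to handle.
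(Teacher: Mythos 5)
Your strategy — an integer-feasible witness $(Z_0, B_0, C^*)$ with $C^* = \bar{A}_0$, a strong-concavity/completing-the-square identity, and a combination of spectral and entrywise \Holder bounds on the resulting cross term — is conceptually the same as the paper's proof, and your quadratic identity and the decomposition $Y - U_0 = (Y - Y^*) + (Y^* - U_0)$ are correct. However, there is a quantitative gap in the step where you bound $\tE_\mu\iprod{U - U_0, Y^* - U_0}$: you apply the sos spectral \Holder inequality (\cref{cor:sos-spectral-hoelder}) to the \emph{whole} constant matrix $Y^* - U_0$, after noting that it decomposes into a centered-noise part of spectral norm $O(n\sqrt{R/d})$ and a bad-block part of spectral norm $O(Rn\sqrt\eta)$. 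Because spectral \Holder against a rank-$\le 2k$ indeterminate contributes an error $\Theta(k)\cdot\norm{\cdot}^2$, your stated bound is $O_R\Paren{kn^2/d + k\eta n^2}$ — which carries an extra factor of $k$ on the corruption term, and therefore does \emph{not} establish the lemma's claimed $O_R\Paren{kn^2/d + \eta n^2}$.

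The fix, which is what the paper's proof does, is to split $Y^* - U_0 = W_{\mathrm{noise}} + W_{\mathrm{bad}}$ with $W_{\mathrm{noise}} \seteq \bigparen{\tfrac n d A_0 - U_0}_{S_1\cap S_2}$ and $W_{\mathrm{bad}} \seteq (U_0)_{S_1\cap S_2} - U_0$, and to handle the two parts with different tools: sos spectral \Holder for $W_{\mathrm{noise}}$ alone (rank $\le 2k$ against $\norm{W_{\mathrm{noise}}} = O(n\sqrt{R/d})$ gives $O_R(kn^2/d)$), and entrywise $\ell_\infty/\ell_1$ \Holder (\cref{lem:sos-version-of-ell1-elli-Hoelder}) for $W_{\mathrm{bad}}$. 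The latter works because $W_{\mathrm{bad}}$ is supported on the at-most-$2(\eta + \exp(-d))n^2$ entries of the bad block, each bounded by $R$ in absolute value, so $\normsum{W_{\mathrm{bad}}} = O_R\Paren{(\eta+\exp(-d))n^2}$; combined with $\cA_1(Z) \sststile{2}{} 0 \le ZB\transpose{Z} \le R$, this gives $\bigabs{\tE_\mu\iprod{U - U_0, W_{\mathrm{bad}}}} \le O_R(\eta n^2)$ with no factor of $k$. You already use exactly this $\ell_1$/$\ell_\infty$ argument for the residual $Y - Y^*$; it simply needs to be applied to $W_{\mathrm{bad}}$ as well, rather than folding $W_{\mathrm{bad}}$ into the spectral-norm estimate, since spectral \Holder is wasteful for a matrix whose small norm comes from sparsity rather than from low variance.
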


\begin{proof}
    Throughout the proof, we will use the following notation:
    \begin{itemize}
        \item Let $S_1$ denote the set of uncorrupted nodes.
        \item Given matrix $M \in \R^{n \times n}$ and set $S \in [n]$, we use $M_S$ to denote matrix $M \odot (\Ind_S \Ind_S^{\top})$, that is we zero out all rows and columns of $M$ that is not in $S$.
    \end{itemize}
    
    By \cref{theorem:pruned-spectral-norm} and \cref{lem:degree-counting}, it follows that there exists a set of nodes $S_2$ such that
    \begin{equation*}
        \Norm{\Paren{\frac{n}{d} A_0 - \bm \Znull \Bnull \bm\Znull^{\top}}_{S_2}}
        \leq O \Paren{\frac{n}{\sqrt{d}}} \,,
    \end{equation*}
    and, with probability $1-\exp(-d)$, the size of $S_2$ is at least $1 - O\Paren{\exp(-d) \cdot n}$.

    Now, consider the set of nodes $S_1 \cap S_2$. It follows that, with probability $1-\exp(-d)$, the size of $S_1 \cap S_2$ is at least $1 - O\Paren{\exp(-d) \cdot n} - \eta \cdot n$. By monotonicity of spectral norm, it follows that
    \begin{equation*}
        \Norm{\Paren{\frac{n}{d} A_0 - \bm \Znull \Bnull \bm\Znull^{\top}}_{S_1 \cap S_2}}
        \leq O \Paren{\frac{n}{\sqrt{d}}} \,.
    \end{equation*}
    Moreover, since all nodes in $S_1 \cap S_2$ are uncorrupted, we have $(A_0)_{S_1 \cap S_2} = A_{S_1 \cap S_2}$. Therefore, the triplet $(\bm Z_0, B_0, (A_0)_{S_1 \cap S_2})$ is a feasible solution to program \cref{eq:robust-program}, and it follows that
    \begin{equation}
    \label{eq:robust-eq1}
        \tilde{\E} \Brac{\|ZBZ^\top\|_F^2 - 2 \iprod{\frac{n}{d} C, ZBZ^{\top}}}
        \leq \|\bm \Znull \Bnull \bm\Znull^{\top}\|_F^2 - 2 \iprod{\frac{n}{d} (A_0)_{S_1 \cap S_2}, \bm \Znull \Bnull \bm\Znull^{\top}} \,.
    \end{equation}
    Notice that,
    \begin{align*}
        \iprod{\frac{n}{d} C, ZBZ^{\top}}
        & = \iprod{\frac{n}{d} C_{S_1 \cap S_2}, ZBZ^{\top}} + \iprod{\frac{n}{d} (C-C_{S_1 \cap S_2}), ZBZ^{\top}} \\
        & \leq \iprod{\frac{n}{d} A_{S_1 \cap S_2}, ZBZ^{\top}} + O_{R} \Paren{\Paren{\eta + \exp(-d)} n^2} \\
        & = \iprod{\frac{n}{d} (A_0)_{S_1 \cap S_2}, ZBZ^{\top}} + O_{R} \Paren{\Paren{\eta + \exp(-d)} n^2} \,,
    \end{align*}
    where the inequality is from constraint $C_{i,j} \leq A_{i, j}$ and $\sum_{j \in [n]} C_{i,j} \leq 20 R d$, and that the size of $S_1 \cap S_2$ is at least $1 - O\Paren{\exp(-d) \cdot n} - \eta \cdot n$. Plugging this into \cref{eq:robust-eq1}, we get
    \begin{equation*}
    \begin{aligned}
        \tilde{\E} \Brac{\|ZBZ^\top\|_F^2 - 2 \iprod{\frac{n}{d} (A_0)_{S_1 \cap S_2}, ZBZ^{\top}}} & \\
        \leq \|\bm \Znull \Bnull \bm\Znull^{\top}\|_F^2 & - 2 \iprod{\frac{n}{d} (A_0)_{S_1 \cap S_2}, \bm \Znull \Bnull \bm\Znull^{\top}} + O_{R} \Paren{\Paren{\eta + \exp(-d)} n^2} \,.
    \end{aligned}
    \end{equation*}
    Adding $\|\frac{n}{d} (A_0)_{S_1 \cap S_2}\|_F^2$ on both sides, we get
    \begin{equation}
        \label{eq:robust-eq2}
            \tilde{\E} \|ZBZ^\top - \frac{n}{d} (A_0)_{S_1 \cap S_2}\|_F^2
            \leq \|\bm \Znull \Bnull \bm\Znull^{\top} - \frac{n}{d} (A_0)_{S_1 \cap S_2}\|_F^2 + O_{R} \Paren{\Paren{\eta + \exp(-d)} n^2} \,.
    \end{equation}
    The left hand side can be decomposed as
    \begin{align*}
        \tilde{\E} \|ZBZ^\top - \frac{n}{d} (A_0)_{S_1 \cap S_2}\|_F^2
        = & \tilde{\E} \|ZBZ^\top - \bm \Znull \Bnull \bm\Znull^{\top} + \bm \Znull \Bnull \bm\Znull^{\top} - \frac{n}{d} (A_0)_{S_1 \cap S_2}\|_F^2 \\
        = & \tilde{\E} \|ZBZ^\top - \bm \Znull \Bnull \bm\Znull^{\top}\|_F^2 + \|\bm \Znull \Bnull \bm\Znull^{\top} - \frac{n}{d} (A_0)_{S_1 \cap S_2}\|_F^2 \\
        & + 2 \tilde{\E} \iprod{ZBZ^\top - \bm \Znull \Bnull \bm\Znull^{\top}, \bm \Znull \Bnull \bm\Znull^{\top} - \frac{n}{d} (A_0)_{S_1 \cap S_2}} \,.
    \end{align*}
    Plug this into \cref{eq:robust-eq2}, we get
    \begin{equation}
        \label{eq:robust-eq3}
        \begin{aligned}
        \tilde{\E} \|ZBZ^\top - & \bm \Znull \Bnull \bm\Znull^{\top}\|_F^2 \\
        \leq & 2 \tilde{\E} \iprod{\bm \Znull \Bnull \bm\Znull^{\top} - ZBZ^\top, \bm \Znull \Bnull \bm\Znull^{\top} - \frac{n}{d} (A_0)_{S_1 \cap S_2}} + O_{R} \Paren{\Paren{\eta + \exp(-d)} n^2}
        \\
        = & 2 \tilde{\E} \iprod{\bm \Znull \Bnull \bm\Znull^{\top} - ZBZ^\top, \Paren{\bm \Znull \Bnull \bm\Znull^{\top} - \frac{n}{d} A_0}_{S_1 \cap S_2}} \\
        & + 2 \tilde{\E} \iprod{\bm \Znull \Bnull \bm\Znull^{\top}- ZBZ^\top, \bm \Znull \Bnull \bm\Znull^{\top} - \Paren{\bm \Znull \Bnull \bm\Znull^{\top}}_{S_1 \cap S_2}} \\
        & + O_{R} \Paren{\Paren{\eta + \exp(-d)} n^2} \,.        
        \end{aligned}
    \end{equation}
    For the term $\tilde{\E} \iprod{\bm \Znull \Bnull \bm\Znull^{\top} - ZBZ^\top, \Paren{\bm \Znull \Bnull \bm\Znull^{\top} - \frac{n}{d} A_0}_{S_1 \cap S_2}}$, we can apply \cref{cor:sos-spectral-hoelder} and get
    \begin{equation}
        \label{eq:robust-eq4}
        \begin{aligned}
        \tilde{\E} \iprod{\bm \Znull \Bnull \bm\Znull^{\top} - & ZBZ^\top, \Paren{\bm \Znull \Bnull \bm\Znull^{\top} - \frac{n}{d} A_0}_{S_1 \cap S_2}} \\
        \leq & 0.1 \tilde{\E} \Normf{\bm \Znull \Bnull \bm\Znull^{\top} - ZBZ^\top}^2 + O \Paren{k \cdot \Norm{\Paren{\bm \Znull \Bnull \bm\Znull^{\top} - \frac{n}{d} A_0}_{S_1 \cap S_2}}^2} \\
        \leq & 0.1 \tilde{\E} \Normf{\bm \Znull \Bnull \bm\Znull^{\top} - ZBZ^\top}^2 + O \Paren{\frac{n^2}{d} \cdot k} \,.
        \end{aligned}
    \end{equation}
    For the term $\tilde{\E} \iprod{\bm \Znull \Bnull \bm\Znull^{\top}- ZBZ^\top, \bm \Znull \Bnull \bm\Znull^{\top} - \Paren{\bm \Znull \Bnull \bm\Znull^{\top}}_{S_1 \cap S_2}}$, notice that entries of $\tilde{\E} \Brac{\bm \Znull \Bnull \bm\Znull^{\top}- ZBZ^\top}$ are bounded in absolute value by $R$, therefore, it follows that
    \begin{equation*}
        \tilde{\E} \iprod{\bm \Znull \Bnull \bm\Znull^{\top}- ZBZ^\top, \bm \Znull \Bnull \bm\Znull^{\top} - \Paren{\bm \Znull \Bnull \bm\Znull^{\top}}_{S_1 \cap S_2}}
        \leq R \Normsum{\bm \Znull \Bnull \bm\Znull^{\top} - \Paren{\bm \Znull \Bnull \bm\Znull^{\top}}_{S_1 \cap S_2}} \,.
    \end{equation*}
    Since $\bm \Znull \Bnull \bm\Znull^{\top} - \Paren{\bm \Znull \Bnull \bm\Znull^{\top}}_{S_1 \cap S_2}$ has at most $2 \Paren{\eta + \exp(-d)} n^2$ non-zero entries and each entry is bounded by $R$, therefore,
    \begin{equation*}
        \Normsum{\bm \Znull \Bnull \bm\Znull^{\top} - \Paren{\bm \Znull \Bnull \bm\Znull^{\top}}_{S_1 \cap S_2}} \leq 2 R^2 \Paren{\eta + \exp(-d)} n^2 \,.
    \end{equation*}
    Hence,
    \begin{equation}
        \label{eq:robust-eq5}
        \tilde{\E} \iprod{\bm \Znull \Bnull \bm\Znull^{\top}- ZBZ^\top, \bm \Znull \Bnull \bm\Znull^{\top} - \Paren{\bm \Znull \Bnull \bm\Znull^{\top}}_{S_1 \cap S_2}}
        \leq O_{R} \Paren{\Paren{\eta + \exp(-d)} n^2} \,.
    \end{equation}
    Plug \cref{eq:robust-eq4} and \cref{eq:robust-eq5} into \cref{eq:robust-eq3}, we get
    \begin{equation*}
        \tilde{\E} \|ZBZ^\top - \bm \Znull \Bnull \bm\Znull^{\top}\|_F^2
        \leq 0.2 \tilde{\E} \Normf{ZBZ^\top - \bm \Znull \Bnull \bm\Znull^{\top}}^2 + O \Paren{\frac{n^2}{d} \cdot k} + O_{R} \Paren{\Paren{\eta + \exp(-d)} n^2} \,.
    \end{equation*}
    Rearranging terms, it follows that
    \begin{equation*}
        \tilde{\E} \|ZBZ^\top - \bm \Znull \Bnull \bm\Znull^{\top}\|_F^2
        \leq O_{R} \Paren{\frac{n^2}{d} \cdot k + \eta n^2} \,.
    \end{equation*}
    By Jensen's inequality, we get
    \begin{align*}
        \|\tilde{\E} [ZBZ^\top] - \bm \Znull \Bnull \bm\Znull^{\top}\|_F^2
        \leq & \tilde{\E} \|ZBZ^\top - \bm \Znull \Bnull \bm\Znull^{\top}\|_F^2 \\
        \leq & O_{R} \Paren{\frac{n^2}{d} \cdot k + \eta n^2} \,.
    \end{align*}
\end{proof}

Now, we are ready to prove \cref{thm:robust-estimator}.

\begin{proof}[Proof of \cref{thm:robust-estimator}]
    From the $O(1)$-approximation guarantee of k-means algorithm, it follows that
    \begin{equation*}
        \Normf{\tilde{\theta}-\bm \Znull \Bnull \bm\Znull^{\top}}^2
        \leq O_{R} \Paren{\frac{n^2}{d}\cdot k+\eta\cdot n^2} \,.
    \end{equation*}
    Since $\bm \Znull \Bnull \bm\Znull^{\top}$ is symmetric, it follows that
    \begin{equation*}
        \Normf{\tilde{\theta}^{\top}-\bm \Znull \Bnull \bm\Znull^{\top}}^2
        \leq O_{R} \Paren{\frac{n^2}{d}\cdot k+\eta\cdot n^2} \,.
    \end{equation*}
    Hence, by triangle inequality,
    \begin{equation*}
        \Normf{\tilde{\theta}-\tilde{\theta}^{\top}}^2
        \leq O_{R} \Paren{\frac{n^2}{d}\cdot k+\eta\cdot n^2} \,.
    \end{equation*}
    Since $\hat{Z}D\hat{Z}^\top$ is an averaging operator, which only reduces the Frobenius norm of a matrix, it follows that
    \begin{equation*}
        \Normf{\Paren{\tilde{\theta}-\tilde{\theta}^\top} \hat{Z}D\hat{Z}^\top}^2
        \leq O_{R} \Paren{\frac{n^2}{d}\cdot k+\eta\cdot n^2}\,.
    \end{equation*}
    Moreover, since the columns of $\tilde{\theta}^\top$ are the centers of the communities in $\hat{Z}$, it follows that $\tilde{\theta}^\top \hat{Z}D\hat{Z}^\top=\tilde{\theta}^{\top}$. Hence,
    \begin{equation*}
         \Normf{\tilde{\theta}\hat{Z}D\hat{Z}^\top-\tilde{\theta}^{\top}}^2
         \leq O_{R} \Paren{\frac{n^2}{d}\cdot k+\eta\cdot n^2}\,.
    \end{equation*}
    Applying triangle inequality again, we get 
    \begin{equation*}
        \Normf{\tilde{\theta}\hat{Z}D\hat{Z}^\top-\bm \Znull \Bnull \bm\Znull^{\top}}^2
        \leq O_{R} \Paren{\frac{n^2}{d}\cdot k+\eta\cdot n^2}\,.
    \end{equation*}
    Notice that, since $\tilde{\theta}^\top \hat{Z} D\hat{Z}^\top=\tilde{\theta}^{\top}$, we have $\hat{Z}D\hat{Z}^\top \tilde{\theta}=\tilde{\theta}$ by taking transpose on both sides. Therefore,
    \begin{equation*}
        \hat{Z}\hat{B}\hat{Z}^\top
        = \hat{Z} D\hat{Z}^\top \tilde{\theta} \hat{Z}D\hat{Z}^\top
        = \tilde{\theta} \hat{Z}D\hat{Z}^\top \,.
    \end{equation*}
    Thus,
    \begin{equation*}
\|\hat{Z}\hat{B}\hat{Z}^\top-\bm \Znull \Bnull \bm\Znull^{\top}\|_F^2= \Normf{ \tilde{\theta}\hat{Z}D\hat{Z}^\top-\bm \Znull \Bnull \bm\Znull^{\top}}^2\leq O_{R} \Paren{\frac{n^2}{d}\cdot k+\eta\cdot n^2} \,.
    \end{equation*}
\end{proof}

\section{Background of sum-of-squares hierarchy}
\label{section:backgroundsos}

\subsection{Sum-of-squares toolkit}\label{preliminaries:sos-toolkit}

Here, we introduce some sum-of-squares results that are used in our proofs. We start with a Cauchy-Schwarz inequality for pseudo-distributions.
\begin{fact}[Cauchy-Schwarz for pseudo-distributions ~\cite{barak2012hypercontractivity}]\label{fact:pd-cauchy-schwarz}
	Let $f,g$ be vector polynomials of degree at most $d$ in indeterminate $x\in \R^n$. Then, for any level $2d$ pseudo-distribution $D$,
	\begin{align*}
		\tilde{\E}_D\Brac{\iprod{f,g}}\leq \sqrt{\tilde{\E}_D\brac{\norm{f}^2} }\cdot \sqrt{\tilde{\E}_D\brac{\norm{g}^2} }\,.
	\end{align*}
\end{fact}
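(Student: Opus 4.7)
The plan is to derive this as a straightforward consequence of a scaled AM--GM sos inequality, together with optimization over the scaling parameter.

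First I would note the polynomial identity, for any scalar $\lambda > 0$,
\[
  \lambda \norm{f}^2 - 2 \iprod{f,g} + \tfrac 1 \lambda \norm{g}^2 = \Bignorm{\sqrt{\lambda}\, f - \tfrac{1}{\sqrt{\lambda}}\, g}^2\,,
\]
where the right-hand side is manifestly a sum of squares of polynomials of degree at most $d$ in $x$, and hence is a sum-of-squares polynomial of degree at most $2d$. Since $D$ is a level-$2d$ pseudo-distribution, it assigns nonnegative pseudo-expectation to this polynomial. Rearranging gives, for every $\lambda > 0$,
\[
  \tilde{\E}_D\bigbrac{\iprod{f,g}} \le \tfrac{\lambda}{2} \tilde{\E}_D\bigbrac{\norm{f}^2} + \tfrac{1}{2\lambda} \tilde{\E}_D\bigbrac{\norm{g}^2}\,.
\]

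The second step is to optimize over $\lambda > 0$. In the generic case where $A \seteq \tilde{\E}_D[\norm{f}^2] > 0$ and $B \seteq \tilde{\E}_D[\norm{g}^2] > 0$, the choice $\lambda = \sqrt{B / A}$ minimizes the right-hand side and yields exactly $\sqrt{A \cdot B}$, as desired. In the degenerate case where $A = 0$ (or symmetrically $B = 0$), sending $\lambda \to \infty$ (respectively $\lambda \to 0$) in the inequality above gives $\tilde{\E}_D[\iprod{f,g}] \le 0$; applying the same reasoning to $-f$ in place of $f$ shows $\tilde{\E}_D[\iprod{f,g}] = 0$, which is consistent with the claimed bound $\sqrt{A \cdot B} = 0$.

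The main (and really the only) subtlety is the degenerate case, since the optimization over $\lambda$ is not attained at a finite positive value when one of the pseudo-second-moments vanishes. This is resolved by the limiting argument above, which uses that the inequality $\tilde{\E}_D[\iprod{f,g}] \le \tfrac{\lambda}{2} A + \tfrac{1}{2\lambda} B$ holds for \emph{every} $\lambda > 0$. Everything else is a direct application of the definition of pseudo-expectation to an explicit sum-of-squares polynomial of the correct degree, so no further machinery is required.
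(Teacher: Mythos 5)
The paper states this as a Fact with a citation to~\cite{barak2012hypercontractivity} and offers no proof of its own, so there is nothing in the paper to compare against; I can only assess your argument on its own terms. Your proof is correct and is essentially the standard argument for pseudo-distribution Cauchy--Schwarz: for every $\lambda>0$ the polynomial $\lambda\norm{f}^2 - 2\iprod{f,g} + \tfrac1\lambda\norm{g}^2 = \sum_i\bigparen{\sqrt{\lambda}\,f_i - \tfrac{1}{\sqrt{\lambda}}\,g_i}^2$ is a sum of squares of polynomials of degree at most $d$, so a level-$2d$ pseudo-distribution assigns it nonnegative pseudo-expectation, giving $\tilde{\E}_D\brac{\iprod{f,g}} \le \tfrac{\lambda}{2}A + \tfrac{1}{2\lambda}B$ for $A=\tilde{\E}_D\brac{\norm{f}^2}$, $B=\tilde{\E}_D\brac{\norm{g}^2}$ (both of which are nonnegative for the same reason). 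The optimization over $\lambda$ and the limiting argument for the degenerate cases $A=0$ or $B=0$ (including $A=B=0$) are handled correctly, and the application to $-f$ to pin down the sign is sound. One could equivalently phrase the argument via the discriminant of the quadratic $t\mapsto\tilde{\E}_D\brac{\norm{tf+g}^2}\ge0$, which yields $\tilde{\E}_D\brac{\iprod{f,g}}^2 \le AB$ directly and makes the degenerate case slightly more uniform, but your route is equally valid and requires no extra machinery beyond what you invoke.
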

We will also repeatedly use the following two sos version of Cauchy-Schwarz inequality:
\begin{fact}[Sum-of-squares Cauchy-Schwarz I]
	\label{fact:sos-cauchy-schwarz}
	Let $x,y \in \R^d$ be indeterminites. Then,
	\[ 
	\sststile{4}{x,y} \Iprod{x, y}^2 \leq \Paren{\sum_i x_i^2} \Paren{\sum_i y_i^2} \,.
	\]
\end{fact}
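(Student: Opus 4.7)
The plan is to give a direct sum-of-squares proof via Lagrange's identity. Specifically, I would establish the polynomial identity
\[
  \Paren{\sum_i x_i^2}\Paren{\sum_i y_i^2} - \Paren{\sum_i x_i y_i}^2 = \frac{1}{2}\sum_{i,j}(x_i y_j - x_j y_i)^2\,,
\]
which holds as an identity of polynomials in the indeterminates $x, y$. Expanding both sides and matching monomials verifies this directly: the left-hand side yields $\sum_{i,j} x_i^2 y_j^2 - \sum_{i,j} x_i y_i x_j y_j$, while the right-hand side expands to $\frac{1}{2}\sum_{i,j}(x_i^2 y_j^2 - 2 x_i y_j x_j y_i + x_j^2 y_i^2) = \sum_{i,j} x_i^2 y_j^2 - \sum_{i,j} x_i y_i x_j y_j$, matching exactly.

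From this identity, the conclusion follows immediately in the sum-of-squares proof system: each term $(x_i y_j - x_j y_i)^2$ on the right-hand side is manifestly the square of a polynomial of degree $2$ in $(x, y)$, and hence the entire right-hand side is a sum of degree-$4$ squares. Rearranging gives the desired inequality
\[
  \sststile{4}{x,y} \Paren{\sum_i x_i y_i}^2 \leq \Paren{\sum_i x_i^2}\Paren{\sum_i y_i^2}\,.
\]

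There are no real obstacles here; the entire argument is a single algebraic identity together with the observation that the remainder is manifestly a sum of squares of low degree. The only minor care needed is to ensure the degree count is correct: each $(x_i y_j - x_j y_i)^2$ has degree $4$, so the proof is within degree $4$ as claimed in the statement.
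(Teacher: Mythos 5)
Your proof is correct and is the standard argument: Lagrange's identity gives the explicit sum-of-squares decomposition, and each summand $(x_i y_j - x_j y_i)^2$ is a square of a degree-$2$ polynomial, so the certificate is indeed of degree $4$. The paper states this as a Fact without supplying a proof, so there is nothing to compare against; your derivation fills that gap cleanly.
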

and,
\begin{fact}[Sum-of-squares Cauchy-Schwarz II]
	\label{fact:simple-sos-version-of-cauchy-schwarz}
	Let $x,y \in \R^d$ be indeterminites. Then, for any $C > 0$,
	\[ 
	\sststile{4}{x,y} \Iprod{x, y} \leq \frac{C}{2} \Norm{x}^2 + \frac{1}{2C} \Norm{y}^2 \,,
	\]
	and,
	\[ 
		\sststile{4}{x,y} \Iprod{x, y} \geq - \frac{C}{2} \Norm{x}^2 - \frac{1}{2C} \Norm{y}^2 \,,
	\]
\end{fact}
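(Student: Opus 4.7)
The plan is to prove both inequalities by exhibiting an explicit sum-of-squares identity of small degree, which makes the derivation immediate within the SoS proof system. The whole argument reduces to the elementary observation that the square of any linear polynomial in the indeterminates is manifestly a sum of squares.

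For the upper bound, I would consider, for each coordinate $i \in [d]$, the square
\[
	\left(\sqrt{C}\, x_i - \tfrac{1}{\sqrt{C}}\, y_i\right)^2 \geq 0,
\]
which expands to $C x_i^2 - 2 x_i y_i + \tfrac{1}{C} y_i^2$. Summing over $i$ yields the polynomial identity
\[
	\sum_{i=1}^d \left(\sqrt{C}\, x_i - \tfrac{1}{\sqrt{C}}\, y_i\right)^2
	= C \Norm{x}^2 - 2 \Iprod{x,y} + \tfrac{1}{C}\Norm{y}^2.
\]
Since the left-hand side is a sum of squares of degree-$1$ polynomials in the indeterminates, we obtain $2\Iprod{x,y} \leq C \Norm{x}^2 + \tfrac{1}{C}\Norm{y}^2$ as an SoS consequence of no hypotheses, using only a degree-$2$ square; this fits comfortably within the degree-$4$ budget demanded by the statement. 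Dividing by $2$ gives the first inequality.

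For the lower bound, I would apply the exact same argument with a sign flip, using the squares $\left(\sqrt{C}\, x_i + \tfrac{1}{\sqrt{C}}\, y_i\right)^2 \geq 0$. Summing over $i$ produces
\[
	\sum_{i=1}^d \left(\sqrt{C}\, x_i + \tfrac{1}{\sqrt{C}}\, y_i\right)^2
	= C \Norm{x}^2 + 2 \Iprod{x,y} + \tfrac{1}{C}\Norm{y}^2,
\]
from which $\Iprod{x,y} \geq -\tfrac{C}{2}\Norm{x}^2 - \tfrac{1}{2C}\Norm{y}^2$ follows in degree $4$.

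There is no substantive obstacle: both inequalities are direct AM-GM style identities, and the SoS decomposition is exhibited explicitly by the two polynomial identities above, each of which is a sum of squares of degree-$1$ polynomials. The only minor bookkeeping is to note that the statement's degree-$4$ convention is satisfied since the witnessing squares have degree $2$ and multiply no hypotheses.
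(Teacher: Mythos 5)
Your proof is correct and is the standard one: the explicit degree-$2$ sum-of-squares decomposition $\sum_i \bigl(\sqrt{C}\,x_i \mp \tfrac{1}{\sqrt{C}}\,y_i\bigr)^2 \ge 0$ directly yields both inequalities, comfortably within the stated degree-$4$ budget. The paper states this as a fact without proof, and your argument is exactly what one would expect to be behind it.
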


We will use the following fact that shows how spectral certificates are captured within the SoS proof system.
\begin{fact}[Spectral Certificates] \label{fact:spectral-certificates}
	For any $m \times m$ matrix $A$, 
	\[
	\sststile{2}{u} \Set{ \iprod{u,Au} \leq \Norm{A} \Norm{u}_2^2}\mper
	\]
\end{fact}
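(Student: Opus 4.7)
The plan is to reduce the non-symmetric statement to a symmetric one and then exhibit an explicit Gram decomposition. The degree bound will drop out for free because the expression $\|A\|\cdot \|u\|_2^2 - \iprod{u, Au}$ is a quadratic form in $u$.

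First, I would note the polynomial identity $\iprod{u,Au} = \tfrac{1}{2}\iprod{u,(A+\transpose A)u}$, which holds term-by-term in the entries of $u$ and therefore has a degree-$2$ sos proof (it is in fact a polynomial identity, so it is trivially an equality in the sos system). This reduces the task to sos-proving $\tfrac{1}{2}\iprod{u,(A+\transpose A)u}\le \|A\|\cdot \|u\|_2^2$ in degree $2$.

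Next, I would invoke the (non-sos) fact that $\|A\|\cdot I_m - \tfrac12(A+\transpose A)\succeq 0$. To see this, recall $M\seteq \tfrac{1}{2}(A+\transpose A)$ is symmetric, so $\|M\|=\sup_{\|v\|_2=1}|\iprod{v,Mv}|=\sup_{\|v\|_2=1}|\iprod{v,Av}|\le \sup_{\|v\|_2=1}\|v\|_2\cdot\|Av\|_2 \le \|A\|$ by Cauchy--Schwarz and the definition of the spectral norm. Since $M$ is symmetric with $\|M\|\le \|A\|$, we have $\|A\|\cdot I_m - M\succeq 0$, and hence there exists a real matrix $B\in\R^{m\times m}$ (e.g., obtained by a Cholesky or spectral decomposition) such that $\|A\|\cdot I_m - M = \transpose B B$.

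Finally, I would write the desired inequality as a sum of squares:
\[
\|A\|\cdot \|u\|_2^2 - \iprod{u,Au} \;=\; \iprod{u,(\|A\| I_m - M) u} \;=\; \iprod{u,\transpose B B u} \;=\; \sum_{i=1}^{m}\bigparen{(Bu)_i}^2 \mper
\]
The right-hand side is manifestly a sum of squares of degree $2$ in the entries of $u$, which exhibits the required degree-$2$ sos proof. There is no real obstacle here; the only subtlety worth flagging is making sure one uses the symmetrization step, since $A$ itself may not be symmetric and $\|A\| I_m - A$ is not in general psd.
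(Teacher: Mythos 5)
Your proof is correct. The paper states this as a \cref{fact:spectral-certificates} without giving a proof (it lives in the sum-of-squares toolkit as a standard fact), so there is no ``paper's proof'' to compare against; but your argument is exactly the canonical one: symmetrize $A$ to $M = \tfrac12(A+\transpose A)$, observe that $\|M\|\le\|A\|$ via Cauchy--Schwarz, and then use the square-root factorization of the psd matrix $\|A\| I_m - M$ to exhibit $\|A\|\,\|u\|_2^2-\iprod{u,Au}$ as an explicit sum of squares of linear forms, which has degree~$2$. The one subtlety you flag (handling non-symmetric $A$ via the symmetrization identity rather than trying to assert $\|A\|I_m-A\succeq 0$ directly) is indeed the point worth being careful about, and you handle it correctly.
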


The next fact establishes a certificate on the infinity-to-one  norm of a matrix.

We will use the notions of pseudo-covariance and conditional pseudo-distributions.

\begin{definition}[Pseudo-covariance]
	Let $\alpha, \beta$ be multi-indices over $[n]$. Let $d/2\geq \card{\alpha} + \card{\beta}$. Let $\tilde{\mu}$ be a level-$d$ pseudo-distribution in indeterminates  $x_1,\ldots, x_n\,.$ .
	Then we write
	\begin{align*}
		\text{Cov}_{{\tilde{\mu}}}{x^\alpha, x^\beta} = \tilde{\E}_{\tilde{\mu}} \brac{x^\alpha x^\beta}-\tilde{\E}_{\tilde{\mu}} \brac{x^\alpha}\tilde{\E}_{\tilde{\mu}} \brac{x^\beta}\,.
	\end{align*}
	Similarly, we define $\Var_{\tilde{\mu}}(x^{\alpha}) = \text{Cov}_{\tilde{\mu}}{x^{\alpha}, x^{\alpha}}\,.$
\end{definition}

\begin{definition}[Conditional pseudo-distribution]\label{definition:conditional-pseudo-distribution}
	Let $D$ be a degree-$d$ pseudo-distribution in indeterminates $x_1,\ldots, x_n\,.$ Let $t\geq 0$. Suppose $D$ satisfies $\Set{x_i^2=1\,, \forall i \in [n]}\,.$ Then for any $\alpha\in [n]^t$ such that $\tilde{\E} \brac{\frac{1+x^{\alpha}}{2}}> 0$ we may define the conditional pseudo-distribution of level $d-t$ as:
	\begin{align*}
		\tilde{\E}_D \brac{p(x) \,|\, x^\alpha=1} = \frac{\tilde{\E}_D \brac{p(x) \frac{1+x^\alpha}{2}}}{\tilde{\E}_D \brac{\frac{1+x^\alpha}{2}}}\,.
	\end{align*}
	Similarly, if $\tilde{\E} \brac{\frac{1+x^\alpha}{2}} < 1$, we may define the conditional pseudo-distribution of level $d-t$ as: 
	\begin{align*}
		\tilde{\E}_D \brac{p(x) \,|\, x^\alpha=-1} = \frac{\tilde{\E}_D \brac{p(x) \frac{1-x^\alpha}{2}}}{\tilde{\E}_D \brac{\frac{1-x^\alpha}{2}}}\,.
	\end{align*}
\end{definition}

It is straightforward to see that, after conditioning, the result is a valid pseudo-distribution of level $d-t$.
Notice also that, when $D$ is an actual distribution, then we simply recover the corresponding conditional distribution.

Last, we introduce the following crucial observation about pseudo-distributions.

\begin{lemma}[E.g. see ~\cite{schramm2022notes}]\label{csp:lemma:moment-matching-distribution}
	Let $D$ be a level $d$ pseudo-distribution over indeterminates $x_1,\ldots, x_n$ satisfying $\Set{x^2=1\,, \forall i \in [n]}$. 
	Then, for any $S\subseteq [n]$ with $\card{S}\leq d$, there exists a distribution $D'$ over $\Set{\pm 1}^n$ such that, for all multi-indices $\alpha$ over $S$, 
	\begin{align*}
		\tilde{\E}_{D}\brac{x^{\alpha}} = \tilde{\E}_{D'}\brac{x^{\alpha}}
	\end{align*}
\end{lemma}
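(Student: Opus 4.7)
The plan is to construct the promised distribution $D'$ explicitly via Fourier inversion (i.e. inner-product against indicator polynomials) and then verify the three required properties by working modulo the ideal generated by the Boolean-cube constraints. Concretely, for each sign pattern $\epsilon \in \{\pm 1\}^S$, I would introduce the indicator polynomial
\[
  p_\epsilon(x) \seteq \prod_{i \in S} \frac{1+\epsilon_i x_i}{2},
\]
which has degree $\card{S}$ and equals $\mathbf{1}[x_S = \epsilon]$ on the cube. I would then define a measure on $\{\pm 1\}^S$ by $D'_S(\epsilon) \seteq \tE_D[p_\epsilon]$, and extend it to a measure $D'$ on $\{\pm 1\}^n$ by taking the independent product with, say, the uniform measure on $\{\pm 1\}^{[n]\setminus S}$.

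Next I would verify normalization and moment matching by noting that the required polynomial identities hold \emph{modulo} the ideal $I \seteq \langle x_i^2 - 1 : i \in S\rangle$. Specifically, $\sum_{\epsilon} p_\epsilon - 1 \in I$ (since both sides equal $1$ pointwise on the cube and the difference has low enough degree), and for any multi-index $\alpha$ supported in $S$, also $\sum_\epsilon \epsilon^\alpha\, p_\epsilon - x^\alpha \in I$. Because the hypothesis $D \sdtstile{}{} \{x_i^2 = 1\}$ implies $\tE_D[f] = 0$ for every polynomial $f \in I$ of sufficiently low degree, we obtain $\sum_\epsilon D'_S(\epsilon) = 1$ and $\sum_\epsilon \epsilon^\alpha D'_S(\epsilon) = \tE_D[x^\alpha]$, i.e. the $S$-marginal moments match.

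The heart of the argument is showing $D'_S(\epsilon) = \tE_D[p_\epsilon] \ge 0$. The key observation is that each factor $\tfrac{1+\epsilon_i x_i}{2}$ is idempotent modulo $x_i^2 - 1$ (using $\epsilon_i^2 = 1$), so $p_\epsilon^2 - p_\epsilon \in I$. Thus $p_\epsilon$ is, up to ideal reductions, the \emph{square} of a degree-$\card{S}$ polynomial. Combined with $\tE_D[f] = 0$ for $f \in I$, this gives $\tE_D[p_\epsilon] = \tE_D[p_\epsilon^2] \ge 0$ provided the degree of $p_\epsilon$ fits within the non-negativity guarantee of the pseudo-distribution, i.e. $\card{S}$ is at most the level parameter divided by two in the convention used in \cref{def:pseudo-distribution}. (I would make this match the $\card{S} \le d$ hypothesis by checking the precise convention on ``level $d$'' adopted in the cited source.)

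The main obstacle is the bookkeeping of degrees: one must verify that each ideal element invoked---$\sum_\epsilon p_\epsilon - 1$, $\sum_\epsilon \epsilon^\alpha p_\epsilon - x^\alpha$, and $p_\epsilon^2 - p_\epsilon$---admits a representation of the form $\sum_{i \in S}(x_i^2-1)\cdot h_i(x)$ with each $h_i$ of sufficiently low degree to lie in the regime where the constrained pseudo-distribution definition forces vanishing. Once these low-degree ideal witnesses are produced (which is straightforward by expanding the products factor by factor), the three properties above assemble into a bona-fide probability distribution on $\{\pm 1\}^S$ whose moments coincide with those of $D$ on multi-indices over $S$; arbitrarily extending to $\{\pm 1\}^n$ then yields the desired $D'$.
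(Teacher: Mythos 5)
Your construction is the standard one for this lemma and is essentially identical to the argument in the cited reference (Schramm's notes): indicator polynomials $p_\epsilon = \prod_{i\in S}\frac{1+\epsilon_i x_i}{2}$, defining $D'_S(\epsilon)=\tE_D[p_\epsilon]$, then checking normalization, moment matching, and nonnegativity via the ideal $\langle x_i^2-1\rangle$. A few small remarks on the bookkeeping you flag. Normalization is in fact an exact polynomial identity, $\sum_\epsilon p_\epsilon = \prod_{i\in S}\bigl(\tfrac{1+x_i}{2}+\tfrac{1-x_i}{2}\bigr)=1$, so no ideal reduction is needed there. For moment matching, $\sum_\epsilon \epsilon^\alpha p_\epsilon = x^{\tilde\alpha}$ exactly, where $\tilde\alpha$ is the coordinatewise reduction of $\alpha$ mod $2$; so the only ideal element you invoke is $x^\alpha - x^{\tilde\alpha}$, which does lie in the ideal with degree-$\deg\alpha$ witnesses. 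Your nonnegativity step $\tE_D[p_\epsilon]=\tE_D[p_\epsilon^2]\ge 0$ is correct and is where the degree constraint genuinely bites.

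The concern you raise about the degree/level convention is real. Under the paper's own \cref{def:pseudo-distribution}, a level-$d$ pseudo-distribution only guarantees $\tE_D[f^2]\ge 0$ for $\deg f \le d/2$, while $p_\epsilon$ has degree $\card{S}$; so with that convention the hypothesis should read $\card{S}\le d/2$, not $\card{S}\le d$. (One can also check that expanding each factor as $\bigl(\tfrac{1+\epsilon_i x_i}{2}\bigr)^2+\tfrac{1-x_i^2}{4}$ and multiplying out gives a Positivstellensatz certificate of total degree $2\card{S}$, so the constrained-pseudo-distribution machinery of \cref{def:pseudo-distribution} gives the same $\card{S}\le d/2$ requirement.) The statement as written inherits the competing convention, common in the literature, where ``level $d$'' means positivity of $\tE[f^2]$ for $\deg f\le d$; your instinct to reconcile conventions rather than trust the surface form of the inequality is exactly right, and it is the only thing to fill in.
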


In other words, \cref{csp:lemma:moment-matching-distribution} states that, for any level-$d$ pseudo-distribution $D$ over the hypercube and any subset $S$ of $d$ indeterminates, there exists an actual distribution $D'$ over the hypercube matching its first $d$ moments on $S$.
Notably, combining this results with \cref{definition:conditional-pseudo-distribution}, one gets that these low-degree moments of $D$ and $D'$ match even after conditioning.

\subsection{Sum-of-squares reweighing}

Here, we recap the definition and the results of sum-of-squares reweighing.

\begin{definition}[Sum-of-squares reweighing]
	Let $\mu$ be a pseudo-distribution of level $r$. Suppose $p$ is a sum-of-squares polynomial of degree $r' \leq r$ and $\tilde{\E}_{\mu} p > 0$, then $\mu'$ is a degree-$(r-r')$ reweighing of $\mu$ by $p$ if
	\begin{equation*}
		\mu'(x) = \frac{\mu(x)p(x)}{\tilde{\E}_{\mu} p} \,.
	\end{equation*}
\end{definition}

The key result of pseudo-distribution reweighing that we will need is the following subspace fixing reweighing lemma of~\cite{barak2017quantum}.

\begin{lemma}[Subspace fixing reweighing, Lemma 7.1 of~\cite{barak2017quantum}]
	\label{lem:bks-sos-reweighing}
	Let $\mu$ be a distribution over the unit ball $\set{x : \Norm{x} \leq 1}$ of $\R^d$ such that $\E_\mu \Norm{x}^2 \geq d^{-C}$ for some constant $C$. Then, $\mu$ has a degree $k= \frac{d}{\delta} \cdot (\log d)^{C'}$ reweighing $\mu'$ for some constant $C'$ such that
	\begin{equation*}
		\Norm{\E_{\mu'(x)}x}^2 \geq (1-\delta) \E_{\mu(x)} \Norm{x}^2 \,.
	\end{equation*}
	Further, the reweighing polynomial $p = \frac{\mu'}{\mu}$ can be found in time $2^{O(k)}$, has all coefficients upper bounded by $2^{O(k)}$ in the monomial basis, and satisfies $p(x) \leq k^{O(k)} \Norm{x}^k$. Moreover, the result holds even if $\mu$ is a pseudo-distribution of level at least $k+2$.
\end{lemma}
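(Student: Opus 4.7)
The plan is to establish this via an iterative reweighing procedure driven by a potential-function argument, simulating through low-degree sum-of-squares reweighings the classical operation of conditioning a distribution to reduce its variance.

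I define the potential $\Phi(\mu) = \tilde{\E}_\mu[\|x\|^2] - \|\tilde{\E}_\mu[x]\|^2$, which is non-negative for any level-$\geq 2$ pseudo-distribution by pseudo-Cauchy--Schwarz (\cref{fact:pd-cauchy-schwarz}) and bounded above by $1$ since $\mu$ is supported on the unit ball. The conclusion $\|\E_{\mu'} x\|^2 \geq (1-\delta)\,\E_\mu[\|x\|^2]$ is equivalent to the conjunction of $\Phi(\mu')/\tilde{\E}_{\mu'}[\|x\|^2] \leq \delta$ and $\tilde{\E}_{\mu'}[\|x\|^2] \geq (1-o(1))\,\tilde{\E}_\mu[\|x\|^2]$.

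For the main iteration, suppose $\Phi(\mu) > \delta \cdot \tilde{\E}_\mu[\|x\|^2] \geq \delta/d^C$. The pseudo-covariance matrix $\Sigma_\mu = \tilde{\E}_\mu[xx^\top] - \tilde{\E}_\mu[x]\tilde{\E}_\mu[x]^\top$ is PSD with trace $\Phi(\mu)$, so it admits a unit eigenvector $v$ with eigenvalue at least $\Phi(\mu)/d$. I reweight $\mu$ by a sum-of-squares polynomial $p(x) = q(v^\top x)^2$, where $q$ is a univariate polynomial of degree $O(\log d)$ that approximates a soft indicator for $\{v^\top x \approx t^\star\}$ with $t^\star$ chosen near the pseudo-median of $v^\top x$ under $\mu$. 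A randomized-threshold averaging argument produces a good $t^\star$ for which the reweighted pseudo-distribution $\mu' = \mu \cdot p / \tilde{\E}_\mu[p]$ both preserves $\tilde{\E}[\|x\|^2]$ up to a multiplicative $(1 - o(1))$ factor and strictly decreases $\Phi$ by a polynomially small amount. Since $\Phi \leq 1$ throughout and we only need to push $\Phi/\tilde{\E}[\|x\|^2]$ below $\delta$, the iteration terminates after at most $(d/\delta) \cdot \operatorname{polylog}(d)$ steps. Each step multiplies the cumulative reweighing polynomial by a sum-of-squares factor of degree $O(\log d)$, yielding total degree $k = (d/\delta)(\log d)^{C'}$ for some constant $C'$; since a product of sum-of-squares polynomials is sum-of-squares, the final reweighing is valid and the resulting object $\mu'$ is a pseudo-distribution at level reduced by exactly $k$ (explaining the hypothesis ``level at least $k+2$''). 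The coefficient bound $2^{O(k)}$ follows from controlling the bit complexity of each elementary polynomial approximation and composing $k/O(\log d)$ such factors multiplicatively, and the pointwise bound $p(x) \leq k^{O(k)} \|x\|^k$ follows from bounding the values of the degree-$k$ polynomial on the unit ball.

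The main obstacle is showing that a single reweighing step actually yields non-trivial progress \emph{within the pseudo-distribution setting}, where we cannot literally condition on events. The key observation is that the change in $\Phi$ under reweighing by $q(v^\top x)^2$ depends only on pseudo-expectations of polynomials of degree $2 + 2\deg(q) = O(\log d)$ in $x$, which the hypothesis of $\mu$ being of level $\geq k+2$ is amply sufficient to evaluate; moreover, the sum-of-squares property of the reweighing polynomial preserves pseudo-distributionality at every step. This lets us import the classical distribution-theoretic variance-reduction analysis with only a polylogarithmic overhead per iteration, yielding the stated degree bound.
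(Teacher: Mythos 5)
The paper does not prove this lemma at all: it is quoted verbatim as Lemma~7.1 of~\cite{barak2017quantum} and invoked as a black box (the paper's own contribution is only \cref{cor:bks-sos-reweighing}, which is derived \emph{from} \cref{lem:bks-sos-reweighing}). So there is no proof in the paper to compare against; what follows assesses your sketch on its own terms.

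Your high-level outline---a variance-type potential $\Phi(\mu)=\tE_\mu\|x\|^2-\|\tE_\mu x\|^2$, a high-variance direction $v$, reweighing by a low-degree SOS soft indicator of a slice $\{v^\top x\approx t^\star\}$, and iterating---has the right flavor and could plausibly be made to work, but the version you wrote has three genuine gaps.

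First, the crux of the argument is the claim that one can choose the threshold $t^\star$ so that a \emph{single} reweighing step both preserves $\tE\|x\|^2$ up to $(1-o(1))$ and decreases $\Phi$ by a useful amount. You assert this via a ``randomized-threshold averaging argument'' with no quantitative content. This is exactly where the technical work lies, and nothing in the sketch rules out a $t^\star$ trade-off in which preserving the second moment forces the variance decrease to vanish (or vice versa). The assertion needs a concrete lemma with error bounds depending on $\deg(q)$, $\Phi$, and $\tE\|x\|^2$.

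Second, the degree bookkeeping does not close. You claim each step decreases $\Phi$ by ``a polynomially small amount''; if that is an \emph{additive} $1/\poly(d)$, the number of steps needed to drive $\Phi$ from $\Theta(1)$ down to $\delta\cdot\tE_\mu\|x\|^2 \ge \delta\cdot d^{-C}$ is $\poly(d)$, which does not match the target $k = (d/\delta)\,(\log d)^{C'}$ except in a narrow regime of $\delta$. What the stated degree bound actually suggests is a \emph{multiplicative} contraction of $\Phi$ at a rate tied to $\delta/d$ (or a constant per-step decrease with degree-$O((\log d)/\delta)$ reweightings); your sketch does not establish either. Relatedly, the $(1-o(1))$ second-moment loss per step compounds over $(d/\delta)\polylog(d)$ iterations and must be controlled to $(1 - O(\delta))$ overall, which is not addressed.

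Third, the paragraph meant to resolve the pseudo-distribution obstruction does not actually do so. Observing that the change in $\Phi$ depends only on pseudo-moments of degree $O(\log d)$, and that the level of $\mu$ suffices to \emph{evaluate} those moments, is not the same as showing that those moments \emph{behave} the way they would under a genuine conditioning. The danger in the pseudo-distribution setting is precisely that reweighing by an SOS approximation of a slice indicator need not reduce pseudo-variance the way true conditioning reduces variance; this is the point where an argument (e.g., an SOS-proved inequality or a matching-moments reduction along the lines of \cref{csp:lemma:moment-matching-distribution}) is needed, and the sketch only restates the difficulty without overcoming it.

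In short: plausible shape, but the central quantitative step is asserted rather than proved, the iteration/degree arithmetic is inconsistent with the stated bound, and the pseudo-distribution issue is acknowledged but not resolved.
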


In our proof, we will use the following direct corollary of \cref{lem:bks-sos-reweighing}.

\begin{corollary}
	\label{cor:bks-sos-reweighing}
	Let $\mu$ be a distribution over the unit ball $\set{x : \Norm{x} \leq 1}$ of $\R^d$ such that $\E_\mu \Norm{x}^2 \geq d^{-C}$ for some constant $C$. Then, $\mu$ has a degree $k=\frac{d}{\delta} \cdot (\log d)^{C'}$ reweighing $\mu'$ for some constant $C'$ such that
	\begin{equation*}
		\Norm{\E_{\mu'(x)}x}^2 \geq (1-\delta) \E_{\mu' (x)} \Norm{x}^2 \,.
	\end{equation*}
\end{corollary}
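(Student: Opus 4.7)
The plan is to bootstrap Lemma~\ref{lem:bks-sos-reweighing} -- whose conclusion bounds the first moment of a reweighting $\mu'$ against the \emph{initial} second moment $\E_{\mu}\snorm{x}$ -- into Corollary~\ref{cor:bks-sos-reweighing}, whose conclusion bounds the first moment against the \emph{reweighted} second moment $\E_{\mu'}\snorm{x}$. The key device will be an iterative potential argument in which $\E_{\mu_i}\snorm{x}$ serves as a monotone potential bounded above by $1$ (since every $\mu_i$ is supported on the unit ball) and bounded below by $d^{-C}$ at the start.

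Concretely, I would iterate as follows. At round $i$, apply Lemma~\ref{lem:bks-sos-reweighing} to $\mu_i$ with parameter $\delta/2$ to produce a reweighting $\mu_{i+1}$ with $\snorm{\E_{\mu_{i+1}} x} \ge (1-\delta/2)\,\E_{\mu_i}\snorm{x}$, and then branch on the inflation of the second moment. If $\E_{\mu_{i+1}}\snorm{x} \le (1-\delta/2)^{-1}\,\E_{\mu_i}\snorm{x}$, then
\begin{equation*}
\snorm{\E_{\mu_{i+1}} x} \;\ge\; (1-\delta/2)\,\E_{\mu_i}\snorm{x} \;\ge\; (1-\delta/2)^2\,\E_{\mu_{i+1}}\snorm{x} \;\ge\; (1-\delta)\,\E_{\mu_{i+1}}\snorm{x},
\end{equation*}
and $\mu' \coloneqq \mu_{i+1}$ already witnesses the corollary. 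Otherwise the potential has strictly grown by a factor $\ge 1/(1-\delta/2) \ge 1+\delta/2$, and I would repeat the step with $\mu_{i+1}$ in place of $\mu_i$ (treating it as a pseudo-distribution of the appropriately reduced level, which Lemma~\ref{lem:bks-sos-reweighing} explicitly permits).

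Since the potential can multiply by $1+\delta/2$ at most $T = O(C\log d / \delta)$ times before exceeding $1$, the iteration terminates within $T$ rounds. Each round consumes $O\bigparen{d\,(\log d)^{C'}/\delta}$ levels of the pseudo-distribution, so the total consumption is $O\bigparen{d\,(\log d)^{C'+1}/\delta^{2}}$; after relabeling constants in the polylogarithm this matches the stated degree bound in the regime of interest (where $1/\delta$ is at most polylogarithmic in $d$, which is the regime the corollary is applied in within the paper). The iterated reweighting polynomial $\mu_T/\mu_0$ is automatically a product of sos polynomials, hence itself sos.

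The main obstacle I expect is the coefficient and level bookkeeping: Lemma~\ref{lem:bks-sos-reweighing} is invoked not on a bona fide distribution but on the pseudo-distribution produced by the previous round, so one must carry along the explicit bounds on the reweighting polynomial's coefficients (guaranteed by Lemma~\ref{lem:bks-sos-reweighing}) and verify that the composed reweighting still yields a valid pseudo-distribution of the target level. This is routine sos arithmetic, but it is the only delicate step -- the potential argument itself is entirely elementary.
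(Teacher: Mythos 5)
Your iterative potential argument is essentially identical to the paper's own proof: apply \cref{lem:bks-sos-reweighing} with parameter $\delta/2$, stop as soon as the second moment fails to inflate by more than $(1-\delta/2)^{-1}$, and bound the number of rounds by $O(C\log d/\delta)$ using that the second moment is trapped between $d^{-C}$ and $1$. Your explicit observation that the raw degree consumption is $O\bigparen{d(\log d)^{C'+1}/\delta^2}$, which matches the stated bound $\frac{d}{\delta}(\log d)^{C'}$ only after absorbing the extra $1/\delta$ into the polylogarithm (valid in the regime $1/\delta \le \polylog(d)$ in which the corollary is applied), is in fact more forthright than the paper, which performs this absorption silently in its final chain of inequalities.
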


\begin{proof}
	During the proof, we denote $\mu$ by $\mu_0$. Apply \cref{lem:bks-sos-reweighing} on $\mu_0$, we know that it has a degree $\frac{d}{\delta'} \cdot (\log d)^{c}$ reweighing $\mu_1$ such that
	\begin{equation*}
		\Norm{\E_{\mu_1 (x)}x}^2 \geq (1-\delta') \E_{\mu_0 (x)} \Norm{x}^2 \,.
	\end{equation*}
	If $\E_{\mu_0 (x)} \Norm{x}^2 < (1-\delta') \E_{\mu_1 (x)} \Norm{x}^2$, we apply \cref{lem:bks-sos-reweighing} on $\mu_1$ to obtain $\mu_2$ that satisfies
	\begin{equation*}
		\Norm{\E_{\mu_2 (x)}x}^2 \geq (1-\delta') \E_{\mu_1(x)} \Norm{x}^2 \,.
	\end{equation*}
	We can keep doing the reweighing until for some $t \geq 0$, we have $\E_{\mu_{t} (x)} \Norm{x}^2 \geq (1-\delta') \E_{\mu_{t+1} (x)} \Norm{x}^2$. Then, it follows that
	\begin{equation*}
		\Norm{\E_{\mu_{t+1} (x)}x}^2 \geq (1-\delta') \E_{\mu_{t}(x)} \Norm{x}^2 \geq (1-\delta')^2 \E_{\mu_{t+1}(x)} \Norm{x}^2 \geq (1-2\delta') \E_{\mu_{t+1}(x)} \Norm{x}^2 \,.
	\end{equation*}
	Now, we show that $t < \frac{C \log d}{\log \frac{1}{1-\delta'}}$. Since $\E_{\mu_{i-1} (x)} \Norm{x}^2 < (1-\delta') \E_{\mu_{i} (x)} \Norm{x}^2$ for all $i \leq t$, it follows that
	\begin{equation*}
		\E_{\mu_{0} (x)} \Norm{x}^2 < (1-\delta')^t \E_{\mu_{t} (x)} \Norm{x}^2\,.
	\end{equation*}
	Since $d^{-C} \leq \E_{\mu_{0} (x)} \Norm{x}^2$ and $\E_{\mu_{t} (x)} \Norm{x}^2 \leq 1$, we have
	\begin{equation*}
		d^{-C} < (1-\delta')^t\,.
	\end{equation*}
	Therefore, we get $t < \frac{C \log d}{\log \frac{1}{1-\delta'}}$ by taking $\log$ on both sides and rearranging terms.

	Now, we can take $\mu' = \mu_{t+1}$ and $\delta=2\delta'$, it follows that $\mu'$ is a reweighing of $\mu$ such that
	\begin{equation*}
		\Norm{\E_{\mu'(x)}x}^2 \geq (1-\delta) \E_{\mu' (x)} \Norm{x}^2 \,.
	\end{equation*}
	The degree of reweighing is $k = t \cdot \frac{d}{\delta'} \cdot (\log d)^{c} \leq \frac{C \log d }{\log \frac{1}{1-\delta'}} \cdot \frac{d}{\delta'} \cdot (\log d)^{c} = \frac{2 C d}{\delta \log \frac{1}{1-\delta/2}} \cdot (\log d)^{c+1} \leq \frac{d}{\delta} \cdot (\log d)^{C'}$ for some constant $C'$.
\end{proof}

\subsection{Sum-of-squares proofs for properties of community membership matrix}

We use the constraints
\begin{equation*}
	\cA_1(Z)\seteq
	\Set{Z\odot Z=Z, Z \Ind=\Ind, \transpose{Z} \Ind = \tfrac n k \Ind}\,.
\end{equation*}
for the set of $\{0,1\}^{n\times k}$ community membership matrices. 
Here we give low degree sum-of-squares proofs for some key properties of the community membership matrices which we will use in our sum-of-squares identifiablity proof.

\begin{lemma}
	\label{lem:sos-community-orthogonality}
	We have $\cA_1(Z)\sststile{4}{Z} Z(i,j)Z(i,j')=0$ for all distinct $j,j'\in [n]$. 
	Furthermore, 
	\begin{equation*}
		\cA_1(Z) \sststile{4}{Z} \frac{k}{n}Z^\top Z=\Id\,.
	\end{equation*}
\end{lemma}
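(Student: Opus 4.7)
The plan is to establish both claims by working purely symbolically from the three families of constraints in $\cA_1(Z)$: $Z\odot Z = Z$, $Z\Ind = \Ind$, $\transpose{Z}\Ind = \tfrac n k \Ind$, and $Z\geq 0$. The key is that all the needed polynomial manipulations stay within degree $4$, which is exactly what the statement demands.

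For the first claim, I would start by combining the Boolean constraint $Z(i,j)^2 = Z(i,j)$ with the row-sum constraint $\sum_{j'} Z(i,j') = 1$. Multiplying the row-sum identity by $Z(i,j)$ and subtracting $Z(i,j)^2 = Z(i,j)$ yields, in degree $2$, the polynomial identity
\[
  \sum_{j'\neq j} Z(i,j) Z(i,j') \;=\; Z(i,j) - Z(i,j)^2 \;=\; 0 \,.
\]
Next, each summand $Z(i,j)Z(i,j')$ is sos-provably nonnegative in degree $2$ (since $Z(i,j)\ge 0$ and $Z(i,j')\ge 0$ are both in $\cA_1(Z)$, and the product of two constraint polynomials is a legal $p_S\prod_{i\in S} f_i$ term with $p_S=1$). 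To isolate an individual product, I rewrite
\[
Z(i,j)Z(i,j') \;=\; Z(i,j)Z(i,j') \;-\; \sum_{j''\neq j} Z(i,j)Z(i,j'') \;=\; -\!\!\sum_{\substack{j''\neq j\\ j''\neq j'}} Z(i,j)Z(i,j'') \,,
\]
which exhibits $Z(i,j)Z(i,j')$ as the negative of an sos-nonneg quantity, giving $\cA_1(Z)\sststile{4}{Z} Z(i,j)Z(i,j')\le 0$. Combined with the direct degree-$2$ inequality $Z(i,j)Z(i,j')\ge 0$, this yields the equality in degree $4$.

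For the second claim, I compute $(\transpose Z Z)(j,j') = \sum_i Z(i,j)Z(i,j')$ and split into diagonal and off-diagonal cases. On the diagonal, apply $Z(i,j)^2 = Z(i,j)$ entrywise (degree $2$) to obtain $\sum_i Z(i,j)^2 = \sum_i Z(i,j)$, and then use $\transpose Z \Ind = \tfrac n k \Ind$ to get the value $n/k$, so $\tfrac k n (\transpose Z Z)(j,j) = 1$. Off the diagonal, each summand vanishes in degree $4$ by the first part, so $\tfrac k n (\transpose Z Z)(j,j') = 0$. Both are degree-$4$ sos derivations, giving the matrix identity $\tfrac k n \transpose Z Z = \Id$ as claimed.

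I do not expect any serious obstacle here; the only mildly delicate point is tracking degrees. The tightest step is turning the degree-$2$ polynomial identity $\sum_{j'\neq j} Z(i,j)Z(i,j') = 0$ together with the degree-$2$ nonnegativity of each summand into the per-term identity $Z(i,j)Z(i,j')=0$, which as written above fits in degree $4$ (matching the stated bound). Everything else is routine substitution using the linear constraints $Z\Ind = \Ind$ and $\transpose Z \Ind = \tfrac n k \Ind$.
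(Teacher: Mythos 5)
Your proposal is correct and follows essentially the same route as the paper's proof. Both derive $\sum_{j'\neq j}Z(i,j)Z(i,j')=0$ from the Boolean constraint $Z\odot Z=Z$ together with the row-sum constraint $Z\Ind=\Ind$, pair this with the fact that each summand $Z(i,j)Z(i,j')$ is sos-nonnegative (product of two nonnegative constraint polynomials), and conclude that each summand vanishes; the diagonal of $\tfrac{k}{n}Z^\top Z$ is then handled by the Boolean and column-sum constraints, and the off-diagonal by the first part. Your writeup is a bit more explicit than the paper's about isolating an individual product to obtain $Z(i,j)Z(i,j')\le 0$, and it also silently corrects a small typo in the paper's second step (the paper writes $\sum_{j\in[k]}Z_{i,j}^2=\sum_{j\in[k]}Z_{i,j}=\tfrac{n}{k}$, whereas the intended sum is over $i\in[n]$, using $\transpose Z\Ind=\tfrac{n}{k}\Ind$), but the underlying argument is the same.
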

\begin{proof}
	For distinct $j,j'\in [n]$, we have
	\begin{equation*}
		Z(i,j)Z(i,j')=Z(i,j)\Paren{1-\sum_{j''\neq j'} Z(i,j'')}=Z(i,j)-Z(i,j)^2-\sum_{j''\neq j'\,, j''\neq j}Z(i,j'')Z(i,j)
	\end{equation*}
 which allows us to conclude $Z(i,j)\sum_{j'\neq j}Z(i,j')=0$.
 Now since $Z(i,j)Z(i,j')\geq 0$, we have the sum-of-squares proof that $Z(i,j)Z(i,j')=0$. 

  Moreover, we have $\sum_{j\in [k]} Z_{i,j}^2=\sum_{j\in [k]} Z_{i,j}=\frac{n}{k}$.
  Therefore, we have \(\frac{k}{n}Z^\top Z=\Id\).
\end{proof}

\begin{lemma}
	\label{lem:block-matrix-sos-bound}
	Let $B \in [0, R]^{k \times k}$, we have
	\begin{equation*}
		\cA_1(Z)\sststile{4}{Z}
		0 \le ZB\transpose{Z} \le R \cdot \Ind_n \transpose{\Ind_n} \,.
	\end{equation*}
\end{lemma}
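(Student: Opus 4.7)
The plan is to establish both inequalities entrywise and in low degree using only the constraints $Z \ge 0$ and $Z \Ind = \Ind$ from $\cA_1(Z)$, together with the scalar facts $0 \le B(a,b) \le R$. Expanding the $(i,j)$-entry gives the polynomial identity
\[
  (ZB\transpose Z)(i,j) \;=\; \sum_{a,b \in [k]} B(a,b)\, Z(i,a)\, Z(j,b) \,,
\]
so it suffices to sos-derive $0 \le \sum_{a,b} B(a,b)\, Z(i,a) Z(j,b) \le R$ for every fixed $i,j \in [n]$, and then combine these scalar inequalities to obtain the matrix inequality $0 \le ZB\transpose Z \le R \cdot \Ind_n \transpose{\Ind_n}$.

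For the lower bound I would simply note that each summand $B(a,b)\, Z(i,a)\, Z(j,b)$ is a nonnegative scalar times a product of two nonnegative indeterminates. Using the constraint $Z \ge 0$ from $\cA_1(Z)$, which is already of degree $1$, we get $\cA_1(Z) \sststile{2}{Z} Z(i,a) Z(j,b) \ge 0$, and multiplying by the nonnegative scalar $B(a,b)$ and summing yields a degree-$2$ sos proof of the lower bound.

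For the upper bound, the key trick is to introduce the nonnegative scalars $R - B(a,b) \ge 0$. Then
\[
  R \cdot \Biggparen{\sum_a Z(i,a)}\Biggparen{\sum_b Z(j,b)} \;-\; \sum_{a,b} B(a,b)\, Z(i,a)\, Z(j,b)
  \;=\; \sum_{a,b} \bigparen{R - B(a,b)}\, Z(i,a)\, Z(j,b)\,,
\]
which is again a nonnegative combination of degree-$2$ sos polynomials derivable from $Z \ge 0$. Finally, from $Z \Ind = \Ind$ we sos-derive in degree $2$ the identity $\bigparen{\sum_a Z(i,a)}\bigparen{\sum_b Z(j,b)} = 1$, so the left-hand side becomes $R - (ZB\transpose Z)(i,j)$, giving the desired degree-$2$ sos proof of $(ZB\transpose Z)(i,j) \le R$.

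There is essentially no obstacle here: both bounds are ``nonnegative combinations of nonnegative quantities,'' and the degree bound of $4$ in the statement is far from tight (degree $2$ suffices for each entry). The only mild point worth flagging is that the constraint $Z \ge 0$ must be explicitly available in $\cA_1(Z)$ for the argument to go through at degree $2$; this is precisely why the authors included this (otherwise redundant) inequality in the definition of $\cA_1(Z)$, as noted in the footnote accompanying the definition.
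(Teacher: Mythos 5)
Your proof is correct and takes essentially the same approach as the paper's: expand $(ZB\transpose{Z})(i,j) = \sum_{a,b} B(a,b) Z(i,a) Z(j,b)$ entrywise, use nonnegativity of $B$ and of the products $Z(i,a)Z(j,b)$ for the lower bound, and use $B(a,b)\le R$ together with $Z\Ind=\Ind$ for the upper bound. The only difference is minor and technical: the paper derives $Z_{i,a}Z_{j,b}\ge 0$ via the Boolean constraint $Z\odot Z=Z$, rewriting each term as the square $Z_{i,a}^2 Z_{j,b}^2$ (degree $4$), whereas you multiply the two linear constraints $Z(i,a)\ge 0$ and $Z(j,b)\ge 0$ directly, landing at degree $2$ as you correctly observe; both are valid under the stated degree-$4$ bound.
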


\begin{proof}
	For every $i, j \in [n]$, it follows that
	\begin{align*}
		\cA_1(Z) \sststile{4}{Z}
		\Paren{ZBZ^{\top}}_{i, j}
		& = \sum_{a, b \in [k]} Z_{i, a} B_{a, b} Z_{j, b} \\
		& = \sum_{a, b \in [k]} Z_{i, a}^2 B_{a, b} Z_{j, b}^2 \\
		& \geq 0 \,.
	  \end{align*}
	Since $\normm{B} \leq R$ and $\cA_1(Z) \sststile{2}{Z} Z_{i, a} = Z_{i, a}^2 \geq 0$ for all $i \in [n]$ and $a \in [k]$, it follows that, for every $i, j \in [n]$,
	\begin{align*}
	  \cA_1(Z) \sststile{4}{Z}
	  \Paren{ZBZ^{\top}}_{i, j}
	  & = \sum_{a, b \in [k]} Z_{i, a} B_{a, b} Z_{j, b} \\
	  & \leq R \cdot \sum_{a, b \in [k]} Z_{i, a} Z_{j, b} \\
	  & = R \cdot \Paren{\sum_{a \in [k]} Z_{i, a}} \cdot \Paren{\sum_{b \in [k]} Z_{j, b}} \\
	  & = R \,.
	\end{align*}
\end{proof}

\begin{lemma}[Sos \Holder inequality for \(\ell_\infty\) and \(\ell_1\)]
	\label{lem:sos-version-of-ell1-elli-Hoelder}
	Let $B \in [0, R]^{k \times k}$ and $M \in \R^{n \times n}$, we have
	\begin{equation*}
		\cA_1(Z)\sststile{4}{Z}
		- R \cdot \Normsum{M} \le \iprod{ZB\transpose{Z}, M} \le R \cdot \Normsum{M} \,.
	\end{equation*}
\end{lemma}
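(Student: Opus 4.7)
The strategy is to reduce the bilinear inner product to a sum of scalar inequalities, one for each matrix entry, and then apply the entrywise bounds on $ZB\transpose Z$ that we have already established in sos. Specifically, \cref{lem:block-matrix-sos-bound} gives us
\[
\cA_1(Z) \sststile{4}{Z} 0 \le (ZB\transpose{Z})(i,j) \quad\text{and}\quad \cA_1(Z)\sststile{4}{Z} (ZB\transpose{Z})(i,j) \le R
\]
for every $i,j\in[n]$. These are degree-$4$ sos facts with no $M$-dependence, so they are available uniformly.

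Next, I would split $M$ entrywise into its positive and negative parts. Define nonnegative scalars $M^+_{ij}=\max\{M(i,j),0\}$ and $M^-_{ij}=\max\{-M(i,j),0\}$, so that $M(i,j)=M^+_{ij}-M^-_{ij}$ and $|M(i,j)|=M^+_{ij}+M^-_{ij}$; these are constants (not indeterminates) as far as the sos proof system is concerned. Multiplying the two entrywise bounds above by the nonnegative constant $M^+_{ij}$ and by the nonnegative constant $M^-_{ij}$ respectively, and combining, we obtain within degree $4$ in $Z$
\[
-R\cdot M^-_{ij} \;\le\; (ZB\transpose{Z})(i,j)\cdot M^+_{ij} - (ZB\transpose{Z})(i,j)\cdot M^-_{ij} \;\le\; R\cdot M^+_{ij}\,,
\]
i.e.\ $-R\cdot M^-_{ij}\le (ZB\transpose{Z})(i,j)\cdot M(i,j)\le R\cdot M^+_{ij}$.

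Finally, I would sum over $i,j\in[n]$, using the trivial bounds $M^+_{ij}\le |M(i,j)|$ and $M^-_{ij}\le |M(i,j)|$, to conclude
\[
\cA_1(Z)\sststile{4}{Z} -R\cdot \normsum{M} \;\le\; \iprod{ZB\transpose{Z},M}\;\le\; R\cdot \normsum{M}\,.
\]
Summation preserves the degree, so the overall proof stays at degree $4$. There is no real obstacle here: the only thing to be slightly careful about is that $M^\pm_{ij}$ and $|M(i,j)|$ are genuine nonnegative constants (they depend only on the fixed matrix $M$, not on $Z$), so multiplying the sos inequalities by them is legal and does not increase the degree in $Z$.
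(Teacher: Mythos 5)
Your proof is correct and follows essentially the same route as the paper: both rely on \cref{lem:block-matrix-sos-bound} for the entrywise sos bounds $0 \le (ZB\transpose Z)(i,j) \le R$, then multiply by nonnegative constants derived from $M$ and sum. Your explicit positive/negative decomposition $M = M^+ - M^-$ is just a slightly more verbose way of expressing the paper's steps $(ZB\transpose Z)_{ij} M_{ij} \le (ZB\transpose Z)_{ij} |M_{ij}| \le R |M_{ij}|$, so there is no substantive difference.
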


\begin{proof}
	By \cref{lem:block-matrix-sos-bound}, we have
	\begin{equation*}
		\cA_1(Z)\sststile{4}{Z}
		0 \le ZB\transpose{Z} \le R \cdot \Ind_n \transpose{\Ind_n} \,.
	\end{equation*}
	Hence,
	\begin{align*}
		\cA_1(Z)\sststile{4}{Z}
		\iprod{ZB\transpose{Z}, M}
		& = \sum_{i, j \in [n]} \Paren{ZB\transpose{Z}}_{i, j} M_{i, j} \\
		& \le \sum_{i, j \in [n]} \Paren{ZB\transpose{Z}}_{i, j} |M_{i, j}| \\
		& \le \sum_{i, j \in [n]} R \cdot |M_{i, j}| \\
		& = R \cdot \Normsum{M} \,.
	\end{align*}
	The other direction follows by taking $\Paren{ZB\transpose{Z}}_{i, j} M_{i, j} \geq - \Paren{ZB\transpose{Z}}_{i, j} |M_{i, j}|$ for all $i, j \in [n]$.
\end{proof}
    
\begin{lemma}
	\label{lem:sos-proof-properties-of-map-between-community-membership-and-doubly-stochastic}
	Let $B, \Bnull$ be $k$-by-$k$ matrices with non-negative entries.
	Let \(p(S;B,B_0)\) denote the following quadratic polynomial in \(S\) with coefficients depending on \(B,B_0\),
	\[
		p(S; B,B_0) \seteq \frac{1}{k^2}\sum_{a,a',b,b'\in [k]}\Paren{B(a,b)-\Bnull(a',b')}^2\cdot S(a,a')\cdot S(b,b')\,.
	\]
	Let $S(Z) \seteq \frac{k}{n} \transpose Z \Znull$ and \(\cA_{\mathrm{ds}}(S)\seteq \Set{ S \ge 0,~ S\Ind_k =\Ind_k,~\transpose S \Ind_k=\Ind_k }\), it follows that
	\[
		\cA_1(Z),\cA_1(\Znull) \sststile{O(1)}{Z,\Znull} \cA_{\mathrm{ds}}(S(Z)),p\bigparen{S(Z); B,B_0}=\tfrac 1{n^2}\normf{ZB\transpose{Z}-\Znull\Bnull\Znulltrans}^2\,.
	  \]
\end{lemma}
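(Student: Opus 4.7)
The plan is to establish the two conclusions separately: first, the doubly stochastic constraints on $S(Z)=\tfrac{k}{n}Z^\top Z_0$, and second, the polynomial identity relating $p(S(Z);B,B_0)$ to $\tfrac{1}{n^2}\|ZBZ^\top-Z_0B_0Z_0^\top\|_F^2$. Both will follow from low-degree SoS manipulations that exploit two basic consequences of $\cA_1(Z)$ and $\cA_1(Z_0)$: the coordinate nonnegativity $Z,Z_0\ge 0$ and the orthogonality relation $Z(i,a)Z(i,a')=0$ for $a\neq a'$ (and likewise for $Z_0$), which is already proved in \cref{lem:sos-community-orthogonality} with degree~$4$.

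For the doubly stochastic constraints on $S(Z)$: nonnegativity $S(Z)\ge 0$ follows in degree $2$ by writing each entry as a sum of products $Z(i,a)Z_0(i,a')$ and using the squared-variable representations $Z(i,a)=Z(i,a)^2$ and $Z_0(i,a')=Z_0(i,a')^2$. For the row sums, $S(Z)\Ind_k=\tfrac{k}{n}Z^\top(Z_0\Ind_k)=\tfrac{k}{n}Z^\top\Ind_n=\Ind_k$, which is a linear identity available directly from the $\cA_1$ constraints. Column sums follow symmetrically. All three derivations live in degree at most $2$.

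For the polynomial identity, I would expand
\[
\tfrac{1}{n^2}\|ZBZ^\top-Z_0B_0Z_0^\top\|_F^2
=\tfrac{1}{n^2}\bigl(\|ZBZ^\top\|_F^2-2\langle ZBZ^\top,Z_0B_0Z_0^\top\rangle+\|Z_0B_0Z_0^\top\|_F^2\bigr)
\]
and process each term via SoS. In the cross term, expanding gives a sum over $(i,j,a,b,a',b')$ of $Z(i,a)B(a,b)Z(j,b)Z_0(i,a')B_0(a',b')Z_0(j,b')$; resumming over $i$ and $j$ produces factors $(n/k)S(a,a')$ and $(n/k)S(b,b')$ by definition of $S(Z)$, yielding $(n/k)^2\sum_{a,b,a',b'}B(a,b)B_0(a',b')S(a,a')S(b,b')$. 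For $\|ZBZ^\top\|_F^2$, I would apply the orthogonality $Z(i,a)Z(i,a')=0$ for $a\neq a'$ (twice, once for the $i$-index and once for the $j$-index) together with $Z(i,a)^2=Z(i,a)$ to collapse the quadruple sum into $\sum_{i,j,a,b}Z(i,a)B(a,b)^2 Z(j,b)=(n/k)^2\sum_{a,b}B(a,b)^2$, and then rewrite $\sum_{a,b}B(a,b)^2=\sum_{a,b,a',b'}B(a,b)^2 S(a,a')S(b,b')$ using the (already-derived in SoS) row-sum identities $\sum_{a'}S(a,a')=1$ and $\sum_{b'}S(b,b')=1$. The analogous reduction handles $\|Z_0B_0Z_0^\top\|_F^2$. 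Combining the three contributions assembles the square $(B(a,b)-B_0(a',b'))^2$ inside the sum, producing exactly $n^2\cdot p(S(Z);B,B_0)$, as desired.

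The main routine concern is bookkeeping of degrees rather than any genuine obstacle: each application of the orthogonality lemma adds a constant to the degree, and we apply it a bounded number of times, so the overall SoS proof has degree $O(1)$ in $Z$ and $Z_0$. The identity is an \emph{equality} of polynomials modulo the ideal generated by $\cA_1(Z)\cup\cA_1(Z_0)$, so once we have verified that every term reduces correctly using the listed identities, we obtain both directions simultaneously. No extra structural ingredient (such as a spectral or matrix concentration argument) is needed, which is what one should expect since the statement is purely algebraic.
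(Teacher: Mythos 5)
Your proposal is correct and follows essentially the same route as the paper's proof: derive the three doubly-stochastic constraints for $S(Z)$ directly from $\cA_1$ (nonnegativity via the squared-variable trick, row/column sums via the $Z\Ind=\Ind$ and $Z^\top\Ind=\tfrac{n}{k}\Ind$ constraints), and then expand the squared Frobenius norm, collapse each term using $Z\odot Z = Z$, the orthogonality $Z(i,a)Z(i,a')=0$, and the row/column sum constraints, and reassemble the square. The paper states the middle step more tersely as a single polynomial identity "by opening the squares and comparing monomials," but your more explicit three-term decomposition (and the re-expansion of $\sum_{a,b}B(a,b)^2$ via $\sum_{a'}S(a,a')=1$) is exactly the verification of that claim.
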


\begin{proof}
	Consider the entries of $S(Z)$, we have
	\begin{equation*}
		S(Z)_{i, j} = \sum_{t \in [n]} \frac{k}{n} Z_{t, i} (Z_0)_{t, j} \,.
	\end{equation*}
	Therefore, it follows that
	\begin{equation}
		\label{eq:appendix_sos_toolkit_1}
		\cA_1(Z),\cA_1(\Znull) \sststile{O(1)}{Z,\Znull}
		S(Z)_{i, j}
		= \sum_{t \in [n]} \frac{k}{n} Z_{t, i} (Z_0)_{t, j}
		= \sum_{t \in [n]} \frac{k}{n} Z_{t, i}^2 (Z_0)_{t, j}^2
		\geq 0 \,.
	\end{equation}
	Now, consider the row sum of $S(Z)$. For the $i$-th row, we have
	\begin{align*}
		\cA_1(Z),\cA_1(\Znull) \sststile{O(1)}{Z,\Znull}
		\sum_{j \in [k]} S(Z)_{i, j}
		= & \sum_{j \in [k]} \sum_{t \in [n]} \frac{k}{n} Z_{t, i} (Z_0)_{t, j} \\
		= & \sum_{t \in [n]} \frac{k}{n} Z_{t, i} \sum_{j \in [k]} (Z_0)_{t, j} \\
		= & \sum_{t \in [n]} \frac{k}{n} Z_{t, i} \\
		= & 1 \,.
	\end{align*}
	Hence, we get
	\begin{equation}
		\label{eq:appendix_sos_toolkit_2}
		\cA_1(Z),\cA_1(\Znull) \sststile{O(1)}{Z,\Znull}
		S\Ind_k =\Ind_k \,.
	\end{equation}
	Using a symmetric argument for the columns, we can also get
	\begin{equation}
		\label{eq:appendix_sos_toolkit_3}
		\cA_1(Z),\cA_1(\Znull) \sststile{O(1)}{Z,\Znull}
		\transpose S \Ind_k=\Ind_k \,.
	\end{equation}
	Combining \cref{eq:appendix_sos_toolkit_1}, \cref{eq:appendix_sos_toolkit_2} and \cref{eq:appendix_sos_toolkit_3}, we finish the proof for the first part of the lemma
	\begin{equation*}
		\cA_1(Z),\cA_1(\Znull) \sststile{O(1)}{Z,\Znull} \cA_{\mathrm{ds}}(S(Z)) \,.
	\end{equation*}
	Now, consider the polynomial $\tfrac 1{n^2}\normf{ZB\transpose{Z}-\Znull\Bnull\Znulltrans}^2$. Opening the squares and compare the degree of of each monomial, we can get the following identity
	\begin{equation*}
		\frac{1}{n^2} \normf{ZB\transpose Z - Z_0 B_0 \transpose {Z_0}}^2
		=\frac{1}{n^2}\sum_{a,a',b,b'\in [k]}\Paren{B_{a, b}-\Bnull_{a', b'}}^2 \iprod{Z_{(\cdot, a)}, (Z_0)_{(\cdot, a')}} \iprod{Z_{(\cdot, b)}, (Z_0)_{(\cdot, b')}}\,,
	\end{equation*}
	where $Z_{(\cdot, a)}$ denotes the $a$-th column of $Z$ (the same definition applies to $(Z_0)_{(\cdot, a')}$ etc.). Since $S(Z) \seteq \frac{k}{n} \transpose Z \Znull$, it follows that
	\begin{align*}
		\frac{1}{n^2} \normf{ZB\transpose Z - Z_0 B_0 \transpose {Z_0}}^2
		& = \frac{1}{k^2}\sum_{a,a',b,b'\in [k]}\Paren{B_{a, b}-\Bnull_{a', b'}}^2 S(Z)_{a, a'} S(Z)_{b, b'} \\
		& = p\bigparen{S(Z); B,B_0} \,.
	\end{align*}
	Thus, we have
	\[
		\cA_1(Z),\cA_1(\Znull) \sststile{O(1)}{Z,\Znull} \cA_{\mathrm{ds}}(S(Z)),p\bigparen{S(Z); B,B_0}=\tfrac 1{n^2}\normf{ZB\transpose{Z}-\Znull\Bnull\Znulltrans}^2\,.
	\]
\end{proof}

\section{Distance metrics for graphon estimation}
\label{sec:distance-metrics}

In this section, we discuss the different distance metrics that we use to compare the estimated graphons with the true graphon. 
For $k$-stochastic block model, a natural distance metric is given by 
\begin{equation*}
    \delta_{p}(B, B_0) = \frac{1}{k^2} \min_{\pi:[k]\to [k]}\sum_{i=1}^{k} \sum_{j=1}^{k} \Paren{B(\pi(i),\pi(j)) - B_{0}(i,j)}^2,
\end{equation*}
where the minimum is taken over all permutations $\pi$ of $[k]$.
In this part, we discuss its relation to our distance metric $\dsdistance(B,B_0)$. 
Particularly, we prove that they are equivalent up to $\poly(k)$ factors.

\begin{theorem}
\label{thm:distance-metric}
Let $B$ and $B_0$ be two symmetric non-negative $k\times k$ matrices. 
Then $\delta_{p}(B, B_0)\leq \frac{1}{k^4} \dsdistance(B,B_0)$. 
\end{theorem}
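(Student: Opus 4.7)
The plan is to combine the Birkhoff--von Neumann decomposition of \cref{thm:birkhoff} with an averaging argument over the atoms of that decomposition. The trivial direction $\dsdistance(B,B_0) \le \delta_p(B,B_0)$ comes for free: every permutation matrix is doubly stochastic, so for any $\pi$ admissible in $\delta_p$ the choice $S = P_\pi$ is feasible in $\dsdistance$ and achieves the same objective value. The content of the theorem is the matching upper bound $\delta_p(B,B_0) \le \poly(k)\cdot\dsdistance(B,B_0)$.

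I would start by letting $S^\star \in B_k$ be an optimal doubly stochastic matrix for $\dsdistance(B,B_0)$ and applying \cref{thm:birkhoff} to write $S^\star = \sum_{i=1}^{m} \lambda_i P_{\pi_i}$ with $m \le 2k^2$, weights $\lambda_i \ge 0$ summing to $1$, and permutations $\pi_1,\ldots,\pi_m$. Expanding the quadratic form in the definition of $\dsdistance$ with this decomposition gives
\[
  \dsdistance(B,B_0) \;=\; \frac{1}{k^2}\sum_{i,j \in [m]} \lambda_i \lambda_j \,\Psi_{ij}\,,\qquad
  \Psi_{ij} \;\seteq\; \sum_{a,b \in [k]} \bigl(B(a,b) - B_0(\pi_i(a),\pi_j(b))\bigr)^2\,.
\]
The crucial observation is that the diagonal atoms use the \emph{same} permutation on both rows and columns: substituting $c = \pi_i(a)$ and $d = \pi_i(b)$ shows
\[
  \Psi_{ii} \;=\; \sum_{c,d}\bigl(B(\pi_i^{-1}(c),\pi_i^{-1}(d)) - B_0(c,d)\bigr)^2 \;\ge\; k^2\,\delta_p(B,B_0)\,,
\]
since $\pi_i^{-1}$ is a permutation admissible in $\delta_p$.

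Next I would discard the (nonnegative) off-diagonal terms to obtain
\[
  \dsdistance(B,B_0) \;\ge\; \frac{1}{k^2}\sum_{i=1}^{m} \lambda_i^2\,\Psi_{ii} \;\ge\; \delta_p(B,B_0)\cdot \sum_{i=1}^{m} \lambda_i^2\,,
\]
and conclude by pigeonhole: some $i^\star$ satisfies $\lambda_{i^\star} \ge 1/m \ge 1/(2k^2)$, so $\sum_i \lambda_i^2 \ge 1/(4k^4)$. Rearranging yields $\delta_p(B,B_0) \le 4k^4\,\dsdistance(B,B_0)$, giving the announced $\poly(k)$ equivalence.

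The only real obstacle is conceptual rather than computational: the off-diagonal terms $\Psi_{ij}$ with $i \ne j$ permute the rows and columns of $B_0$ by \emph{different} permutations and therefore have no direct counterpart in $\delta_p$. The averaging argument above sidesteps them by using only the nonnegativity of $\Psi_{ij}$ together with the existence of a sufficiently heavy atom in the Birkhoff--von Neumann decomposition; any attempt to extract information from off-diagonal terms (e.g.\ by composing $\pi_i$ with $\pi_j^{-1}$) would seemingly require additional structural assumptions on $B_0$, and the $k^4$ factor we pay here reflects the worst-case spread of the decomposition across $O(k^2)$ permutations.
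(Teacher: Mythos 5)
Your proof is correct and essentially identical to the paper's: both apply Birkhoff--von Neumann to the optimizer $S$, discard the nonnegative off-diagonal cross terms, observe that each diagonal term $\Psi_{ii}\ge k^2\,\delta_p(B,B_0)$ via the change of variables $c=\pi_i(a)$, $d=\pi_i(b)$, and use pigeonhole on the decomposition weights to extract a $\poly(k)$ factor. The only minor differences are that you keep all diagonal terms and lower-bound $\sum_i\lambda_i^2$ whereas the paper keeps just the single heaviest one, and that your constant is $4k^4$ (from $m\le 2k^2$ in \cref{thm:birkhoff}) instead of the paper's $k^4$ (which uses the sharper Carath\'eodory bound $m\le k^2$ inside the proof); note also that the theorem statement itself has the inequality direction reversed --- since $\dsdistance(B,B_0)\le\delta_p(B,B_0)$ always, it should read $\delta_p(B,B_0)\le k^4\,\dsdistance(B,B_0)$, which is what both your argument and the paper's actually establish.
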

\begin{proof}
Let $S\in \R^{k\times k}$ be the doubly stochastic matrix that minimizes the distance $\dsdistance(B,B_0)$, i.e we have 
\begin{equation*}
    \frac{1}{k^2} \sum_{a,a',b,b'} \Paren{B(a,b)-B_0(a',b')}^2 S(a,a')S(b,b') = \dsdistance(B,B_0). 
\end{equation*}
Then by the Birkhoff–von Neumann theorem, we have the decomposition
\begin{equation*}
    S = \sum_{t=1}^{k^2} \theta_t P_t,
\end{equation*}
where $P_t$ are $k\times k$ permutation matrices. 
Let $t_{\max}=\text{argmin}_{t\in [k]} \theta_t$, then $t_{\max}\geq \frac{1}{k^2}$.  
As result, we have
\begin{equation*}
    \sum_{a,a',b,b'} \Paren{B(a,b)-B_0(a',b')}^2 S(a,a')S(b,b')
    \geq \frac{1}{k^4} \sum_{a,a',b,b'} \Paren{B(a,b)-B_0(a',b')}^2 P_t(a,a')P_t(b,b')\,.
\end{equation*}
By the definition, we have \(\delta_{p}(B, B_0)\leq \frac{1}{k^2}\sum_{a,a',b,b'} \Paren{B(a,b)-B_0(a',b')}^2 P_t(a,a')P_t(b,b')\).
 Therefore $\delta_{p}(B, B_0)\leq \frac{1}{k^4} \dsdistance(B,B_0)$, which concludes the proof. 
\end{proof}

\section{Lower bound for privacy cost}
\label{sec:lower_bound}

In this section, we prove \cref{thm:lb_main-theorem} of which \cref{thm:lower_bound} is a direct corollary by specifying the graph distribution to be the $(B,d,n)$-block model.

\begin{theorem}
  \label{thm:lb_main-theorem}
  Suppose there is an $\eps$-differentially private algorithm such that for any symmetric matrix $B\in[0,4]^{k\times k}$ with entries averaging to 1, on input an $n$-vertex random graph $\bm G$ sampled from some distribution defined by $B$, outputs $\hat{B}(\bm G)\in\R^{k\times k}$ satisfying
  \[
    \Pr\Paren{\delta_2\Paren{\hat{B}(\bm G), B}\leq \frac{1}{20}} \geq \frac{2}{3} \,.
  \]
  Then
  \[
    n \geq \Omega\Paren{\frac{k^2}{\eps}} \,.
  \]
\end{theorem}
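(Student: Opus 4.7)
The plan is a packing argument combined with $\epsilon$-DP group privacy.

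I would first construct $M=2^{\Omega(k^2)}$ candidate matrices. Set $B_\sigma = J_k + \sigma$, where $\sigma$ ranges over symmetric $k\times k$ matrices with entries in $\{-1/2,+1/2\}$ and $\sum_{a,b}\sigma(a,b) = 0$; then $B_\sigma$ has entries in $\{1/2,3/2\}\subset[0,4]$ averaging to $1$, and there are $2^{\Omega(k^2)}$ such $\sigma$. By \cref{lem:dsdistance} combined with the identity $\sum_{a',b'}S(a,a')S(b,b')=1$ coming from double stochasticity,
\[
\delta_2^2(B_\sigma,B_{\sigma'}) \;=\; \tfrac{1}{2} - \tfrac{2}{k^2}\,\max_{S\in B_k}\iprod{\sigma,\,S\sigma'S^\top}.
\]
A covering-plus-Hoeffding argument on the $(k-1)^2$-dimensional Birkhoff polytope will show that for independent random $\sigma,\sigma'$, the right-hand maximum is at most $O(k^{3/2}\sqrt{\log k})=o(k^2)$ with probability $1-e^{-\Omega(k^2)}$, so $\delta_2(B_\sigma,B_{\sigma'}) \ge 1/\sqrt{2} - o(1)$. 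A standard probabilistic extraction (sample $M=e^{ck^2}$ random $\sigma$'s and union bound over $M^2$ pairs) then produces a packing $B_1,\ldots,B_M$ with $M=2^{\Omega(k^2)}$ whose pairwise $\delta_2$-separation exceeds $1/10$.

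For the group-privacy step, any two $n$-vertex graphs satisfy $\dnode\le n$ trivially, so $\epsilon$-DP yields $\Pr[\mathcal{A}(G)\in R]\ge e^{-\epsilon n}\Pr[\mathcal{A}(G')\in R]$ for every event $R$. Define disjoint success regions $R_i\seteq\{B:\delta_2(B,B_i)\le 1/20\}$, disjoint by the triangle inequality and the packing separation. Let $\bm G^{(i)}$ be drawn from the distribution associated with $B_i$. Averaging the pointwise group-privacy bound over any coupling of $(\bm G^{(1)}, \bm G^{(i)})$ yields
\[
\Pr[\mathcal{A}(\bm G^{(1)})\in R_i]\;\ge\;e^{-\epsilon n}\,\Pr[\mathcal{A}(\bm G^{(i)})\in R_i]\;\ge\;\tfrac{2}{3}\,e^{-\epsilon n},
\]
where the second inequality uses the hypothesized accuracy. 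Summing over $i\in[M]$ and invoking disjointness gives $1\ge \tfrac{2M}{3}e^{-\epsilon n}$, equivalently $\epsilon n\ge \log(2M/3)=\Omega(k^2)$, hence $n\ge\Omega(k^2/\epsilon)$.

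The principal obstacle is the packing step. A permutation-only analysis would be easy---Hoeffding on the $\binom{k+1}{2}$ independent entries of $\sigma$ followed by a union bound over the $k!$ permutations, at a cost of only $\log k!=o(k^2)$---but $\delta_2$ allows minimization over the entire Birkhoff polytope, not just its vertices. The remedy is the covering-number argument mentioned above: since the quadratic functional $S\mapsto\iprod{\sigma,S\sigma'S^\top}$ is $O(k^{3/2})$-Lipschitz in $S$ (in Frobenius norm, via $\nabla_S f = 2\sigma S\sigma'$ and $\|\sigma\|_{\mathrm{op}},\|\sigma'\|_{\mathrm{op}}=O(\sqrt k)$) and $B_k$ has Frobenius diameter $O(\sqrt{k})$, an $O(1)$-net of size $e^{O(k^2\log k)}$ combined with Hoeffding concentration (failure probability $e^{-\Omega(t^2/k^2)}$ per net point) handles the continuum at threshold $t=O(k^{3/2}\sqrt{\log k})$. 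All remaining ingredients are routine.
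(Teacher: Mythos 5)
Your overall strategy---a $2^{\Omega(k^2)}$-sized packing in $\delta_2$ combined with group privacy over all $n$ vertices---is exactly the paper's approach, and your group-privacy step is correct and essentially verbatim what the paper does. The gap is in the packing step, and it is quantitative rather than stylistic: the covering argument over the Birkhoff polytope does not close.

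Concretely, you need a union bound over a net of size $e^{O(k^2\log k)}$, which forces the per-point deviation threshold $t$ to satisfy $t^2/k^2 \gtrsim k^2\log k$, i.e., $t \gtrsim k^2\sqrt{\log k}$, not $t = O(k^{3/2}\sqrt{\log k})$ as you claim (the discretization error $L\eta$ is fine at that scale, but the tail bound $e^{-\Omega(t^2/k^2)}$ is not small enough to survive the net cardinality---there is a spare factor of $k$ in the exponent). With $t \gtrsim k^2\sqrt{\log k}$, your lower bound on $\delta_2^2(B_\sigma,B_{\sigma'})$ becomes $\tfrac12 - \Omega(\sqrt{\log k})$, which is vacuous (indeed negative) for large $k$, so no packing is obtained.

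The paper sidesteps the continuum entirely: it invokes McMillan--Smith's result (\cref{lem:lb-lem1_MS18}) that $\lbdeltatwo(A,B) \le \delta_2(A,B)$, where $\lbdeltatwo$ minimizes only over pairs of permutation matrices. Separating under $\lbdeltatwo$ requires a union bound over only $(k!)^2 = e^{O(k\log k)}$ pairs, which a per-pair Hoeffding bound at level $e^{-\Omega(k^2)}$ comfortably beats; then the packing (\cref{lem:lb_packing_lb}) follows, and the averaging-to-1 normalization is handled separately via \cref{fact:lb-normalization_helps}. Your argument would also close if you replace the covering step by the observation that your quadratic can be relaxed to the bilinear form $\max_{S_1,S_2\in B_k}\iprod{\sigma, S_1\sigma'\transpose{S_2}} \ge \max_{S}\iprod{\sigma,S\sigma'\transpose{S}}$, and a bilinear form on a product of polytopes attains its maximum at a pair of vertices, i.e., a pair of permutation matrices---reducing you to the same union bound over $(k!)^2$ pairs and recovering the claimed $t=O(k^{3/2}\sqrt{\log k})$. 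Either route is needed; the $O(1)$-net over the full Birkhoff polytope as written does not suffice.
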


To prove \cref{thm:lb_main-theorem}, we need some definitions and results in~\cite{mcmillan2018non}.
Let $\bbS_k$ denote the set of $k\times k$ permutation matrices.
We define the following more "combinatorial" metric $\lbdeltatwo$ that lower bounds the more "analytical" $\delta_2$ metric.

\begin{definition}
  Given two $k\times k$ matrices $A,B$, define
  \begin{equation}
    \lbdeltatwo(A,B) \seteq \min_{P_1,P_2\in\bbS_k} \frac{1}{k} \Normf{P_1AP_2-B} \,.
  \end{equation}
\end{definition}

\begin{lemma}[Lemma 1 in~\cite{mcmillan2018non}]
  \label{lem:lb-lem1_MS18}
  For every two $k\times k$ matrices $A,B$,
  \[
    \lbdeltatwo(A,B) \le \delta_2(A,B) \,.
  \]
\end{lemma}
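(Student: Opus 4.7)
The plan is to combine the alternative characterization of $\delta_2$ from \cref{lem:dsdistance} with the Birkhoff--von Neumann theorem (\cref{thm:birkhoff}). By \cref{lem:dsdistance}, we have $\delta_2(A,B)^2 = \dsdistance(A,B)$, a minimum of a quadratic form over the Birkhoff polytope $B_k$, so it suffices to exhibit two permutation matrices whose induced Frobenius distance is bounded above by this minimum.

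Concretely, I would let $S^* \in B_k$ attain the minimum in the definition of $\dsdistance(A,B)$ and use \cref{thm:birkhoff} to write $S^* = \sum_t \theta_t Q_t$ as a convex combination of permutation matrices $Q_t$ (corresponding to permutations $q_t \in \bbS_k$, abusing notation). Plugging this decomposition into the quadratic form and expanding $S^*(a,a') S^*(b,b') = \sum_{t,s} \theta_t \theta_s Q_t(a,a') Q_s(b,b')$ yields a representation
\[
  \delta_2(A,B)^2 = \sum_{t,s} \theta_t \theta_s \cdot L_{t,s}, \qquad L_{t,s} \seteq \frac{1}{k^2}\sum_{a,b} \bigparen{A(a,b) - B(q_t(a), q_s(b))}^2.
\]
A routine reindexing $a' = q_t(a)$, $b' = q_s(b)$ identifies each $L_{t,s}$ with $\tfrac{1}{k^2} \normf{P_1^{(t)} A P_2^{(s)} - B}^2$ for the permutation matrices corresponding to $q_t^{-1}$ and $q_s^{-1}$. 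By definition of $\lbdeltatwo$, each $L_{t,s} \ge \lbdeltatwo(A,B)^2$. Since the coefficients $\theta_t \theta_s$ are nonnegative and sum to $1$, the identity above displays $\delta_2(A,B)^2$ as a convex combination of quantities each $\ge \lbdeltatwo(A,B)^2$, and thus $\delta_2(A,B)^2 \ge \lbdeltatwo(A,B)^2$.

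The only conceptually nontrivial step — and what might otherwise look like an obstacle — is the decoupling of the two factors of $S^*$ under the Birkhoff expansion. Even though $\delta_2$ corresponds to a single doubly-stochastic matrix acting on both coordinate pairs, the quadratic form $S^*(a,a') S^*(b,b')$ expands into an \emph{independent} convex combination over permutations for each pair, so two (possibly distinct) permutation matrices $P_1, P_2$ suffice to match each term. This is precisely why $\lbdeltatwo$ is defined using two independent row/column permutations rather than a single symmetric one, and it is what makes the relaxation from permutations to doubly-stochastic matrices give a clean one-sided inequality.
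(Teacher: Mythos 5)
Your proof is correct. Note that the paper itself does not prove this lemma; it imports it wholesale as Lemma~1 of~\cite{mcmillan2018non}, so there is no in-paper proof to compare against. Your argument is a valid self-contained derivation that leverages machinery the present paper develops anyway: \cref{lem:dsdistance} to rewrite $\delta_2^2$ as a quadratic minimum over the Birkhoff polytope, and \cref{thm:birkhoff} to decompose the optimizer. The key observation — that $S^*(a,a')S^*(b,b')$ decouples under the Birkhoff expansion into $\sum_{t,s}\theta_t\theta_s\,Q_t(a,a')Q_s(b,b')$, exhibiting $\delta_2^2$ as a convex combination of the $k^2$-normalized Frobenius costs $L_{t,s}$, each of which is bounded below by $\lbdeltatwo^2$ — is exactly the right lever, and your closing remark correctly pinpoints why $\lbdeltatwo$ is defined with two independent permutations rather than one. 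The reindexing from $L_{t,s}$ to $\tfrac{1}{k^2}\normf{P_1AP_2-B}^2$ checks out (the choice of $q_t^{-1}$ versus $q_t$ is a convention that the minimization over all permutation pairs renders irrelevant), and the coefficients $\theta_t\theta_s$ are nonnegative and sum to one, so the convex-combination bound is tight and elementary. Whether this matches the original route of~\cite{mcmillan2018non} I cannot say from the text here, but as a proof usable in this paper it is both correct and arguably preferable, since it reuses \cref{lem:dsdistance} rather than importing an external argument.
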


McMillan and Smith~\cite[Lemma 4]{mcmillan2018non} apply the probabilistic method to show that, 
there exists a set $S$ of $2^{\Omega(k^2)}$ symmetric $k\times k$ binary matrices such that every pair of $B,B'\in S$ satisfies $\lbdeltatwo(B,B') \ge \Omega(1)$.
These matrices are not normalized (i.e. have entries averaging to 1), but the following fact shows that normalization can only help.

\begin{fact}
  \label{fact:lb-normalization_helps}
  Given $x,y \in \bits^n$, consider $\bar{x} \seteq n\cdot x/\normo{x}$ and $\bar{y} \seteq n\cdot y/\normo{y}$.
  Then 
  \[
    \Normt{\bar{x}-\bar{y}} \ge \Normt{x-y} \,.
  \]
\end{fact}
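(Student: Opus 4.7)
The plan is to prove the inequality coordinate-wise. Set $a = \normo{x}$ and $b = \normo{y}$, which we may assume are positive (otherwise the normalization is undefined and the claim is vacuous). Since $x, y \in \bits^n$, we have $a, b \in \{1, \ldots, n\}$, so $n/a \ge 1$ and $n/b \ge 1$. The key observation is that for each $i$, the coordinate gap $|\bar x_i - \bar y_i|$ is at least $|x_i - y_i|$, and summing squares yields the result.

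Concretely, I would do a case split on $(x_i, y_i) \in \bits^2$. If $x_i = y_i = 0$, both sides vanish. If $x_i = y_i = 1$, then $(x_i - y_i)^2 = 0 \le (n/a - n/b)^2 = (\bar x_i - \bar y_i)^2$. If $x_i = 1, y_i = 0$, then $(\bar x_i - \bar y_i)^2 = (n/a)^2 \ge 1 = (x_i - y_i)^2$, using $a \le n$. The remaining case $x_i = 0, y_i = 1$ is symmetric. Thus $(\bar x_i - \bar y_i)^2 \ge (x_i - y_i)^2$ in every case, and summing over $i \in [n]$ gives $\Snormt{\bar x - \bar y} \ge \Snormt{x - y}$, from which the stated inequality in $\ell_2$-norm follows by taking square roots.

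There is really no obstacle here: the point is just that rescaling a $0/1$-vector so that its entries average to $1$ can only stretch the nonzero coordinates, never shrink them below $1$. I would include the proof directly after the statement of the fact, since the entire argument fits in a few lines.
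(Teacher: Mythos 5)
Your coordinate-wise argument is correct and complete. The paper states this as a Fact with no accompanying proof, so there is no paper argument to compare against, but yours is exactly the natural short argument: for every index $i$, the case analysis on $(x_i,y_i)\in\bits^2$ shows $(\bar x_i-\bar y_i)^2\ge(x_i-y_i)^2$, with the only nontrivial case being $\{x_i,y_i\}=\{0,1\}$, where the left side is $(n/\normo{x})^2$ or $(n/\normo{y})^2$ and both quantities are at least $1$ since $\normo{x},\normo{y}\le n$. Summing and taking square roots finishes it. The one thing worth making explicit in the write-up is the standing assumption $\normo{x},\normo{y}\ge 1$ (i.e.\ $x,y\neq 0$), which you flag informally; in the paper's application this is guaranteed since the matrices in the packing set satisfy $\normsum{B}\ge k^2/4>0$.
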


Then it is not difficult to adapt the proof of McMillan and Smith~\cite[Lemma 4]{mcmillan2018non} and show the following result.

\begin{lemma}
  \label{lem:lb_packing_lb}
  There exists a set $T$ of $2^{\Omega(k^2)}$ symmetric $k\times k$ matrices such that 
  \textit{(i)} every $B\in T$ satisfies $B\in[0,4]^{k\times k}$ and $\normsum{B}=k^2$; and
  \textit{(ii)} every pair of $B,B'\in T$ satisfies $\lbdeltatwo(B,B') \ge \frac{1}{6}$.
\end{lemma}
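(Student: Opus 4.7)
The plan is to construct $T$ via the probabilistic method on symmetric binary matrices and then normalize, using Fact~\ref{fact:lb-normalization_helps} to transfer the lower bound on pairwise $\lbdeltatwo$-distance across the normalization step. Concretely, fix $N = 2^{c k^2}$ for a small absolute constant $c > 0$ to be chosen later, and draw independent random symmetric matrices $\bm A_1,\ldots,\bm A_N \in \{0,1\}^{k\times k}$ by sampling the entries on and above the diagonal as i.i.d.~$\mathrm{Bernoulli}(1/2)$ and reflecting. I will show that with positive probability the normalized matrices $\bar{\bm A}_i \seteq k^2 \bm A_i / \normsum{\bm A_i}$ all lie in $[0,4]^{k\times k}$ and form a $(1/6)$-packing in $\lbdeltatwo$; taking $T=\{\bar{\bm A}_i\}$ then gives the claim.

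The packing step is a standard second-moment/union-bound calculation. For any fixed pair $P_1, P_2 \in \bbS_k$ and any two independent samples $\bm A, \bm A' $, the $\binom{k}{2}+k$ independent entries of $\bm A$ and $\bm A'$ (together with their reflections) make each entry of $P_1 \bm A P_2 - \bm A'$ a bounded random variable with mean zero, and opening the square gives $\E\,\normf{P_1 \bm A P_2 - \bm A'}^2 \ge k^2/2$. A Hoeffding/Chernoff bound then yields
\[
  \Pr\Brac{\normf{P_1 \bm A P_2 - \bm A'}^2 < k^2/4} \le \exp\bigparen{-\Omega(k^2)}.
\]
Union-bounding over the $(k!)^2 = 2^{O(k\log k)}$ pairs $(P_1,P_2)$ and the $\binom{N}{2} \le 2^{2ck^2}$ ordered pairs of samples, for $c$ sufficiently small we obtain that with probability $\ge 1 - 2^{-\Omega(k^2)}$, every pair satisfies $\lbdeltatwo(\bm A_i,\bm A_j)\ge 1/2$ in the \emph{un-normalized} binary regime.

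For the normalization and bounded-entries step, observe that for a Bernoulli$(1/2)$ symmetric matrix another Chernoff bound yields $\normsum{\bm A_i} \in [k^2/4,\,3k^2/4]$ with probability $\ge 1 - 2^{-\Omega(k^2)}$. On this event $\bar{\bm A}_i$ has entries in $\{0,\, k^2/\normsum{\bm A_i}\} \subseteq [0,4]$ and satisfies $\normsum{\bar{\bm A}_i}=k^2$ by construction. To transfer the packing bound, note that permutations preserve $\normsum{\cdot}$, so $\overline{P_1 \bm A_i P_2} = P_1 \bar{\bm A}_i P_2$; vectorizing the matrices and applying Fact~\ref{fact:lb-normalization_helps} entrywise gives
\[
  \normf{P_1 \bar{\bm A}_i P_2 - \bar{\bm A}_j} \;\ge\; \normf{P_1 \bm A_i P_2 - \bm A_j},
\]
so dividing by $k$ we conclude $\lbdeltatwo(\bar{\bm A}_i,\bar{\bm A}_j) \ge 1/2 \ge 1/6$ for every $i\neq j$ on the good event. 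Combining the two high-probability events and applying the probabilistic method produces the required set $T$ of size $N = 2^{\Omega(k^2)}$.

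The main obstacle is bookkeeping the symmetry of the random matrices so that the expectation bound $\E\,\normf{P_1 \bm A P_2 - \bm A'}^2 = \Omega(k^2)$ and the Chernoff deviation bound both remain valid despite $\bm A$ having only $\binom{k+1}{2}$ independent entries (rather than $k^2$); this is fine since the independent block still has size $\Theta(k^2)$, but one must be careful that $P_1 \bm A P_2$ is generally not symmetric, so one should view each entry of $P_1 \bm A P_2 - \bm A'$ as a function of an independent sub-collection of the underlying coin flips. Aside from this, everything reduces to standard tail bounds and union bounds, and the constant $1/6$ can be tightened if desired.
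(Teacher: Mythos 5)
Your proposal is correct and follows essentially the same route as the paper: draw i.i.d.\ random symmetric binary matrices, use a Chernoff bound to control $\normsum{\cdot}$ so that the normalization $B\mapsto k^2B/\normsum{B}$ keeps entries in $[0,4]$, union-bound over permutation pairs and matrix pairs to get the $\lbdeltatwo$ packing, and transfer the packing bound through normalization via \cref{fact:lb-normalization_helps}. The only cosmetic difference is that the paper outsources the pairwise tail bound to the computation in McMillan--Smith, whereas you redo the (standard) expectation-plus-concentration argument directly, correctly noting the bounded-differences bookkeeping needed because the symmetric matrix has only $\Theta(k^2)$ independent entries.
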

\begin{proof}
  Let $N\in\N$ and $\bm B_1, \dots, \bm B_N$ be iid uniformly random symmetric $k\times k$ binary matrices.
  By Chernoff bound, 
  \[
    \Pr\Paren{\Normsum{\bm B_1} \le \frac{k^2}{4}} \le \exp\Paren{-\frac{k^2}{20}} \,.
  \]
  From the proof of McMillan and Smith~\cite[Lemma 4]{mcmillan2018non},
  \[
    \Pr\Paren{\lbdeltatwo(\bm B_1, \bm B_2) \le \frac{1}{6}}
    \le \exp\Paren{-2\cdot \frac{\Paren{\frac{k^2}{6}-\frac{1}{2}\binom k 2}^2 }{\binom k 2} } (k!)^2 = 2^{-\Omega(k^2)} \,.
  \]
  Let $N=2^{ck^2}$ for some constant $c>0$.
  The probability that $\normsum{\bm B_i}\ge\frac{k^2}{4}$ for any $i\in[N]$ and, $\lbdeltatwo(\bm B_1, \bm B_2) \le \frac{1}{6}$ for any pair of $i,j\in[N]$ is at least $1-2^{2ck^2}2^{-\Omega(k^2)}-2^{ck^2}e^{-k^2/20}$.
  This probability can be made nonzero by choosing $c$ to be a sufficiently small absolute constant.
  Thus there exists a set $S$ of $2^{ck^2}$ symmetric $k\times k$ binary matrices such that 
  \textit{(i)} every $B\in S$ satisfies $\normsum{B}\ge k^2/4$; and
  \textit{(ii)} every pair of $B,B'\in S$ satisfies $\lbdeltatwo(B,B')\ge1/6$.
  Then we normalize matrices in $S$ by considering the following set 
  \[
    T \seteq \Set{ \frac{k^2\cdot B}{\Normsum{B}} : B\in S } \,.
  \]
  By property \textit{(i)} of $S$, every $B\in T$ satisfies $B\in[0,4]^{k\times k}$.
  By property \textit{(ii)} of $S$ and \cref{fact:lb-normalization_helps}, we have $\lbdeltatwo(B,B') \ge 1/6$ for every $B,B'\in T$.
\end{proof}

Now we are ready to prove \cref{thm:lb_main-theorem}.

\begin{proof}[Proof of \cref{thm:lb_main-theorem}]
  Consider an arbitrary algorithm satisfying the theorem's assumption.
  By \cref{lem:lb_packing_lb}, we can pick a set of $N=2^{\Omega(k^2)}$ symmetric $k\times k$ matrices $\set{B_1, \dots, B_N}$ in $[0,4]^{k\times k}$ such that 
  \textit{(i)} $\normsum{B_i}=k^2$ for any $i\in[N]$; and
  \textit{(ii)} $\lbdeltatwo\paren{B_i, B_j} \geq 1/6$ for every pair of $i,j\in[N]$.
  Since $\lbdeltatwo$ lower bounds $\delta_2$ by \cref{lem:lb-lem1_MS18}, we have $\delta_2\paren{B_i, B_j} \geq 1/6$ for every pair of $i,j\in[N]$.

  For each $i\in[N]$, define 
  \[
    Y_i \seteq \Set{B\in\R^{k\times k} \,:\, \delta_2\Paren{B,B_i} \leq 1/20} \,.
  \]
  Note $Y_1,\dots,Y_N$ are disjoint to each other.
  For $i\in[N]$, let $\mu_i$ denote an arbitrary graph distribution defined by $B_i$.
  As $Y_1,\dots,Y_N$ are pairwise disjoint,
  \[
    \sum_{i=1}^{N} \Pr_{\bm G \sim \mu_1} \Paren{\hat{B}(\bm G) \in Y_i} \leq 1 \,.
  \]
  By our utility assumption on the algorithm, for any $i\in[N]$ we have 
  \[
    \Pr_{\bm G \sim \mu_i} \Paren{\hat{B}(\bm G) \in Y_i} \geq \frac{2}{3} \,.
  \]
  Thus
  \[
    \sum_{i=2}^{N} \Pr_{\bm G \sim \mu_1} \Paren{\hat{B}(\bm G) \in Y_i} \leq \frac{1}{3} \,.
  \]
  By our privacy assumption on the algorithm, for any $n$-vertex graphs $G,H$ and any set $S\subseteq\R^{k\times k}$, we have
  \[
    \Pr \Paren{\hat{B}(G) \in S} \geq e^{-\eps n}\cdot \Pr \Paren{\hat{B}(H) \in S} \,,
  \]
  as the node distance between any two $n$-vertex graphs is at most $n$.
  Then for each $i\ge2$, 
  \[
    \Pr_{\bm G \sim \mu_1} \Paren{\hat{B}(\bm G) \in Y_i} 
    \geq e^{-\eps n}\cdot \Pr_{\bm G \sim \mu_i} \Paren{\hat{B}(\bm G) \in Y_i} 
    \geq e^{-\eps n}\cdot \frac{2}{3} \,.
  \]
  Putting things together,
  \[
    (N-1) e^{-\eps n}\cdot \frac{2}{3} \leq \frac{1}{3}
    \implies
    n \gtrsim \frac{k^2}{\eps} \,.
  \]
\end{proof}

\section{Upper bound of sampling error and approximation error}
In this section, we list the theorems we use from~\cite{borgs2015private,borgs2018revealing}.

\subsection{Bounds on general graphon}
First we state a lemma on the upper bound of sampling error for general graphon.
\begin{lemma}[Sampling error upper bound, Corollary D.1 in~\cite{borgs2015private}]\label{lem:samplingupperbound}
     Let $W$ be arbitrary graphon with $\normi{W}\leq R$, and let $\bm Q_0$ be the edge connection probability matrix generated from the graphon. 
     Let 
     \[\epsilon_n(W)\coloneqq \min_\pi\Norm{W\Brac{\frac{\pi \Qnull \pi^\top}{\rho}}-W}_2^2\] 
     with minimum taken over $n\times n$ permutation matrices.
     Then we have 
     \begin{equation*}
          \E \epsilon_n^2(W,\bm Q_0)\leq 4\Paren{\epsilon_k^{(O)}(W)}^2+ R^2\cdot O\Paren{\sqrt{\frac{k}{n}}}\,,
     \end{equation*}
     where $\epsilon_k^{(O)}(W)=\min_{B} \norm{W[B]-W}_2$ with minimum taken over $k\times k$ symmetric matrices $B$.
\end{lemma}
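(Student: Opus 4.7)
The plan is to reduce the sampling error on a general graphon $W$ to the sampling error on its best $k$-block approximation, and then bound the latter by a concentration argument on the sizes of the random sampling bins. First I would let $B^\star\in \R_+^{k\times k}$ attain the minimum in $\epsilon_k^{(O)}(W)$ and set $W^\star = W[B^\star]$, so that $\norm{W-W^\star}_2 = \epsilon_k^{(O)}(W)$. Since $\delta_2$-type distances satisfy a triangle inequality once one picks the right measure-preserving map on both sides, I would bound
\[
  \epsilon_n(W,\bm Q_0) \;\le\; \norm{W-W^\star}_2 \;+\; \epsilon_n(W^\star,\bm Q_0)
\]
and then apply $(a+b)^2\le 2a^2+2b^2$ twice to obtain the $4\,(\epsilon_k^{(O)}(W))^2$ term and reduce the problem to bounding $\E\,\epsilon_n^2(W^\star,\bm Q_0)$ by $O(R^2)\sqrt{k/n}$.

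For the $k$-block graphon $W^\star$, sampling $\bm x_1,\dots,\bm x_n$ iid uniform on $[0,1]$ induces a random assignment of vertices to the $k$ canonical intervals $(I_a)_{a\in[k]}$. Let $\bm n_a$ be the number of $\bm x_i$ falling into $I_a$, so $(\bm n_1,\dots,\bm n_k)$ is $\mathrm{Multinomial}(n;1/k,\dots,1/k)$. Conditional on the bin assignments, $\bm Q_0/\rho$ is, up to a permutation $\pi$ that groups vertices by their bin, constant on the combinatorial blocks of sizes $\bm n_a\times\bm n_b$ with value $B^\star(a,b)$. Then $W[\pi \bm Q_0\pi^\top/\rho]$ is precisely the step graphon whose block $(a,b)$ equals $B^\star(a,b)$ but supported on a partition of $[0,1]$ into intervals of widths $\bm n_a/n$ rather than $1/k$. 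The disagreement with $W^\star$ lives entirely on the symmetric difference of these two partitions, whose total Lebesgue measure in $[0,1]$ is $\sum_a \abs{\bm n_a/n - 1/k}$.

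The main step is then a clean calculation of $\norm{W^\star - W[\pi\bm Q_0\pi^\top/\rho]}_2^2$. Since entries of $B^\star$ are bounded by $R$ (inherited from $\normi{W}\le R$ via averaging), the integrand is at most $R^2$ on the symmetric-difference region and $0$ elsewhere, giving
\[
  \norm{W^\star - W[\pi\bm Q_0\pi^\top/\rho]}_2^2 \;\le\; R^2 \cdot 2\sum_{a\in[k]} \Bigabs{\tfrac{\bm n_a}{n}-\tfrac{1}{k}}.
\]
Taking expectations and applying Cauchy--Schwarz together with the standard variance bound $\Var(\bm n_a)\le n/k$ for the multinomial yields
\[
  \E\sum_{a}\Bigabs{\tfrac{\bm n_a}{n}-\tfrac 1k} \;\le\; \sum_a \tfrac 1n\sqrt{\Var(\bm n_a)} \;\le\; \sqrt{\tfrac{k}{n}},
\]
which combined with the triangle inequality above gives the claimed $\E\,\epsilon_n^2(W,\bm Q_0) \le 4(\epsilon_k^{(O)}(W))^2 + O(R^2)\sqrt{k/n}$.

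The hard part will be the bookkeeping around the permutation $\pi$ and the measure-preserving map simultaneously, i.e.\ arguing rigorously that one can choose \emph{the same} reordering $\pi$ both for the best $k$-block approximation $W^\star$ and for the sampled $\bm Q_0/\rho$ so that a single triangle inequality in $L_2$ (rather than in $\delta_2$) produces the decomposition. A careful proof would first pass to the common refinement of the two partitions of $[0,1]$ on which both $W^\star$ and the step graphon $W[\pi\bm Q_0\pi^\top/\rho]$ are piecewise constant, and then compare pointwise on that refinement. Once this is done the remaining concentration estimate on multinomial deviations is standard.
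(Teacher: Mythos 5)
The central difficulty in your sketch is an unacknowledged conflation between the edge-connection matrix generated by sampling $W$ and the one generated by sampling its block approximation $W^\star$. After your first triangle inequality you are left needing to bound
\[
  \E\,\min_\pi \Norm{W\!\bigl[\pi \bm Q_0 \pi^\top/\rho\bigr]-W^\star}_2^2\,,
\]
and here $\bm Q_0/\rho$ still has entries $W(\bm x_i,\bm x_j)$, not $W^\star(\bm x_i,\bm x_j)=B^\star(a_i,a_j)$. Your second paragraph asserts that, after sorting by bin, $\bm Q_0/\rho$ ``is constant on the combinatorial blocks of sizes $\bm n_a\times\bm n_b$ with value $B^\star(a,b)$,'' but this is false for a general graphon $W$: the entries $W(\bm x_i,\bm x_j)$ fluctuate within each combinatorial block. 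So the step graphon $W[\pi\bm Q_0\pi^\top/\rho]$ is genuinely an $n$-block object, not a $k$-block object, and the argument about symmetric differences of two $k$-interval partitions does not directly apply.

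The repair is to insert one more triangle-inequality term comparing the empirical graphons of $W$ and $W^\star$ at the same sample points: with $\bm Q_0^\star$ denoting the edge-connection matrix of $W^\star$ at the same $\bm x_i$, one bounds
\[
  \Norm{W[\pi\bm Q_0\pi^\top/\rho]-W^\star}_2 \;\le\;
  \Norm{W[\pi\bm Q_0\pi^\top/\rho]-W[\pi\bm Q_0^\star\pi^\top/\rho]}_2 \;+\;
  \Norm{W[\pi\bm Q_0^\star\pi^\top/\rho]-W^\star}_2\,.
\]
The second term is the $k$-block multinomial deviation you already treat; its expected square is indeed $O(R^2\sqrt{k/n})$. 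The first term squared is $\tfrac1{n^2}\sum_{i,j}\bigl(W(\bm x_i,\bm x_j)-W^\star(\bm x_i,\bm x_j)\bigr)^2$, whose expectation over the iid $\bm x_i$ equals $\bigl(\epsilon_k^{(O)}(W)\bigr)^2$ for off-diagonal pairs plus an $O(R^2/n)$ diagonal correction. This is precisely the contribution that produces the factor $4$ in front of $\bigl(\epsilon_k^{(O)}(W)\bigr)^2$ in the stated bound; without it your decomposition can only yield a factor $2$ and, more importantly, is logically incomplete since the object you analyze in paragraph two is not the object the triangle inequality left you with. Once this extra term is inserted and its expectation computed, the rest of your argument (the multinomial bin-size concentration and the Cauchy--Schwarz bound $\E\sum_a|\bm n_a/n-1/k|\le\sqrt{k/n}$) is sound, up to constant-factor bookkeeping in combining the three squared terms.
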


Based on this result, we state a bound on the block approximation error of edge connection probability matrix $\Qnull$. 
\begin{lemma}[Bound on the block approximation error of edge connection probability]\label{lem:block_approximation_error}
     Consider the random graph generated from graphon $W$ with edge connection probability matrix given by $\bm \Qnull=\rho H_n(W)$.
     Let $\hat{\epsilon}_k^{(O)}(\Qnull)$ be the approximation error:
     \begin{equation*}
         \hat{\epsilon}_k^{(O)}(\Qnull)\coloneqq \frac{1}{\rho n}\min_{Z,B} \Normf{\rho\cdot ZBZ^\top-\Qnull}\,.
     \end{equation*}
     with minimum taken over balanced community membership matrix $Z\in \{0,1\}^{n\times k}$ and $B\in [0,R]^{k\times k}$.
     Then we have
     \begin{equation*}
          \Paren{\hat{\e}_k^{(O)}(\Qnull)}^2\leq O\Paren{\Paren{\e_k^{(O)}(W)}^2+R^2\sqrt{\frac{k}{n}}}\,.
      \end{equation*}
      where $\epsilon_k^{(O)}(W)=\min_{B} \norm{W[B]-W}_2$ with minimum taken over $k\times k$ symmetric matrices $B$
\end{lemma}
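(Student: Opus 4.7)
The proof plan is a triangle-inequality argument in the $L_2$ graphon metric, bridging $\Qnull/\rho$ and the optimal $k$-block approximant of $W$ through $W$ itself, combined with the sampling-error lemma.

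First I would set up the correspondence between Frobenius distance of $n\times n$ matrices and $L_2$ distance of the associated step-graphons. Let $Z_{\mathrm{reg}}\in\cZ(n,k)$ be the canonical balanced community membership matrix sending vertex $i$ to block $\lceil ik/n\rceil$; then for every $B\in\R^{k\times k}$ the block matrix $Z_{\mathrm{reg}} B Z_{\mathrm{reg}}^{\top}$ is exactly $W[B]$ sampled on the regular $n$-point grid, so
\[
  \frac{1}{n^2}\Normf{Z_{\mathrm{reg}} B Z_{\mathrm{reg}}^{\top} - \Qnull/\rho}^2 \;=\; \Norm{W[B] - W[\Qnull/\rho]}_2^2.
\]
Since $\pi Z_{\mathrm{reg}}$ is again a balanced membership matrix for any permutation matrix $\pi$, taking a minimum gives
\[
  \Paren{\hat{\e}_k^{(O)}(\Qnull)}^2 \;\le\; \min_{\pi,B}\Norm{W[B] - W[\pi^{\top}\Qnull\pi/\rho]}_2^2.
\]

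Next I invoke the triangle inequality in the $L_2$ metric on graphons. Let $B^{*}\in[0,R]^{k\times k}$ realize $\|W - W[B^{*}]\|_2 = \e_k^{(O)}(W)$, and let $\pi^{*}$ realize $\e_n(W,\Qnull) = \|W - W[(\pi^{*})^{\top}\Qnull\pi^{*}/\rho]\|_2$. Then
\[
  \Norm{W[(\pi^{*})^{\top}\Qnull\pi^{*}/\rho] - W[B^{*}]}_2 \;\le\; \e_n(W,\Qnull) + \e_k^{(O)}(W),
\]
and squaring (using $(a+b)^2\le 2a^2+2b^2$) yields the deterministic bound
\[
  \Paren{\hat{\e}_k^{(O)}(\Qnull)}^2 \;\lesssim\; \e_n^2(W,\Qnull) + \Paren{\e_k^{(O)}(W)}^2.
\]

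Finally, taking expectation over the graphon sampling $\bm x_1,\ldots,\bm x_n$ and invoking \cref{lem:samplingupperbound} to control $\E\,\e_n^2(W,\bm\Qnull)\lesssim (\e_k^{(O)}(W))^2 + R^2\sqrt{k/n}$ gives the claimed bound. There is no real obstacle: the only place that requires small care is the identification in the first display of the normalized Frobenius distance with the $L_2$ step-graphon distance, which follows by a direct integration over the regular $k$-interval and $n$-interval partitions (so that the two partitions are nested when $k\mid n$), and the observation that replacing $Z_{\mathrm{reg}}$ by $\pi Z_{\mathrm{reg}}$ absorbs the permutation appearing in the definition of $\e_n$ while preserving balancedness.
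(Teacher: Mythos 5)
Your proof takes essentially the same route as the paper's: choose a balanced $Z$ aligned with the regular $k$-interval partition (so that, with $k\mid n$, the normalized Frobenius distance equals an $L_2$ step-graphon distance), apply the triangle inequality through $W$ against the best $k$-block approximant, and then invoke \cref{lem:samplingupperbound} to control the sampling term. If anything your write-up is more careful than the paper's terse sketch (which defers to \cite{borgs2015private}), since you make the Frobenius--$L_2$ identification explicit and track the squared-triangle-inequality factor of 2, both of which the paper leaves implicit.
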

\begin{proof}
     The proof has already been provided in~\cite{borgs2015private}:
    Without loss of generality, we assume that $n/k$ is an integer.
     Let $B_{\min}\in \R^{k\times k}$ be the best $k$-block approximation of 
     $W$, i.e $\epsilon_k^{(O)}(W)=\norm{W[B_{\min}]-W}_2$.
     Then we have
     \begin{align*}
         \Paren{\hat{\e}_k^{(O)}(\Qnull)}^2 &\leq \frac{1}{n^2}\normf{ZB_{\min} Z^\top-\Qnull/\rho}^2\\
        & \leq \Paren{\delta_2(B_{\min},W)}^2+\min_\pi \Norm{W\Brac{\frac{\pi \Qnull \pi^\top}{\rho}}-W}_2^2 \tag{minimum taken over $n\times n$ permutations}\\
        & =\Paren{\epsilon_k^{(O)}(W)}^2+ \Paren{\epsilon_n(W)}^2\,.
     \end{align*}
\end{proof}

\subsection{Bounds on H\"older continous graphs}
When we assume that the graphon is $\alpha$-H\"older continous, we can further bound the approximation error $\epsilon_k^{(O)}(W)$.
\begin{lemma}[Lemma 16 in \cite{borgs2015private}]
    Suppose $W:[0,1]^2\to [0,1]$ is $\alpha$-H\"older continous for some $\alpha\in (0,1]$, which is to say, for some large universal constant $C$ 
     \begin{equation*}
          \Abs{W(x,y)-W(x',y')}\leq C \Paren{|x-x'|+|y-y'|}^\alpha\,.
     \end{equation*}
    Then we have
   \begin{equation*}
        \epsilon_k^{(O)}(W)\leq \normi{W-W_{\mathcal{P}_k}}\leq O\Paren{\Paren{\frac{2}{k}}^\alpha}\,.
   \end{equation*}
   where $W_{\mathcal{P}_k}$ is a balanced $k$-block graphon.
\end{lemma}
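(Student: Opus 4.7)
The plan is to prove the two inequalities in sequence, both of which follow from rather direct observations about block graphons and H\"older continuity.

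For the first inequality $\epsilon_k^{(O)}(W) \leq \normi{W - W_{\mathcal{P}_k}}$, I would first unpack the definition: $\epsilon_k^{(O)}(W) = \min_B \norm{W[B] - W}_2$ where the minimum ranges over symmetric $k\times k$ matrices $B$. Since $W_{\mathcal{P}_k}$ is by assumption a balanced $k$-block graphon, it equals $W[B^*]$ for some specific symmetric $B^*$ (whose entries are the values $W_{\mathcal{P}_k}$ takes on each block $I_i \times I_j$). Thus $\epsilon_k^{(O)}(W) \leq \norm{W[B^*] - W}_2 = \norm{W_{\mathcal{P}_k} - W}_2$. Finally, since the $L_2$ norm on the unit square $[0,1]^2$ (a probability space) is bounded by the $L_\infty$ norm, we obtain $\norm{W_{\mathcal{P}_k} - W}_2 \leq \normi{W_{\mathcal{P}_k} - W}$.

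For the second inequality $\normi{W - W_{\mathcal{P}_k}} \leq O((2/k)^\alpha)$, I would choose $W_{\mathcal{P}_k}$ concretely as the step graphon that equals $W$ evaluated at a representative point of each block $I_i \times I_j$ (for instance, the center, or any fixed point in the block; this is where one uses that ``balanced $k$-block graphon'' allows such a construction). Fix any $(x,y) \in [0,1]^2$ and let $(x^*, y^*)$ be the representative point in the block containing $(x,y)$. Because each interval $I_i$ has length $1/k$, we have $|x - x^*| \leq 1/k$ and $|y - y^*| \leq 1/k$. Applying the $\alpha$-H\"older assumption gives $|W(x,y) - W(x^*,y^*)| \leq C(|x-x^*| + |y-y^*|)^\alpha \leq C(2/k)^\alpha$, and since $W_{\mathcal{P}_k}(x,y) = W(x^*,y^*)$, the bound $\normi{W - W_{\mathcal{P}_k}} \leq C(2/k)^\alpha$ follows by taking the supremum over $(x,y)$.

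There is essentially no hard step here; the only small subtlety is matching the precise definition of ``balanced $k$-block graphon'' used in the paper to a specific constructive choice of $W_{\mathcal{P}_k}$ (step function with representative values) so that both inequalities apply to the same object. Once that is fixed, the first inequality is the trivial $L_2 \leq L_\infty$ bound on a probability space combined with the fact that the concrete step function is a valid competitor in the minimization defining $\epsilon_k^{(O)}$, and the second inequality is a one-line application of H\"older continuity with the trivial diameter bound $2/k$ on each block in $\ell_1$.
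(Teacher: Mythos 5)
Your proof is correct, and since the paper only cites this lemma from \cite{borgs2015private} without reproducing its proof, there is no in-paper argument to compare against; your argument is the standard one and almost surely matches the cited reference in spirit.

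Two small remarks, neither of which is a gap. First, in \cite{borgs2015private} the notation $W_{\mathcal{P}_k}$ denotes the \emph{block average} of $W$ over the balanced $k$-partition (the conditional expectation of $W$ given the partition $\sigma$-algebra), rather than a step function built from representative point evaluations as you propose. Your choice is a legitimate balanced $k$-block graphon and yields the same $O((2/k)^\alpha)$ bound, but if one wants to match the reference's object exactly, the averaging version works just as directly: for $(x,y) \in I_a \times I_b$,
\[
\Abs{W(x,y) - W_{\mathcal{P}_k}(x,y)}
= \Abs{k^2 \int_{I_a \times I_b} \bigparen{W(x,y)-W(u,v)}\,du\,dv}
\le k^2 \int_{I_a\times I_b} C\Paren{\tfrac{2}{k}}^{\alpha} du\, dv
= C\Paren{\tfrac{2}{k}}^{\alpha},
\]
which is the same computation with a triangle inequality replaced by Jensen's inequality. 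Second, when you invoke the minimization defining $\epsilon_k^{(O)}(W)$, you should check that the competitor matrix $B^*$ lies in $\R_+^{k\times k}$; since $W\ge 0$, both the block averages and any point evaluations of $W$ are nonnegative, so this is fine, but it is worth saying. With these noted, the argument is complete: the first inequality is the $L_2 \le L_\infty$ bound on the probability space $[0,1]^2$ together with $W_{\mathcal{P}_k}$ being a valid competitor, and the second is the one-line application of H\"older continuity with diameter bound $2/k$.
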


Moreover, the sampling error for H\"older continous graphs can be bounded by $O\Paren{n^{-\alpha/2}}$(\cite{borgs2015private},appendix E).
Therefore, applying \cref{thm:formalmaintheorem}, the error rate for $\alpha$-H\"older continous graphon is given by:
\begin{equation*}
\E\delta_2^2(\hat{W}(\bm{G}),W) \leq O_R\Paren{
 \frac{k}{\rho n}+\frac{k^2\log(n)}{n\epsilon}+n^{-\alpha}+\Paren{\Paren{\frac{2}{k}}^{2\alpha}}} \,.
\end{equation*}

\section{Concentration inequalities from probability theory}

\subsection{Degree-pruned random graph}

For proving \cref{thm:formalmaintheorem} and \cref{thm:mainSBM} for very sparse graph($\rho\cdot n=\omega(1)$), we need to use the following classical result from random matrix theory. 
\begin{theorem}[Originally proved in~\cite{feige2005spectral}, restatement of theorem 6.7 in~\cite{liu2022minimax}]
    \label{theorem:pruned-spectral-norm}
    Suppose $\bm W\in \R^{n\times n}$ is a random symmetric matrix with zero on the diagonal whose entries above the diagonal are independent with the following distribution
    \begin{equation*}
      \bm W_{ij} =
        \begin{cases}
          1 - p_{ij} & \text{w.p. } p_{ij}\\
          - p_{ij} & \text{w.p. } 1- p_{ij}
        \end{cases}       
    \end{equation*}
    Let $\alpha$ be a quantity such that $p_{ij} \leq \frac{\alpha}{n}$ and $\bar{\bm W}$ be the matrix obtained from $\bm W$ by zeroing out all the rows and columns having more than $20 \alpha$ positive entries. Then with probability $1-\frac{1}{n^2}$, we have
    \begin{equation*}
        \Norm{\bar{\bm W}} \leq \chi \sqrt{\alpha}
    \end{equation*}
    for some constant $\chi$.
\end{theorem}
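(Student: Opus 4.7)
The plan is to follow the classical Feige--Ofek argument for bounding the spectral norm of pruned sparse random matrices, which is the approach used in both cited works. The starting point is to reduce the spectral norm to a bilinear form, then to decompose the form into a well-behaved ``light'' part handled by Bernstein's inequality and a ``heavy'' part handled by a combinatorial discrepancy argument that crucially uses the degree-pruning.

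First, via a standard $\epsilon$-net argument, I would pass from $\|\bar{\bm W}\|$ to $\max_{x,y \in \cT} \iprod{x, \bar{\bm W} y}$ up to a universal constant, where $\cT$ is a $1/\sqrt n$-net of the unit sphere consisting of vectors whose nonzero coordinates are integer multiples of $1/\sqrt n$ of magnitude at most $1$. Since $\abs{\cT} \le \exp(O(n))$, union bounds over $\cT^2$ can absorb factors of $\exp(O(n))$ in the failure probability for each fixed pair $(x,y)$.

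Next, for a fixed pair $(x,y) \in \cT^2$, I would split the bilinear form as
\[
\iprod{x, \bar{\bm W} y} \;=\; L(x,y) \;+\; H(x,y),
\]
where $L$ sums over the ``light'' index pairs $(i,j)$ with $\abs{x_i y_j} \le \sqrt{\alpha}/n$ and $H$ sums over the complementary ``heavy'' pairs. For $L$, I would apply Bernstein's inequality: each summand is bounded by $\sqrt\alpha/n$, and the variance is at most $\sum_{ij} p_{ij}(x_i y_j)^2 \le (\alpha/n)\|x\|_2^2\|y\|_2^2 \le \alpha/n$. This gives $|L| \lesssim \sqrt{\alpha}$ with failure probability $\exp(-\Omega(n))$, more than enough to union bound over $\cT^2$.

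The heavy contribution $H$ is where the pruning becomes essential and constitutes the main obstacle. I would group heavy pairs dyadically by the sizes of $\abs{x_i}$ and $\abs{y_j}$: for integer levels $(r,s)$ with coordinates of order $2^r/\sqrt n$ and $2^s/\sqrt n$, the cardinalities of the level sets $A_r, B_s$ are bounded via $\|x\|_2, \|y\|_2 \le 1$ by $\abs{A_r} \le n/4^r$ and $\abs{B_s} \le n/4^s$. Because we restricted to pairs with $\abs{x_iy_j} \ge \sqrt{\alpha}/n$, the relevant levels satisfy $2^{r+s} \ge \sqrt{\alpha}$. The key combinatorial step is Feige--Ofek's discrepancy lemma: for the pruned matrix $\bar{\bm W}$, and for every pair of subsets $A \subseteq [n], B \subseteq [n]$, the number $e(A,B)$ of positive entries of $\bar{\bm W}$ in $A \times B$ satisfies a bound of the form $e(A,B) \lesssim \max\{\alpha\abs{A}\abs{B}/n,\ \abs{A\cup B}\log(n/\abs{A\cup B})\}$ with probability at least $1-n^{-\Omega(1)}$; this is proved by a Chernoff plus union bound calculation over all pairs $(A,B)$ of given cardinalities, where the crucial input is that pruning caps each vertex's positive degree in $\bar{\bm W}$ at $20\alpha$, so no single vertex can contribute too many heavy edges. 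Summing the dyadic estimates against $\abs{x_i}\abs{y_j} \lesssim 2^{r+s}/n$ using this two-regime bound then yields $\abs{H} \lesssim \sqrt\alpha$. Combining the bounds on $L$ and $H$ and plugging into the $\epsilon$-net gives $\|\bar{\bm W}\| \le \chi\sqrt\alpha$ with the desired probability.
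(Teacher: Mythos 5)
Your proposal correctly reproduces the Feige--Ofek kernel/light--heavy decomposition that underlies the cited theorem, and the paper does not contain its own proof: it states the result as a restatement of Theorem 6.7 of Liu et al.\ (itself following Feige--Ofek), which is exactly the argument you outline (epsilon-net reduction, Bernstein for the light pairs with the $\sqrt{\alpha}/n$ threshold, and the dyadic discrepancy/bounded-degree count for the heavy pairs). This matches the intended proof; no gaps.
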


We also need the following bound for the number edges incident to high degree vertices in the random graph.
\begin{lemma}[Number of edges incident to high degree vertices]\label{lem:degree-counting}
  Suppose $M\in \R^{n\times n}$ is a random symmetric matrix with zero on the diagonal whose entries above the diagonal are independent with the following distribution
    \begin{equation*}
      W_{ij} =
        \begin{cases}
          1 - p_{ij} & \text{w.p. } p_{ij}\\
          - p_{ij} & \text{w.p. } 1- p_{ij}
        \end{cases}       
    \end{equation*}
    Let $\alpha$ be the quantity such that $p_{ij} \leq \frac{\alpha}{n}$.
    Let $S\subseteq [n]$ be the set of rows with more than $20 \alpha$ positive entries. 
    Then with probability \(1-\exp(-\alpha)\), the rows in $S$ contains at most \(O\Paren{\exp(-\alpha)\cdot \alpha \cdot n}\) positive entries.
\end{lemma}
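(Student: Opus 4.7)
\textbf{Proof proposal for \cref{lem:degree-counting}.}

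The plan is to bound the quantity $D := \sum_{i \in [n]} d_i \mathbf{1}[d_i > 20\alpha]$, where $d_i$ denotes the number of positive entries (i.e.\ ``edges'') in row $i$ of $M$. Up to a factor of at most $2$ (since an edge within $S \times S$ is counted in both endpoints), $D$ upper bounds the number of positive entries in rows indexed by $S$, so it suffices to show $D \lesssim \exp(-\alpha)\cdot \alpha \cdot n$ with probability $1-\exp(-\alpha)$.

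First, I would estimate $\mathbb{E}[D]$ by bounding each summand via the multiplicative Chernoff inequality. For each vertex $i$, the degree $d_i$ is a sum of independent Bernoulli variables with mean $\mu_i \le \alpha$, so for every integer $k \ge 20\alpha$ we have $\Pr[d_i = k] \le \Pr[d_i \ge k] \le (e\mu_i/k)^k \le (e/20)^k$. Summing a geometric-type series,
\begin{equation*}
\mathbb{E}\bigl[d_i\, \mathbf{1}[d_i > 20\alpha]\bigr]
 = \sum_{k > 20\alpha} k\Pr[d_i = k]
 \;\le\; \sum_{k > 20\alpha} k (e/20)^k
 \;\le\; C \alpha \cdot (e/20)^{20\alpha}
\end{equation*}
for an absolute constant $C$. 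Since $20 \ln(20/e) > 39$, this gives $\mathbb{E}[D] \le C' \alpha n \cdot \exp(-39\alpha)$ for an absolute constant $C'$.

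Finally, by Markov's inequality applied to the nonnegative random variable $D$, for $t = C' e \cdot \alpha n \cdot \exp(-\alpha)$ we obtain $\Pr[D > t] \le \mathbb{E}[D]/t \le \exp(-38\alpha) \le \exp(-\alpha)$. This yields the claimed bound of $O(\exp(-\alpha)\cdot \alpha \cdot n)$ on the number of positive entries contained in the high-degree rows, with failure probability at most $\exp(-\alpha)$.

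I do not anticipate a real obstacle: the only delicate point is choosing the constant in the exponent carefully so that the $\exp(-40\alpha)$-type upper bound on $\mathbb{E}[D]$ dominates both the target $\exp(-\alpha)$ decay in the bound and the $\exp(-\alpha)$ failure probability in Markov's inequality. The slack between $40\alpha$ (arising from the threshold $20\alpha$) and $\alpha$ is easily large enough to afford both.
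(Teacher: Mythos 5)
Your proof is correct and follows essentially the same strategy as the paper's: a Chernoff tail bound on the per-row degree, a geometric sum to bound the expected mass of high-degree rows, and Markov's inequality at the end. Your bookkeeping (directly bounding $\E[d_i\,\mathbf{1}[d_i > 20\alpha]]$ by $\sum_{k>20\alpha} k(e/20)^k$) is in fact a bit cleaner than the paper's, which decomposes via the sets $S_t$ of rows exceeding each threshold $t$; the parenthetical remark about the factor of two is harmless but unnecessary, since $\sum_{i\in S}d_i$ is already exactly the count in question.
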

\begin{proof}
  For any $t>20\alpha$, for each row $i\in [n]$, the probability that it contains more than $t$ positive entries is bounded by $\exp\Paren{-\frac{t}{3}}$. 
  We denote the set of rows with more than $t$ positive entries as $S_t$.
  Then by the linearity of expectation, we have $\E|S_t|=\exp\Paren{-\frac{t}{3}}\cdot n$.

 Now we note that 
 \[S=\bigcup_{t>20\alpha}S_t\] 
  For each fixed $t>20\alpha$, the expectation of total number of positive entries in rows belonging to $S_t$ is bounded by $\exp\Paren{-\frac{t}{3}}\cdot n\cdot t$.
  Therefore, summing over $t$, the expectation of the number of positive entries in the rows indexed by $S$ is upper bounded by
  \begin{equation*}
    \sum_{t=20\alpha}^{n} \exp\Paren{-\frac{t}{3}}\cdot n\cdot t
    \leq \exp(-2\alpha)\cdot n\cdot \alpha\,.
  \end{equation*}
  By Markov inequality, with probability at least $1-\exp(-\alpha)$, the number of positive entries in the rows indexed by $S$ is upper bounded by $\exp(-\alpha)\cdot n\cdot \alpha$.
\end{proof}

As a result, we have the following corollary: 
\begin{lemma}\label{lem:truncationscoredeviation}
  Let $\rho,\hat{\rho}\in \R$ such that $\rho\leq 10\hat{\rho}$.
  For $A\in {0,1}^{n\times n}$, $Z\in \R^{n\times k}$, and $B\in \R^{k\times k}$,
  let $f(B;Z;A)=-\frac{1}{\hat{\rho}^2} \normf{ZBZ^\top}^2+\frac{2}{\hat{\rho}}\iprod{ZBZ^\top,A}$. 
  Let $B_0\in [0,R]^{k\times k}$ and $Z_0\in \{0,1\}^{n\times k}$.
  Let $\bm A\in {0,1}^{n\times n}$ be a random matrix with i.i.d entries, and $A(i,j)=1$ with probability $Q_0(i,j)$ for some matrix $Q_0\in [0,R\cdot \rho]^{k\times k}$.
  Let $\bar{A}$ be the matrix obtained from $A$ by removing rows and columns  with more than $\rho\cdot n\cdot R$ positive entries.
  Then we have
  \begin{equation*}
    \Abs{f(B_0;Z_0;A)-f(B_0;Z_0;\bar{A})}\leq n^2\cdot\exp(-\frac{1}{2}\rho\cdot n\cdot R)
  \end{equation*}
\end{lemma}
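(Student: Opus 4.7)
The plan is to reduce the claim to a direct application of \cref{lem:degree-counting} combined with the entrywise bound on block matrices from \cref{lem:block-matrix-sos-bound}. The first observation is algebraic and crucial: the quadratic term $\tfrac{1}{\hat\rho^2}\normf{Z_0 B_0 Z_0^\top}^2$ in $f(B_0; Z_0; \cdot)$ does not depend on the adjacency matrix argument, so
\[
  f(B_0; Z_0; A) - f(B_0; Z_0; \bar A) \;=\; \frac{2}{\hat\rho}\iprod{Z_0 B_0 Z_0^\top,\; A - \bar A}\,.
\]
Since $Z_0$ is a $\set{0,1}$-valued community membership matrix (each row contains exactly one $1$) and $B_0 \in [0,R]^{k\times k}$, every entry of $Z_0 B_0 Z_0^\top$ lies in $[0,R]$. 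Together with $\hat\rho \ge \rho/10$, this yields
\[
  \bigabs{f(B_0; Z_0; A) - f(B_0; Z_0; \bar A)}
  \;\le\; \frac{2R}{\hat\rho}\cdot \normsum{A - \bar A}
  \;\le\; \frac{20\,R}{\rho}\cdot \normsum{A - \bar A}\,.
\]

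Next I would bound $\normsum{A - \bar A}$, which is where the randomness enters. By construction, $\bar A$ is obtained from $A$ by zeroing out the rows and columns indexed by the set $S$ of vertices whose row in $A$ contains more than $\rho n R$ positive entries. Hence $\normsum{A - \bar A}$ is at most twice the total number of positive entries contained in the rows of $A$ indexed by $S$. Applying \cref{lem:degree-counting} with $\alpha$ of order $\rho n R$ (so that $p_{ij} \le Q_0(i,j) \le R\rho \le \alpha/n$ and the truncation threshold matches, up to an absolute constant, the threshold $\rho n R$ used in the definition of $\bar A$), we obtain that with probability at least $1-\exp(-\Omega(\rho n R))$,
\[
  \normsum{A - \bar A} \;\le\; O\bigparen{\exp(-\Omega(\rho n R))\cdot \rho n R \cdot n}\,.
\]

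Plugging this bound back into the previous display yields
\[
  \bigabs{f(B_0; Z_0; A) - f(B_0; Z_0; \bar A)}
  \;\le\; O\bigparen{R^2 n^2\cdot \exp(-\Omega(\rho n R))}\,,
\]
which is at most $n^2 \cdot \exp(-\tfrac12 \rho n R)$ after absorbing the polynomial factor $R^2$ into the exponential (valid in the regimes considered in the paper, where $\rho n R$ is at least logarithmic in $R$). I expect no substantial obstacle: the only care needed is in matching the constant $20$ inside the truncation threshold of \cref{lem:degree-counting} with the threshold $\rho n R$ appearing in the statement of the lemma, which is a routine rescaling of $\alpha$. The result therefore holds as a high-probability bound (implicit in the statement), where the randomness is that of $\bm A$.
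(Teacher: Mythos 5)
Your argument mirrors the paper's own proof almost exactly: both cancel the $\hat\rho^{-2}\normf{Z_0B_0Z_0^\top}^2$ term, bound the entries of $Z_0 B_0 Z_0^\top$ by $R$, use $\hat\rho \ge \rho/10$, and invoke \cref{lem:degree-counting} with $\alpha = R\rho n$ to control $\normsum{A-\bar A}$. You also correctly observe that the conclusion is a high-probability statement over $\bm A$, which the lemma's phrasing leaves implicit.

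One small caution about the threshold mismatch you flag at the end: it is \emph{not} fixable by a ``routine rescaling of $\alpha$.'' \cref{lem:degree-counting} prunes rows exceeding $20\alpha$ positive entries; if you tried to set $\alpha = \rho n R / 20$ so that $20\alpha$ equals the stated threshold $\rho n R$, the hypothesis $p_{ij}\le\alpha/n$ would fail, since $Q_0(i,j)$ may be as large as $R\rho = 20\alpha/n$. Moreover, pruning at $\rho n R$ — which is exactly the maximum expected row sum — would discard a constant fraction of rows, so the claimed exponentially small bound could not hold. The discrepancy is a typo in the lemma's own statement (the threshold should be $20\rho n R$, consistent with the pruning actually performed in the proof of \cref{thm:mainSBM} and with the hypothesis of \cref{lem:degree-counting}), and with that correction your proof — like the paper's — goes through as written.
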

\begin{proof}
  We have
  \begin{equation*}
    f(B;Z;A)=f(B;Z;\bar{A})+\frac{2}{\hat{\rho}}\iprod{ZBZ^\top,A-\bar{A}}\,.
  \end{equation*}
  By \cref{lem:degree-counting}, with high probability, $A-\bar{A}$ has less than \(O\Paren{\exp(-R\rho n)\cdot R\rho n^2}\) positive entries. 
  As result, with high probability, we have
  \begin{equation*}
    \frac{2}{\hat{\rho}}\iprod{ZBZ^\top,A-\bar{A}}\leq \frac{2R\rho n^2}{\hat{\rho}}\cdot \exp(-R\rho n)\leq O(n^2)\cdot \exp(-R\rho n)\,.
  \end{equation*}
\end{proof}

\subsection{Concentration bound for edge density}
Now we provide a lemma on the concentration bound of estimating edge density $\rho$.
We first prove a lemma on the convergence of empirical edge density in general random graphs(which is similar to lemma 12 in~\cite{borgs2015private}).
\begin{lemma}[Convergence of empirical edge density]\label{lem:convergence-edge-density}
  Given $Q\in [0,1]^{n\times n}$, a random graph $\bm G\in \cG_n$ is generated by independently connecting each pair of vertices $i,j\in [n]$ with probability $Q(i,j)$. 
   Let $\rho(\bm G)=\frac{2|E(\bm G)|}{n(n-1)}$ and $\rho(Q)=\frac{\norm{Q}_1}{n(n-1)}$. Then with probability at least $1-\exp\Paren{-\frac{n^2 t^2}{10\rho(Q)}}$,
  \begin{equation*}
    \Abs{\rho(\bm G)-\rho(Q)}\leq t  \,.
  \end{equation*}
\end{lemma}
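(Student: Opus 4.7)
The plan is to recognise the lemma as a standard Bernstein-type concentration bound for a sum of independent Bernoulli variables. Writing $\bm A_{ij}$ for the indicator that the pair $\{i,j\}$ is sampled as an edge, the $\binom{n}{2}$ variables $\set{\bm A_{ij}}_{i<j}$ are mutually independent with $\bm A_{ij}\sim\text{Bernoulli}(Q(i,j))$. Hence
\[
|E(\bm G)| = \sum_{i<j}\bm A_{ij},\qquad \E\,|E(\bm G)| = \sum_{i<j}Q(i,j) = \tfrac{n(n-1)}{2}\rho(Q),
\]
using that $Q$ is symmetric with vanishing diagonal (the convention in the excerpt). Therefore $\rho(\bm G) - \rho(Q) = \tfrac{2}{n(n-1)}\bigparen{|E(\bm G)| - \E\,|E(\bm G)|}$ is a centred sum of independent $[0,1]$-variables whose total variance satisfies
\[
\sigma^2 \seteq \sum_{i<j} Q(i,j)\bigparen{1-Q(i,j)} \le \sum_{i<j}Q(i,j) = \tfrac{n(n-1)}{2}\rho(Q).
\]

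Next I would invoke Bernstein's inequality in the form
\[
\Pr\Bigparen{\Bigabs{\tsum_{i<j}\bigparen{\bm A_{ij} - Q(i,j)}}\ge \lambda}
\le 2\exp\Paren{-\frac{\lambda^2/2}{\sigma^2 + \lambda/3}},
\]
and substitute $\lambda = tn(n-1)/2$. After canceling a factor of $n(n-1)$ from numerator and denominator, the exponent simplifies to
\[
-\frac{t^2\, n(n-1)}{4\rho(Q) + \tfrac{4}{3}t},
\]
which is at most $-\tfrac{n^2 t^2}{10\rho(Q)}$ whenever $t$ lies in the sub-Gaussian regime $t\lesssim \rho(Q)$ (with room to spare, since $3/16 > 1/10$ and $n(n-1)\ge n^2/2$ for $n\ge 2$), giving the stated bound directly.

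The only genuine subtlety is handling the tail regime $t \gtrsim \rho(Q)$, where the Bernstein denominator is dominated by the $\lambda/3$ term. There I would split into two sub-cases: for $t\le 1$, the exponent becomes $-\Theta(tn^2)$, which still dominates $-n^2t^2/(10\rho(Q))$ once $t\le O(\rho(Q))$ is violated only mildly; and for $t> 1$, the event is empty since $|\rho(\bm G)-\rho(Q)|\le 1$ deterministically. Stitching these regimes together (and, if needed, absorbing the factor of $2$ in the prefactor into the constant $10$ by slightly loosening to an exponent of the form $-n^2t^2/C\rho(Q)$) yields the claim. I do not expect any real obstacle here; the argument is a textbook Bernstein/Chernoff calculation and the constant $10$ in the statement is chosen with enough slack to accommodate the case analysis cleanly.
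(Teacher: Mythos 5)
Your argument takes the same route as the paper: apply a Bernstein/Chernoff bound to the centered sum of the independent Bernoulli indicators $\{\bm A_{ij}\}_{i<j}$, using that each has variance at most $Q(i,j)$. In the sub-Gaussian regime $t\lesssim\rho(Q)$ your calculation is correct (and slightly cleaner than the paper's, which sums over ordered pairs and thereby double-counts, affecting only constants).

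The gap is in your tail-regime patch, and it is genuine. The assertion that the Bernstein exponent $-\Theta(tn^2)$ ``still dominates $-n^2t^2/(10\rho(Q))$ once $t\le O(\rho(Q))$ is violated only mildly'' is circular, since $\Theta(tn^2)\geq n^2t^2/(10\rho(Q))$ is itself equivalent to $t\lesssim\rho(Q)$, i.e.\ the inequality holds exactly when that constraint is \emph{not} violated. Worse, the stated bound is actually false when $t\gg\rho(Q)$ and $\rho(Q)$ is small, so no stitching argument can repair it. For instance, take $Q(i,j)=p$ off-diagonal with $p$ small; the event $\rho(\bm G)>\rho(Q)+t$ is the event that a binomial on $\binom{n}{2}$ trials with success probability $p$ exceeds its mean by a factor of roughly $t/p$, whose probability is $\exp\bigparen{-\Theta(tn^2\log(t/p))}$ by a Poisson-tail estimate, whereas the lemma claims the far smaller $\exp\bigparen{-\Theta(tn^2\cdot t/p)}$. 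At $n=100$, $p=10^{-3}$, $t=10^{-1}$ the claimed bound is $e^{-10^4}$ but the true probability is around $e^{-1800}$. The lemma needs the implicit hypothesis $t\lesssim\rho(Q)$, or a Bernstein-form exponent retaining the linear-in-$t$ term in its denominator. The paper's proof does not flag this either --- it writes ``Chernoff bound'' for a sub-Gaussian-form inequality that is valid only for small $t$ --- but every downstream invocation of \cref{lem:convergence-edge-density} uses $t$ of order $\rho(Q)/10$ or $\sqrt{\rho(Q)\log n}/n$, well inside the valid regime, so the rest of the paper is unaffected.
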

\begin{proof}
Since $\E \Paren{A(i,j)-Q(i,j)}^2\leq Q(i,j)$. 
By Chernoff bound, we have
  \begin{align*}
    \Pr\Brac{\Abs{2|E(\bm G)|-\norm{Q}_1}\geq n^2 t}&= \Pr\Brac{\Abs{\sum_{i,j\in [n]}\Paren{A(i,j)-Q(i,j)}}
    \geq n^2 t}\\
    &\leq \exp\Paren{-\frac{n^4 t^2}{6\norm{Q}_1}}
  \end{align*}
  Since $\rho(\bm G)=\frac{2|E(\bm G)|}{n(n-1)}$ by definition, with probability at least $1-\exp\Paren{-\frac{n^2 t^2}{10\rho(Q)}}$,
  \begin{equation*}
      \Abs{\bm \rho(G)-\frac{\norm{Q}_1}{n^2}}\leq t \,.
  \end{equation*}
\end{proof}

\end{document}